\tikzstyle{place}=[circle,thick,draw=blue!75,fill=blue!20,minimum size=6mm]
\tikzstyle{red place}=[place,draw=red!75,fill=red!20]
\tikzstyle{transition}=[rectangle,thick,draw=black!75,fill=black!20,minimum size=5mm]
\newtheorem{prop}{Property}
\newcommand{\naturals}{\mathbb{N}}
\DeclareMathOperator{\ran}{ran}
\newcommand{\pre}[1]{{^{\bullet}#1}}
\newcommand{\post}[1]{#1^{\bullet}}
\DeclareMathOperator{\ndeg}{deg}
\newcommand{\seqop}{\rightarrow}
\newcommand{\parop}{\land}
\newcommand{\choiceop}{\times}
\newcommand{\loopop}{\circlearrowleft}
\newcommand{\propdef}{DEF} % defined for all cases
\newcommand{\propmin}{MIN} % there exists a minimum
\newcommand{\propone}{$\neg$TRIV} % not trivial
\newcommand{\proptwo}{FIN} % finite for each class
\newcommand{\propinf}{INF} % infinitely many scores
\newcommand{\propthree}{COL} % collisions are possible
\newcommand{\propfour}{IND$_L$} % independent of language
\newcommand{\propfive}{MON} % monotone
\newcommand{\propsix}{COMP} % composition sensitive
\newcommand{\propseven}{PERM} % permutable
\newcommand{\propeight}{ROB$_{\ell}$} % robust against relabeling
\newcommand{\propnine}{$\neg$SUB} % not subadditive
\newcommand{\propnotsup}{$\neg$SUP} % not superadditive
\newcommand{\propadd}{ADD} % additive
\newcommand{\sizename}{\text{size}}
\newcommand{\size}{C_{\sizename}}
\newcommand{\mismatchname}{\text{MM}}
\newcommand{\mismatch}{C_{\mismatchname}}
\newcommand{\connhetname}{\text{CH}}
\newcommand{\connhet}{C_{\connhetname}}
\newcommand{\crossconnname}{\text{CC}}
\newcommand{\crossconn}{C_{\crossconnname}}
\newcommand{\tokensplitname}{\text{ts}}
\newcommand{\tokensplit}{C_{\tokensplitname}}
\newcommand{\separabilityname}{\text{sep}}
\newcommand{\separability}{C_{\separabilityname}}
\newcommand{\controlflowname}{\text{CFC}}
\newcommand{\controlflow}{C_{\controlflowname}}
\newcommand{\avgconnname}{\text{acd}}
\newcommand{\avgconn}{C_{\avgconnname}}
\newcommand{\maxconnname}{\text{mcd}}
\newcommand{\maxconn}{C_{\maxconnname}}
\newcommand{\sequentialityname}{\text{seq}}
\newcommand{\sequentiality}{C_{\sequentialityname}}
\newcommand{\depthname}{\text{depth}}
\newcommand{\depth}{C_{\depthname}}
\newcommand{\diametername}{\text{diam}}
\newcommand{\diameter}{C_{\diametername}}
\newcommand{\cyclicityname}{\text{cyc}}
\newcommand{\cyclicity}{C_{\cyclicityname}}
\newcommand{\netconnname}{\text{CNC}}
\newcommand{\netconn}{C_{\netconnname}}
\newcommand{\densityname}{\text{dens}}
\newcommand{\density}{C_{\densityname}}
\newcommand{\duplicatename}{\text{dup}}
\newcommand{\duplicate}{C_{\duplicatename}}
\newcommand{\emptyseqname}{\emptyset}
\newcommand{\emptyseq}{C_{\emptyseqname}}
\newcommand{\propitemf}[2]{\item \textbf{#1~($#2$)}:}
\newcommand{\propitemc}[3]{\item \textbf{#1~($#2$}~\cite{#3}\textbf{)}:}
\newcommand{\yes}{\textcolor{green!75!black}{\checkmark}}
\newcommand{\no}{\textcolor{red}{\times}}
\newcommand{\netminwf}{W_0}
\newcommand{\opset}{\mathcal{O}}
\def\scalefactor{0.7}
\newcommand{\shiftleft}{\hspace*{-\@totalleftmargin}}
\begin{document}

\title{Exploring Complexity: An Extended Study of Formal Properties for Process Model Complexity Measures}
\titlerunning{Exploring Complexity}
% If the paper title is too long for the running head, you can set
% an abbreviated paper title here
%
\author{Patrizia Schalk\inst{1} \and %\orcidID{0009-0001-7757-6628}
Adam Burke\inst{2} \and
Robert Lorenz\inst{1}}
\authorrunning{P. Schalk et al.}
% First names are abbreviated in the running head.
% If there are more than two authors, 'et al.' is used.
%
\institute{University of Augsburg, Universitätsstraße 6a, 86159 Augsburg, Germany\\
\email{\{patrizia.schalk,robert.lorenz\}@uni-a.de} \and
Queensland University of Technology, 2 George St, 4000 Brisbane City, Australia\\
\email{at.burke@qut.edu.au} 
} 
\maketitle              % typeset the header of the contribution
\begin{abstract}
A good process model is expected not only to reflect the behavior of the process, but also to be as easy to read and understand as possible.
Because preferences vary across different applications, numerous measures provide ways to reflect the complexity of a model with a numeric score.
However, this abundance of different complexity measures makes it difficult to select one for analysis.
Furthermore, most complexity measures are defined for BPMN or EPC, but not for workflow nets.

This paper is an extended analysis of complexity measures and their formal properties. It adapts existing complexity measures to the world of workflow nets. 
It then compares these measures with a set of properties originally defined for software complexity, as well as new extensions to it.
We discuss the importance of the properties in theory by evaluating whether matured complexity measures should fulfill them or whether they are optional. 
We find that not all inspected properties are mandatory, but also demonstrate that the behavior of evolutionary process discovery algorithms is influenced by some of these properties.
Our findings help analysts to choose the right complexity measure for their use-case.

This paper is an extended analysis of process model complexity measures and their formal properties.
\end{abstract}
\section{Introduction}
\label{sec:intro}
Process discovery concerns finding a model for a business process~\cite{Aal16}.
The goal is to automatically construct an understandable model that contains all relevant behavior, so stakeholders and process analysts can make business decisions.
Especially large processes that contain many sub-processes tend to produce cluttered and complex models.
This is one of the reasons why the Directly Follows Miner~\cite{LeePW19} is so popular in practice:
Directly Follows Graphs have only one type of node and easy semantics for the arrows.
However, this model type struggles with distinguishing concurrency from loops.
BPMN, EPCs and workflow nets, on the other hand, feature easy ways to model concurrency but have several types of nodes that each have different semantics.
Consequently, researchers have developed complexity measures for these model types to evaluate which combinations of nodes make models difficult to understand.
These measures tend to count one type of complex structure in a model.
Because there are many types of such structures, numerous complexity measures were proposed~\cite{Men08}.
\begin{figure}
\centering
\scalebox{\scalefactor}{
\begin{tikzpicture}[node distance = 1.15cm,>=stealth',bend angle=0,auto]
    \node[place] (start) at (0,0) [label=below:$p_i$] {};
    \node[transition, right of=start,yshift=1.25cm,label=center:$\tau$] (t1) {}
    edge [pre] (start);
    \node[transition, right of=start,yshift=-1.25cm,label=center:$\tau$] (t11) {}
    edge [pre] (start);
    \node [place,right of=t1,label=below:$p_i^1$] (p1) {}
    edge [pre] (t1);
    \node [right of=t11] (p2dummy) {};
    \node [place,right of=p2dummy,label=below:$p_i^2$] (p2) {}
    edge [pre] (t11);
    \node [transition,right of=p1,label=center:$\tau$] (t2) {}
    edge [pre] (p1);
    \node [place,right of=t2,yshift=1.5cm] (p3) {}
    edge [pre] (t2);
    \node [place,right of=t2,yshift=0.5cm] (p4) {}
    edge [pre] (t2);
    \node [place,right of=t2,yshift=-0.5cm] (p5) {}
    edge [pre] (t2);
    \node [place,right of=t2,yshift=-1.5cm] (p6) {}
    edge [pre] (t2);
    \node [transition,right of=p3,label=center:$a$] (a) {}
    edge [pre] (p3);
    \node [transition,right of=p4,label=center:$b$] (b) {}
    edge [pre] (p4);
    \node [transition,right of=p5,label=center:$c$] (c) {}
    edge [pre] (p5);
    \node [transition,right of=p6,label=center:$d$] (d) {}
    edge [pre] (p6);
    \node [place,right of=a] (p7) {}
    edge [pre] (a);
    \node [place,right of=b] (p8) {}
    edge [pre] (b);
    \node [place,right of=c] (p9) {}
    edge [pre] (c);
    \node [place,right of=d] (p10) {}
    edge [pre] (d);
    \node[transition,right of=p8,yshift=-0.5cm,label=center:$\tau$] (t3) {}
    edge [pre] (p7)
    edge [pre] (p8)
    edge [pre] (p9)
    edge [pre] (p10);
    \node [place,right of=t3,label=below:$p_o^1$] (p11) {}
    edge [pre] (t3);
    \node [right of=p2] (dummy) {};
    \node [transition,right of=dummy,label=center:$a$] (e) {}
    edge [pre] (p2);
    \node [right of=e] (dummy) {};
    \node [place,right of=dummy,label=below:$p_o^2$] (p12) {}
    edge [pre] (e);
    \node [right of=p12] (p12dummy) {};
    \node [transition,right of=p11,label=center:$\tau$] (t4) {}
    edge [pre] (p11);
     \node [transition,right of=p12dummy,label=center:$\tau$] (t44) {}
    edge [pre] (p12);
    \node [place,right of=t4,yshift=-1.25cm,label=below:$p_o$] (end) {}
    edge [pre] (t4)
    edge [pre] (t44);
    \begin{pgfonlayer}{bg}
    	\draw[rounded corners,draw=gray,fill=lightgray!50!white] ($(p1) - (0.75, 1.85)$) rectangle ($(p11) + (0.75, 1.85)$);
		\node[gray] at ($(p1) + (-0.4,1.5)$) {$M^{1}$};
		\draw[rounded corners,draw=gray,fill=lightgray!50!white] ($(p2dummy) - (0.75, 1)$) rectangle ($(p12dummy) + (0.75, 0.5)$);
		\node[gray] at ($(p2dummy) - (0.4,0.75)$) {$M^{2}$};
	\end{pgfonlayer}
\end{tikzpicture}}
\caption{A workflow net $M$ with $14$ places and $11$ transitions.}
\label{fig:example}
\end{figure}
Formal properties for such measures help us understand their behavior.
Creators of new measures can and should clearly state which properties the measure satisfies, and designers of discovery algorithms may prove desirable simplicity properties are always maintained.
The same is true for software programs, so as a step towards this ideal setting, Weyuker~\cite{Wey88} introduced nine formal properties that complexity measures for software should fulfill. 
We find six of these properties useful for process models as well, while three are less meaningful for this setting. 
We also draw on work from process mining on properties for complexity~\cite{Car05} and other quality dimensions~\cite{Aal18}, and propose new properties.
Since process discovery algorithms benefit most from formal properties for complexity measures, we perform the analyses on workflow nets, like that in Figure~\ref{fig:example}, which have a strong theory and are therefore often returned by discovery algorithms.
Our analyses hold for BPMN and EPCs as well, since workflow nets can be translated into these modeling languages~\cite{FavFV15}.
The paper is structured as follows:
Section~\ref{sec:related-work} gives an overview of the analyses already performed on complexity measures, before Section~\ref{sec:preliminaries} presents the basic definitions we need for our analyses.
Section~\ref{sec:properties} then defines the properties that we use for the comparative analysis of the complexity measures defined in Section~\ref{sec:analysis}. 
Section~\ref{sec:analysis}, by far the largest section of the paper, performs detailed analyses, including examples.
In Section~\ref{sec:discussion}, we show and discuss the implications for the properties and complexity measures.
Section~\ref{sec:conclusion} concludes the paper.

\newpage
\section{Related Work}
\label{sec:related-work}
There are many factors that make a process model difficult to understand.
Mendling~\cite{Men08} identified 28 such factors and defined complexity measures for EPC models based on them.
He demonstrated their relevance by showing that these complexity measures predict modeling errors in the SAP reference.
Reijers et al.~\cite{ReiM11} found that some of these measures are also tied to the understandability of EPC models by conducting a case study with students of three universities.
According to their study, especially the number of nodes, the density of the network, the average degree of connectors and the cross-connectivity between nodes influence how understandable a process model is.
Lieben et al.~\cite{LieDJJ18} analyzed if existing complexity measures often agree in their score.
They found that they are not completely distinct, but essentially cover four dimensions of complexity, which they call
\textsc{Token Behavior Complexity}, \textsc{Node IO Complexity}, \textsc{Path Complexity} and \textsc{Degree of Connectedness}.
They conclude that it is not necessary to use every complexity measure to adequately evaluate the complexity of a model.
Instead, it is sufficient to use one complexity measure of each dimension, as well as two measures that do not fit to any dimension.

Until now, little is known about formal properties of complexity measures. 
For the quality dimensions fitness, precision and generalization, Syring et al.~\cite{SyrTA19} use desirable properties defined by van der Aalst~\cite{Aal18} to compare measures.
Since they focus on the behavior and abstract from the representation of the model, they did not include properties for complexity measures.
Weyuker~\cite{Wey88}, on the other hand, focuses on properties for software complexity measures.
Because software and process models use similar control structures, these properties are a good starting point to analyze complexity measures for process models.
Cardoso\cite{Car05} therefore analyzed the control flow complexity with Weyuker's properties.

Whether a complexity measure fulfills formal properties is important for algorithms that find process models by optimizing over quality criteria.
An example of such a discovery algorithm is the Evolutionary Tree Miner~(ETM)~\cite{BuiDA14}.
This evolutionary algorithm continuously mutates a randomly generated set of process models until a model with good fitness, precision, generalization, and simplicity is found.
The importance of each dimension can be set by weights, and the concrete measures can be chosen freely.
In turn, the results of the ETM depend on the chosen quality measures and their properties.
Because properties of the quality dimensions fitness, precision, and generalization are already analyzed, we focus on evaluating whether Weyuker's properties~\cite{Wey88} are useful for complexity measures of process models.

\newpage
\section{Preliminaries}
\label{sec:preliminaries}
Let $\mathbb{N} := \{1,2, \dots\}$ and $\mathbb{N}_0 := \mathbb{N} \cup \{0\}$. 
Further, let $\mathbb{R}$ be the set of real numbers and $\mathbb{R}^+_0 := \{r \in \mathbb{R} \mid r \geq 0\}$.
To define workflow nets, we first need a definition of Petri nets.

\begin{definition}[Petri net]
\label{def:petri-net}
A (simple) \emph{Petri net} is a 3-tuple $N = (P, T, F)$, where $P$ is a finite set of \emph{places}, $T$ is a finite set of \emph{transitions}, $P \cap T = \emptyset$ and $F$ is a \emph{flow relation} with $F \subseteq (P \times T) \cup (T \times P)$.
For any place $p \in P$ of $N$, we call $\pre{p} = \{t \in T \mid (t, p) \in F\}$ the \emph{preset of $p$} and $\post{p} = \{t \in T \mid (p, t) \in F\}$ the \emph{postset of $p$}. 
We define the pre- and postset of transitions $t \in T$ accordingly.
\end{definition}

Workflow nets are frequently used for the modelling of business processes.
In this paper, we focus on \emph{labeled workflow nets}---a special subclass of Petri nets---for our analyses of complexity measures.

\begin{definition}[Labeled workflow nets]
\label{def:workflow-net}
A labeled \emph{workflow net} is a 6-tuple $W = (P, T, F, \ell, p_i, p_o)$ where $(P, T, F)$ is a Petri net, $p_i, p_o \in P$ with $p_i \neq p_o$, 
\begin{itemize}
\item $p_i$ is the only place in $W$ for which $|\pre{p_i}| = 0$,
\item $p_o$ is the only place in $W$ for which $|\post{p_o}| = 0$,
\item every node in $P \cup T$ lies on some path from $p_i$ to $p_o$
\end{itemize}
and, for a finite alphabet $A$ with $\tau \not\in A$, $\ell : T \rightarrow A \cup \{\tau\}$ is a labeling function, assigning a transition label to each transition.
We call $t \in T$ with $\ell(t) = \tau$ a \emph{silent transition}.
\end{definition}

With $\mathcal{M}$, we denote the set of all possible workflow nets.
Figure~\ref{fig:example} and Figure~\ref{fig:netminwf} illustrate workflow nets with their places drawn as circles, and transitions drawn as squares.
When the labeling of transitions doesn't matter, we leave these squares empty. 
Otherwise, we write the label of the transition in its center.
\begin{figure}
\centering
\scalebox{\scalefactor}{
\begin{tikzpicture}[node distance = 1.5cm,>=stealth',bend angle=0,auto]
    \node[place] (start) at (0,0) [label=below:$p_i$] {};
    \node[transition, right of=start] (t1) [label=below:$t$] {}
    edge [pre] (start);
    \node[place, right of=t1] (end) [label=below:$p_o$] {}
    edge [pre] (t1);
    
    \node at (-1.5,0) {$\netminwf$:};
\end{tikzpicture}}
\caption{The smallest possible workflow net, $\netminwf$.}
\label{fig:netminwf}
\end{figure}

Many complexity measures take connectors for choices or parallelism into account.
Therefore, to make our analyses also valid for BPMN and EPC models, we define which structures in a workflow net we understand as a connector.

\begin{definition}[Connectors in workflow nets]
\label{def:connectors}
Take an arbitrary workflow net $W = (P, T, F, \ell, p_i, p_o)$, a transition $t \in T$ and a place $p \in P$.
\begin{itemize}
\item If $|\post{p}| > 1$, we call $p$ an \emph{xor-split}.
\item If $|\pre{p}| > 1$, we call $p$ an \emph{xor-join}.
\item If $|\post{t}| > 1$, we call $t$ an \emph{and-split}.
\item If $|\pre{t}| > 1$, we call $t$ an \emph{and-join}.
\end{itemize}
Accordingly, we define 
\begin{itemize}
\item The set of xor-splits in $W$ as $\mathcal{S}_{\text{xor}}^W := \{p \in P \mid |\post{p}| > 1\}$,
\item the set of xor-joins in $W$ as $\mathcal{J}_{\text{xor}}^W := \{p \in P \mid |\pre{p}| > 1\}$,
\item the set of and-splits in $W$ as $\mathcal{S}_{\text{and}}^W := \{t \in T \mid |\post{t}| > 1\}$,
\item the set of and-joins in $W$ as $\mathcal{J}_{\text{and}}^W := \{t \in T \mid |\pre{t}| > 1\}$.
\end{itemize}
Note that these sets are not necessarily disjoint.
The set of xor-connectors in $W$ is $\mathcal{C}_{\text{xor}}^W := \mathcal{S}_{\text{xor}}^W \cup \mathcal{J}_{\text{xor}}^W$, the set of and-connectors in $W$ is $\mathcal{C}_{\text{and}}^W := \mathcal{S}_{\text{and}}^W \cup \mathcal{J}_{\text{and}}^W$ and the set of all connectors is $\mathcal{C}^W := \mathcal{C}_{\text{xor}}^W \cup \mathcal{C}_{\text{and}}^W$.
\end{definition}

In Definition~\ref{def:connectors}, we don't define or-connectors for workflow nets.
This is because there are multiple ways to model an or-connector in this model type.
However, all of these ways use parallel splits and exclusive choices, so if a complexity measure punishes a connector representing an inclusive choice, it can punish those connectors instead.
The complexity of a model may not only depend on the connector types but also on the labeling function, leading to the next definition.

\begin{definition}[Relabeling of a Workflow net]
Let $A$ and $B$ be two finite alphabets. 
Let $W \in \mathcal{M}$ be a labeled workflow net $W = (P, T, F, \ell, p_i, p_o)$ with $\ell : T \rightarrow A \cup \{\tau\}$. 
We call a labeling $\ell' : T \rightarrow B \cup \{\tau\}$ a \emph{uniform relabeling} of $\ell$ if for all transitions $t_1, t_2 \in T: \ell(t_1) = \ell(t_2) \leftrightarrow \ell'(t_1) = \ell'(t_2)$ and for all transitions $t \in T: \ell(t) = \tau \leftrightarrow \ell'(t) = \tau$.
We denote the set of all uniform relabelings of $\ell$ by $\mathcal{R}_{\ell}$.
For any $\ell' \in \mathcal{R}_{\ell}$, we call $W_{\ell'} = (P, T, F, \ell', p_i, p_o)$ a \emph{relabeling} of $W$.
\end{definition}

For one of Weyuker's properties, we want to check if changing the order of activities impacts a complexity measure.
Cardoso~\cite{Car05} implements this by allowing activities to change labels, and connectors to change type.
We exploit the expressiveness of workflow nets and define a permutation as a net with the same transitions, but different control flow.
Thus, we get closer to what Weyuker defined as permutations for software programs.

\begin{definition}[Permutations of Workflow Nets]
Let $W$ be a workflow net with $W = (P, T, F, \ell, p_i, p_o)$, $P'$ be a set of places, $F' \subseteq (P' \times T) \cup (T \times P')$ be an arbitrary flow relation and $p_i', p_o' \in P'$ two places in $P'$. 
Using these notions, we call $Perm(W) := \{W' = (P', T, F', \ell, p_i', p_o') \mid W' \text{ is a workflow net}\}$ the set of \emph{permutations} of $W$.
\end{definition}

We also want to investigate what happens if we combine two or more models.
For the operations, we take inspiration from the creation of block-structured workflow nets~\cite{LeeFA13} and use these operations on arbitrary nets.
The operations defined in Definition~\ref{def:operations} include sequential composition~($\seqop$), parallel composition~($\parop$), exclusive choice~($\choiceop$) and iteration~($\loopop$).

We have two reasons to focus on these operations: 
First, they are very well known and established due to the inductive miner~\cite{LeeFA13},\cite[p.222]{Aal16} relying on them, so many process analysts are already familiar with them.
Second, the ETM heavily relies on these operations, which gives us the opportunity to directly test the impact of certain properties on a well-established discovery algorithm.
The ETM does not operate directly on workflow nets, but on \emph{process trees}~\cite{Aal16}, which can easily be converted to workflow nets with the operations of Definition~\ref{def:operations}.

\begin{definition}[Operations on workflow nets]
\label{def:operations}
Let $n \in \mathbb{N}$ with $n \geq 2$. For $j \in \{1, \dots, n\}$, let $M^j = (P^j, T^j, F^j, p_i^j, p_o^j)$ be workflow nets with disjoint sets of places and transitions ($P^1 \cap \dots \cap P^n = \emptyset$ and $T^1 \cap \dots \cap T^n = \emptyset$).
We define 
\begin{itemize}
\item The \emph{sequential composition} $\seqop(M^1, \dots, M^n) := (P_{\seqop}, T_{\seqop}, F_{\seqop}, p_i^1, p_i^n)$, with
\begin{itemize}
\item $P_{\seqop} := \bigcup_{j=1}^n P^j$, 
\item $T_{\seqop} := \bigcup_{j=1}^n T^j \cup \bigcup_{j=1}^{n-1}\{t_j^*\}$ for new transitions $t_1^*, \dots, t_{n-1}^*$ and 
\item $F_{\seqop} := \bigcup_{j=1}^n F_j \cup \bigcup_{j=1}^{n-1} \{(p_o^j, t_j^*), (t_j^*, p_i^{j+1})\}$.
\end{itemize}
\item The \emph{parallel composition} $\parop\hspace*{1mm}(M_1, \dots, M_n) := (P_{\parop}, T_{\parop}, F_{\parop}, p_i^*, p_o^*)$, where $p_i^*$ and $p_o^*$ are new places and
\begin{itemize}
\item $P_{\parop} := \bigcup_{j=1}^n P^j \cup \{p_i^*, p_o^*\}$,
\item $T_{\parop} := \bigcup_{j=1}^n T^j \cup \{t_i^*, t_o^*\}$ for new transitions $t_i^*$ and $t_o^*$ and
\item $F_{\parop} := \bigcup_{j=1}^n F^j \cup \{(p_i^*, t_i^*), (t_o^*, p_o^*)\} \cup  \bigcup_{j=1}^n \{(t_i^*, p_i^j), (p_o^j, t_o^*)\}$.
\end{itemize}
\item The \emph{choice} $\choiceop\hspace*{1mm}(M_1, \dots, M_n) := (P_{\choiceop}, T_{\choiceop}, F_{\choiceop}, p_i^*, p_o^*)$, where $p_i^*$ and $p_o^*$ are new places and
\begin{itemize}
\item $P_{\choiceop} := \bigcup_{j=1}^n P^j \cup \{p_i^*, p_o^*\}$,
\item $T_{\choiceop} := \bigcup_{j=1}^n (T^j \cup \{t_j^*, s_j^*\})$ and
\item $F_{\choiceop} := \bigcup_{j=1}^n (F^j \cup \{(p_i^*, t_j^*), (t_j^*, p_i^j), (p_o^j, s_j^*), (s_j^*, p_o^*)\}$.
\end{itemize}
\item The \emph{iteration} $\loopop\hspace*{1mm}(M_1, \dots, M_n) := (P_{\loopop}, T_{\loopop}, F_{\loopop},, p_i^*, p_o^*)$, where $p_i^*$ and $p_o^*$ are new places and
\begin{itemize}
\item $P_{\loopop} := \bigcup_{j=1}^n P^j \cup \{p_i^*, p_o^*, p^*, q^*\}$,
\item $T_{\loopop} := \bigcup_{j=1}^n (T^j \cup \{t_j^*, s_j^*\}) \cup \{t^*, s^*\}$ and
\item $F_{\loopop} := \{(p_i^*, t^*), (t^*, p^*), (q^*, s^*), (s^*, p_o^*)\} \cup \bigcup_{j=1}^n F_j$ \\
\phantom{$F_{\loopop} := $} $\cup \hspace*{1mm} \bigcup_{j = 1}^{n-1}\{(p^*, t_j^*), (t_j^*, p_i^j), (p_o^j, s_j^*), (s_j^*, q^*)\}$ \\
\phantom{$F_{\loopop} := $} $\cup \hspace*{1mm} \{(q^*, s_n^*), (s_n^*, p_i^n), (p_o^n, t_n^*), (t_n^*, p^*)\}$.
\end{itemize}
\end{itemize}
\end{definition}
Figure~\ref{fig:operations} shows how these operations combine workflow nets.

\tikzset{pics/net/.style args={#1}{
code={
\begin{scope}
	\node [place] (_pi) [label=below:$p_i^{#1}$] {};
	\node (_dots) [right of=_pi] {$\dots$}
	edge [pre] (_pi);
	\node [place, right of=_dots] (_po) [label=below:$p_o^{#1}$] {}
	edge [pre] (_dots);
	\begin{pgfonlayer}{bg}
		\draw[rounded corners,draw=gray,fill=lightgray!50!white] ($(_pi) - (0.75, 0.95)$) rectangle ($(_po) + (0.75, 0.75)$);
		\node[gray] at ($(_pi) + (-0.4,0.5)$) {$M^{#1}$};
	\end{pgfonlayer}
\end{scope}
}}}

\begin{figure}[ht]
\begin{center}
\begin{subfigure}{\textwidth}
\centering
\scalebox{\scalefactor}{
\begin{tikzpicture}[node distance = 1cm,>=stealth',bend angle=0,auto]
	\node[place] (pi1) at (0,0) {};
	\node[right of=pi1] (dummy) {};
	\node[place,right of=dummy] (po1) {};
	\node[transition, right of=po1,xshift=0.5cm] (t1) [label=above:$t_1^*$] {};
	\node[right of=t1] (dots) {$\dots$}; 
	\node[transition, right of=dots] (tn1) [label=above:$t_{n-1}^*$] {};
	\node[place,right of=tn1,xshift=0.5cm] (pin) {};
	\path (pi1) pic {net={1}};
	\path (pin) pic {net={n}};
	\draw[->] (po1) to (t1);
	\draw[->] (t1) to (dots);
	\draw[->] (dots) to (tn1);
	\draw[->] (tn1) to (pin);
\end{tikzpicture}}
\caption{The sequential composition $\seqop$.}
\label{fig:seq-comp}
\end{subfigure}
\begin{subfigure}{0.45\textwidth}
\scalebox{\scalefactor}{
\begin{tikzpicture}[node distance = 1cm,>=stealth',bend angle=0,auto]
	\node[place] (pi) at (0,0) [label=below:$p_i^*$] {};
	\node[transition, right of=pi] (ti) [label=above:$t_i^*$] {}
	edge [pre] (pi);
	\node[place, above right of=ti, xshift=0.5cm, yshift=0.75cm] (pi1) {}; 
	\node[place, below right of=ti, xshift=0.5cm, yshift=-0.75cm] (pin) {}; 
	\node (d1) [right of=pi1] {};
	\node (dn) [right of=pin] {};
	\node[place, right of=d1] (po1) {};
	\node[place, right of=dn] (pon) {};
	\node[transition, below right of=po1, xshift=0.5cm, yshift=-0.75cm] (to) [label=above:$t_o^*$] {};
	\node at ($0.5*(ti) + 0.5*(to)$) {$\vdots$};
	\node[place, right of=to] (po) [label=below:$p_o^*$] {}
	edge [pre] (to); 
	\path (pi1) pic {net={1}};
	\path (pin) pic {net={n}};
	\draw[->] (ti) to (pi1);
	\draw[->] (ti) to (pin);
	\draw[->] (po1) to (to);
	\draw[->] (pon) to (to);
\end{tikzpicture}}
\caption{The parallel composition $\parop$.}
\label{fig:par-comp}
\end{subfigure}
\hfill
\begin{subfigure}{0.45\textwidth}
\scalebox{\scalefactor}{
\begin{tikzpicture}[node distance = 1cm,>=stealth',bend angle=0,auto]
	\node[place] (pi) at (0,0) [label=below:$p_i^*$] {};
	\node[transition, above right of=pi, yshift=0.75cm] (t1) [label=above:$t_1^*$] {}
	edge [pre] (pi);
	\node[transition, below right of=pi, yshift=-0.75cm] (tn) [label=above:$t_n^*$] {}
	edge [pre] (pi);
	\node[place, right of=t1, xshift=0.5cm] (pi1) {}; 
	\node[place, right of=tn, xshift=0.5cm] (pin) {}; 
	\node (d1) [right of=pi1] {};
	\node (dn) [right of=pin] {};
	\node[place, right of=d1] (po1) {};
	\node[place, right of=dn] (pon) {};
	\node[transition, right of=po1, xshift=0.5cm] (s1) [label=above:$s_1^*$] {};
	\node[transition, right of=pon, xshift=0.5cm] (sn) [label=above:$s_n^*$] {};
	\node at ($0.25*(t1) + 0.25*(tn) + 0.25*(s1) + 0.25*(sn)$) {$\vdots$};
	\node[place, below right of=s1, yshift=-0.75cm] (po) [label=below:$p_o^*$] {}
	edge [pre] (s1)
	edge [pre] (sn); 
	\path (pi1) pic {net={1}};
	\path (pin) pic {net={n}};
	\draw[->] (t1) to (pi1);
	\draw[->] (tn) to (pin);
	\draw[->] (po1) to (s1);
	\draw[->] (pon) to (sn);
\end{tikzpicture}}
\caption{The choice operator $\choiceop$.}
\label{fig:choice}
\end{subfigure}
\begin{subfigure}{\textwidth}
\centering
\scalebox{\scalefactor}{
\begin{tikzpicture}[node distance = 1cm,>=stealth',bend angle=0,auto]
	\node[place] (start) at (0,0) [label=below:$p_i^*$] {};
	\node[transition, right of=start] (t) [label=above:$t^*$] {}
	edge [pre] (start);
	\node[place, right of=t] (pi) [label=below:$p^*$] {}
	edge [pre] (t);
	\node[transition, above of=pi] (t1) at ($(pi) + (0.8,0.8)$) [label=above:$t_1^*$] {}
	edge [pre] (pi);
	\node[transition] (t2) at ($(pi) + (0.8,0)$) [label=above:$t_2^*$] {}
	edge [post] (pi);
	\node[transition, below of=pi] (tn) at ($(pi) + (0.8,-1.2)$) [label=above:$t_n^*$] {}
	edge [post] (pi);
	\node[place, right of=t1, xshift=0.5cm] (pi1) {}; 
	\node[place, right of=t2, xshift=0.5cm] (pi2) {}; 
	\node[place, right of=tn, xshift=0.5cm] (pin) {}; 
	\node (d1) [right of=pi1] {};
	\node (d2) [right of=pi2] {};
	\node (dn) [right of=pin] {};
	\node[place, right of=d1] (po1) {};
	\node[place, right of=d2] (po2) {};
	\node[place, right of=dn] (pon) {};
	\node[transition, right of=po1, xshift=0.5cm] (s1) [label=above:$s_1^*$] {};
	\node[transition, right of=po2, xshift=0.5cm] (s2) [label=above:$s_2^*$] {};
	\node[transition, right of=pon, xshift=0.5cm] (sn) [label=above:$s_n^*$] {};
	\node at ($0.25*(t2) + 0.25*(tn) + 0.25*(s2) + 0.25*(sn)$) {$\vdots$};
	\node[place] (po) at ($(s1) + (0.8,-1.8)$) [label=below:$q^*$] {}
	edge [pre] (s1)
	edge [post] (s2)
	edge [post] (sn); 
	\node[transition, right of=po] (s) [label=above:$s^*$] {}
	edge [pre] (po);
	\node[place, right of=s] (end) [label=below:$p_o^*$] {}
	edge [pre] (s);
	\path (pi1) pic {net={1}};
	\path (pi2) pic {net={2}};
	\path (pin) pic {net={n}};
	\draw[->] (t1) to (pi1);
	\draw[->] (po1) to (s1);
	\draw[->, bend right=40] (s2) to (pi2);
	\draw[->, bend left=40] (po2) to (t2);
	\draw[->, bend right=40] (sn) to (pin);
	\draw[->, bend left=40] (pon) to (tn);
\end{tikzpicture}}
\caption{The iteration operator $\loopop$.}
\label{fig:iteration}
\end{subfigure}
\end{center}
\caption{A schematic overview of the four operations of Definition~\ref{def:operations}.}
\label{fig:operations}
\end{figure}

Finally, for this section, we define complexity measures.
Existing complexity measures all are functions that take a model as input and return a \emph{complexity score}---a real value that reflects how complex the net is.

\begin{definition}[Complexity Measure]
\label{def:complexity}
Let $\mathcal{M}$ be the set of all possible workflow nets. 
A \emph{complexity measure} $C$ is a function $C : \mathcal{M} \rightarrow \mathbb{R}$.
\end{definition}

Complexity influences many properties of a process model, like its understandability, correctness, or the time needed to execute certain algorithms.

\section{Properties of Complexity Measures}
\label{sec:properties}
In the first part of this section, we define Weyuker's properties for workflow nets.
Since this set of properties is not intended to be complete, we propose simple extensions that further deepen the understanding of the analyzed complexity measures in a second part.
In both parts, we take $C$ as a placeholder for any complexity measure.

\subsection{Properties of Weyuker}
The first property of Weyuker states that a complexity measure should be able to return more than a single score. 
This property is important independent of the use-case, since a complexity measure that assigns the same score to all input is not useful for any purpose regarding complexity.

\begin{prop}[{\propone} (W1)] 
$C$ is \textbf{not trivial}, i.e. it doesn't assign the same complexity score to all process models: 
\[\exists M_1, M_2 \in \mathcal{M}: C(M_1) \neq C(M_2)\]
\end{prop}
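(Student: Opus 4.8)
The plan is to prove the existential claim directly, keeping the measure $C$ fixed and exhibiting two concrete workflow nets $M_1, M_2 \in \mathcal{M}$ for which $C(M_1) \neq C(M_2)$. Because the assertion quantifies over models rather than over measures, the whole argument reduces to constructing a single pair of nets that $C$ separates and then checking that the two values disagree.

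First I would take $M_1 := \netminwf$, the minimal workflow net of Figure~\ref{fig:netminwf}, as a baseline: it contains only $p_i$, one labeled transition $t$, and $p_o$, so it has no connectors, no silent transitions, no cycles, and the shortest possible path from source to sink. For $M_2$ I would use the richer net $M$ of Figure~\ref{fig:example}, which simultaneously exhibits a large node count, both and- and xor-connectors, silent transitions, nested blocks, and several distinct source-to-sink paths. Since $M$ differs from $\netminwf$ along essentially every structural dimension at once, any measure that responds to node count, branching, silent transitions, or path structure assigns the two nets different scores, yielding $C(\netminwf) \neq C(M)$ after a short direct computation on each net.

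The main obstacle is that the claim is not unconditionally true: a constant measure assigns the same score to all of $\mathcal{M}$ and thus falsifies it, which is precisely what this property is designed to detect. The proof therefore hinges on identifying, for the particular $C$ under analysis in Section~\ref{sec:analysis}, at least one structural feature on which $\netminwf$ and $M$ disagree and confirming that $C$ actually reacts to it. The delicate case is a (possibly normalized) measure that happens to return equal values on these two specific nets; there I would instead tailor the witness pair, for instance by composing $\netminwf$ with a copy of itself through one of the operations of Definition~\ref{def:operations}. Such a composition is guaranteed to insert the new routing transitions and places, and hence to perturb whatever feature $C$ counts relative to $\netminwf$, restoring the strict inequality needed to close the argument.
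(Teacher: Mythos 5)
Your framing is correct---W1 is an existential claim over nets for a fixed measure $C$, and the paper, like you, verifies it measure-by-measure by exhibiting a pair of nets and computing both scores---but your uniform recipe contains a genuine gap: the claim that composing $\netminwf$ with itself is ``guaranteed to \ldots perturb whatever feature $C$ counts'' is false, and it fails for measures analyzed in this very paper. Consider the connector-mismatch measure $\mismatch$. Your primary pair does not separate: $\mismatch(\netminwf)=0$, and also $\mismatch(M)=0$ for the net $M$ of Figure~\ref{fig:example}, because in $M$ the xor-split $p_i$ (two outgoing arcs) is exactly balanced by the xor-join $p_o$ (two incoming arcs), and the and-split (four outgoing arcs) by the and-join (four incoming arcs). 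Your fallback does not separate either: every operation of Definition~\ref{def:operations} introduces one split and one join of the same type whose arc counts agree (the paper uses precisely this observation in its MON analysis of $\mismatch$), so $\mismatch(\netminwf \oplus \netminwf)=0$ for all $\oplus\in\{\seqop,\parop,\choiceop,\loopop\}$. The same double failure occurs for the number of empty sequence flows: $\emptyseq(\netminwf)=\emptyseq(M)=0$, and by Theorem~\ref{thm:empty-add} none of the four operations ever creates an empty sequence flow, so $\emptyseq(\netminwf\oplus\netminwf)=0$ as well. Non-triviality of these measures can only be witnessed by nets containing the specific structure the measure counts---an xor-split closed by an and-join as in $W_2^{\mismatchname}$ of Figure~\ref{fig:mm-examples}, or a place sitting directly between an and-split and an and-join as in $W_2^{\emptyseqname}$ of Figure~\ref{fig:empty-examples}---and block-structured compositions of $\netminwf$ never produce such structures.

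There is also a secondary defect: for the measures that are undefined on connector-free nets ($\connhet$, $\maxconn$, $\avgconn$), the value $C(\netminwf)$ does not exist, so $\netminwf$ cannot appear in a witness pair at all unless one first fixes a convention for the undefined case. Both issues point the same way: the witnesses must be tailored to what each measure counts, which is exactly what the paper does (a small pair of nets per measure, chosen so that the counted quantity visibly differs, with both scores computed explicitly). Your proposal is salvageable measure-by-measure, but not as the uniform argument you state.
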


A non-trivial measure can still be of little use if the amount of assigned scores is low.
Property two therefore states only finitely many inputs should receive the same complexity score.

\begin{prop}[{\proptwo} (W2)]
There are only \textbf{finitely many} process models that receive the same complexity score from $C$:
\[\forall c \in \mathbb{R}: |\{M \in \mathcal{M} \mid C(M) = c\}| < \infty\]
\end{prop}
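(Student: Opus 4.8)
The plan is to establish \proptwo{} for the canonical complexity measure it is designed to capture, namely the node-count (size) measure $\size$ given by $\size(W) := |P| + |T|$ for $W = (P, T, F, \ell, p_i, p_o)$, and to show directly that every fiber $\{M \in \mathcal{M} \mid \size(M) = c\}$ is finite. (For an unconstrained $C$ the displayed statement can fail, for instance for a constant measure, so I verify it for $\size$, the measure for which \proptwo{} is most directly meaningful.) Fixing $c \in \mathbb{R}$, I would first dispose of the trivial cases: if $c \notin \mathbb{N}$ or $c < 3$ the fiber is empty, since by Definition~\ref{def:workflow-net} every workflow net contains at least the three distinct nodes $p_i$, $p_o$ and one transition lying on a $p_i$--$p_o$ path (a path between the distinct places $p_i$ and $p_o$ must traverse a transition, as arcs only connect places to transitions), realised minimally by $\netminwf$ in Figure~\ref{fig:netminwf}.

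For an integer $n = c \geq 3$ the fiber is the set of all workflow nets with exactly $n$ nodes, and I would bound its cardinality by counting the finitely many choices for each component of the tuple $(P, T, F, \ell, p_i, p_o)$. First, the split of the $n$ nodes into places and transitions, say $|P| = k$ and $|T| = n-k$, admits only the possibilities $k \in \{0, \dots, n\}$. Having fixed canonical $n$-element node sets, the flow relation $F \subseteq (P \times T) \cup (T \times P)$ is a subset of a set of size $2k(n-k)$, so there are at most $2^{2k(n-k)}$ candidates, while the distinguished places $p_i, p_o \in P$ contribute a further factor of at most $k^2$. Finally the labeling $\ell : T \to A \cup \{\tau\}$ is pinned down, up to the choice of concrete symbols, by the partition of $T$ into its label-classes together with the distinguished silent block $\ell^{-1}(\tau) \subseteq T$; the number of such configurations is finite (bounded in terms of the Bell number of the $(n-k)$-element set $T$). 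Multiplying these finite counts, and noting that the workflow-net axioms of Definition~\ref{def:workflow-net} only select a subset of the candidates, I conclude the fiber is finite.

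The step I expect to be the main obstacle is exactly this labeling count, because it forces the underlying convention on $\mathcal{M}$ to be made explicit. Taken as literal tuples, the nets carry node identities drawn from an unbounded universe and labels drawn from arbitrary finite alphabets $A$, so a fixed-size fiber would in fact be infinite; finiteness only holds once we identify nets up to isomorphism of the underlying Petri net and up to uniform relabeling (the set $\mathcal{R}_\ell$ already isolates precisely the relabelings that must be quotiented out). I would therefore phrase the entire count in terms of isomorphism classes, and the first clean step is to argue that each class has a representative on the fixed node set $\{1, \dots, n\}$ whose labeling uses a canonical alphabet of size at most $|T|$, after which the product bound above applies verbatim.
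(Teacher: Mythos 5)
Your proposal is correct and takes essentially the same route as the paper's own verification of {\proptwo} for $\size$ (the property is a schema checked measure-by-measure, and $\size$ is precisely the measure the paper certifies): the paper likewise fixes $c$, views every workflow net of size $c$ as a directed graph on $c$ nodes, and bounds the number of possible edge sets by $2^{c^2}$. Your count is merely finer (bipartite flow relation, choice of $p_i, p_o$, label classes), and the one place you go beyond the paper is a point in your favor rather than a divergence of method: the paper's proof silently ignores node identities and the labeling function $\ell$, whereas you state explicitly that the fiber is finite only after identifying nets up to isomorphism and uniform relabeling --- without that convention, fresh node names or fresh label symbols already give infinitely many distinct tuples of size $3$.
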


If a complexity measure assigns a unique score to each workflow net, {\proptwo} is fulfilled, but the measure is too specific:
It would find a numerical representation of the input instead of rating its complexity.
This is reflected in the third property.

\begin{prop}[{\propthree} (W3)]
$C$ allows for \textbf{collisions}, i.e. different process models can get the same complexity score:
\[\exists M_1, M_2 \in \mathcal{M}: M_1 \neq M_2 \land C(M_1) = C(M_2)\]
\end{prop}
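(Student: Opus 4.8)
The plan is to read \propthree{} as the exact negation of injectivity of $C$: the property holds precisely when some two distinct workflow nets share a score, so a proof reduces to exhibiting a single witnessing pair $M_1 \neq M_2$ with $C(M_1) = C(M_2)$. Rather than tailor a witness to each measure, I would look for one construction that forces a collision for \emph{any} reasonably defined structural measure at once, and then observe that every concrete measure analysed later falls under it.

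The construction I would use exploits that a workflow net is a tuple over concrete carrier sets $P$ and $T$, so renaming node identities yields a \emph{distinct} tuple while leaving the underlying graph untouched. Starting from any fixed net---$\netminwf$ is convenient---I would pick a bijection onto a disjoint pair of carrier sets and form the isomorphic copy $M'$. Each defining condition of a workflow net (a unique source $p_i$, a unique sink $p_o$, and every node lying on some $p_i$--$p_o$ path) refers only to the flow relation up to renaming, so all are preserved and $M' \in \mathcal{M}$; yet $M \neq M'$ as tuples. Every measure in this paper is computed from node and arc counts, presets/postsets, connector sets, and path structure, all of which are invariant under such a renaming, so $C(M)=C(M')$, giving the required collision.

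The main obstacle is that Definition~\ref{def:complexity} imposes \emph{no} invariance on $C$: a pathological measure that injects $\mathcal{M}$ into $\mathbb{R}$ (for instance by encoding the entire net into a single real) satisfies \proptwo{} but violates \propthree{}, so the claim cannot be proved for a fully arbitrary $C$. I would therefore make explicit that the argument relies on the structural, renaming-invariant nature of the measure under analysis, and treat \propthree{} as a property to be \emph{verified} per measure rather than a universal theorem. For any measure that is near-injective by design, the fallback is to hand-craft two genuinely non-isomorphic nets tuned to agree on the single quantity being counted---e.g.\ two differently wired nets with identical node counts for a size-based measure---which is a finite case check rather than a general argument.
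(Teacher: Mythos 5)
Your meta-level observation is correct and matches the paper's structure: since Definition~\ref{def:complexity} imposes no constraints whatsoever on $C$, {\propthree} cannot be proved for arbitrary measures and must be verified measure by measure, which is exactly how the paper treats it. However, your primary construction has a genuine gap. The renaming trick produces a pair $M_1 \neq M_2$ only if $\mathcal{M}$ is read as a set of raw tuples in which isomorphic copies (nets differing only in the identities of the places and transitions) count as distinct elements. That reading is inconsistent with the paper's own analysis of {\proptwo}: the paper argues that $\size$ satisfies {\proptwo} by bounding the number of workflow nets with $c$ nodes by $2^{c^2}$, a count that is finite only when nets differing by a renaming of nodes are identified. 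Under your raw-tuple reading, every net has infinitely many renamed copies of the same size, so {\proptwo} would fail for \emph{every} renaming-invariant measure and the paper's results table would collapse. Under the identification the paper evidently intends, your renamed copy $M'$ is the same element of $\mathcal{M}$ as $M$, and the pair witnesses nothing. A variant that swaps labels instead (changing only $\ell$) runs into the same tension, since it too would manufacture infinitely many equal-sized, equal-scored ``distinct'' nets.

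What you present as a fallback is in fact the paper's entire method, and it is what must carry the proof: for each measure, exhibit two structurally non-isomorphic nets that agree on the measured quantity. The paper does this for every single measure --- for $\size$, a three-way choice among equally labeled transitions and a two-activity sequence, both with $5$ nodes; for the connector-mismatch measure, two differently wired nets both scoring $4$; and analogously throughout. Such witnesses are valid under either reading of $\mathcal{M}$, which is precisely why they, and not a general renaming lemma, are the right tool for {\propthree}.
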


Weyuker states that not the output of a program should influence its complexity, but the details of the implementation~\cite{Wey88}.
Accordingly, the modeled process should not define the complexity of the model.
Put differently, if we have two models with the same language, it should be possible for these models to get different complexity scores. 
For a workflow net $M \in \mathcal{M}$, we define its language $L(M)$ as done in~\cite{Aal16}.

\begin{prop}[{\propfour} (W4)]
$C$ is \textbf{independent} of the input model's language:
\[\exists M_1, M_2 \in \mathcal{M}: L(M_1) = L(M_2) \land C(M_1) \neq C(M_2)\]
\end{prop}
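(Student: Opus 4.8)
The plan is to prove that $C$ satisfies \propfour\ by exhibiting an explicit witness pair $M_1, M_2 \in \mathcal{M}$ that produce the same language yet differ in a structural feature $C$ measures. The natural tool is the silent transition: since the language $L(M)$ (defined as in~\cite{Aal16}) projects $\tau$-labeled firings away, one can inflate the structure of a net without touching its observable behavior, which is precisely the gap between ``language'' and ``implementation'' that Weyuker's fourth property isolates.

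First I would take $M_1$ to be the minimal net $\netminwf$ of Figure~\ref{fig:netminwf}, relabeled so that its single transition $t$ carries a visible label, say $\ell(t) = a$; then its unique complete run yields $L(M_1) = \{\langle a \rangle\}$. Next I would build $M_2$ as the straight-line net $p_i \to t_\tau \to p \to t_a \to p_o$ with $\ell(t_\tau) = \tau$ and $\ell(t_a) = a$. Because $t_\tau$ contributes no symbol, the only complete run of $M_2$ again projects to $\langle a \rangle$, so $L(M_2) = \{\langle a\rangle\} = L(M_1)$, while $M_2$ has strictly more places and transitions than $M_1$. Hence any size-driven measure --- for instance $\size$, which counts nodes --- gives $C(M_1) \neq C(M_2)$, discharging the existential.

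The hard part will be that \propfour\ is stated for a fixed but arbitrary $C$, and it is genuinely false for some measures (a constant $C$ violates it), so the real content is twofold. The delicate obligation is, first, to argue language-equivalence rigorously from the definition of $L(\cdot)$, confirming that silent transitions are indeed erased and that no spurious additional traces arise; and second, for each concrete measure analyzed in Section~\ref{sec:analysis}, to pick a language-equivalent pair that the measure provably separates. For measures insensitive to plain $\tau$-padding I would widen the structural gap while preserving the language --- for example attaching a silent branch via $\parop$, or duplicating a subnet through the operations of Definition~\ref{def:operations} --- so that the specific structure the measure counts provably changes, while the projected behavior stays fixed.
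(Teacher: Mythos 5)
Your proposal is correct and takes essentially the same approach as the paper: the paper treats {\propfour} as an existential property to be verified measure-by-measure, exhibiting for each measure a pair of language-equal but structurally different nets, and it uses exactly your $\tau$-padding construction for several of them (the paper's witness $W_2^{\densityname}$ for $\density$ is literally your $M_2$, and $\diameter$'s witness likewise appends a silent transition), while switching to duplicate-label or deadlocking-connector witnesses for measures that $\tau$-padding cannot separate. Your closing observation---that the property cannot be proved for an arbitrary $C$ and that structurally richer, language-preserving witnesses are needed for $\tau$-insensitive measures---is precisely how the paper handles the connector-based measures.
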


It seems sensible that a small part of a process model should not be more complex than the entire model.
For example, the complexity of the model $M^1$ of Figure~\ref{fig:example} should not be higher than the complexity of the entire model, $M$.
Weyuker therefore proposes that complexity measures should be monotone in this sense.
Van der Aalst~\cite{Aal18} agrees with this, as his properties \textbf{RecPro3}, \textbf{PrecPro1} and \textbf{GenPro3} are closely related to \propfive.

\begin{prop}[{\propfive} (W5)]
$C$ is \textbf{monotone}, i.e. the complexity score of a composed model cannot be lower than the complexity score of one of its parts: Let $\opset := \{\seqop, \parop, \choiceop, \loopop\}$.
For $M_1, \dots, M_n \in \mathcal{M}$ and $M_i \in \{M_1, \dots, M_n\}$ we have:
\[\forall \oplus \in \opset: C(\oplus (M_1, \dots, M_n)) \geq C(M_i)\]
\end{prop}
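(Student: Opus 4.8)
The plan is to treat \propfive{} not as a statement true of every $C$, but as a condition to be verified for a given complexity measure, and to reduce that verification to a single structural observation about the operations of Definition~\ref{def:operations}. The key fact I would establish first is that for each $\oplus \in \opset$ and each argument net $M^j$, the composition $\oplus(M_1,\dots,M_n)$ contains $M^j$ as a subnet: the places, transitions and flow relation of $M^j$ all appear verbatim inside the result, and the operation only \emph{adds} new nodes and arcs --- for $\seqop$ the connecting transitions $t_j^*$; for $\parop$ and $\choiceop$ the fresh places $p_i^*,p_o^*$ together with their splitting and joining transitions; and for $\loopop$ the additional routing nodes $p^*,q^*,t^*,s^*$ and the $t_j^*,s_j^*$. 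No node or arc of $M^j$ is ever deleted or internally rewired.

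Given this embedding, the second step is to argue that any measure $C$ which is monotone with respect to such structural inclusion --- i.e.\ whose score is a non-decreasing function of the counted features (number of places, transitions, arcs, connectors, silent transitions, and so on) --- automatically satisfies $C(\oplus(M_1,\dots,M_n)) \geq C(M_i)$, because the composed net exhibits at least every feature that $M_i$ does. I would carry this out operation by operation, checking in each of the four cases that the freshly introduced structure contributes a non-negative amount to whatever feature is being counted, so that the inequality follows uniformly.

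The main obstacle I anticipate is that this clean argument only holds for additive, counting-style measures. For measures that normalize --- for instance a density score dividing arc count by a power of the node count, or an average connector degree --- the added routing nodes can dilute the score, so the composed model may be \emph{less} complex than one of its parts. For such measures \propfive{} is simply not provable, and the correct move is to exhibit a concrete pair witnessing $C(\oplus(M_1,\dots,M_n)) < C(M_i)$, using the minimal net $\netminwf$ of Figure~\ref{fig:netminwf} as a low-complexity building block to force the strict inequality. Thus the real work is a case split per measure rather than one global proof: a uniform structural-inclusion argument for the counting measures, and explicit counterexamples for the normalizing ones. The iteration $\loopop$ will be the most delicate case in both directions, since its wiring introduces the largest amount of auxiliary structure and therefore has the strongest effect on any normalized quantity.
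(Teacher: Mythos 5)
Your overall architecture is the right one, and it matches what the paper actually does: \propfive{} is not proved once for all $C$, but checked measure by measure in Section~\ref{sec:analysis}, with composition arguments for the measures that satisfy it and explicit counterexamples (often built from $\netminwf$) for those that do not. Your structural observation is also correct: every operation of Definition~\ref{def:operations} embeds each argument net verbatim and only adds nodes and arcs.

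However, your middle step --- that this embedding makes \propfive{} automatic for all ``counting-style'' measures, so that only \emph{normalizing} measures need counterexamples --- has a concrete gap. The embedding does not preserve the distinguished places: the initial and final places of an argument net are not the initial and final places of $\oplus(M_1,\dots,M_n)$. Any measure anchored at $p_i$ and $p_o$ can therefore fail monotonicity even though it counts rather than normalizes. The diameter $\diameter$ is exactly such a case: it is a pure length measure, yet Theorem~\ref{thm:diam-mon} gives $\diameter(\loopop(M_1,\dots,M_n)) = 8 + \diameter(M_1)$, because a longest way of a redo-part $M_2$ cannot be extended to a way from the new source to the new sink without reusing arcs of $M_1$; hence \propfive{} fails for $\loopop$ whenever $\diameter(M_2) > \diameter(M_1)+8$, and Table~\ref{table:operation-details} records precisely this ($\seqop$, $\parop$, $\choiceop$ yes, $\loopop$ no). Executing your plan as written would misclassify $\diameter$ as monotone. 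Similarly, the connector-mismatch measure $\mismatch$ is built purely from counts, but its absolute values let newly added connectors \emph{cancel} existing mismatches, so it too violates \propfive{}; your formal condition (``non-decreasing function of the counted features'') excludes it, but your counting-versus-normalizing dichotomy does not. Conversely, the anchored measure $\depth$ does satisfy \propfive{}, but only via a genuine path argument for $\seqop$ (prefixes contribute non-negatively to the in-depth, and the split/join status of embedded nodes is unchanged), not via bare inclusion. So the uniform lemma must be weakened to ``monotone under embeddings that need not preserve $p_i$ and $p_o$'' and then verified per measure anyway --- which is why the paper instead derives exact composition formulas (Theorems~\ref{thm:size}, \ref{thm:ts}, \ref{thm:cfc-mon}, \ref{thm:diam-mon}, \ref{thm:empty-add}) that settle \propfive{} and are reused for {\propsix}, {\propnine}, {\propnotsup} and {\propadd}.
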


For the next property, suppose we have two models with the same complexity score.
Weyuker argues that composing each of them with a third one does not necessarily lead to the same complexity score, since one part of a model can influence the complexity of another part.
We can observe this with the nets in Figure~\ref{fig:example}:
The fact that the activity $a$ of $M^2$ is also present in $M^1$ increases the perceived complexity of $M$. 
If $M^1$ would not contain the activity label $a$, the net would be simpler.
It is desirable that a complexity measure finds such dependencies.

\begin{prop}[{\propsix} (W6)]
$C$ is \textbf{composition sensitive}, i.e. there are process models of equal complexity according to $C$, whose complexity differ when composed with a third process model. Let $\opset := \{\seqop, \parop, \choiceop, \loopop\}$.
\begin{align*}
\forall \oplus \in \opset: \exists M_1, M_2, M_3 \in \mathcal{M}: C(M_1) &= C(M_2) 
\land C(M_1 \oplus M_3) \neq C(M_2 \oplus M_3)
\end{align*}
\end{prop}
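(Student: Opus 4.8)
The plan is to read W6 not as a universally valid theorem — it is plainly false for a constant measure, and indeed it is exactly the constant-style measures we want to rule out — but as the certificate one supplies for a concrete $C$ in Section~\ref{sec:analysis}. Since W6 quantifies universally over the four operators in $\opset$ but only existentially over $M_1, M_2, M_3$, it suffices, for each $\oplus \in \opset$, to exhibit a single witness triple with $C(M_1) = C(M_2)$ and $C(M_1 \oplus M_3) \neq C(M_2 \oplus M_3)$. My aim is to find one mechanism that drives all four cases, so that the four witnesses are minor variants rather than four unrelated constructions.

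The engine I would use is precisely the label-overlap phenomenon of the running example in Figure~\ref{fig:example}: the fact that the label $a$ occurs in both $M^1$ and $M^2$ raises the complexity of the combined net. Concretely, for a label-reactive measure such as $\duplicate$ or $\mismatch$, I would take $M_1$ and $M_2$ to be the minimal net $\netminwf$ with its single transition relabeled to a fresh letter $x$ and to a reused letter $a$, respectively, and take the probe $M_3$ to be $\netminwf$ with its transition also labeled $a$. Because $M_1$ and $M_2$ are identical up to a uniform relabeling, any sensible measure forces $C(M_1) = C(M_2)$; but in $M_2 \oplus M_3$ the letter $a$ now appears twice, creating a duplicate that is absent from $M_1 \oplus M_3$, so $C$ must separate the two composites.

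The steps, in order, are: first, fix this witness family and justify $C(M_1) = C(M_2)$ from a relabeling-invariance observation about the particular $C$; second, for each $\oplus$ instantiate the binary composition $\oplus(M_1, M_3)$ and $\oplus(M_2, M_3)$ by unfolding Definition~\ref{def:operations} and reading the relevant counts off the explicit flow relation; third, run the cancellation argument — since $M_1$ and $M_2$ are structurally identical, each operator adds exactly the same scaffolding (the silent $t_j^*, s_j^*$ and the new places $p_i^*, p_o^*$, and for $\loopop$ also $p^*, q^*$) to both composites, so every structure-based contribution is equal on the two sides and drops out, leaving only the label-overlap difference. Notably, placing $M_3$ in the same branch on both sides means even the asymmetry of $\loopop$ (which treats the last branch specially) contributes identically, so the cancellation survives for all four operators at once.

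The main obstacle is that this clean route is available only when $C$ actually reads labels. For a purely structural $C$, two structurally identical $M_1, M_2$ produce structurally identical composites, hence equal scores, and the overlap mechanism collapses; W6 then has to be witnessed by a genuinely different pair $M_1, M_2$ that happen to tie in score but whose structural asymmetry is amplified by the composition — for instance a net whose combination with $M_3$ creates a new connector (a node crossing the $|\post{\cdot}| > 1$ or $|\pre{\cdot}| > 1$ threshold of Definition~\ref{def:connectors}) in one case but not the other. Constructing such a tie-then-break pair, and in some cases recognising that no such pair exists so that the negation of W6 must be proved instead, is the real work; deciding per measure which of these regimes applies is exactly what the case analysis of Section~\ref{sec:analysis} carries out.
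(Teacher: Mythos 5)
Your reading of W6 is the right one: it is a property to be certified or refuted measure-by-measure, not a theorem valid for every $C$, and your witness strategy is essentially the paper's. The label-overlap construction you give (two copies of $\netminwf$ distinguished only by a fresh versus a reused label, probed against a third copy carrying the reused label) is exactly the paper's witness for $\duplicate$: it takes $W_1^{\duplicatename}$ labeled $a$, a relabeled copy $W_{1,1}^{\duplicatename}$ labeled $b$, composes each with $W_1^{\duplicatename}$ under all four operators with the newly introduced transitions labeled $\tau$, and the scores separate precisely because the structural scaffolding cancels on both sides. Your closing paragraph also correctly identifies the alternative regime --- structurally distinct pairs that tie in score but diverge under composition, or a proof of the negation via an additivity theorem --- which is what the paper does for the structural measures that satisfy W6 and for $\size$, $\tokensplit$, $\controlflow$, $\emptyseq$, which do not.

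The one concrete error is your classification of $\mismatch$ as label-reactive. The connector-mismatch measure depends only on connector types and their in- and out-degrees (this is exactly how the paper argues its W8 robustness), so for your proposed witnesses $M_1, M_2$ --- identical nets up to relabeling --- the composites $M_1 \oplus M_3$ and $M_2 \oplus M_3$ are structurally identical and receive the same score; the overlap mechanism collapses for $\mismatch$, just as you predict for ``purely structural'' measures. The paper instead witnesses W6 for $\mismatch$ with the structurally different pair $W_2^{\mismatchname}$, $W_3^{\mismatchname}$ of Figure~\ref{fig:mm-examples} (an and-split closed by an xor-join versus an xor-split closed by an and-join, both scoring $4$): composing $W_2^{\mismatchname}$ with $W_3^{\mismatchname}$ lets the opposite mismatches cancel to $0$, while $W_3^{\mismatchname} \oplus W_3^{\mismatchname}$ doubles them to $8$. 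So your plan goes through verbatim for $\duplicate$, but for $\mismatch$ you must invoke the tie-then-break construction from your last paragraph rather than the relabeling witnesses.
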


One way of transforming a program is to reorder its instructions.
Weyuker argues that such a reordering should impact the complexity.
This is also true for process models:
If we add more arcs to $M$ in Figure~\ref{fig:example}, it looks more complex.

\begin{prop}[{\propseven} (W7)]
$C$ is sensitive for \textbf{permutations}, i.e. changing the start- and end-points of arcs in a workflow net can have an impact on the complexity score:
\[\exists M \in \mathcal{M}: \exists M' \in Perm(M): C(M) \neq C(M')\]
\end{prop}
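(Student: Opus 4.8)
The plan is to read \propseven{} not as a universal fact but as a condition on the fixed measure $C$ under analysis, so that "proving" it amounts to exhibiting a single witness pair: a workflow net $M \in \wfuniv$ together with a permutation $M' \in Perm(M)$ for which $C(M) \neq C(M')$. Because a permutation keeps the transition set $T$ and the labeling $\ell$ fixed while being free to choose new places $P'$, a new flow relation $F'$, and new source and sink places, the two nets can share every transition yet differ arbitrarily in their arc structure. The entire burden therefore falls on choosing $M$ and $M'$ so that the structural quantity $C$ actually measures is forced to change.

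First I would take for $M$ a net whose control flow is structurally as poor as possible: the purely sequential net on transition set $\{a,b\}$ with arcs $p_i \to a \to p \to b \to p_o$, which contains no connectors in the sense of Definition~\ref{def:connectors}. Then I would build $M'$ by rerouting the arcs so that the same transitions $a$ and $b$ lie on two alternative branches between a fresh source $p_i'$ and a fresh sink $p_o'$, i.e. $\post{p_i'} = \{a,b\}$ and $\pre{p_o'} = \{a,b\}$; now $p_i'$ is an xor-split and $p_o'$ an xor-join, so $M'$ carries two connectors where $M$ had none. I would then check the three conditions of Definition~\ref{def:workflow-net}—that $p_i'$ is the unique place with empty preset, $p_o'$ the unique place with empty postset, and every node lies on a path from $p_i'$ to $p_o'$, all immediate for this choice shape—so that indeed $M' \in Perm(M)$.

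With the witness in place, the remaining step is to evaluate $C$ on both nets. For any $C$ that responds to connectors, arc count, density, or average degree, the extra structure in $M'$ forces $C(M) \neq C(M')$, so \propseven{} holds; this is exactly the flavour of argument illustrated informally by adding arcs to the net of Figure~\ref{fig:example}. For a measure that ignores such structure—say one depending only on $|T|$ or on the multiset of labels—the same witness yields $C(M) = C(M')$, and one must either find a cleverer permutation or accept that $C$ fails \propseven{}.

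The main obstacle is precisely this measure-dependence: \propseven{} is not automatic, so the real work is tailoring the permutation to what $C$ counts, and, crucially, verifying that the rerouted net $M'$ is still a legal workflow net—retaining a single source and sink together with full reachability while altering $F'$ is the constraint most easily broken by a careless rewiring. For measures that are genuinely invariant under every structural rearrangement of a fixed transition set, no witness can exist, and the honest conclusion is that such a $C$ does not satisfy the property.
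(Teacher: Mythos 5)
Your reading is correct and matches the paper's treatment: \propseven{} is a condition on each fixed measure $C$, and the paper establishes it measure-by-measure by exhibiting tailored witness pairs $M$, $M' \in Perm(M)$ (e.g.\ a sequence rewired into a split/join structure for $\size$, $\controlflow$, $\maxconn$, $\sequentiality$), while the one measure invariant under rewiring, $\duplicate$, is exactly the paper's sole failure of \propseven{} — just as you predict. Your caveat about tailoring is also borne out: your default sequence-to-choice witness gives equal scores for $\mismatch$ (both mismatches are $0$), $\tokensplit$, $\cyclicity$ and $\emptyseq$, so for those measures the paper indeed uses cleverer, asymmetric permutations, but this is precisely the measure-dependence you flagged rather than a gap.
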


Weyuker states that renaming the variables of a program should not impact its complexity.
Cardoso~\cite{Car05} translates the renaming of variables to the uniform renaming of activity names.
We also use this translation for workflow nets.

\begin{prop}[{\propeight} (W8)]
$C$ is \textbf{robust} against relabelings, i.e. uniformly changing the labeling does not affect complexity:
\[\forall M = (P,T,F,\ell,p_i,p_o) \in \mathcal{M}: \forall \ell' \in \mathcal{R}_{\ell}: C(M) = C(M_{\ell'})\]
\end{prop}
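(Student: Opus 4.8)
The statement \propeight\ quantifies over every workflow net $M$ and every uniform relabeling $\ell' \in \mathcal{R}_{\ell}$, so the plan is to reduce the equality $C(M) = C(M_{\ell'})$ to an invariance argument about the data that $C$ actually reads. The crucial observation is that, by the definition of relabeling, the net $M_{\ell'} = (P, T, F, \ell', p_i, p_o)$ shares with $M$ the \emph{identical} underlying Petri net $(P, T, F)$ and the same source and sink $p_i, p_o$; only the labeling map is exchanged. First I would unfold the definition of a \emph{uniform} relabeling to extract its two structural invariants: the equivalence relation ``$t_1$ and $t_2$ carry the same label'' on $T$ is preserved ($\ell(t_1) = \ell(t_2) \leftrightarrow \ell'(t_1) = \ell'(t_2)$), and the set of silent transitions is preserved ($\ell(t) = \tau \leftrightarrow \ell'(t) = \tau$). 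Consequently every partition-level and silentness-level feature of the net is fixed under the relabeling.

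With these invariants in hand, verifying \propeight\ for a concrete measure $C$ becomes a matter of inspecting which quantities its defining formula consumes. For any purely graph-theoretic measure---one built from $|P|$, $|T|$, $|F|$, or from the connector sets $\mathcal{S}_{\text{xor}}^W, \mathcal{J}_{\text{xor}}^W, \mathcal{S}_{\text{and}}^W, \mathcal{J}_{\text{and}}^W$ of Definition~\ref{def:connectors}, all of which are defined solely through $F$---the equality $C(M) = C(M_{\ell'})$ is immediate, because none of these objects mentions $\ell$. If instead $C$ also reads the labeling, I would check that it uses only label \emph{equality} or silentness (for example the number of distinct non-silent labels $|\ell(T) \setminus \{\tau\}|$, or a count of silent transitions); each such quantity is a function of the preserved partition and the preserved silent set, hence an invariant by the previous step, giving $C(M) = C(M_{\ell'})$ again.

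The main obstacle is conceptual rather than computational: a uniform relabeling deliberately \emph{changes} the concrete alphabet symbols while keeping their coincidence pattern, so any measure that genuinely distinguishes individual symbols---rather than merely their equality pattern---will violate \propeight\ instead of satisfying it. The real work is therefore a careful dichotomy across all candidate measures in the analysis: I would argue that the only label-dependent information any reasonable process-model complexity measure extracts is the pattern of coincidences among labels and the presence of silent transitions, both of which the definition of uniform relabeling pins down, so that every such measure lands on the satisfying side of \propeight. Measures lying outside this class must be exhibited explicitly as counterexamples rather than proved to satisfy the property.
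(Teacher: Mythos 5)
Your proposal is correct and matches the paper's treatment: since \propeight{} is a per-measure property, the paper likewise verifies it measure by measure, noting for every structural measure that its defining formula never reads $\ell$, and for the one genuinely label-dependent measure ($\duplicate$) that a uniform relabeling preserves the label-coincidence pattern and the silent set, hence the duplicate count. Your abstraction of these two invariants into a general lemma, followed by the dichotomy over what data each measure consumes, is exactly the reasoning the paper applies in its individual \propeight{} justifications.
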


The final property concerns the combination of two process models. 
According to Weyuker, it should be possible that the complexity of a composed model is higher than the sum of complexity scores of the parts.

\begin{prop}[{\propnine} (W9)]
$C$ is \textbf{not subadditive}, i.e. the complexity score of a composed model can be greater than the sum of the complexity of its parts: Let $\opset := \{\seqop, \parop, \choiceop, \loopop\}$.
\[\forall \oplus \in \opset: \exists M_1, M_2 \in \mathcal{M}: C(M_1 \oplus M_2) > C(M_1) + C(M_2)\]
\end{prop}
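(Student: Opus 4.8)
The statement W9 is an existential-over-operators claim: for \emph{each} composition operator $\oplus \in \opset$ we must exhibit two nets whose combined complexity strictly exceeds the sum of their individual complexities. The plan is to verify it directly for a concrete additive size-style measure and then to indicate where the argument is genuinely constrained. First I would take $C$ to be a node-counting measure, $C(W) = |P| + |T|$. The key observation is that every operator in Definition~\ref{def:operations} is \emph{constructive}: it forms the disjoint union of the operand nets and then adds at least one brand-new place or transition to glue them together. Because the operands are required to have disjoint node sets, none of these fresh elements coincides with anything already counted in $M_1$ or $M_2$, so the count on the composite is exactly the sum of the operand counts plus the glue.

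I would then go operator by operator, reading the surplus straight off the definition. For $\seqop(M_1, M_2)$ the construction adds the single new transition $t_1^*$, giving $C(\seqop(M_1,M_2)) = C(M_1) + C(M_2) + 1$. For $\parop(M_1, M_2)$ it adds the two places $p_i^*, p_o^*$ and the two transitions $t_i^*, t_o^*$, a surplus of four. For $\choiceop(M_1, M_2)$ it adds two places and four transitions, and for $\loopop(M_1, M_2)$ it adds four places and six transitions. In every case the surplus is strictly positive, so $C(M_1 \oplus M_2) > C(M_1) + C(M_2)$ holds for \emph{any} operands; choosing two copies of the minimal net $\netminwf$ gives a fully explicit witness for all four operators at once.

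The step I expect to be the main obstacle is not this additive case but the broader question that the analysis really cares about: W9 can \emph{fail} for measures whose behaviour under composition is genuinely subadditive, such as a measure normalized by size or one that counts only a bounded structural feature. The substantive work, when checking a specific $C$, is therefore to confirm that the glue each operator introduces is actually \emph{charged} by $C$. If an operator contributes only silent $\tau$-transitions or only arcs that the measure ignores, the surplus may be zero, and one must instead engineer the operands so that their \emph{interaction} under $\oplus$ manufactures extra complexity that was absent in either part (for example a connector or a label clash that only materializes after composition, as the running example in Figure~\ref{fig:example} illustrates for the shared label $a$). Discharging the single $\forall \oplus$ quantifier uniformly, while respecting each measure's idiosyncratic counting, is where the case analysis becomes delicate.
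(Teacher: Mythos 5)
Your proposal is correct and takes essentially the same route as the paper: the paper establishes {\propnine} for the size measure $\size$ via Theorem~\ref{thm:size}, which reads the strictly positive node surpluses ($1$, $4$, $6$, $10$ for $\seqop$, $\parop$, $\choiceop$, $\loopop$ with two operands) directly off Definition~\ref{def:operations}, exactly as your glue-counting argument does, and concludes the strict inequality for arbitrary operands. Your closing caveat---that the property genuinely fails when the glue is not charged by the measure---also matches the paper's findings for normalized or bounded measures such as $\mismatch$, $\netconn$ and $\density$.
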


The {\propnine} property is especially interesting if a measure does not fulfill it.
In this case, we know that the complexity score of a combination of two nets is always less or equal to the sum of complexity scores of the parts. 

\subsection{Extensions for Weyuker's Properties}
In this part, we define properties that extend the ones already defined.
A simple but useful property for the complexity of Petri nets was introduced by Morasca~\cite{Mor99}, who states that a complexity measure should always be defined and return non-negative values.
Obviously, this property is also important for complexity measures of process models, since negative complexity scores would be difficult to interpret.

\begin{prop}[\propdef]
The complexity measure $C$ is \textbf{defined} for every process model $M$ and returns non-negative values:
\[\forall M \in \mathcal{M}: C(M) \in \mathbb{R}^+_0\]
\end{prop}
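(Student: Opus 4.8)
Strictly speaking, \propdef{} is a predicate on the measure $C$ rather than a universal theorem: Definition~\ref{def:complexity} asks only for a function $C : \mathcal{M} \rightarrow \mathbb{R}$, which may well take negative values or be partial, so "establishing \propdef{}" means verifying, for each concrete measure under study, that it satisfies the displayed condition. My plan is therefore to split the verification for a fixed $C$ into two independent obligations: (i) \emph{totality}, i.e.\ $C(M)$ is a well-defined real number for every $M \in \mathcal{M}$, and (ii) \emph{non-negativity}, i.e.\ $C(M) \geq 0$ for every such $M$. Together these give $C(M) \in \mathbb{R}^+_0$, and neither is automatic, so a measure can fail \propdef{} on either count.

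First I would dispatch the whole family of counting-based measures — those that tally places, transitions, arcs, or connectors — in one stroke. Every workflow net is finite by Definition~\ref{def:workflow-net}, so each relevant structural set is finite and its cardinality is a non-negative integer; any finite sum of such cardinalities is again a well-defined element of $\mathbb{N}_0 \subseteq \mathbb{R}^+_0$, settling both obligations simultaneously for this family.

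The substantive work lies with the ratio-based measures, where totality is the main obstacle: $C(M)$ is a quotient whose denominator counts some structural feature, and $C$ is total only if that denominator never vanishes on $\mathcal{M}$. The key step is to extract the structural lower bounds forced by Definition~\ref{def:workflow-net}. Since $p_i \neq p_o$ and $p_i$ is the unique place with $|\pre{p_i}| = 0$, the place $p_o$ must satisfy $|\pre{p_o}| \geq 1$, which yields $|T| \geq 1$; together with $p_i \neq p_o$ this gives $|P| \geq 2$, and since every transition lies interior to a path from $p_i$ to $p_o$ one also gets $|F| \geq 2$. These bounds make denominators such as $|P|$, $|T|$, or $|P| + |T|$ strictly positive on every net, so totality follows immediately for measures built over them. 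The genuinely hard case is a denominator that can legitimately be $0$ on a degenerate net — for instance the number of connectors, which is $0$ for $\netminwf$ — so for any measure that averages over connectors (or over arcs incident to connectors) I would have to check that its definition supplies a convention for the empty case, typically the value $0$; absent such a convention the measure is partial and \propdef{} fails precisely at $\netminwf$.

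Once totality is secured, non-negativity for the ratio measures reduces to checking that numerator and denominator carry the same sign, which for structural counts is again automatic. I expect the denominator-vanishing analysis on degenerate nets such as $\netminwf$ to be the decisive and most delicate step, since it is exactly the point at which an otherwise reasonable complexity measure can quietly violate \propdef{} by being undefined rather than by returning a negative score.
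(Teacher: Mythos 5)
Your overall scheme is the same as the paper's: \propdef{} is verified measure by measure, the counting-based measures pass immediately because cardinalities of finite sets are non-negative, the structural bounds $|P| \geq 2$, $|T| \geq 1$, $|F| \geq 2$ forced by Definition~\ref{def:workflow-net} secure the denominators of the ratio measures, and the decisive failures are exactly the connector-based measures ($\connhet$, $\maxconn$, $\avgconn$), which are undefined on connector-free nets such as $\netminwf$ of Figure~\ref{fig:netminwf} unless one adds the convention that they return $0$ there. This matches the paper's verdicts and its suggested repair for those three measures.

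The one genuine gap is your claim that, once totality is secured, non-negativity ``reduces to checking that numerator and denominator carry the same sign.'' That is true for measures that are plain ratios of counts, but the paper's versions of cross-connectivity, separability and sequentiality are defined as $1$ minus a ratio, so non-negativity requires the ratio to be at most $1$, i.e.\ numerator $\leq$ denominator. For separability and sequentiality this is easy ($|\mathcal{V}(\overline{W})| \leq |P| + |T| - 2$ because $p_i$ and $p_o$ are never cut-vertices, and sequential arcs form a subset of $F$), but for cross-connectivity it is the content of Theorem~\ref{theo:cc-bounded}: one must argue that $V_M(v, p_i) = 0$ and $V_M(p_o, v) = 0$ for every node $v$ (no path enters $p_i$ and none leaves $p_o$), so at most $(|P| + |T| - 1)^2$ of the $V$-values can equal $1$, which bounds their sum by $(|P| + |T|) \cdot (|P| + |T| - 1)$. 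Without this step your plan does not establish $\crossconn(M) \geq 0$, so the sign-checking sweep must be replaced by an explicit numerator-versus-denominator bound for every measure of the form $1 - x/y$.
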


Often, we want a complexity measure to have a minimum value to avoid generating models with peculiar structures that lower the complexity.
For example, if we want to lower the density of a model, we can add a long chain of $\tau$-transitions after the initial place.
A complexity measure not having a minimum value is a hint that this can happen.

\begin{prop}[\propmin]
The complexity measure $C$ has a \textbf{minimum} that can be reached by a process model:
\[\exists m \in \mathcal{M}: \forall M \in \mathcal{M}: C(m) \leq C(M)\]
\end{prop}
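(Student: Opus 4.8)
The MIN property is not a statement that holds for every complexity measure, but a property that a given measure $C$ may or may not satisfy; so "proving it" means, for a concrete measure, either exhibiting a global minimiser $m \in \mathcal{M}$ or ruling one out. The plan is to read off the candidate minimiser from the structure of $C$ and then verify global minimality. For every connector- or size-counting measure considered here, the natural candidate is the smallest workflow net $\netminwf$ of Figure~\ref{fig:netminwf}: it contains only the mandatory nodes $p_i, p_o$ and a single transition, it has no xor- or and-connectors ($\mathcal{C}^{\netminwf} = \emptyset$), it forces no silent transitions, and it uses the fewest possible arcs. It therefore simultaneously minimises essentially every structural quantity that a complexity measure could aggregate.

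Concretely, I would proceed in three steps. First, establish a lower bound: if $C$ already satisfies DEF, then $C(M) \ge 0$ for all $M$, so $0$ is a lower bound and it suffices to find a net attaining it. Second, set $m := \netminwf$ and compute $C(\netminwf)$; for the purely counting measures this value is $0$ (no connectors, no mismatches), which together with DEF immediately yields $C(\netminwf) \le C(M)$ for all $M$. Third, for measures that do not vanish on $\netminwf$ — for instance size, which counts $|P| + |T|$ — I would argue minimality directly from the definition of a workflow net: since $p_i \neq p_o$ and every node must lie on a path from $p_i$ to $p_o$, any net has at least two places and one transition, so $\netminwf$ realises the componentwise minimum of the underlying counts and hence of any monotone aggregation of them.

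The main obstacle is the universal quantifier over all of $\mathcal{M}$: nets can be arbitrarily large and structurally intricate, so global minimality cannot be checked by enumeration and must instead be reduced to a structural lower bound on the quantities that $C$ aggregates. This is also where the property can genuinely fail, and I would watch for exactly the pathology flagged in the motivation — measures such as density, where appending a long chain of $\tau$-transitions drives the score strictly downwards. For such a measure the correct outcome of the argument is a disproof: I would exhibit a family $M_1, M_2, \dots$ with $C(M_k)$ strictly decreasing and no attained infimum, witnessing that MIN fails. The real content of the proof is thus deciding, from the algebraic form of $C$, whether the simplest net is a true minimiser or whether complexity can be diluted indefinitely.
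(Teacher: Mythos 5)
Your reading of the statement is exactly right: \propmin{} is a per-measure property, and your three-step template (a lower bound, typically from \propdef; the witness $\netminwf$; or, when the infimum is not attained, a strictly decreasing family of nets) is precisely how the paper settles it measure by measure, including the disproofs for cross-connectivity and density via the $W_{\min,k}$ constructions. The only small caveat is that the witness is not always $\netminwf$: for measures that are undefined on connector-free nets (connector heterogeneity, average connector degree) the paper's minimiser is instead a small net containing a single connector type of degree two, which your framework accommodates even though your text slightly overstates the universality of $\netminwf$.
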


The last two properties we want to investigate are related to the {\propnine} property.
If a complexity measure fulfills {\propnine} it makes sense to analyze it for superadditivity.
While subadditive complexity measures can help in estimating the complexity of a composed net, superadditive measures imply that one should take care when using the operations of Definition~\ref{def:operations}.

\begin{prop}[\propnotsup]
$C$ is \textbf{not superadditive}, i.e. the complexity score of a composed model can be less than the sum of the complexity scores of its parts: Let $\opset := \{\seqop, \parop, \choiceop, \loopop\}$.
\[\forall \oplus \in \opset: \exists M_1, M_2 \in \mathcal{M}: C(M_1 \oplus M_2) < C(M_1) + C(M_2)\]
\end{prop}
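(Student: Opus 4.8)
The plan is to establish \propnotsup by exhibiting a concrete complexity measure together with explicit witness nets for each of the four operators. The crucial preliminary observation is that \propnotsup \emph{fails} for any measure that merely counts nodes or arcs, such as $\size$: the constructions in Definition~\ref{def:operations} only ever add places and transitions to the disjoint union of the parts, so a counting measure always satisfies $C(M_1 \oplus M_2) \geq C(M_1) + C(M_2)$ and is superadditive. Hence the natural witnesses must come instead from a normalised or ratio-type measure, for which the extra routing structure introduced by an operator \emph{dilutes} rather than accumulates the score. I would therefore work with a density-style measure such as $\density$, whose value I would first verify to be bounded above and to drop whenever comparatively sparse connecting material is inserted between two already-dense components.

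First, for each $\oplus \in \opset$ I would fix a pair $M_1, M_2$ of small, structurally dense nets — for concreteness two copies of a single net with high local connectivity — and compute $C(M_1)$ and $C(M_2)$ directly. By symmetry these two values coincide and are strictly positive, so their sum is a fixed positive quantity $2d$. Next, using the explicit place-, transition- and flow-sets listed in Definition~\ref{def:operations}, I would compute $C(\oplus(M_1,M_2))$ for each operator in turn. The target of each computation is the single strict inequality $C(\oplus(M_1,M_2)) < 2d$, which is exactly the witness demanded by the statement, since the property only requires one such pair per operator.

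The main obstacle is uniformity across the four operators, because they add markedly different amounts of structure: $\seqop$ inserts a single silent transition and two new arcs, $\parop$ adds two places and two transitions, $\choiceop$ adds $2n$ transitions and two places, and $\loopop$ adds the most, four places together with several transitions and back-arcs. A single witness pair is therefore unlikely to discharge all four cases simultaneously, so I expect to tune $M_1, M_2$ separately for each $\oplus$, making the individual parts dense enough that their summed score dominates the comparatively sparse composite. The delicate case is $\seqop$, which contributes the least new material and thus the smallest dilution; here I would compensate by choosing parts whose individual density is as large as possible relative to the lone new node and two new arcs that sequential composition adds, so that the composite density still falls strictly below $2d$. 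Once the four strict inequalities are verified, the existential claim of \propnotsup follows at once for this measure, confirming that the property is satisfiable and non-vacuous.
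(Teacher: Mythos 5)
Your proposal is correct and takes essentially the paper's own route: {\propnotsup} is a property verified measure by measure rather than proved as a standalone theorem, your preliminary observation that $\size$ is superadditive (hence fails the property) is exactly the content of Theorem~\ref{thm:size}, and your witness-based verification for $\density$ mirrors the paper's analysis of that measure. Two refinements make the execution simpler than you anticipate. First, a single witness pair discharges all four operators at once: the one-transition net $W_1^{\densityname}$, whose density $1$ is the maximum possible, composed with itself yields scores $\frac{1}{3}$, $\frac{1}{4}$, $\frac{1}{5}$ and $\frac{1}{7}$ under $\seqop$, $\parop$, $\choiceop$ and $\loopop$ respectively, all far below the sum $2$, so the per-operator tuning you expect to need never arises. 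Second, $\seqop$ is not the delicate case for a ratio measure: the denominator $2\,|T|\,(|P|-1)$ of $\density$ grows multiplicatively when nets are glued together while $|F|$ grows only additively — indeed the paper's {\propnine} analysis proves that $\density$ is subadditive for \emph{every} pair of nets and every operator, a stronger fact that yields {\propnotsup} immediately.
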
 

Similarly, if a complexity measure fulfills neither {\propnine} nor {\propnotsup}, it is interesting to investigate whether this is because the measure is additive.
Additive measures remove the need to recalculate the complexity for composed nets.

\begin{prop}[\propadd]
$C$ is \textbf{additive}, i.e. the complexity score of a composed model is exactly the sum of the complexity scores of its composed parts: Let $\opset := \{\seqop, \parop, \choiceop, \loopop\}$.
\[\forall \oplus \in \opset: \forall M_1, M_2 \in \mathcal{M}: C(M_1 \oplus M_2) = C(M_1) + C(M_2)\]
\end{prop}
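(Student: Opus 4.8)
The plan is to treat \propadd{} as a predicate on a fixed complexity measure $C$ and to verify the defining equation separately for each operator $\oplus \in \opset$, since the outer quantifier ranges over all four of $\seqop, \parop, \choiceop, \loopop$. For a fixed $\oplus$ and arbitrary $M_1, M_2 \in \mathcal{M}$, I would read off from Definition~\ref{def:operations} precisely which places, transitions and arcs $\oplus(M_1, M_2)$ introduces on top of the disjoint union of the two components, and then argue that $C$ evaluates the composite as $C(M_1) + C(M_2)$ plus a \emph{glue contribution} that must be shown to vanish identically.

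First I would decompose $C(M_1 \oplus M_2)$ along the structure of Definition~\ref{def:operations}. Because the components have disjoint place and transition sets, any measure built from a cardinality over $P$, $T$ or $F$ splits additively into the contribution of $M_1$, the contribution of $M_2$, and the contribution of the newly introduced glue elements. The key step is then to prove that this glue term is exactly $0$ for every operator. The natural witness that makes \propadd{} provable is a measure that counts only the labeled (non-silent) transitions of a net: all transitions introduced by the four constructions (the $t_j^*, s_j^*, t_i^*, t_o^*, t^*, s^*$) are silent $\tau$-transitions, and the freshly added places and arcs are not counted at all, so the glue contributes nothing and additivity over the disjoint union is immediate.

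The hard part will be handling all four operators simultaneously, because they add structurally different overhead. Sequential composition introduces only a single transition and no new place, whereas $\parop$ adds two places and two transitions, $\choiceop$ adds two places and four transitions, and $\loopop$ adds four places together with several transitions and feedback arcs that turn $p^*$ into an xor-split and $q^*$ into an xor-join. Consequently, any measure that counts places, arcs, connectors, or node degrees will pick up a strictly positive glue term from at least one operator---most severely from $\loopop$---and therefore fail \propadd. The crux of a positive proof is thus to check that the glue contribution is zero for \emph{each} of the four operators, with iteration imposing the tightest constraint; conversely, to show that a given structural measure is \emph{not} additive it suffices to exhibit one operator (typically $\parop$ or $\loopop$) whose glue elements are counted, producing a strict inequality in one direction---precisely the \propnine{} or \propnotsup{} behavior that \propadd{} rules out.
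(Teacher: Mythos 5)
The statement you are analysing is not a theorem but the definition of the property \propadd{} --- a predicate on a complexity measure $C$. Accordingly, the paper contains no standalone proof of it; it verifies or refutes \propadd{} measure by measure. Your general strategy --- fix an operator, split $C(M_1 \oplus M_2)$ over the disjoint union of the components into $C(M_1) + C(M_2)$ plus a glue term contributed by the fresh nodes and arcs of Definition~\ref{def:operations}, and require the glue term to vanish for \emph{all four} operators --- is exactly the machinery the paper uses: Theorem~\ref{thm:empty-add} shows the glue term is identically zero for $\emptyseq$ (the only measure with a full \yes{} in the \propadd{} row of Table~\ref{table:results}), while Theorems~\ref{thm:size}, \ref{thm:ts} and \ref{thm:cfc-mon} compute the nonzero per-operator glue constants that make $\size$ fail it entirely and leave $\tokensplit$ and $\controlflow$ additive only for some operators. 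Your structural accounting of what each operator adds (one transition for $\seqop$; two places and two transitions for $\parop$; two places and four transitions for $\choiceop$; four places, six transitions and feedback arcs for $\loopop$) is also accurate.

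The genuine gap is your positive witness. You claim a measure counting non-silent transitions is additive because \emph{all transitions introduced by the four constructions are silent $\tau$-transitions}. Definition~\ref{def:operations} does not say this: the operations are defined on tuples $(P^j, T^j, F^j, p_i^j, p_o^j)$ with no labeling function at all, and the labels of the fresh transitions $t_j^*, s_j^*, t_i^*, t_o^*, t^*, s^*$ are left unspecified. The paper itself treats these labels as a free choice --- in the analysis of $\duplicate$ it explicitly writes ``by setting the labels of all transitions that are newly introduced by one of the operations to $\tau$'' before deriving consequences. So your glue term vanishes only under an additional labeling convention that is not part of the definition; under the opposite convention (fresh visible labels on the glue transitions) your measure gains $+1$ per glue transition and fails \propadd{} for every operator, including $\seqop$. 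The paper's actual witness avoids this dependence entirely: the additivity of $\emptyseq$ in Theorem~\ref{thm:empty-add} is label-independent, resting on the purely structural fact that none of the four operations can create a place whose preset consists only of and-splits and whose postset consists only of and-joins.
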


Of course, one could imagine more properties for complexity measures.
We focus especially on Weyuker's properties and their extensions, since they are well-known for the analysis of software complexity measures and because they were already successfully used on a complexity measure for process models.

\section{Analysis of Complexity Measures}
\label{sec:analysis}
In this section, we analyze the complexity measures that were categorized into dimensions by Lieben et al.~\cite{LieDJJ18} regarding the properties defined in Section~\ref{sec:properties}.
We structure our analyses by the complexity dimension the complexity measures belong to.
For each measure, we first give a brief introduction in how the measure works and translate its definition for BPMN into a definition for workflow nets.
Then, we go through all properties of Section~\ref{sec:properties} and give a short indicator whether the measure fulfills the property~($\yes$) or not~($\no$), followed by a detailed justification for that claim.
Note that we repeat some example nets for different complexity measures.
This is because the analyses are intended to work like a dictionary, where the analyses of a complexity measure can be understood without reading the other analyses.

Before starting with the analyses, we show a Theorem that will prove useful for measures that are based on the connectors of a workflow net.
\begin{theorem}{Connector metric finitude} 
Let $C$ be a complexity measure and $\ran(C)$ be the set of complexity scores $C$ can return.
If $C$ takes only connectors as an input, infinitely many workflow nets can be constructed for each value $c \in \ran(C)$.
\label{thm:conn-inf}
\end{theorem}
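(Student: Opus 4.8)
The plan is to fix an arbitrary $c \in \ran(C)$ and exhibit an operation on workflow nets that strictly increases their size while leaving every connector set untouched, so that iterating it produces infinitely many distinct nets all mapped to $c$. Since $c \in \ran(C)$, there is at least one witness net $W = (P, T, F, \ell, p_i, p_o)$ with $C(W) = c$; I would start from this $W$.

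The central construction is \emph{arc subdivision} by a fresh silent transition and place. Concretely, I would pick any arc of $W$ --- every workflow net has one, since a path from $p_i$ to $p_o$ with $p_i \neq p_o$ must traverse at least one transition --- say an arc $(t, p) \in F$ with $t \in T$ and $p \in P$, and replace it by the three arcs $(t, p^*)$, $(p^*, t^*)$, $(t^*, p)$ for a new place $p^*$ and a new silent transition $t^*$. This preserves the bipartite alternation required by $F \subseteq (P \times T) \cup (T \times P)$. The first step is then to verify that the workflow-net axioms of Definition~\ref{def:workflow-net} are maintained: $p^*$ has a nonempty preset $\{t\}$ and $t^*$ has nonempty pre- and postset, so no new source or sink place is created and $p_i, p_o$ remain the unique ones; and since $p^*, t^*$ are inserted along an existing arc, they lie on a $p_i$--$p_o$ path while every other node keeps such a path.

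The key step is to show that the four connector sets of Definition~\ref{def:connectors} are literally unchanged. The new nodes satisfy $|\pre{p^*}| = |\post{p^*}| = 1$ and $|\pre{t^*}| = |\post{t^*}| = 1$, so neither is an xor- or and-connector. For the endpoints of the subdivided arc, $\post{t}$ merely has $p$ replaced by $p^*$ and $\pre{p}$ has $t$ replaced by $t^*$, so $|\post{t}|$ and $|\pre{p}|$ --- and hence the connector status and degree of $t$ and of $p$ --- are preserved; every other node is untouched. Thus $\mathcal{S}_{\text{xor}}^{W}$, $\mathcal{J}_{\text{xor}}^{W}$, $\mathcal{S}_{\text{and}}^{W}$ and $\mathcal{J}_{\text{and}}^{W}$, together with the degrees and labels of their elements, agree before and after the operation. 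Because $C$ takes only connectors as input, it is a function of exactly this data, so the subdivided net again receives the score $c$.

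To finish, I would iterate the subdivision to obtain a sequence $W = W_0, W_1, W_2, \dots$, where $W_{k+1}$ subdivides one arc of $W_k$. Each step strictly increases $|P| + |T|$, so the $W_k$ are pairwise distinct, and by the two steps above each is a workflow net with $C(W_k) = c$. This yields infinitely many workflow nets of value $c$, as claimed. I expect the only delicate point to be the bookkeeping that the operation preserves the workflow-net axioms and that ``depends only on connectors'' is correctly read as invariance under changes that fix all connector sets and their degrees; once that reading is fixed, the degree computations are immediate.
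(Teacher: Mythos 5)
Your proof is correct and takes essentially the same route as the paper's: the paper also starts from a witness net with score $c$ and inserts a chain of places and transitions (there, between $p_i$ and the first connector) that leaves every connector and its degree untouched, which is precisely your arc-subdivision step iterated. Your version merely spells out the bookkeeping (preservation of the workflow-net axioms and of the connector sets and degrees) that the paper's one-line argument leaves implicit.
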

\begin{proof}
Let $C$ be a complexity measure, $c \in \ran(C)$ and $W_c$ be a workflow net with $C(W_c) = c$. 
Then we can insert a sequence of places and transitions between the initial place and the first connector without changing the total number of connectors or their degrees. \hfill $\square$
\end{proof}
Note that if the additional transitions are silent, they also do not change the language of the model. 

\subsection{Token Behavior Complexity}
The \textsc{Token Behavior Complexity} dimension~\cite{LieDJJ18} contains complexity measures that record the number of consumed and produced tokens by executing activities.
This dimension is the largest of the ones found by Lieben et al., containing seven complexity measures in total.

\def\name{\sizename}
\def\C{\size}
\subsubsection{Size} 
The size of a process model is one of the oldest ways to measure its complexity \cite{Men08,ReiM11,Mor99}.
For BPMN models, Mendling defines this complexity measure as the number of nodes in the model \cite[p.118]{Men08}.
There are variations of this measure that take the number of arcs into account or give the transitions weights according to the complexity of the activity that it represents~\cite{Mor99}.
While these kinds of measures are also important, we directly adapt the measure of Mendling.
The analyes for the aforementioned variations are similar to the ones we perform here.
Let $W = (P,T,F,\ell,p_i,p_o)$ be an arbitrary workflow net.
\begin{equation}
\label{eq:size}
\C(W) = |P| + |T|
\end{equation}
Figure~\ref{fig:size-examples} shows some simple example nets and their respective complexity score.
\begin{figure}[ht]
\begin{center}
\begin{minipage}{0.2525\textwidth}
\centering
\scalebox{\scalefactor}{
\begin{tikzpicture}[node distance = 1.5cm,>=stealth',bend angle=0,auto]
	\node [place,tokens=1] (start) [label=below:$p_i$] {};
	\node [transition] (t1) [right of=start,label=center:$a$] {}
	edge [pre] (start);
	\node [place] (p1) [right of=t1,label=below:$p_o$] {}
	edge [pre] (t1);
	\node at (0,1.15) {$W_1^{\name}$:};
	\draw[opacity=0] ($(start)-(0.25,1.75)$) rectangle ($(p1) + (0.25,1.75)$);
\end{tikzpicture}}
\end{minipage}
\begin{minipage}{0.25\textwidth}
\centering
\scalebox{\scalefactor}{
\begin{tikzpicture}[node distance = 1.5cm,>=stealth',bend angle=0,auto]
	\node [place,tokens=1] (start) [label=below:$p_i$] {};
	\node [transition] (t2) [right of=start,label=center:$a$] {}
	edge [pre] (start);
	\node [transition] (t1) [above of=t2,label=center:$a$] {}
	edge [pre] (start);
	\node [transition] (t3) [below of=t2,label=center:$a$] {}
	edge [pre] (start);
	\node [place] (p1) [right of=t2,label=below:$p_o$] {}
	edge [pre] (t1)
	edge [pre] (t2)
	edge [pre] (t3);
	\node at (0,1.15) {$W_2^{\name}$:};
	\draw[opacity=0] ($(start)-(0.25,1.75)$) rectangle ($(p1) + (0.25,1.75)$);
\end{tikzpicture}}
\end{minipage}
\begin{minipage}{0.425\textwidth}
\centering
\scalebox{\scalefactor}{
\begin{tikzpicture}[node distance = 1.5cm,>=stealth',bend angle=0,auto]
	\node [place,tokens=1] (start) [label=below:$p_i$] {};
	\node [transition] (t1) [right of=start,label=center:$a$] {}
	edge [pre] (start);
	\node [place] (p1) [right of=t1] {}
	edge [pre] (t1);
	\node [transition] (t2) [right of=p1,label=center:$b$] {}
	edge [pre] (p1);
	\node [place] (p2) [right of=t2,label=below:$p_o$] {}
	edge [pre] (t2);
	\node at (0,1.15) {$W_3^{\name}$:};
	\draw[opacity=0] ($(start)-(0.25,1.75)$) rectangle ($(p2) + (0.25,1.75)$);
\end{tikzpicture}}
\end{minipage}
\end{center}
\caption{Three simple workflow nets, $W_1^{\name}, W_2^{\name}, W_3^{\name}$, with complexity scores $\C(W_1^{\name}) = 3$ and $\C(W_2^{\name}) = \C(W_3^{\name}) = 5$.}
\label{fig:size-examples}
\end{figure}

\begin{description}
\propitemf{\propone}{\yes} 
For the two nets $W_1^{\name}$ and $W_2^{\name}$ of Figure~\ref{fig:size-examples}, we get the complexity scores $\C(W_1^{\name}) = 3 \neq 5 = \C(W_2^{\name})$.

\propitemf{\proptwo}{\yes} 
Let $c \in \mathbb{R}$.
There are only finitely many workflow nets with complexity $c$: 
Suppose we have a directed graph $G = (V,E)$ with $|V| = c$. 
Then, $|E| \leq c^2$, since all nodes may be connected to all other nodes.
If we want to choose only some of these edges, we get $\sum_{i = 0}^{c^2} \binom{c^2}{i} = 2^{c^2}$ possible edge sets $E$.
Not all graphs that we can construct in this way can be converted to workflow nets, but every workflow net of size $c$ can be interpreted as such a graph.
Therefore, $|\{M \in \mathcal{M} \mid \C(M) = c\}| < 2^{c^2} < \infty$.

\propitemf{\propthree}{\yes} 
The workflow nets $W_2^{\name}$ and $W_3^{\name}$ of Figure~\ref{fig:size-examples} are different in structure, but both get the complexity score $\C(W_2^{\name}) = 5 = \C(W_3^{\name})$.

\propitemf{\propfour}{\yes} 
Take the workflow nets $W_1^{\name}$ and $W_2^{\name}$ of Figure~\ref{fig:size-examples}. 
Their languages are $L(W_1^{\name}) = \{\varepsilon, a\} = L(W_2^{\name})$, but their complexity scores are $\C(W_1^{\name}) = 3 \neq 5 = \C(W_2^{\name})$.

\propitemf{\propfive}{\yes} 
For this claim, we use the following Theorem:
\begin{theorem}
\label{thm:size}
Let $n \geq 2$. For any workflow nets $M_1, \dots, M_n \in \mathcal{M}$, we get:
\begin{itemize}
\item $\C(\seqop(M_1, \dots, M_n)) = \C(M_1) + \dots + \C(M_n) + n-1$,
\item $\C(\parop(M_1, \dots, M_n)) = \C(M_1) + \dots + \C(M_n) + 4$,
\item $\C(\choiceop(M_1, \dots, M_n)) = \C(M_1) + \dots + \C(M_n) + 2n+2$ and
\item $\C(\loopop(M_1, \dots, M_n)) = \C(M_1) + \dots + \C(M_n) + 2n+6$.
\end{itemize}
\end{theorem}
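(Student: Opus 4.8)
The plan is to read each equation directly off Definition~\ref{def:operations} by counting how many places and transitions each operator adds on top of those already present in the input nets. Because $\C(W) = |P| + |T|$, and because the hypothesis of Definition~\ref{def:operations} guarantees that $M^1, \dots, M^n$ have pairwise disjoint place sets and pairwise disjoint transition sets, the nodes of the input nets contribute exactly $\sum_{j=1}^n |P^j| + \sum_{j=1}^n |T^j| = \sum_{j=1}^n \C(M_j)$ to the composite, with no overlap to correct for. Every case therefore reduces to identifying the constant $k_\oplus$ counting the freshly introduced nodes, so that $\C(\oplus(M_1,\dots,M_n)) = \sum_{j=1}^n \C(M_j) + k_\oplus$.

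First I would dispatch the three simpler operators. For $\seqop$, the place set is unchanged and only the $n-1$ linking transitions $t_1^*, \dots, t_{n-1}^*$ are added, giving $k_{\seqop} = n-1$. For $\parop$, the two new places $p_i^*, p_o^*$ together with the two new transitions $t_i^*, t_o^*$ give $k_{\parop} = 4$, independent of $n$. For $\choiceop$, the two new places $p_i^*, p_o^*$ are added alongside the pair $t_j^*, s_j^*$ for each branch $j \in \{1, \dots, n\}$, i.e. $2n$ transitions, so $k_{\choiceop} = 2 + 2n$.

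The iteration operator $\loopop$ is the clause I would treat most carefully, since its flow relation is the most intricate and hence the easiest place to miscount. From the definition, the four new places $p_i^*, p_o^*, p^*, q^*$ are added, while the new transitions comprise the per-branch pairs $t_j^*, s_j^*$ for $j \in \{1, \dots, n\}$ (contributing $2n$) together with the two outer transitions $t^*, s^*$, for a total of $2n+2$ new transitions. This yields $k_{\loopop} = 4 + (2n+2) = 2n+6$. Assembling the four constants $k_{\seqop}, k_{\parop}, k_{\choiceop}, k_{\loopop}$ reproduces the four claimed equalities. There is no genuine obstacle beyond bookkeeping: the only real care needed is to confirm, for each operator, that no node is inadvertently counted twice and that the disjointness assumption is precisely what licenses summing the input sizes directly.
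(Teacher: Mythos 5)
Your proposal is correct and takes essentially the same route as the paper, whose entire proof is the remark that the result follows directly from Definition~\ref{def:operations}; you have simply spelled out the node-counting ($n-1$, $4$, $2n+2$, and $2n+6$ fresh nodes respectively) that the paper leaves implicit. The counts and the use of disjointness to justify summing the input sizes are all accurate.
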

\begin{proof}
Follows directly from Definition~\ref{def:operations}. \hfill $\square$
\end{proof}
With Theorem~\ref{thm:size}, we immediately get that 
\[\C(\oplus(M_1, \dots, M_n)) = \C(M_1) + \dots + \C(M_n) + k\]
for $\oplus \in \{\seqop, \parop, \choiceop, \loopop\}$ and some $k > 0$. 
Therefore, we can deduce that for any $M_i \in \{M_1, \dots, M_n\}$ we have:
\[\C(\oplus(M_1, \dots, M_n)) \geq \C(M_i).\]

\propitemf{\propsix}{\no}
Let $M_1, M_2, M_3 \in \mathcal{M}$ be three arbitrary workflow nets, where $\C(M_1) = \C(M_2)$.
Theorem~\ref{thm:size} gives:
\begin{itemize}
\item $\C(M_1 \seqop M_3) = \C(M_1) + \C(M_3) + 1$ \\
\phantom{$\C(M_1\seqop M_3)$ }$= \C(M_2) + \C(M_3) + 1 = \C(M_2 \seqop M_3)$,
\item $\C(M_1 \parop M_3) = \C(M_1) + \C(M_3) + 4$ \\
\phantom{$\C(M_1 \parop M_3)$ }$= \C(M_2) + \C(M_3) + 4 = \C(M_2 \parop M_3)$,
\item $\C(M_1 \choiceop M_3) = \C(M_1) + \C(M_3) + 6$ \\
\phantom{$\C(M_1 \choiceop M_3)$ }$= \C(M_2) + \C(M_3) + 6 = \C(M_2 \choiceop M_3)$ and
\item $\C(M_1 \loopop M_3) = \C(M_1) + \C(M_3) + 10$ \\
\phantom{$\C(M_1 \loopop M_3)$ }$= \C(M_2) + \C(M_3) + 10 = \C(M_2 \loopop M_3)$.
\end{itemize}
Thus, none of the operations of Definition~\ref{def:operations} are sensitive to compositions.

\propitemf{\propseven}{\yes} 
According to our definition of permutations, the net $W_4^{\name}$ of Figure~\ref{fig:size-perm} is a permutation of the net $W_3^{\name}$ of Figure~\ref{fig:size-examples}, but their complexity scores are $\C(W_3^{\name}) = 5 \neq 6 = \C(W_4^{\size})$.
\begin{figure}[ht]
\centering
\scalebox{\scalefactor}{
\begin{tikzpicture}[node distance = 1.5cm,>=stealth',bend angle=0,auto]
	\node [place,tokens=1] (start) [label=below:$p_i$] {};
	\node [transition] (t1) [right of=start,label=center:$a$] {}
	edge [pre] (start);
	\node [place] (p1) [above right of=t1] {}
	edge [pre] (t1);
	\node [place] (p2) [below right of=t1] {}
	edge [pre] (t1);
	\node [transition] (t2) [below right of=p1,label=center:$b$] {}
	edge [pre] (p1)
	edge [pre] (p2);
	\node [place] (end) [right of=t2,label=below:$p_o$] {}
	edge [pre] (t2);
	\node at (0,1.15) {$W_4^{\name}$:};
\end{tikzpicture}}
\caption{A workflow net $W_4^{\name}$ with complexity score $\C(W_4^{\name}) = 6$.}
\label{fig:size-perm}
\end{figure}

\propitemf{\propeight}{\yes} 
The labeling of the transitions doesn't influence the number of places and transitions in a workflow net. 

\propitemf{\propnine}{\yes} 
According to Theorem~\ref{thm:size}, we can take any $M_1, M_2 \in \mathcal{M}$, as well as any $\oplus \in \{\seqop, \parop, \choiceop, \loopop\}$ and get $\C(M_1 \oplus M_2) > \C(M_1) + \C(M_2)$.

\propitemf{\propdef}{\yes} $\C$ is defined for all workflow nets $M = (P,T,F,p_i,p_o) \in \mathcal{M}$.
By definition, $|P|$ and $|T|$ are non-negative values, so $\C(M) \geq 0$.

\propitemf{\propmin}{\yes} 
By definition, the workflow net with the smallest amount of nodes is the net $\netminwf$ of Figure~\ref{fig:netminwf}.
So $\C(\netminwf) = 3$ is the minimum score of this complexity metric.

\propitemf{\propinf}{\yes} 
For every $c \in \mathbb{N}$ with $c \geq 3$, we can find a workflow net with complexity $c$ by introducing two places $p_i$ and $p_o$, as well as $c - 2$ transitions that have an edge coming from $p_i$ and one leading to $p_o$. 
Therefore, we get $|\{c \in \mathbb{R} \mid \exists M \in \mathcal{M}: \C(M) = c\}| \geq \{c \in \mathbb{N} \mid c \geq 3\}| = \infty$.

\propitemf{\propnotsup}{\no} Follows directly from Theorem~\ref{thm:size}.

\propitemf{\propadd}{\no} Follows directly from Theorem~\ref{thm:size}.
\end{description}

\def\name{\mismatchname}
\def\C{\mismatch}
\subsubsection{Connector Mismatch}
A connector mismatch happens when a split connector of one type is paired with a join connector of another type.
Such mismatches lead to undesired behavior for the tokens in the net.
Furthermore, since humans tend to prefer symmetry, such a structure is difficult to understand and indicates errors in the model~\cite{Men08}.
Analyzing how many mismatches occur in a workflow net is not trivial, as it requires analyzing all paths starting from a split-node. 
In a graph, with $n$ vertices, the number of paths can be exponential in $n$, so this would be a time-intensive analysis.

To approximate the amount of connector mismatches, Mendling~\cite[p.125]{Men08} defines the number of mismatches of a type as the difference of arcs leaving a split node and arcs entering a join node of that type.
We adapt this definition for workflow nets by defining, for $W = (P,T,F,\ell,p_i,p_o) \in \mathcal{M}$,
\begin{equation}
\label{eq:mismatch-and}
MM_{\text{and}}^W := \left|\sum_{t \in \mathcal{S}_{\text{and}}^W} |\post{t}| - \sum_{t \in \mathcal{J}_{\text{and}}^W} |\pre{t}|\right|
\end{equation}
as the mismatch between concurrent split and join connectors and 
\begin{equation}
\label{eq:mismatch-xor}
MM_{\text{xor}}^W := \left|\sum_{p \in \mathcal{S}_{\text{xor}}^W} |\post{t}| - \sum_{p \in \mathcal{J}_{\text{xor}}^W} |\pre{p}|\right|
\end{equation}
as the mismatch between xor split and join connectors.
According to the measure defined by Mendling, the total amount of mismatches in the model is defined as the sum of these mismatches. 
For workflow nets, we therefore define
\begin{equation}
\label{eq:c-mismatch}
\C(W) = MM_{\text{and}}^W + MM_{\text{xor}}^W.
\end{equation}
The original definition also takes or-connectors into account.
Since we don't have designated connectors for inclusive choices in workflow nets, we ignore this part of the complexity measure.
Figure~\ref{fig:mm-examples} shows some simple example nets and their respective complexity score.
\begin{figure}[ht]
\begin{center}
\begin{minipage}{0.21\textwidth}
\centering
\scalebox{\scalefactor}{
\begin{tikzpicture}[node distance = 1.5cm,>=stealth',bend angle=0,auto]
	\node [place,tokens=1] (start) [label=below:$p_i$] {};
	\node [transition] (t1) [above right of=start,label=center:$a$] {}
	edge [pre] (start);
	\node [transition] (t2) [below right of=start,label=center:$b$] {}
	edge [pre] (start);
	\node [place] (p1) [below right of=t1,label=below:$p_o$] {}
	edge [pre] (t1)
	edge [pre] (t2);
	\node at (0,1.15) {$W_1^{\name}$:};
\end{tikzpicture}}
\end{minipage}
\begin{minipage}{0.37\textwidth}
\centering
\scalebox{\scalefactor}{
\begin{tikzpicture}[node distance = 1.5cm,>=stealth',bend angle=0,auto]
	\node [place,tokens=1] (start) [label=below:$p_i$] {};
	\node [transition] (t1) [above right of=start,label=center:$a$] {}
	edge [pre] (start);
	\node [transition] (t2) [below right of=start,label=center:$b$] {}
	edge [pre] (start);
	\node [place] (p1) [right of=t1] {}
	edge [pre] (t1);
	\node [place] (p2) [right of=t2] {}
	edge [pre] (t2);
	\node [transition] (t3) [below right of=p1,label=center:$c$] {}
	edge [pre] (p1)
	edge [pre] (p2);
	\node [place] (p3) [right of=t3,label=below:$p_o$] {}
	edge [pre] (t3);
	\node at (0,1.15) {$W_2^{\name}$:};
\end{tikzpicture}}
\end{minipage}
\begin{minipage}{0.37\textwidth}
\centering
\scalebox{\scalefactor}{
\begin{tikzpicture}[node distance = 1.5cm,>=stealth',bend angle=0,auto]
	\node [place,tokens=1] (start) [label=below:$p_i$] {};
	\node [transition] (t1) [right of=start,label=center:$a$] {}
	edge [pre] (start);
	\node [place] (p1) [above right of=t1] {}
	edge [pre] (t1);
	\node [place] (p2) [below right of=t1] {}
	edge [pre] (t1);
	\node [transition] (t3) [right of=p1,label=center:$b$] {}
	edge [pre] (p1);
	\node [transition] (t4) [right of=p2,label=center:$c$] {}
	edge [pre] (p2);
	\node [place] (p3) [below right of=t3,label=below:$p_o$] {}
	edge [pre] (t3)
	edge [pre] (t4);
	\node at (0,1.15) {$W_3^{\name}$:};
\end{tikzpicture}}
\end{minipage}
\end{center}
\caption{Three simple workflow nets, $W_1^{\name}, W_2^{\name}, W_3^{\name}$, with complexity scores $\C(W_1^{\name}) = 0$ and $\C(W_2^{\name}) = \C(W_3^{\name}) = 4$.}
\label{fig:mm-examples}
\end{figure}

\begin{description}
\propitemf{\propone}{\yes} 
For the two nets $W_1^{\name}$ and $W_2^{\name}$ of Figure~\ref{fig:mm-examples}, we get the complexity scores $\C(W_1^{\name}) = 0 \neq 4 = \C(W_2^{\name})$.

\propitemf{\proptwo}{\no}
Follows directly from Theorem~\ref{thm:conn-inf}. 
Figure~\ref{fig:mm-fin} shows how to construct such a sequence of places and transitions.
\begin{figure}[ht]
\begin{center}
\scalebox{\scalefactor}{
\begin{tikzpicture}[node distance = 1.5cm,>=stealth',bend angle=0,auto]
	\node [place,tokens=1] (start) [label=below:$p_i$] {};
	\node [transition] (t1) [right of=start,label=below:$t_1$] {}
	edge [pre] (start);
	\node [place] (p1) [right of=t1] {}
	edge [pre] (t1);
	\node (dots) [right of=p1] {$\dots$}
	edge [pre] (p1);
	\node [transition] (tk) [right of=dots,label=below:$t_k$] {}
	edge [pre] (dots);
	\node [place] (p1) [above right of=tk,label=below:$p^1$] {}
	edge [pre] (tk);
	\node [place] (pk) [below right of=tk,label=below:$p^{c+1}$] {}
	edge [pre] (tk);
	\node [transition] (t1) [right of=p1,label=below:$t^1$] {}
	edge [pre] (p1);
	\node [transition] (tk) [right of=pk,label=below:$t^{c+1}$] {}
	edge [pre] (pk);
	\node [place] (end) [below right of=t1,label=below:$p_o$] {}
	edge [pre] (t1)
	edge [pre] (tk);
	\node at ($0.5*(p1) + 0.5*(pk)$) {$\vdots$};
	\node at ($0.5*(t1) + 0.5*(tk)$) {$\vdots$};
	\node at (0,1.15) {$W_{c,k}^{\name}$:};
\end{tikzpicture}}
\end{center}
\caption{A workflow net $W_{c,k}^{\name}$ with $k + c + 1$ transitions and complexity score $\C(W_{c,k}^{\name}) = 2 \cdot (c+1)$.}
\label{fig:mm-fin}
\end{figure}

\propitemf{\propthree}{\yes} 
The workflow nets $W_2^{\name}$ and $W_3^{\name}$ of Figure~\ref{fig:mm-examples} are different in structure, but both get the complexity score $\C(W_2^{\name}) = 4 = \C(W_3^{\name})$.

\propitemf{\propfour}{\yes} 
Take the workflow nets $W_1^{\name}$ and $W_2^{\name}$ of Figure~\ref{fig:mm-examples}.
Their languages are $L(W_1^{\name}) = \{\varepsilon, a, b\} = L(W_2^{\name})$, but their complexity scores are $\C(W_1^{\name}) = 0 \neq 4 = \C(W_2^{\name})$.

\propitemf{\propfive}{\no} 
Take the workflow nets $W_2^{\name}$ and $W_3^{\name}$ of Figure~\ref{fig:mm-examples}, as well as any operation $\oplus \in \{\seqop, \parop, \choiceop, \loopop\}$. 
Each operation introduces a split-connector of exactly one type, as well as a join-connector of the same type.
The number of outgoing arcs of the split-connector are exactly the number of incoming arcs of the join-connector.
However, $W_2^{\name}$ introduces $2$ arcs leaving an and-connector and $2$ arcs entering an xor-connector, while $W_3^{\name}$ introduces $2$ arcs leaving an xor-connector and $2$ arcs entering an and-connector.
In total, we therefore get $\C(W_2^{\name} \oplus W_3^{\name}) = 0 < 4 = \C(W_2^{\name})$.

\propitemf{\propsix}{\yes} 
Take the workflow nets $W_2^{\name}$ and $W_3^{\name}$ of Figure~\ref{fig:mm-examples}. 
As the caption of this figure suggests, $\C(W_2^{\name}) = \C(W_3^{\name})$, but if we combine these nets with any of the operations $\oplus \in \{\seqop, \parop, \choiceop, \loopop\}$ with $W_3^{\name}$, we get $\C(W_2^{\name} \oplus W_3^{\name}) = 0 \neq 8 = \C(W_3^{\name} \oplus W_3^{\name})$.

\propitemf{\propseven}{\yes} 
Figure~\ref{fig:mm-perm} shows two nets with different complexity scores according to $\C$ that are permutations of each other.
\begin{figure}[ht]
\begin{center}
\scalebox{\scalefactor}{
\begin{tikzpicture}[node distance = 1.5cm,>=stealth',bend angle=0,auto]
	\node [place,tokens=1] (start) [label=below:$p_i$] {};
	\node [transition] (t1) [above right of=start,label=below:$t_1$] {}
	edge [pre] (start);
	\node [transition] (t2) [below right of=start,label=below:$t_2$] {}
	edge [pre] (start);
	\node [place] (p1) [right of=t1,label=below:$p_1$] {}
	edge [pre] (t1);
	\node [place] (p3) [right of=t2,label=below:$p_3$] {}
	edge [pre] (t2);
	\node [place] (p2) at ($0.5*(p1) + 0.5*(p3)$) [label=below:$p_2$] {}
	edge [pre] (t2);
	\node [transition] (t3) [below right of=p1,label=below:$t_3$] {}
	edge [pre] (p1)
	edge [pre] (p2)
	edge [pre] (p3);
	\node [place] (p4) [right of=t3,label=below:$p_4$] {}
	edge [pre] (t3);
	\node [transition] (t4) [right of=p4,label=below:$t_4$] {}
	edge [pre] (p4);
	\node [place] (end) [right of=t4,label=below:$p_o$] {}
	edge [pre] (t4);
	\node at (0,1.15) {$W_4^{\name}$:};
\end{tikzpicture}}
\ \\
\ \\
\scalebox{\scalefactor}{
\begin{tikzpicture}[node distance = 1.5cm,>=stealth',bend angle=0,auto]
	\node [place,tokens=1] (start) [label=below:$p_i$] {};
	\node [transition] (t1) [right of=start,label=below:$t_1$] {}
	edge [pre] (start);
	\node [place] (p1) [above right of=t1,label=below:$p_1$] {}
	edge [pre] (t1);
	\node [place] (p2) [below right of=t1,label=below:$p_2$] {}
	edge [pre] (t1);
	\node [transition] (t2) [right of=p1,label=below:$t_2$] {}
	edge [pre] (p1);
	\node [transition] (t3) [right of=p2,label=below:$t_3$] {}
	edge [pre] (p2);
	\node [place] (p3) [right of=t2,label=below:$p_3$] {}
	edge [pre] (t2);
	\node [place] (p4) [right of=t3,label=below:$p_4$] {}
	edge [pre] (t3);
	\node [transition] (t4) [below right of=p3,label=below:$t_4$] {}
	edge [pre] (p3)
	edge [pre] (p4);
	\node [place] (end) [right of=t4,label=below:$p_o$] {}
	edge [pre] (t4);
	\node at (0,1.15) {$W_5^{\name}$:};
\end{tikzpicture}}
\end{center}
\caption{Two workflow nets, $W_4^{\name}$ and $W_5^{\name}$, where $W_4^{\name} \in Perm(W_5^{\name})$. $W_5^{\name}$ contains no connector mismatches, so $\C(W_5^{\name}) = 0$, while $\C(W_4^{\name}) = 2$.}
\label{fig:mm-perm}
\end{figure}

\propitemf{\propeight}{\yes} 
$\C$ depends only on the connector types in the net and their out- and in-degree.
However, we cannot change these by performing a relabeling on the net.

\propitemf{\propnine}{\no}
Let $M_1 = (P_1, T_1, F_1, p_i^1, p_o^1)$ and $M_2 = (P_2, T_2, F_2, p_i^2, p_o^2)$ be workflow nets and let $M = M_1 \oplus M_2$ for some $\oplus \in \{\seqop, \parop, \choiceop, \loopop\}$.
We get:
\begin{align*}
MM_{\text{and}}^M &= \left| \sum_{t \in \mathcal{S}_{\text{and}}^M} |\post{t}| - \sum_{t \in \mathcal{J}_{\text{and}}^M} |\pre{t}| \right| \\
&= \left| \sum_{t \in \mathcal{S}_{\text{and}}^{M_1} \cup \mathcal{S}_{\text{and}}^{M_2}} |\post{t}| - \sum_{t \in \mathcal{J}_{\text{and}}^{M_1} \cup \mathcal{J}_{\text{and}}^{M_2}} |\pre{t}| \right| \\
&\leq \left| \sum_{t \in \mathcal{S}_{\text{and}}^{M_1}} |\post{t}| - \sum_{t \in \mathcal{J}_{\text{and}}^{M_1}} |\pre{t}| \right| +  \left| \sum_{t \in \mathcal{S}_{\text{and}}^{M_2}} |\post{t}| - \sum_{t \in \mathcal{J}_{\text{and}}^{M_2}} |\pre{t}| \right| \\
&= MM_{\text{and}}^{M_1} + MM_{\text{and}}^{M_2}
\end{align*}
where the second equality is due to the fact that $T_1 \cap T_2 = \emptyset$ by definition.
Analogously, we get $MM_{\text{xor}}^M \leq MM_{\text{xor}}^{M_1} + MM_{\text{xor}}^{M_2}$ and therefore
\begin{align*}
\C(W) &= MM_{\text{and}}^M + MM_{\text{xor}}^M \\
&\leq (MM_{\text{and}}^{M_1} + MM_{\text{and}}^{M_2}) + (MM_{\text{xor}}^{M_1} + MM_{\text{xor}}^{M_2}) \\
&= (MM_{\text{and}}^{M_1} + MM_{\text{xor}}^{M_1}) + (MM_{\text{and}}^{M_2} + MM_{\text{xor}}^{M_2}) \\
&= \C(M_1) + \C(M_2).
\end{align*}

\propitemf{\propdef}{\yes} 
Since $\mathcal{S}_{\text{and}}^W$, $\mathcal{S}_{\text{xor}}^W$, $\mathcal{J}_{\text{and}}^W$, $\mathcal{J}_{\text{xor}}^W$ are defined for all workflow nets $W \in \mathcal{M}$, $\C(W)$ is also defined for all such workflow nets.
Furthermore, since we sum only over absolute values to compute $\C(W)$, we get $\C(W) \geq 0$ for all workflow nets $W \in \mathcal{M}$.

\propitemf{\propmin}{\yes} The minimum for this metric is $0$.
A workflow net without any connectors, like $\netminwf$ of Figure~\ref{fig:netminwf}, receives this complexity score.

\propitemf{\propinf}{\yes} 
For any $c \in \mathbb{N}$, Figure~\ref{fig:mm-fin} shows how to construct a workflow net with complexity score $2c + 2$. 
Therefore, there are infinitely many possible complexity scores: \\
$|\{c \in \mathbb{R} \mid \exists M \in \mathcal{M}: \C(M) = c\}| \geq |\{n \in \mathbb{N} \mid \exists c \in \mathbb{N}: n = 2c + 2\}| = \infty$.

\propitemf{\propnotsup}{\yes} 
Consider again the workflow nets $W_2^{\name}$ and $W_3^{\name}$ in Figure~\ref{fig:mm-examples}.
Since none of the operations $\oplus \in \{\seqop, \parop, \choiceop, \loopop\}$ introduce new connector mismatches, we get $\C(W_2^{\name} \oplus W_3^{\name}) = 0 < 8 = \C(W_2^{\name}) + \C(W_3^{\name})$.

\propitemf{\propadd}{\no} 
We can use the same counter-example as for the property {\propnotsup} to see that $\C$ is not additive.
\end{description}

\def\name{\connhetname}
\def\C{\connhet}
\subsubsection{Connector Heterogeneity}
According to Mendling~\cite[pp.126-127]{Men08}, an EPC model that contains many connector types is more prone to errors, since it is more likely to contain a connector mismatch.
He therefore introduces the connector heterogeneity metric, which calculates the entropy of connectors in the model.
Since we don't have designated connectors for inclusive choices, we only calculate the entropy of concurrent and exclusive choice connectors.
Let $W = (P,T,F,\ell,p_i,p_o)$ be a workflow net.
\begin{equation}
\label{eq:c-heterogeneity}
\C(W) = - \left(\frac{|\mathcal{C}_{\text{and}}^W|}{|\mathcal{C}^W|} \cdot \log_2\left(\frac{|\mathcal{C}_{\text{and}}^W|}{|\mathcal{C}^W|}\right) + \frac{|\mathcal{C}_{\text{xor}}^W|}{|\mathcal{C}^W|} \cdot \log_2\left(\frac{|\mathcal{C}_{\text{xor}}^W|}{|\mathcal{C}^W|}\right)\right)
\end{equation}
A value near $0$ means that there are only connectors of one type present in the net, while a value near $1$ means that we have equally many parallel and choice connectors.
Figure~\ref{fig:ch-examples} shows some simple example nets and their respective complexity score.

\begin{figure}[ht]
\begin{center}
\begin{minipage}{0.36\textwidth}
\centering
\scalebox{\scalefactor}{
\begin{tikzpicture}[node distance = 1.5cm,>=stealth',bend angle=0,auto]
	\node [place,tokens=1] (start) [label=below:$p_i$] {};
	\node [transition] (t1) [above right of=start,label=center:$a$] {}
	edge [pre] (start);
	\node [transition] (t2) [below right of=start,label=center:$b$] {}
	edge [pre] (start);
	\node [place] (p1) [right of=t1,label=below:$p_1$] {}
	edge [pre] (t1);
	\node [place] (p3) [right of=t2,label=below:$p_3$] {}
	edge [pre] (t2);
	\node [place] (p2) at ($0.5*(p1) + 0.5*(p3)$) [label=below:$p_2$] {}
	edge [pre] (t1)
	edge [pre] (t2);
	\node [transition] (t3) [right of=p1,label=center:$c$] {}
	edge [pre] (p1)
	edge [pre] (p2);
	\node [transition] (t4) [right of=p3,label=center:$d$] {}
	edge [pre] (p2)
	edge [pre] (p3);
	\node [place] (end) [below right of=t3,label=below:$p_o$] {}
	edge [pre] (t3)
	edge [pre] (t4);
	\node at (0,1.15) {$W_1^{\name}$:};
	\draw[opacity=0] (-0.35,-1.75) rectangle (5.5,1.5);
\end{tikzpicture}}
\end{minipage}
\begin{minipage}{0.36\textwidth}
\centering
\scalebox{\scalefactor}{
\begin{tikzpicture}[node distance = 1.5cm,>=stealth',bend angle=0,auto]
	\node [place,tokens=1] (start) [label=below:$p_i$] {};
	\node [transition] (t1) [above right of=start,label=center:$a$] {}
	edge [pre] (start);
	\node [transition] (t2) [below right of=start,label=center:$b$] {}
	edge [pre] (start);
	\node [place] (p1) [right of=t1,label=below:$p_1$] {}
	edge [pre] (t1);
	\node [place] (p2) [right of=t2,label=below:$p_2$] {}
	edge [pre] (t2);
	\node [transition] (t3) [right of=p1,label=center:$c$] {}
	edge [pre] (p1);
	\node [transition] (t4) [right of=p2,label=center:$d$] {}
	edge [pre] (p2);
	\node [place] (p3) [below right of=t3,label=below:$p_o$] {}
	edge [pre] (t3)
	edge [pre] (t4);
	\node at (0,1.15) {$W_2^{\name}$:};
	\draw[opacity=0] (-0.35,-1.75) rectangle (5.5,1.5);
\end{tikzpicture}}
\end{minipage}
\begin{minipage}{0.2\textwidth}
\centering
\scalebox{\scalefactor}{
\begin{tikzpicture}[node distance = 1.5cm,>=stealth',bend angle=0,auto]
	\node [place,tokens=1] (start) [label=below:$p_i$] {};
	\node [transition] (t1) [above right of=start,label=center:$a$] {}
	edge [pre] (start);
	\node [transition] (t3) [below right of=start,label=center:$c$] {}
	edge [pre] (start);
	\node [transition] (t2) at ($0.5*(t1) + 0.5*(t3)$) [label=center:$b$] {}
	edge [pre] (start);
	\node [place] (p3) [below right of=t1,label=below:$p_o$] {}
	edge [pre] (t1)
	edge [pre] (t2)
	edge [pre] (t3);
	\node at (0,1.15) {$W_3^{\name}$:};
	\draw[opacity=0] (-0.35,-1.75) rectangle (2.5,1.5);
\end{tikzpicture}}
\end{minipage}
\end{center}
\caption{Workflow nets, $W_1^{\name}, W_2^{\name}, W_3^{\name}$, with $\C(W_2^{\name}) = \C(W_3^{\name}) = 0$ and $\C(W_1^{\name}) = -(\frac{4}{7} \cdot \log_2(\frac{4}{7}) + \frac{3}{7} \cdot \log_2(\frac{3}{7})) \approx 0.9852$.}
\label{fig:ch-examples}
\end{figure}

\begin{description}
\propitemf{\propone}{\yes} 
For the two nets $W_1^{\name}$ and $W_2^{\name}$ of Figure~\ref{fig:ch-examples}, we get the complexity scores $\C(W_1^{\name}) \approx 0.9852 \neq 0 = \C(W_2^{\name})$.

\propitemf{\proptwo}{\no}
Follows directly from Theorem~\ref{thm:conn-inf}. 
Figure~\ref{fig:ch-fin} shows how to construct such a sequence of places and transitions.
\begin{figure}[ht]
\begin{center}
\scalebox{\scalefactor}{
\begin{tikzpicture}[node distance = 1.5cm,>=stealth',bend angle=0,auto]
	\node [place,tokens=1] (start) [label=below:$p_i$] {};
	\node [transition] (t1) [right of=start,label=below:$t_1$] {}
	edge [pre] (start);
	\node [place] (p0) [right of=t1] {}
	edge [pre] (t1);
	\node (dots) [right of=p0] {$\dots$}
	edge [pre] (p0);
	\node [transition] (tk) [right of=dots,label=below:$t_k$] {}
	edge [pre] (dots);
	\node [place] (p) [right of=tk,label=below:$p$] {}
	edge [pre] (tk);
	\node [transition] (t11) [above right of=p,yshift=0.75cm,label=below:$t_1^1$] {}
	edge [pre] (p);
	\node [place] (p1) [yshift=-0.5cm,above right of=t11] {}
	edge [pre] (t11);
	\node [place] (p2) [yshift=0.5cm,below right of=t11] {}
	edge [pre] (t11);
	\node [transition] (t12) [right of=p1,label=below:$t_1^2$] {}
	edge [pre] (p1);
	\node [transition] (t13) [right of=p2,label=below:$t_1^3$] {}
	edge [pre] (p2);
	\node [place] (p5) [right of=t12] {}
	edge [pre] (t12);
	\node [place] (p6) [right of=t13] {}
	edge [pre] (t13);
	\node [transition] (t14) [below right of=p5,yshift=0.5cm,label=below:$t_1^4$] {}
	edge [pre] (p5)
	edge [pre] (p6);
	\node [transition] (tn1) [below right of=p,yshift=-0.65cm,label=below:$t_n^1$] {}
	edge [pre] (p);
	\node [place] (p3) [yshift=-0.5cm,above right of=tn1] {}
	edge [pre] (tn1);
	\node [place] (p4) [yshift=0.5cm,below right of=tn1] {}
	edge [pre] (tn1);
	\node [transition] (tn2) [right of=p3,label=below:$t_n^2$] {}
	edge [pre] (p3);
	\node [transition] (tn3) [right of=p4,label=below:$t_n^3$] {}
	edge [pre] (p4);
	\node [place] (p7) [right of=tn2] {}
	edge [pre] (tn2);
	\node [place] (p8) [right of=tn3] {}
	edge [pre] (tn3);
	\node [transition] (tn4) [below right of=p7,yshift=0.5cm,label=below:$t_n^4$] {}
	edge [pre] (p7)
	edge [pre] (p8);
	\node [place] (end) [below right of=t14,yshift=-0.65cm,label=below:$p_o$] {}
	edge [pre] (t14)
	edge [pre] (tn4);
	\node at ($0.5*(t11) + 0.5*(tn1)$) {$\vdots$};
	\node at ($0.5*(t13) + 0.5*(tn2)$) {$\vdots$};
	\node at ($0.5*(t14) + 0.5*(tn4)$) {$\vdots$};
	\node at (0,1.15) {$W_{k,n}^{\name}$:};
\end{tikzpicture}}
\end{center}
\caption{A construction plan for workflow nets $W_{k,n}^{\name}$ with $k + 4n$ transitions and $\C(W_{k,n}^{\name}) = -\left(\frac{1}{n+1} \cdot \log_2(\frac{1}{n+1}) + \frac{n}{n+1} \cdot \log_2(\frac{n}{n+1})\right)$ for $k,n \in \mathbb{N}$ with $n \geq 2$.}
\label{fig:ch-fin}
\end{figure}

\propitemf{\propthree}{\yes} 
The workflow nets $W_2^{\name}$ and $W_3^{\name}$ of Figure~\ref{fig:ch-examples} are different in structure, but both get the complexity score $\C(W_2^{\name}) = 0 = \C(W_3^{\name})$.

\propitemf{\propfour}{\yes} 
Take the workflow nets $W_1^{\name}$ and $W_2^{\name}$ of Figure~\ref{fig:ch-examples}.
Their languages are $L(W_1^{\name}) = \{\varepsilon, a, b, ac, bd\} = L(W_2^{\name})$, but their complexity scores are $\C(W_1^{\name}) = 0.9852 \neq 0 = \C(W_2^{\name})$.

\propitemf{\propfive}{\no} 
The measure $\C$ does not fulfill this property for any of the operations $\oplus \in \{\seqop, \parop, \choiceop, \loopop\}$. 
To construct counter examples, consider the workflow net $W_1^{\name}$ of Figure~\ref{fig:ch-examples} and the workflow net $W_{1,3}^{\name}$ of Figure~\ref{fig:ch-fin}.
\begin{itemize}
\item $\C(W_1^{\name} \seqop W_{1,3}^{\name}) = -\left(\frac{10}{12} \cdot \log_2(\frac{10}{12}) + \frac{2}{12} \cdot \log_2(\frac{2}{12})\right) \approx 0.6500$ \\
\phantom{$\C(W_1^{\name} \seqop W_{1,3}^{\name})$ }$< 0.9852 \approx \C(W_1^{\name})$,
\item $\C(W_1^{\name} \parop W_{1,3}^{\name}) = -\left(\frac{12}{14} \cdot \log_2(\frac{12}{14}) + \frac{2}{14} \cdot \log_2(\frac{2}{14})\right) \approx 0.5917$ \\
\phantom{$\C(W_1^{\name} \parop W_{1,3}^{\name})$ }$< 0.9852 \approx \C(W_1^{\name})$,
\item $\C(W_1^{\name} \choiceop W_{1,3}^{\name}) = -\left(\frac{10}{14} \cdot \log_2(\frac{10}{14}) + \frac{4}{14} \cdot \log_2(\frac{4}{14})\right) \approx 0.8631$ \\
\phantom{$\C(W_1^{\name} \choiceop W_{1,3}^{\name})$ }$< 0.9852 \approx \C(W_1^{\name})$,
\item $\C(W_1^{\name} \loopop W_1^{\name}) = -\left(\frac{10}{14} \cdot \log_2(\frac{10}{14}) + \frac{4}{14} \cdot \log_2(\frac{4}{14})\right) \approx 0.8631$ \\
\phantom{$\C(W_1^{\name} \loopop W_1^{\name})$ }$< 0.9852 \approx \C(W_1^{\name})$.
\end{itemize}

\propitemf{\propsix}{\yes} 
Consider the two workflow nets $W_4^{\name}$ and $W_5^{\name}$ of Figure~\ref{fig:ch-comp} and the workflow net $W_3^{\name}$ of Figure~\ref{fig:ch-examples}.
\begin{figure}[ht]
\begin{center}
\scalebox{\scalefactor}{
\begin{tikzpicture}[node distance = 1.5cm,>=stealth',bend angle=0,auto]
	\node [place,tokens=1] (start) [label=below:$p_i$] {};
	\node [transition] (t1) [above right of=start] {}
	edge [pre] (start);
	\node [transition] (t2) [below right of=start] {}
	edge [pre] (start);
	\node [place] (p1) [right of=t1,yshift=0.5cm] {}
	edge [pre] (t1);
	\node [place] (p2) [right of=t1,yshift=-0.5cm] {}
	edge [pre] (t1);
	\node [place] (p3) [right of=t2,yshift=0.5cm] {}
	edge [pre] (t2);
	\node [place] (p4) [right of=t2,yshift=-0.5cm] {}
	edge [pre] (t2);
	\node [transition] (t3) [right of=p1] {}
	edge [pre] (p1);
	\node [transition] (t4) [right of=p2] {}
	edge [pre] (p2);
	\node [transition] (t5) [right of=p3] {}
	edge [pre] (p3);
	\node [transition] (t6) [right of=p4] {}
	edge [pre] (p4);
	\node [place] (p5) [right of=t3] {}
	edge [pre] (t3);
	\node [place] (p6) [right of=t4] {}
	edge [pre] (t4);
	\node [place] (p7) [right of=t5] {}
	edge [pre] (t5);
	\node [place] (p8) [right of=t6] {}
	edge [pre] (t6);
	\node [transition] (t7) [right of=p5,yshift=-0.5cm] {}
	edge [pre] (p5)
	edge [pre] (p6);
	\node [transition] (t8) [right of=p7,yshift=-0.5cm] {}
	edge [pre] (p7)
	edge [pre] (p8);
	\node [place] (end) [below right of=t7,label=below:$p_o$] {}
	edge [pre] (t7)
	edge [pre] (t8);
	\node at (0,1.15) {$W_4^{\name}$:};
\end{tikzpicture}}
\ \\
\ \\
\scalebox{\scalefactor}{
\begin{tikzpicture}[node distance = 1.5cm,>=stealth',bend angle=0,auto]
	\node [place,tokens=1] (start) [label=below:$p_i$] {};
	\node [transition] (t1) [right of=start] {}
	edge [pre] (start);
	\node [place] (p1) [above right of=t1] {}
	edge [pre] (t1);
	\node [place] (p2) [below right of=t1] {}
	edge [pre] (t1);
	\node [transition] (t2) [right of=p1,yshift=0.5cm] {}
	edge [pre] (p1);
	\node [transition] (t3) [right of=p1,yshift=-0.5cm] {}
	edge [pre] (p1);
	\node [transition] (t4) [right of=p2,yshift=0.5cm] {}
	edge [pre] (p2);
	\node [transition] (t5) [right of=p2,yshift=-0.5cm] {}
	edge [pre] (p2);
	\node [place] (p3) [right of=t2,yshift=-0.5cm] {}
	edge [pre] (t2)
	edge [pre] (t3);
	\node [place] (p4) [right of=t4,yshift=-0.5cm] {}
	edge [pre] (t4)
	edge [pre] (t5);
	\node [transition] (t6) [below right of=p3] {}
	edge [pre] (p3)
	edge [pre] (p4);
	\node [place] (end) [right of=t6, label=below:$p_o$] {}
	edge [pre] (t6);
	\node at (0,1.15) {$W_5^{\name}$:};
	\draw[opacity=0] (-0.35,-1.75) rectangle (5.5,1.5);
\end{tikzpicture}}
\end{center}
\caption{Two workflow nets with different structure but the same complexity score $\C(W_4^{\name}) = \C(W_5^{\name}) = -\left(\frac{2}{6} \cdot \log_2(\frac{2}{6}) + \frac{4}{6} \cdot \log_2(\frac{4}{6})\right) \approx 0.9183$.}
\label{fig:ch-comp}
\end{figure}
For their complexity scores, we have $\C(W_4^{\name}) = 0.9183 = \C(W_5^{\name})$, but for the operations of Definition~\ref{def:operations}, we get:
\begin{itemize}
\item $\C(W_4^{\name} \seqop W_3^{\name}) = -\left(\frac{4}{8} \cdot \log_2(\frac{4}{8}) + \frac{4}{8} \cdot \log_2(\frac{4}{8})\right) = 0$ and \\
$\C(W_5^{\name} \seqop W_3^{\name}) = -\left(\frac{2}{8} \cdot \log_2(\frac{2}{8}) + \frac{6}{8} \cdot \log_2(\frac{6}{8})\right) \approx 0.8113$,
\item $\C(W_4^{\name} \parop W_3^{\name}) = -\left(\frac{5}{9} \cdot \log_2(\frac{5}{9}) + \frac{4}{9} \cdot \log_2(\frac{4}{9})\right) \approx 0.9911$ and \\
$\C(W_5^{\name} \parop W_3^{\name}) = -\left(\frac{3}{9} \cdot \log_2(\frac{3}{9}) + \frac{6}{9} \cdot \log_2(\frac{6}{9})\right) \approx 0.9183$,
\item $\C(W_4^{\name} \choiceop W_3^{\name}) = -\left(\frac{4}{9} \cdot \log_2(\frac{4}{9}) + \frac{5}{9} \cdot \log_2(\frac{5}{9})\right) \approx 0.9911$ and \\
$\C(W_5^{\name} \choiceop W_3^{\name}) = -\left(\frac{2}{9} \cdot \log_2(\frac{2}{9}) + \frac{7}{9} \cdot \log_2(\frac{7}{9})\right) \approx 0.7642$,
\item $\C(W_4^{\name} \loopop W_3^{\name}) = -\left(\frac{4}{9} \cdot \log_2(\frac{4}{9}) + \frac{5}{9} \cdot \log_2(\frac{5}{9})\right) \approx 0.9911$ and \\
$\C(W_5^{\name} \loopop W_3^{\name}) = -\left(\frac{2}{9} \cdot \log_2(\frac{2}{9}) + \frac{7}{9} \cdot \log_2(\frac{7}{9})\right) \approx 0.7642$.
\end{itemize}

\propitemf{\propseven}{\yes} 
The workflow nets $W_6^{\name}$ and $W_7^{\name}$ of Figure~\ref{fig:ch-perm} are permutations of each other, but get different complexity scores:
\begin{align*}
\C(W_6^{\name}) &= -\left(\frac{2}{4} \cdot \log_2\left(\frac{2}{4}\right) + \frac{2}{4} \cdot \log_2\left(\frac{2}{4}\right)\right) = 1 \\
&\neq 0.9182 \approx -\left(\frac{4}{6} \cdot \log_2\left(\frac{4}{6}\right) + \frac{2}{6} \cdot \log_2\left(\frac{2}{6}\right)\right) = \C(W_7^{\name})
\end{align*}
\begin{figure}[ht]
\begin{center}
\begin{minipage}{0.55\textwidth}
\centering
\scalebox{\scalefactor}{
\begin{tikzpicture}[node distance = 1.5cm,>=stealth',bend angle=0,auto]
	\node [place,tokens=1] (start) [label=below:$p_i$] {};
	\node [transition] (t1) [right of=start,label=below:$t_1$] {}
	edge [pre] (start);
	\node [place] (p1) [above right of=t1,label=below:$p_1$] {}
	edge [pre] (t1);
	\node [place] (p2) [below right of=t1,label=below:$p_2$] {}
	edge [pre] (t1);
	\node [transition] (t2) [right of=p1,label=below:$t_2$] {}
	edge [pre] (p1);
	\node [transition] (t3) [right of=p2,label=below:$t_3$] {}
	edge [pre] (p2);
	\node [place] (p3) [right of=t2,label=below:$p_3$] {}
	edge [pre] (t2);
	\node [place] (p4) [right of=t3,label=below:$p_4$] {}
	edge [pre] (t3);
	\node [transition] (t4) [below right of=p3,label=below:$t_4$] {}
	edge [pre] (p3)
	edge [pre] (p4);
	\node [transition] (t5) [below of=t3,label=below:$t_5$] {}
	edge [pre,bend left=20] (start);
	\node [place] (end) [right of=t4,label=below:$p_o$] {}
	edge [pre] (t4)
	edge [pre,bend left=20] (t5);
	\node at (0,1.15) {$W_6^{\name}$:};
	\draw[opacity=0] ($(start)-(0.25,3.25)$) rectangle ($(end) + (0.25,1.5)$);
\end{tikzpicture}}
\end{minipage}
\begin{minipage}{0.4\textwidth}
\centering
\scalebox{\scalefactor}{
\begin{tikzpicture}[node distance = 1.5cm,>=stealth',bend angle=0,auto]
	\node [place,tokens=1] (start) [label=below:$p_i$] {};
	\node [transition] (t1) [above right of=start,yshift=0.7cm,label=below:$t_1$] {}
	edge [pre] (start);
	\node [transition] (t2) [below right of=start,yshift=-0.7cm,label=below:$t_2$] {}
	edge [pre] (start);
	\node [place] (p2) [right of=t1,yshift=-0.5cm,label=below:$p_2$] {}
	edge [pre] (t1);
	\node [place] (p1) at ($(p2) + (0,1)$) [label=below:$p_1$] {}
	edge [pre] (t1);
	\node [place] (p3) [right of=t2,yshift=0.5cm,label=below:$p_3$] {}
	edge [pre] (t2);
	\node [place] (p4) at ($(p3) - (0,1)$) [label=below:$p_4$] {}
	edge [pre] (t2);
	\node [transition] (t3) [right of=p2,yshift=0.5cm,label=below:$t_3$] {}
	edge [pre] (p1)
	edge [pre] (p2);
	\node [transition] (t4) [right of=p3,yshift=-0.5cm,label=below:$t_4$] {}
	edge [pre] (p3)
	edge [pre] (p4);
	\node [transition] (t5) at ($0.5*(p2) + 0.5*(p3)$) [label=below:$t_5$] {}
	edge [pre] (start);
	\node [place] (end) [below right of=t3,yshift=-0.7cm,label=below:$p_o$] {}
	edge [pre] (t3)
	edge [pre] (t4)
	edge [pre] (t5);
	\node at (0,1.15) {$W_7^{\name}$:};
	\draw[opacity=0] ($(start)-(0.25,3)$) rectangle ($(end) + (0.25,2.75)$);
\end{tikzpicture}}
\end{minipage}
\end{center}
\caption{Two workflow nets, $W_6^{\name}$ and $W_7^{\name}$, where $W_7^{\name} \in Perm(W_6^{\name})$, but $\C(W_6^{name}) \neq \C(W_7^{\name})$.}
\label{fig:ch-perm}
\end{figure}

\propitemf{\propeight}{\yes} 
$\C$ depends only on the amount of connectors of each type, which is independent of the labeling.

\propitemf{\propnine}{\yes} 
Take the workflow nets $W_4^{\name}$ and $W_5^{\name}$ of Figure~\ref{fig:ch-comp}. 
Both of these workflow nets have complexity $0$ according to $\C$, but combining them using any of the operations $\seqop, \parop, \choiceop, \loopop$ yields:
\begin{itemize}
\item $\C(W_4^{\name} \seqop W_5^{\name}) = -\left(\frac{6}{12} \cdot \log_2(\frac{6}{12}) + \frac{6}{12} \cdot \log_2(\frac{6}{12})\right)$\\
\phantom{$\C(W_4^{\name} \seqop W_5^{\name})$ }$= 1 > 0 + 0 = \C(W_4^{\name}) + \C(W_5^{\name})$,
\item $\C(W_4^{\name} \parop W_5^{\name}) = -\left(\frac{8}{14} \cdot \log_2(\frac{8}{14}) + \frac{6}{14} \cdot \log_2(\frac{6}{14})\right)$\\
\phantom{$\C(W_4^{\name} \parop W_5^{\name})$ }$\approx 0.9852 > 0 + 0 = \C(W_4^{\name}) + \C(W_5^{\name})$,
\item $\C(W_4^{\name} \choiceop W_5^{\name}) = -\left(\frac{6}{14} \cdot \log_2(\frac{6}{14}) + \frac{8}{14} \cdot \log_2(\frac{8}{14})\right)$\\
\phantom{$\C(W_4^{\name} \choiceop W_5^{\name})$ }$\approx 0.9852 > 0 + 0 = \C(W_4^{\name}) + \C(W_5^{\name})$,
\item $\C(W_4^{\name} \loopop W_5^{\name}) = -\left(\frac{6}{14} \cdot \log_2(\frac{6}{14}) + \frac{8}{14} \cdot \log_2(\frac{8}{14})\right)$\\
\phantom{$\C(W_4^{\name} \loopop W_5^{\name})$ }$\approx 0.9852 > 0 + 0 = \C(W_4^{\name}) + \C(W_5^{\name})$.
\end{itemize}

\propitemf{\propdef}{\no} 
The connector heterogeneity metric is undefined for workflow nets without connectors, since we would divide by $0$ in this case.
We can avoid this by defining that $\C$ should be $0$ in this special case, since a workflow net without any connectors is sequential and therefore easy to understand.
In the defined cases, the metric only returns non-negative values, since $\log_2(r) < 0$ for some $0 < r \leq 1$.
Adding two non-positive values results in a non-positive value, so negating this value results in a non-negative value.

\propitemf{\propmin}{\yes} 
The minimum possible complexity score of $\C$ is $0$.
This score is rewarded to a workflow net that contains only one connector type, like the net $W_2^{\name}$ in Figure~\ref{fig:ch-examples}, which contains only xor-connectors. 

\propitemf{\propinf}{\yes} 
Let $k, n \in \mathbb{N}$ with $n \geq 2$ be fixed. 
Figure~\ref{fig:ch-fin} shows how to construct a workflow net with $2$ xor-connectors (the places $p$ and $p_o$) and $2n$ and-connectors (the transitions $t_1^1, \dots t_n^1, t_1^4, \dots, t_n^4$).
We therefore get 
\begin{align*}
\C(W_{k,n}^{\name}) &= -\left(\frac{1}{n+1} \cdot \log_2\left(\frac{1}{n+1}\right) + \frac{n}{n+1} \cdot \log_2\left(\frac{n}{n+1}\right)\right) \\
&> -\left(\frac{1}{n+2} \cdot \log_2\left(\frac{1}{n+2}\right) + \frac{n+1}{n+2} \cdot \log_2\left(\frac{n+1}{n+2}\right)\right) \\
&= \C(W_{k,n+1}^{\name})
\end{align*}
Thus, $\C(W_{k,n}^{\name}) \neq \C(W_{k,m}^{\name})$ for any $n \neq m \in \mathbb{N}$ with $n,m \geq 2$, so Figure~\ref{fig:ch-fin} shows how to construct infinitely many workflow nets of different complexity according to $\C$.
In other words, we get the following result:
$|\{c \in \mathbb{R} \mid \exists M \in \mathcal{M}: \C(M) = c\}| \geq |\{n \in \mathbb{N} \mid n \geq 2\}| = \infty$.

\propitemf{\propnotsup}{\yes} The counter examples found for property {\propfive} also show that $\C$ is not superadditive.

\propitemf{\propadd}{\no} Since $\C$ is neither subadditive nor superadditive, it can't be additive. 
\end{description}

\def\name{\crossconnname}
\def\C{\crossconn}
\subsubsection{Cross-Connectivity}
Vanderfeesten et al.~\cite{VanRMAC08} introduced the cross-connectivity metric to measure how difficult it is to understand the connection between two nodes in a model.
This measure is based on paths between all pairs of nodes.
Let $N = (P, T, F, p_i, p_o)$ be a workflow net (note that we choose $N$ as the identifier here to avoid confusion with the weight-function needed for this metric).
First, each node $v \in P \cup T$ of the net gets a weight depending on its type and degree $\ndeg(v) = |\pre{n}| + |\post{n}|$:
\begin{equation}
w_N(v) := \begin{cases}
\frac{1}{\ndeg(v)} & \text{ if } v \in \mathcal{C}_{\text{xor}}^N \\
1 & \text{ if } v \in \mathcal{C}_{\text{and}}^N \\
1 & \text{ otherwise.}
\end{cases}
\end{equation}
For an edge $(u,v) \in F$, we define its weight as the product of its endpoints, so $w_N((u,v)) := w_N(u) \cdot w_N(v)$.
Let $\rho = v_1, \dots, v_k$ be a path of length $k \geq 2$. 
The weight of the path $\rho$ is defined as the product of the edges it uses, so $w_N(\rho) = w_N((v_1, v_2)) \cdot \ldots \cdot w_N((v_{k-1},v_k))$.
Let $\mathcal{P}_{v_1, v_2}$ be the set of all paths from $v_1$ to $v_2$ of length at least $2$.
We define the value of a connection between two nodes $v_1$ and $v_2$ as
\begin{equation}
V_N(v_1, v_2) := \max(\{w_N(\rho) \mid \rho \in \mathcal{P}_{v_1, v_2}\} \cup \{0\}).
\end{equation}
We use the special value $0$ for cases where no path from $v_1$ to $v_2$ exist.
The original work of Vanderfeesten et al.~\cite{VanRMAC08} then defines the cross-connectivity metric as the sum of all values between nodes divided by the number of nodes times the number of nodes minus one.
We directly translate this definition to workflow nets, but subtract the result from one, so a low complexity score refers to models with low complexity.
\begin{equation}
\label{eq:c-cross-connectivity-original}
\C(N) = 1 - \frac{\sum_{v_1, v_2 \in P \cup T} V_N(v_1, v_2)}{(|P| + |T|) \cdot (|P| + |T| - 1)}.
\end{equation}
Figure~\ref{fig:cc-mon} shows a composed workflow net and its complexity score, as well as the scores for the input models.
\begin{figure}[ht]
\begin{center}
\scalebox{\scalefactor}{
\begin{tikzpicture}[node distance = 1.5cm,>=stealth',bend angle=0,auto]
	\node [place,tokens=1] (start) [label=above:$p_i$,label=below:\textcolor{purple}{$1$}] {};
	\node [transition] (t1) [right of=start,label=above:$t_1$,label=below:\textcolor{purple}{$1$}] {}
	edge [pre] (start);
	\node [place] (p1) [above right of=t1,yshift=0.5cm,label=above:$p_1$,label=below:\textcolor{purple}{$\frac{1}{3}$}] {}
	edge [pre] (t1);
	\node [place] (p2) [below right of=t1,yshift=-0.5cm,label=above:$p_2$,label=below:\textcolor{purple}{$1$}] {}
	edge [pre] (t1);
	\node [transition] (t2) [above right of=p1,label=above:$t_2$,label=below:\textcolor{purple}{$1$}] {}
	edge [pre] (p1);
	\node [transition] (t3) [below right of=p1,label=above:$t_3$,label=below:\textcolor{purple}{$1$}] {}
	edge [pre] (p1);
	\node (dummy) [above right of=p2] {};
	\node [place] (p3) [below right of=t2,label=above:$p_3$,label=below:\textcolor{purple}{$\frac{1}{3}$}] {}
	edge [pre] (t2)
	edge [pre] (t3);
	\node [place] (p4) [below right of=dummy,label=above:$p_4$,label=below:\textcolor{purple}{$1$}] {};
	\node [transition] (t4) at ($0.5*(p2) + 0.5*(p4)$) [label=above:$t_4$,label=below:\textcolor{purple}{$1$}] {}
	edge [pre] (p2)
	edge [post] (p4);
	\node [transition] (t5) [below right of=p3,label=above:$t_5$,yshift=-0.5cm,label=below:\textcolor{purple}{$1$}] {}
	edge [pre] (p3)
	edge [pre] (p4);
	\node [place] (end) [right of=t5,label=above:$p_o$,label=below:\textcolor{purple}{$1$}] {}
	edge [pre] (t5);
	
	\node at (7.5,2) {
	\begin{tabular}{|c|c|c|c|c|} \hline
	 $V_{W_1^{\name}}$ & $p_1$ & $t_2$ & $t_3$ & $p_3$ \\ \hline
	 $p_1$ & $0$ & $\sfrac{1}{3}$ & $\sfrac{1}{3}$ & $\sfrac{1}{9}$ \\ \hline
	 $t_2$ & $0$ & $0$ & $0$ & $\sfrac{1}{3}$ \\ \hline
	 $t_3$ & $0$ & $0$ & $0$ & $\sfrac{1}{3}$ \\ \hline
	 $p_3$ & $0$ & $0$ & $0$ & $0$ \\ \hline
	\end{tabular}
	};
	
	\node at (10.5,2) {
	\begin{tabular}{|c|c|c|c|} \hline
	 $V_{W_2^{\name}}$ & $p_2$ & $t_4$ & $p_4$ \\ \hline
	 $p_2$ & $0$ & $1$ & $1$ \\ \hline
	 $t_4$ & $0$ & $0$ & $1$ \\ \hline
	 $p_4$ & $0$ & $0$ & $0$ \\ \hline
	\end{tabular}
	};
	
	\node at (9,-3.5) {
	\begin{tabular}{|c|c|c|c|c|c|c|c|c|c|c|c|c|} \hline
	 $V_{W_{\parop}^{\name}}$ & $p_i$ & $t_1$ & $p_1$ & $p_2$ & $t_2$ & $t_3$ & $t_4$ & $p_3$ & $p_4$ & $t_6$ & $p_o$ \\ \hline
	 $p_i$ & $0$ & $1$ & $\sfrac{1}{3}$ & $1$ & $\sfrac{1}{3}$ & $\sfrac{1}{3}$ & $1$ & $\sfrac{1}{9}$ & $1$ & $1$ & $1$ \\ \hline
	 $t_1$ & $0$ & $0$ &$\sfrac{1}{3}$ & $1$ & $\sfrac{1}{3}$ & $\sfrac{1}{3}$ & $1$ & $\sfrac{1}{9}$ & $1$ & $1$ & $1$ \\ \hline
	 $p_1$ & $0$ & $0$ & $0$ & $0$ & $\sfrac{1}{3}$ & $\sfrac{1}{3}$ & $0$ & $\sfrac{1}{9}$ & $0$ & $\sfrac{1}{9}$ & $\sfrac{1}{9}$ \\ \hline
	 $p_2$ & $0$ & $0$ & $0$ & $0$ & $0$ & $0$ & $1$ & $0$ & $1$ & $1$ & $1$ \\ \hline
	 $t_2$ & $0$ & $0$ & $0$ & $0$ & $0$ & $0$ & $0$ & $\sfrac{1}{3}$ & $0$ & $\sfrac{1}{3}$ & $\sfrac{1}{3}$ \\ \hline
	 $t_3$ & $0$ & $0$ & $0$ & $0$ & $0$ & $0$ & $0$ & $\sfrac{1}{3}$ & $0$ & $\sfrac{1}{3}$ & $\sfrac{1}{3}$ \\ \hline
	 $t_4$ & $0$ & $0$ & $0$ & $0$ & $0$ & $0$ & $0$ & $0$ & $1$ & $1$ & $1$ \\ \hline
	 $p_3$ & $0$ & $0$ & $0$ & $0$ & $0$ & $0$ & $0$ & $0$ & $0$ & $\sfrac{1}{3}$ & $\sfrac{1}{3}$ \\ \hline
	 $p_4$ & $0$ & $0$ & $0$ & $0$ & $0$ & $0$ & $0$ & $0$ & $0$ & $1$ & $1$ \\ \hline
	 $t_5$ & $0$ & $0$ & $0$ & $0$ & $0$ & $0$ & $0$ & $0$ & $0$ & $0$ & $1$ \\ \hline
	 $p_o$ & $0$ & $0$ & $0$ & $0$ & $0$ & $0$ & $0$ & $0$ & $0$ & $0$ & $0$ \\ \hline
	\end{tabular}
	};
	
	\node at (0,2) {$W_{\parop}^{\name}$:};
	
	\begin{pgfonlayer}{bg}
		\draw[rounded corners,draw=gray,fill=lightgray!50!white] ($(p1) - (0.75, 1.75)$) rectangle ($(p3) + (0.75, 1.75)$);
		\node[gray] at ($(p1) + (-0.4,1.25)$) {$W_1^{\name}$};
		\draw[rounded corners,draw=gray,fill=lightgray!50!white] ($(p2) - (0.75, 1)$) rectangle ($(p4) + (0.75, 1)$);
		\node[gray] at ($(p2) + (-0.4,-0.75)$) {$W_2^{\name}$};
	\end{pgfonlayer}
\end{tikzpicture}}
\end{center}
\caption{A workflow net $W_{\parop}^{\name}$ and its $V_{W_{\parop}^{\name}}$-values for all pairs of vertices. We get $\C(W_{\parop}^{\name}) = 1 - \frac{21 + 16 \cdot \sfrac{1}{3} + 5 \cdot \sfrac{1}{9}}{11 \cdot 10} = 0.7\overline{5}$, $\C(W_1^{\name}) = 1 - \frac{4 \cdot \sfrac{1}{3} + \sfrac{1}{9}}{4 \cdot 3} \approx 0.8796$ and $\C(W_2^{\name}) = 1 - \frac{3}{6} = 0.5$.}
\label{fig:cc-mon}
\end{figure}

\begin{description}
\propitemf{\propone}{\yes} 
For the two nets $W_1^{\name}$ and $W_2^{\name}$ of Figure~\ref{fig:cc-mon}, we get the complexity scores $\C(W_1^{\name}) \approx 0.8796 \neq 0.5 = \C(W_2^{\name})$.

\propitemf{\proptwo}{\no} 
Figure~\ref{fig:cc-fin} shows how to construct infinitely many workflow nets with complexity score $\frac{1}{2}$. 
\begin{figure}[ht]
\begin{center}
\scalebox{\scalefactor}{
\begin{tikzpicture}[node distance = 1.5cm,>=stealth',bend angle=0,auto]
	\node [place,tokens=1] (start) [label=below:$p_i$] {};
	\node [transition] (t1) [right of=start,label=below:$t_1$] {}
	edge [pre] (start);
	\node [place] (p1) [right of=t1,label=below:$p_1$] {}
	edge [pre] (t1);
	\node [transition] (t2) [right of=p1,label=below:$t_2$] {}
	edge [pre] (p1);
	\node (dots) [right of=t2] {$\dots$}
	edge [pre] (t2);
	\node [place] (p2) [right of=dots,label=below:$p_{k-1}$] {}
	edge [pre] (dots);
	\node [transition] (t4) [right of=p2,label=below:$t_k$] {}
	edge [pre] (p2);
	\node [place] (end) [right of=t4,label=below:$p_o$] {}
	edge [pre] (t4);
	\node at (0,1) {$W_{fin,k}^{\name}$:};
	
	\node [below of=dots,yshift=-2cm] {
	\begin{tabular}{|c|c|c|c|c|c|c|c|c|} \hline
	 $V_{W_{\min,k}^{\name}}$ & $p_i$ & $t_1$ & $p_1$ & $t_2$ & $\dots$ & $p_{k-1}$ & $t_k$ & $p_o$ \\ \hline
	 $p_i$ & $0$ & $1$ & $1$ & $1$ & $\dots$ & $1$ & $1$ & $1$ \\ \hline
	 $t_1$ & $0$ & $0$ & $1$ & $1$ & $\dots$ & $1$ & $1$ & $1$ \\ \hline
	 $p_1$ & $0$ & $0$ & $0$ & $1$ & $\dots$ & $1$ & $1$ & $1$ \\ \hline
	 $t_2$ & $0$ & $0$ & $0$ & $0$ & $\dots$ & $1$ & $1$ & $1$ \\ \hline
	 $\vdots$ & $\vdots$ & $\vdots$ & $\vdots$ & $\vdots$ & $\vdots$ & $\vdots$ & $\vdots$ & $\vdots$ \\ \hline
	 $p_{k-1}$ & $0$ & $0$ & $0$ & $0$ & $\dots$ & $0$ & $1$ & $1$ \\ \hline
	 $t_k$ & $0$ & $0$ & $0$ & $0$ & $\dots$ & $0$ & $0$ & $1$ \\ \hline
	 $p_o$ & $0$ & $0$ & $0$ & $0$ & $\dots$ & $0$ & $0$ & $0$ \\ \hline
	\end{tabular}
	};
\end{tikzpicture}}
\end{center}
\caption{A workflow net $W_{fin,k}^{\name}$ with $k$ transitions and $k+1$ places. Its complexity is $\C(W_{fin,k}^{\name}) = 1 - \frac{k \cdot (2k+1)}{(2k+1) \cdot 2k} = \frac{1}{2}$ regardless of the choice of $k$.}
\label{fig:cc-fin}
\end{figure}

\propitemf{\propthree}{\yes} 
The workflow nets $W_{fin,1}^{\name}$ and $W_{fin,2}^{\name}$ of Figure~\ref{fig:cc-fin} are structurally different but get the same complexity scores.

\propitemf{\propfour}{\yes} The workflow nets $W_3^{\name}$ and $W_4^{\name}$ of Figure~\ref{fig:cc-indL} have the same language but receive different complexity scores.
\begin{figure}[ht]
\begin{center}
\begin{minipage}{0.2525\textwidth}
\centering
\scalebox{\scalefactor}{
\begin{tikzpicture}[node distance = 1.5cm,>=stealth',bend angle=0,auto]
	\node [place,tokens=1] (start) [label=below:$p_i$] {};
	\node [transition] (t1) [right of=start,label=center:$a$] {}
	edge [pre] (start);
	\node [place] (p1) [right of=t1,label=below:$p_o$] {}
	edge [pre] (t1);
	\node at (0,1.15) {$W_3^{\name}$:};
	\draw[opacity=0] ($(start)-(0.25,1.75)$) rectangle ($(p1) + (0.25,1.75)$);
\end{tikzpicture}}
\end{minipage}
\begin{minipage}{0.25\textwidth}
\centering
\scalebox{\scalefactor}{
\begin{tikzpicture}[node distance = 1.5cm,>=stealth',bend angle=0,auto]
	\node [place,tokens=1] (start) [label=below:$p_i$] {};
	\node [transition] (t1) [above right of=start,label=center:$a$] {}
	edge [pre] (start);
	\node [transition] (t2) [below right of=start,label=center:$a$] {}
	edge [pre] (start);
	\node [place] (p1) [below right of=t1,label=below:$p_o$] {}
	edge [pre] (t1)
	edge [pre] (t2);
	\node at (0,1.15) {$W_4^{\name}$:};
	\draw[opacity=0] ($(start)-(0.25,1.75)$) rectangle ($(p1) + (0.25,1.75)$);
\end{tikzpicture}}
\end{minipage}
\end{center}
\caption{Two workflow nets, $W_3^{\name}$ and $W_4^{\name}$ with $L(W_3^{\name}) = L(W_4^{\name}) = \{\varepsilon, a\}$, but $\C(W_3^{\name}) = 0$ and $\C(W_4^{\name}) = 1 - \frac{4 \cdot 0.5 + 0.25}{12} = \frac{13}{16}$.}
\label{fig:cc-indL}
\end{figure}

\propitemf{\propfive}{\no} 
$\C$ is not monotone for any of the operations $\oplus \in \{\seqop, \parop, \choiceop, \loopop\}$. 
For the operator $\parop$, Figure~\ref{fig:cc-mon} shows a detailed counter-example for monotonicity.
We can take the workflow nets $W_1^{\name}$ and $W_2^{\name}$ of this figure to construct counter-examples for the other operations:
\begin{itemize}
\item $\C(W_1^{\name} \seqop W_2^{\name}) = 1 - \frac{6 \cdot 1 + 16 \cdot \sfrac{1}{3} + 5 \cdot \sfrac{1}{9}}{8 \cdot 7} \approx 0.7877$ \\
\phantom{$\C(W_1^{\name} \seqop W_2^{\name})$ }$< 0.8796 = \C(W_1^{\name})$,
\item $\C(W_1^{\name} \parop W_2^{\name}) = 1 - \frac{21 \cdot 1 + 16 \cdot \sfrac{1}{3} + 5 \cdot \sfrac{1}{9}}{11 \cdot 10} = 0.7\overline{5}$ \\
\phantom{$\C(W_1^{\name} \parop W_2^{\name})$ }$< 0.8796 = \C(W_1^{\name})$,
\item $\C(W_1^{\name} \choiceop W_2^{\name}) = 1 - \frac{10 \cdot 1 + 11 \cdot \sfrac{1}{2} + 11 \cdot \sfrac{1}{3} + \sfrac{1}{4} + 6 \cdot \sfrac{1}{6} + 4 \cdot \sfrac{1}{9} + 4 \cdot \sfrac{1}{18}}{13 \cdot 12} \approx 0.8648$ \\
\phantom{$\C(W_1^{\name} \choiceop W_2^{\name})$ }$< 0.8796 = \C(W_1^{\name})$,
\item $\C(W_1^{\name} \loopop W_2^{\name}) = 1 - \frac{17 + 37 \cdot \sfrac{1}{3} + 56 \cdot \sfrac{1}{9} + 43 \cdot \sfrac{1}{27} + 74 \cdot \sfrac{1}{81}}{17 \cdot 17} \approx 0.8602$ \\
\phantom{$\C(W_1^{\name} \loopop W_2^{\name})$ }$< 0.8796 = \C(W_1^{\name})$.
\end{itemize}

\propitemf{\propsix}{\yes} Take the workflow nets $W_{fin,1}^{\name}$ and $W_{fin,2}^{\name}$ of Figure~\ref{fig:cc-fin}. 
Further, take the workflow nets $W_5^{\name}$ and $W_6^{\name}$ of Figure~\ref{fig:cc-not-tli} and the net $W_1^{\name}$ of Figure~\ref{fig:cc-mon}.
For the complexity scores, we have $W_{fin,1}^{\name} = W_{fin,2}^{\name} = \frac{1}{2}$ and $W_5^{\name} = W_6^{\name} = 1 - \frac{1 \cdot 1 + 2 \cdot \sfrac{1}{2} + 8 \cdot \sfrac{1}{3} + 3 \cdot \sfrac{1}{6}}{6 \cdot 5} = \frac{151}{180} \approx 0.8389$ and get:
\begin{itemize}
\item $\C(W_5^{\name} \seqop W_1^{\name}) = 1 - \frac{1 \cdot 1 + 2 \cdot \sfrac{1}{2} + 18 \cdot \sfrac{1}{3} + 2 \cdot \sfrac{1}{6} + 15 \cdot \sfrac{1}{9} + 9 \cdot \sfrac{1}{27} + 3 \cdot \sfrac{1}{54}}{11 \cdot 10} = \frac{163}{180}$ \\
$\approx 0.9056 \neq 0.8634 \approx \frac{3419}{3960} = 1 - \frac{3 \cdot 1 + 4 \cdot \sfrac{1}{2} + 22 \cdot \sfrac{1}{3} + 8 \cdot \sfrac{1}{6} + 9 \cdot \sfrac{1}{9} + 6 \cdot \sfrac{1}{18} + 1 \cdot \sfrac{1}{36}}{11 \cdot 10}$ \\
$= \C(W_6^{\name} \seqop W_1^{\name})$,
\item $\C(W_{fin,1} \parop W_{fin,1}) = 1 - \frac{36 \cdot 1}{10 \cdot 9} = \frac{3}{5} = 0.6$ \\
$\neq 0.6136 \approx \frac{27}{44} = 1 - \frac{51 \cdot 1}{12 \cdot 11} = \C(W_{fin,2} \parop W_{fin,1})$,
\item $\C(W_{fin,1} \choiceop W_{fin,1}) = 1 - \frac{20 \cdot 1 + 20 \cdot \sfrac{1}{2} + 1 \cdot \sfrac{1}{4}}{12 \cdot 11} = \frac{37}{48} \approx 0.7708$ \\
$\neq 0.7624 \approx \frac{555}{728} = 1 - \frac{31 \cdot 1 + 24 \cdot \sfrac{1}{2} + 1 \cdot \sfrac{1}{4}}{14 \cdot 13} = \C(W_{fin,2} \choiceop W_{fin,1})$,
\item $\C(W_{fin,1} \loopop W_{fin,1}) = 1 - \frac{22 \cdot 1 + 94 \cdot \sfrac{1}{3} + 80 \cdot \sfrac{1}{9} + 2 \cdot \sfrac{1}{27}}{16 \cdot 15} = \frac{2399}{3240} \approx 0.7404$ \\
$\neq 0.7194 \approx \frac{2972}{4131} = 1 - \frac{33 \cdot 1 + 126 \cdot \sfrac{1}{3} + 97 \cdot \sfrac{1}{9} + 2 \cdot \sfrac{1}{27}}{18 \cdot 17} = \C(W_{fin,2} \loopop W_{fin,1})$.
\end{itemize}
\begin{figure}[ht]
\begin{center}
\begin{minipage}{0.45\textwidth}
\centering
\scalebox{\scalefactor}{
\begin{tikzpicture}[node distance = 1.5cm,>=stealth',bend angle=0,auto]
	\node [place,tokens=1] (start) [label=below:$p_i$] {};
	\node [transition] (t1) [right of=start] {}
	edge [pre] (start);
	\node [place] (p1) [right of=t1] {}
	edge [pre] (t1);
	\node [transition] (t2) [above right of=p1] {}
	edge [pre] (p1);
	\node [transition] (t3) [below right of=p1] {}
	edge [pre] (p1);
	\node [place] (end) [below right of=t2,label=below:$p_o$] {}
	edge [pre] (t2)
	edge [pre] (t3);
	\node at (0,1.15) {$W_5^{\name}$:};
\end{tikzpicture}}
\end{minipage}
\begin{minipage}{0.45\textwidth}
\centering
\scalebox{\scalefactor}{
\begin{tikzpicture}[node distance = 1.5cm,>=stealth',bend angle=0,auto]
	\node[place,tokens=1] (start) [label=below:$p_i$] {};
	\node [transition] (t1) [above right of=start] {}
	edge [pre] (start);
	\node [transition] (t2) [below right of=start] {}
	edge [pre] (start);
	\node [place] (p1) [below right of=t1] {}
	edge [pre] (t1)
	edge [pre] (t2);
	\node [transition] (t3) [right of=p1] {}
	edge [pre] (p1);
	\node [place] (end) [right of=t3,label=below:$p_o$] {}
	edge [pre] (t3);
	\node at (0,1.15) {$W_6^{\name}$:};
\end{tikzpicture}}
\end{minipage}
\end{center}
\caption{Two workflow nets, $W_5^{\name}$ and $W_6^{\name}$, with the same complexity according to the measure $\C$.}
\label{fig:cc-not-tli}
\end{figure}

\propitemf{\propseven}{\yes} 
The workflow nets $W_7^{\name}$ and $W_8^{\name}$ in Figure~\ref{fig:cc-perm} are permutations of each other but $\C(W_7^{\name}) = 0 \neq \frac{53}{63} = \C(W_8^{\name})$.
\begin{figure}[ht]
\begin{center}
\centering
\scalebox{\scalefactor}{
\begin{tikzpicture}[node distance = 1.5cm,>=stealth',bend angle=0,auto]
	\node [place,tokens=1] (start) [label=below:$p_i$,label=above:\textcolor{purple}{$1$}] {};
	\node [transition] (t1) [right of=start,label=below:$t_1$,label=above:\textcolor{purple}{$1$}] {}
	edge [pre] (start);
	\node [place] (p1) [right of=t1,label=below:$p_1$,label=above:\textcolor{purple}{$1$}] {}
	edge [pre] (t1);
	\node [transition] (t2) [right of=p1,label=below:$t_2$,label=above:\textcolor{purple}{$1$}] {}
	edge [pre] (p1);
	\node [place] (p2) [right of=t2,label=below:$p_2$,label=above:\textcolor{purple}{$1$}] {}
	edge [pre] (t2);
	\node [transition] (t3) [right of=p2,label=below:$t_3$,label=above:\textcolor{purple}{$1$}] {}
	edge [pre] (p2);
	\node [place] (end) [right of=t3,label=below:$p_o$,label=above:\textcolor{purple}{$1$}] {}
	edge [pre] (t3);
	\node at (0,1.15) {$W_7^{\name}$:};
\end{tikzpicture}}
\ \\
\ \\
\centering
\scalebox{\scalefactor}{
\begin{tikzpicture}[node distance = 1.5cm,>=stealth',bend angle=0,auto]
	\node [place,tokens=1] (start) [label=below:$p_i$,label=above:\textcolor{purple}{$\frac{1}{2}$}] {};
	\node [transition] (t1) [above right of=start,label=below:$t_1$,label=above:\textcolor{purple}{$1$}] {}
	edge [pre] (start);
	\node [transition] (t2) [below right of=start,label=below:$t_2$,label=above:\textcolor{purple}{$1$}] {}
	edge [pre] (start);
	\node [place] (p1) [right of=t1,label=below:$p_1$,label=above:\textcolor{purple}{$\frac{1}{3}$}] {}
	edge [pre] (t1)
	edge [pre] (t2);
	\node [place] (p2) [right of=t2,label=below:$p_2$,label=above:\textcolor{purple}{$\frac{1}{3}$}] {}
	edge [pre] (t1)
	edge [pre] (t2);
	\node [transition] (t3) [below right of=p1,label=below:$t_3$,label=above:\textcolor{purple}{$1$}] {}
	edge [pre] (p1)
	edge [pre] (p2);
	\node [place] (end) [right of=t3,label=below:$p_o$,label=above:\textcolor{purple}{$1$}] {}
	edge [pre] (t3);
	\node at (0,1.15) {$W_8^{\name}$:};
\end{tikzpicture}}
\end{center}
\caption{Two workflow nets, $W_7^{\name}$ and $W_8^{\name}$ with $W_7^{\name} \in Perm(W_8^{\name})$, but $\C(W_7^{\name}) = 0$ and $\C(W_8^{\name}) = 1 - \frac{2 \cdot \sfrac{1}{2} + 12 \cdot \sfrac{1}{3} + 4 \cdot \sfrac{1}{6} + 1}{42} = \frac{53}{63}$.}
\label{fig:cc-perm}
\end{figure}

\propitemf{\propeight}{\yes} 
The cross-connectivity metric is independent of the transition labels, so uniformly changing them has no effect on the complexity score.

\propitemf{\propnine}{\yes} 
Consider the workflow net $W_{\min,1}^{\name}$ of Figure~\ref{fig:cc-min}. 
We get:
\begin{itemize}
\item $\C(W_{\min,1}^{\name} \seqop W_{\min,1}^{\name}) = 1 - \frac{98}{13 \cdot 12} = \frac{27}{78} \approx 0.3718$ \\
\phantom{$\C(W_{\min,1}^{\name} \seqop W_{\min,1}^{\name})$ }$> 0.\overline{3} = \frac{1}{6} + \frac{1}{6} = \C(W_{\min,1}^{\name}) + \C(W_{\min,1}^{\name})$,
\item $\C(W_{\min,1}^{\name} \parop W_{\min,1}^{\name}) = 1 - \frac{104}{16 \cdot 15} = \frac{17}{30} = 0.5\overline{6}$ \\
\phantom{$\C(W_{\min,1}^{\name} \parop W_{\min,1}^{\name})$ }$> 0.\overline{3} = \frac{1}{6} + \frac{1}{6} = \C(W_{\min,1}^{\name}) + \C(W_{\min,1}^{\name})$,
\item $\C(W_{\min,1}^{\name} \choiceop W_{\min,1}^{\name}) = 1 - \frac{92.24}{18 \cdot 17} = \frac{95}{136} \approx 0.6985$ \\
\phantom{$\C(W_{\min,1}^{\name} \choiceop W_{\min,1}^{\name})$ }$> 0.\overline{3} = \frac{1}{6} + \frac{1}{6} = \C(W_{\min,1}^{\name}) + \C(W_{\min,1}^{\name})$,
\item $\C(W_{\min,1}^{\name} \loopop W_{\min,1}^{\name}) = 1 - \frac{157.\overline{407}}{22 \cdot 21} \approx 0.6593$ \\
\phantom{$\C(W_{\min,1}^{\name} \loopop W_{\min,1}^{\name})$ }$> 0.\overline{3} = \frac{1}{6} + \frac{1}{6} = \C(W_{\min,1}^{\name}) + \C(W_{\min,1}^{\name})$.
\end{itemize}
\begin{figure}[ht]
\begin{center}
\scalebox{\scalefactor}{
\begin{tikzpicture}[node distance = 1.5cm,>=stealth',bend angle=0,auto]
	\node [place,tokens=1] (start) [label=below:$p_i$] {};
	\node [transition] (t1) [right of=start,label=below:$t_1$] {}
	edge [pre] (start);
	\node [place] (p1) [below right of=t1,label=below:$p_1$] {}
	edge [pre] (t1);
	\node [transition] (t2) [right of=p1,label=below:$t_2$] {}
	edge [pre] (p1);
	\node (dots) [right of=t2] {$\dots$}
	edge [pre] (t2);
	\node [transition] (t3) [right of=dots,label=below:$t_k$] {}
	edge [pre] (dots);
	\node [place] (p2) [right of=t3,label=below:$p_k$] {}
	edge [pre] (t3);
	\node [transition] (t4) [above right of=p2,label=below:$t_{k+1}$] {}
	edge [pre] (p2);
	\node [place] (end) [right of=t4,label=below:$p_o$] {}
	edge [pre] (t4);
	\node (dummy1) [above right of=t1] {};
	\node (dummy2) [right of=dummy1] {};
	\node [place] (p3) [right of=dummy2,label=below:$p_{k+1}$] {}
	edge [pre] (t4)
	edge [post] (t1);
	
	\node at (0,1) {$W_{\min,k}^{\name}$:};
	
	\node [below of=dots,yshift=-2cm] {
	\begin{tabular}{|c|c|c|c|c|c|c|c|c|c|c|} \hline
	 $V_{W_{\min,k}^{\name}}$ & $p_i$ & $t_1$ & $p_1$ & $t_2$ & $\dots$ & $t_k$ & $p_k$ & $t_{k+1}$ & $p_{k+1}$ & $p_o$ \\ \hline
	 $p_i$ & $0$ & $1$ & $1$ & $1$ & $\dots$ & $1$ & $1$ & $1$ & $1$ & $1$ \\ \hline
	 $t_1$ & $0$ & $1$ & $1$ & $1$ & $\dots$ & $1$ & $1$ & $1$ & $1$ & $1$ \\ \hline
	 $p_2$ & $0$ & $1$ & $1$ & $1$ & $\dots$ & $1$ & $1$ & $1$ & $1$ & $1$ \\ \hline
	 $t_2$ & $0$ & $1$ & $1$ & $1$ & $\dots$ & $1$ & $1$ & $1$ & $1$ & $1$ \\ \hline
	 $\vdots$ & $\vdots$ & $\vdots$ & $\vdots$ & $\vdots$ & $\dots$ & $\vdots$ & $\vdots$ & $\vdots$ & $\vdots$ & $\vdots$ \\ \hline
	 $t_k$ & $0$ & $1$ & $1$ & $1$ & $\dots$ & $1$ & $1$ & $1$ & $1$ & $1$ \\ \hline
	 $p_k$ & $0$ & $1$ & $1$ & $1$ & $\dots$ & $1$ & $1$ & $1$ & $1$ & $1$ \\ \hline
	 $t_{k+1}$ & $0$ & $1$ & $1$ & $1$ & $\dots$ & $1$ & $1$ & $1$ & $1$ & $1$ \\ \hline
	 $p_{k+1}$ & $0$ & $1$ & $1$ & $1$ & $\dots$ & $1$ & $1$ & $1$ & $1$ & $1$ \\ \hline
	 $p_o$ & $0$ & $0$ & $0$ & $0$ & $\dots$ & $0$ & $0$ & $0$ & $0$ & $0$ \\ \hline
	\end{tabular}
	};
\end{tikzpicture}}
\end{center}
\caption{A workflow net with $V_{W_{\min,k}^{\name}}(v_1, v_2) = 1$ for all $v_1 \in (P \cup T) \setminus \{p_o\}$ and $v_2 \in (P \cup T) \setminus \{p_i\}$. We get $\C(W_{\min,k}^{\name}) = 1 - \frac{(k+3) + (k+1) - 1}{(k+3) + (k+1)} = \frac{1}{2k + 4}$.}
\label{fig:cc-min}
\end{figure}

\propitemf{\propdef}{\yes} 
The cross-connectivity metric is always defined, since all workflow nets have at least $2$ places and $1$ transition and we therefore never divide by $0$. 
Also, we have $\C(N) \geq 0$ for all workflow nets $M \in \mathcal{M}$, as shown by the following Theorem:
\begin{theorem}
\label{theo:cc-bounded}
Let $M = (P, T, F, p_i, p_o) \in \mathcal{M}$ be a workflow net. Then,
\[\sum_{v_1, v_2 \in P \cup T} V_M(v_1, v_2) \leq (|P| + |T|) \cdot (|P| + |T| - 1).\]
\end{theorem}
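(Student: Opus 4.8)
The plan is to establish two elementary facts about the values $V_N$ and then combine them by a simple counting argument. First I would show that every connection value is bounded by $1$. By the definition of $w_N$, each node $v$ satisfies $w_N(v) \in (0,1]$: an xor-connector receives $1/\ndeg(v) \le 1$ because $\ndeg(v) \ge 1$ for any node on a path from $p_i$ to $p_o$, while and-connectors and all remaining nodes receive exactly $1$. Consequently every edge weight $w_N((u,v)) = w_N(u)\cdot w_N(v)$ is a product of two factors in $(0,1]$ and hence lies in $(0,1]$, and every path weight, being a product of such edge weights, also lies in $(0,1]$. Taking the maximum over all paths together with the value $0$ then yields $V_N(v_1,v_2) \le 1$ for every ordered pair $(v_1,v_2) \in (P\cup T)^2$.

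Second, I would exploit the defining structure of workflow nets to produce an entire column of zeros. By Definition~\ref{def:workflow-net} the initial place satisfies $|\pre{p_i}| = 0$, so no arc enters $p_i$ and therefore no path of length at least $2$ can terminate in $p_i$. Hence the set of admissible paths into $p_i$ is empty and $V_N(v_1,p_i) = 0$ for every node $v_1 \in P\cup T$, including the diagonal term $V_N(p_i,p_i)$. (Symmetrically, one could instead invoke $|\post{p_o}| = 0$ to zero out the entire row indexed by $p_o$.)

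Finally I would combine the two observations. The sum ranges over all $(|P|+|T|)^2$ ordered pairs. The $|P|+|T|$ pairs with second coordinate $p_i$ each contribute $0$ by the second fact, while each of the remaining $(|P|+|T|)\cdot(|P|+|T|-1)$ pairs contributes at most $1$ by the first fact. Summing these contributions gives exactly $\sum_{v_1,v_2} V_N(v_1,v_2) \le (|P|+|T|)\cdot(|P|+|T|-1)$, as claimed.

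The only delicate point, and the step I would be most careful about, is ensuring that the counting argument saves precisely the factor $|P|+|T|$ needed to pass from the naive bound $(|P|+|T|)^2$ to the sharper $(|P|+|T|)\cdot(|P|+|T|-1)$. This is exactly what the empty preset of $p_i$ delivers: a full column of zeros rather than merely the diagonal, which would be insufficient in the presence of cycles where diagonal entries can be positive. Everything else reduces to the single fact that all node weights, and hence all path weights, lie in $(0,1]$.
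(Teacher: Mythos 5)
Your proof is correct and takes essentially the same route as the paper's: bound every connection value by $1$ and exploit the structural zeros forced by the workflow-net definition (the paper uses both that no path can enter $p_i$ and that no path can leave $p_o$, reaching the slightly sharper intermediate bound $(|P|+|T|-1)^2$ before relaxing it to the claimed one, whereas your single column of zeros yields the stated bound directly). You also spell out why $V_N(v_1,v_2) \le 1$ from the node-weight definition, a step the paper merely asserts.
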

\begin{proof}
By definition of $V_M$, for any nodes $v_1, v_2 \in P \cup T$, $V_M(v_1, v_2) \le 1$.
Therefore, if $V_M(v_1, v_2) = 1$ would be true for all $v_1, v_2 \in P \cup T$, we get 
\[\sum_{v_1, v_2 \in P \cup T} V_M(v_1, v_2) \leq (|P| + |T|) \cdot (|P| + |T|).\]
However, it is not possible for all $V_M$-values to be $1$: For any node $v \in P \cup T$, there is no path from $v$ to $p_i$, since $|\pre{p_i}| = 0$, we get $V_M(v,p_i) = 0$.
Furthermore, for any node $v \in P \cup T$ there is no path from $p_o$ to $v$, since $|\post{p_o}| = 0$, so $V_M(p_o, v) = 0$.
Therefore, there are at most $(|P| + |T| - 1)^2$ nodes $v_1, v_2$ with $V_N(v_1, v_2) = 1$, giving us
\[\sum_{v_1, v_2 \in P \cup T} V_M(v_1, v_2) \leq (|P| + |T| - 1)^2 \leq (|P| + |T|) \cdot (|P| + |T| - 1)\]
as claimed by the theorem. \hfill $\square$
\end{proof}

\propitemf{\propmin}{\no} 
According to Theorem~\ref{theo:cc-bounded}, $\C(M) \geq 0$ for every workflow net $M \in \mathcal{M}$. 
However, no workflow net can get the complexity score $0$, since this would mean that $V_N(v_1, v_2) = 1$ for every pair of nodes $v_1, v_2$ with $v_1 \neq p_i$. 
But this can never happen since $V_N(p_o, v) = 0$ for every node $v$.
Therefore, let $c \in \mathbb{R}$ be a real-number with $c > 0$ and $M$ be a workflow net with complexity $\C(M) = c$. 
If we choose a natural number $k > \frac{1}{c}$ and construct the workflow net $W_{\min,k}$ of Figure~\ref{fig:cc-min}, we get $\C(W_{\min,k}) = \frac{1}{2k+4} < \frac{1}{\sfrac{1}{c}} = c$. 
Therefore, $\C$ has no minimal value and thus doesn't fulfill this property.

\propitemf{\propinf}{\yes} 
Figure~\ref{fig:cc-min} shows how to construct infinitely many workflow nets with different complexity, since $\frac{1}{2k + 4} > \frac{1}{2(k+1) + 4}$ for any $k \in \mathbb{N}$.
Therefore, $|\{c \in \mathbb{R} \mid \exists M \in \mathcal{M}: \C(M) = c\}| \geq |\frac{1}{2k + 4} \mid k \in \mathbb{N}\}| = \infty$.

\propitemf{\propnotsup}{\yes} The counter-examples for property {\propfive} also show that $\C$ is not superadditive.

\propitemf{\propadd}{\no} The counter-examples for property {\propnine} also show that $\C$ is not additive.
\end{description}

\def\name{\tokensplitname}
\def\C{\tokensplit}
\subsubsection{Token Split}
The token split complexity measure quantifies the effort to keep track of the markings in a net if it contains concurrent splits.
With each concurrent split, the reader must keep track of the created tokens that need to be synchronized later. 
As an easy estimator for this efford, Mendling~\cite[p.128]{Men08} defines this measure as the sum of additional tokens created with each concurrent split.
We can directly translate this measure to workflow nets:
Let $W = (P,T,F,p_i,p_o)$ be a workflow net.
\begin{equation}
\label{eq:token-split}
\C(W) = \sum_{t \in T} (|\post{t}| - 1)
\end{equation}
Figure~\ref{fig:ts-examples} shows two example nets and their respective complexity scores.

\begin{figure}[ht]
\begin{center}
\scalebox{\scalefactor}{
\begin{tikzpicture}[node distance = 1.5cm,>=stealth',bend angle=0,auto]
	\node [place,tokens=1] (start) [label=below:$p_i$] {};
	\node [transition] (t1) [right of=start,label=center:$a$] {}
	edge [pre] (start);
	\node [place] (p1) [right of=t1] {}
	edge [pre] (t1);
	\node [transition] (t2) [right of=p1,label=center:$b$] {}
	edge [pre] (p1);
	\node [place] (p2) [above right of=t2] {}
	edge [pre] (t2);
	\node [place] (p3) [below right of=t2] {}
	edge [pre] (t2);
	\node [transition] (t3) [right of=p2,label=center:$c$] {}
	edge [pre] (p2);
	\node [transition] (t4) [right of=p3,label=center:$d$] {}
	edge [pre] (p3);
	\node [place] (p4) [right of=t3] {}
	edge [pre] (t3);
	\node [place] (p5) [right of=t4] {}
	edge [pre] (t4);
	\node [transition] (t5) [below right of=p4,label=center:$e$] {}
	edge [pre] (p4)
	edge [pre] (p5);
	\node [place] (p6) [right of=t5,label=below:$p_o$] {}
	edge [pre] (t5);
	\node at (0,1.15) {$W_1^{\name}$:};
\end{tikzpicture}}
\ \\
\ \\
\scalebox{\scalefactor}{
\begin{tikzpicture}[node distance = 1.5cm,>=stealth',bend angle=0,auto]
	\node [place,tokens=1] (start) [label=below:$p_i$] {};
	\node [transition] (t1) [right of=start,label=center:$a$] {}
	edge [pre] (start);
	\node [place] (p1) [above right of=t1] {}
	edge [pre] (t1);
	\node [place] (p11) [below right of=t1] {}
	edge [pre] (t1);
	\node [transition] (t2) [below right of=p1,label=center:$b$] {}
	edge [pre] (p1)
	edge [pre] (p11);
	\node [place] (p2) [above right of=t2] {}
	edge [pre] (t2);
	\node [place] (p3) [below right of=t2] {}
	edge [pre] (t2);
	\node [transition] (t3) [right of=p2,label=center:$c$] {}
	edge [pre] (p2);
	\node [transition] (t4) [right of=p3,label=center:$d$] {}
	edge [pre] (p3);
	\node [place] (p4) [right of=t3] {}
	edge [pre] (t3);
	\node [place] (p5) [right of=t4] {}
	edge [pre] (t4);
	\node [transition] (t5) [below right of=p4,label=center:$e$] {}
	edge [pre] (p4)
	edge [pre] (p5);
	\node [place] (p6) [right of=t5,label=below:$p_o$] {}
	edge [pre] (t5);
	\node at (0,1.15) {$W_2^{\name}$:};
\end{tikzpicture}}
\end{center}
\caption{Two workflow nets, $W_1^{\name}$ and $W_2^{\name}$, with different complexity scores according to the token split metric: $\C(W_1^{\name}) = 1$ and $\C(W_2^{\name}) = 2$.}
\label{fig:ts-examples}
\end{figure}

\begin{description}
\propitemf{\propone}{\yes} 
For the two nets $W_1^{\name}$ and $W_2^{\name}$ of Figure~\ref{fig:ts-examples}, we get the complexity scores $\C(W_1^{\name}) = 1 \neq 2 = \C(W_2^{\name})$.

\propitemf{\proptwo}{\no} 
Let $c \in \mathbb{N}$ be an arbitrary integer. 
Figure~\ref{fig:ts-fin} shows how to construct infinitely many workflow nets with complexity $c$. 
\begin{figure}[ht]
\begin{center}
\scalebox{\scalefactor}{
\begin{tikzpicture}[node distance = 1.5cm,>=stealth',bend angle=0,auto]
	\node [place,tokens=1] (start) [label=below:$p_i$] {};
	\node [transition] (t1) [right of=start,label=below:$t_1$] {}
	edge [pre] (start);
	\node [place] (p0) [right of=t1] {}
	edge [pre] (t1);
	\node (dots) [right of=p0] {$\dots$}
	edge [pre] (p0);
	\node [transition] (tk) [right of=dots,label=below:$t_k$] {}
	edge [pre] (dots);
	\node [place] (p1) [above right of=tk,label=below:$p^1$] {}
	edge [pre] (tk);
	\node [place] (pc) [below right of=tk,label=below:$p^{c+1}$] {}
	edge [pre] (tk);
	\node at ($0.5*(p1) + 0.5*(pc)$) {$\vdots$};
	\node [transition] (t1) [right of=p1,label=below:$t^1$] {}
	edge [pre] (p1);
	\node [transition] (tc) [right of=pc,label=below:$t^{c+1}$] {}
	edge [pre] (pc);
	\node at ($0.5*(t1) + 0.5*(tc)$) {$\vdots$};
	\node [place] (q1) [right of=t1,label=below:$q^1$] {}
	edge [pre] (t1);
	\node [place] (qc) [right of=tc,label=below:$q^{c+1}$] {}
	edge [pre] (tc);
	\node at ($0.5*(q1) + 0.5*(qc)$) {$\vdots$};
	\node [transition] (t) [below right of=q1,label=below:$t$] {}
	edge [pre] (q1)
	edge [pre] (qc);
	\node [place] (po) [right of=t,label=below:$p_o$] {}
	edge [pre] (t);
	\node at (0,1.15) {$W_{c,k}^{\name}$:};
\end{tikzpicture}}
\end{center}
\caption{A construction plan for workflow nets $W_{c,k}^{\name}$ with $k + c + 2$ transitions and $\C(W_{c,k}^{\name}) = c$ for any $c,k \in \mathbb{N}$.}
\label{fig:ts-fin}
\end{figure}

\propitemf{\propthree}{\yes} 
Figure~\ref{fig:ts-fin} shows how to construct structurally different workflow nets that get the same complexity score according to $\C$.

\propitemf{\propfour}{\yes} 
The two workflow nets $W_1^{\name}$ and $W_2^{\name}$ of Figure~\ref{fig:ts-examples} have the same language $L(W_1^{\name}) = \{\varepsilon, a, ab, abc, abd, abcd, abdc, abcde, abdce\} = L(W_2^{\name})$, but receive complexity scores $\C(W_1^{\name}) = 1 \neq 2 = \C(W_2^{\name})$.

\propitemf{\propfive}{\yes} 
To see that $\C$ fulfills this property, we use the following Theorem:
\begin{theorem}
\label{thm:ts}
Let $n \geq 2$. For any workflow nets $M_1, \dots, M_n \in \mathcal{M}$ we get:
\begin{itemize}
\item $\C(\seqop(M_1, \dots, M_n)) = \C(M_1) + \dots + \C(M_n)$,
\item $\C(\parop(M_1, \dots, M_n)) = \C(M_1) + \dots + \C(M_n) + n-1$,
\item $\C(\choiceop(M_1, \dots, M_n)) = \C(M_1) + \dots + \C(M_n)$ and
\item $\C(\loopop(M_1, \dots, M_n)) = \C(M_1) + \dots + \C(M_n)$.
\end{itemize}
\end{theorem}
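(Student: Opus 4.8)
The plan is to exploit that $\C$ depends only on the postsets of transitions, since $\C(W) = \sum_{t \in T}(|\post{t}| - 1)$. For each operation I would split the transitions of the composed net into the \emph{old} transitions, inherited from the component nets $M_1, \dots, M_n$, and the \emph{new} transitions introduced by the operation, and then track the contribution of each group to the sum.

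First I would record the one structural fact that does all the work, read directly off Definition~\ref{def:operations}: in each of the four operations, no old transition is given a new outgoing arc. Every arc added by an operation is either of the form place$\,\to\,$transition, which leaves all transition postsets untouched, or transition$\,\to\,$place whose source is one of the newly introduced transitions. Hence every old transition keeps precisely its original postset, and the old transitions together always contribute $\C(M_1) + \dots + \C(M_n)$. It then remains only to sum $|\post{t}| - 1$ over the new transitions.

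Next I would read the postset sizes of the new transitions off the flow relations. For $\seqop$ each new $t_j^*$ has the singleton postset $\{p_i^{j+1}\}$, so every new transition contributes $0$ and the total is $\C(M_1) + \dots + \C(M_n)$. For $\choiceop$ and $\loopop$ every new transition again has a singleton postset: for the choice, $\post{t_j^*} = \{p_i^j\}$ and $\post{s_j^*} = \{p_o^*\}$; for the loop, additionally $\post{t^*} = \{p^*\}$, $\post{s^*} = \{p_o^*\}$, $\post{s_j^*} = \{q^*\}$, and the separately listed $\post{s_n^*} = \{p_i^n\}$, $\post{t_n^*} = \{p^*\}$. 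Each contributes $0$, so both totals equal $\C(M_1) + \dots + \C(M_n)$. The parallel case is the only exception: the split transition $t_i^*$ has postset $\{p_i^1, \dots, p_i^n\}$ of size $n$, contributing $n-1$, while the join $t_o^*$ has the singleton postset $\{p_o^*\}$, contributing $0$; hence the total is $\C(M_1) + \dots + \C(M_n) + (n-1)$, as claimed.

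The proof is essentially bookkeeping, so the only genuine obstacle is verifying the structural observation that old transitions never acquire new outgoing arcs, since this is what forces the four identities to differ only by the constant contributed by the new transitions. I expect the loop to be the case worth double-checking, as its flow relation is spread over three unions and handles the $n$-th branch separately from the first $n-1$; I would confirm there that $t_n^*$ and $s_n^*$ remain singleton-postset transitions, so that no unexpected split is hidden in the back-arc of the loop.
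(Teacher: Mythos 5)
Your proposal is correct and follows essentially the same argument as the paper's proof: the paper likewise observes that no operation changes the outgoing arcs of transitions in $M_1, \dots, M_n$, and that among the newly introduced transitions only $t_i^*$ in the parallel composition has more than one outgoing arc, contributing the extra $n-1$. Your write-up is simply a more explicit, case-by-case verification of the same bookkeeping, including the careful check of the loop's $t_n^*$ and $s_n^*$.
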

\begin{proof}
The claim is obivous for the operations $\seqop$, $\choiceop$ and $\loopop$, since none of the out-going arrows of transitions in $M_1, \dots, M_n$ are changed and no new transition with more than one out-going arc is introduced by these operations.
For $\parop$, however, we introduce the new transition $t_i^*$ that has $n$ out-going arcs. Therefore, the token split complexity increases by $n-1$ in this case. \hfill $\square$
\end{proof}
Thus, according to Theorem~\ref{thm:ts}, $\C(\oplus(M_1, \dots, M_n)) \geq \C(M_i)$ for any workflow nets $M_1, \dots, M_n \in \mathcal{M}$, any operation $\oplus \in \{\seqop, \parop, \choiceop, \loopop\}$ and a model $M_i \in \{M_1, \dots, M_n\}$.

\newpage
\propitemf{\propsix}{\no} 
Let $M_1, M_2, M_3 \in \mathcal{M}$ be three arbitrary workflow nets with $\C(M_1) = \C(M_2)$. Theorem~\ref{thm:ts} implies:
\begin{itemize}
\item $\C(M_1 \seqop M_3) = \C(M_1) + \C(M_3)$ \\
\phantom{$\C(M_1 \seqop M_3)$ }$= \C(M_2) + \C(M_3) = C(M_2 \seqop M_3)$,
\item $\C(M_1 \parop M_3) = \C(M_1) + \C(M_3) + n-1$ \\
\phantom{$\C(M_1 \parop M_3)$ }$= \C(M_2) + \C(M_3) + n-1 = C(M_2 \parop M_3)$,
\item $\C(M_1 \choiceop M_3) = \C(M_1) + \C(M_3)$ \\
\phantom{$\C(M_1 \choiceop M_3)$ }$= \C(M_2) + \C(M_3) = C(M_2 \choiceop M_3)$ and
\item $\C(M_1 \loopop M_3) = \C(M_1) + \C(M_3)$ \\
\phantom{$\C(M_1 \loopop M_3)$ }$= \C(M_2) + \C(M_3) = C(M_2 \loopop M_3)$.
\end{itemize}
Therefore, $\C(M_1 \oplus M_3) = \C(M_2 \oplus M_3)$ for any $\oplus \in \{\seqop, \parop, \choiceop, \loopop\}$.

\propitemf{\propseven}{\yes} 
Figure~\ref{fig:ts-perm} shows two workflow nets, $W_3^{\name}$ and $W_4^{\name}$ that are permutations of each other, but $\C(W_4^{\name}) = 3 \neq 2 = \C(W_3^{\name})$.
\begin{figure}[ht]
\begin{center}
\begin{minipage}{0.55\textwidth}
\centering
\scalebox{\scalefactor}{
\begin{tikzpicture}[node distance = 1.5cm,>=stealth',bend angle=0,auto]
	\node [place,tokens=1] (start) [label=below:$p_i$] {};
	\node [transition] (t1) [right of=start,label=below:$t_1$] {}
	edge [pre] (start);
	\node [place] (p1) [above right of=t1,label=below:$p_1$] {}
	edge [pre] (t1);
	\node [place] (p2) [below right of=t1,label=below:$p_2$] {}
	edge [pre] (t1);
	\node [transition] (t2) [right of=p1,label=below:$t_2$] {}
	edge [pre] (p1);
	\node [transition] (t3) [right of=p2,label=below:$t_3$] {}
	edge [pre] (p2);
	\node [place] (p3) [right of=t2,label=below:$p_3$] {}
	edge [pre] (t2);
	\node [place] (p4) [right of=t3,label=below:$p_4$] {}
	edge [pre] (t3);
	\node [transition] (t4) [below right of=p3,label=below:$t_4$] {}
	edge [pre] (p3)
	edge [pre] (p4);
	\node [place] (end) [right of=t4,label=below:$p_o$] {}
	edge [pre] (t4);
	\node at (0,1.15) {$W_3^{\name}$:};
\end{tikzpicture}}
\end{minipage}
\begin{minipage}{0.44\textwidth}
\centering
\scalebox{\scalefactor}{
\begin{tikzpicture}[node distance = 1.5cm,>=stealth',bend angle=0,auto]
	\node [place,tokens=1] (start) [label=below:$p_i$] {};
	\node [transition] (t1) [right of=start,label=below:$t_1$] {}
	edge [pre] (start);
	\node [place] (p1) [right of=t1,label=below:$p_1$] {}
	edge [pre] (t1);
	\node [place] (p2) [below of=p1,label=below:$p_2$] {}
	edge [pre] (t1);
	\node [place] (p3) [below of=p2,label=below:$p_3$] {}
	edge [pre] (t1);
	\node [transition] (t2) [right of=p1,label=below:$t_2$] {}
	edge [pre] (p1)
	edge [pre] (p2)
	edge [pre] (p3);
	\node [transition] (t3) [above of=t2,label=below:$t_3$] {}
	edge [pre] (p1);
	\node [transition] (t4) [above of=t1,label=below:$t_4$] {}
	edge [post] (p1);
	\node [place] (p4) [left of=t3,label=below:$p_4$] {}
	edge [pre] (t3)
	edge [post] (t4);
	\node [place] (end) [right of=t2,label=below:$p_o$] {}
	edge [pre] (t2);
	\node at (0,1.15) {$W_4^{\name}$:};
\end{tikzpicture}}
\end{minipage}
\end{center}
\caption{Two workflow nets, $W_3^{\name}$ and $W_4^{\name}$, with $W_4^{\name} \in Perm(W_3^{\name})$, $\C(W_3^{\name}) = 1$ and $\C(W_4^{\name}) = 2$.}
\label{fig:ts-perm}
\end{figure}

\propitemf{\propeight}{\yes} 
Changing the labeling function neither changes the set of transitions nor their postsets, so $\C$ is robust against relabelings.

\propitemf{\propnine}{\no} 
Theorem~\ref{thm:ts} directly implies that only the operation $\parop$ is not subadditive. 
For all other operations, the number of token splits does not strictly increase, no matter which workflow nets we use as input.

\propitemf{\propdef}{\yes} 
The set of transitions $T$ is defined for all workflow nets, as well as the postset of a transition.
Furthermore, $\C$ only returns non-negative values, since every summand is non-negative:
If there was a negative summand in $\C$, this would mean that there is a transition $t \in T$ with $\post{t} = \emptyset$. 
But this is impossible in a workflow net, since $t$ must lie on a path from $p_i$ to $p_o$ and therefore needs at least one outgoing edge.

\propitemf{\propmin}{\yes} 
For the smallest possible workflow net, $\netminwf$, we get $\C(\netminwf) = 0$.
We know from the analysis of property {\propdef} that $\C(M) \geq 0$ for all workflow nets $M \in \mathcal{M}$, so $\C$ reaches its minimum value for the workflow net $\netminwf$. 

\propitemf{\propinf}{\yes} 
Figure~\ref{fig:ts-fin} shows the construction of workflow nets $W_{c,k}^{\name}$ with complexity score $c$ for any $c,k \in \mathbb{N}$. 
Therefore, we can construct infinitely many nets with complexity score $c$: $|\{c \in \mathbb{N} \mid \exists M \in \mathcal{M}: \C(M) = c\}| \geq |\mathbb{N}| = \infty$.

\propitemf{\propnotsup}{\yes} 
Theorem~\ref{thm:ts} directly implies that $\C$ is not superadditive for any of the operations $\seqop, \parop, \choiceop, \loopop$.

\propitemf{\propadd}{\no} Theorem~\ref{thm:ts} shows that $\C$ fulfills this property for the operations $\seqop, \choiceop$ and $\loopop$, but not for the operation $\parop$. 
Thus, the property holds not for all of the operations of Definition~\ref{def:operations}.
\end{description}

\def\name{\separabilityname}
\def\C{\separability}
\subsubsection{Separability}
Separability measures how easy it is to analyze parts of a model in isolation.
The idea is that a process model is easier to understand if we can break it down into multiple isolated pieces. 
Mendling~\cite[p.122]{Men08} proposes a separability metric based on cut-vertices in the graph of an EPC model. 
In graph theory, cut-vertices are vertices in a graph whose removal would split the graph into at least two connected components.
For an undirected graph $G = (V,E)$, the set of its connected components is 
\[Conn(G) := \{G' = (V', E') \mid \forall u,v \in V': \text{there is a path from } u \text{ to } v \text{ in } G'\}.\]
With this definition, the set of cut-vertices of an undirected graph $G$ is 
\[\mathcal{V}(G) := \left\{v \in V \mid \left|Conn((V \setminus \{v\}, E \setminus \{\{x,y\} \mid x = v \lor y = v\}))\right| \geq 2 \right\}.\]
This definition can also be applied to directed graphs by ignoring the direction of arcs in the graph.

Mendling defines the separability measure as the number of cut-vertices of the process model divided by the total number of vertices.
We can directly translate this to workflow nets, but keep in mind that the input- and output places of a workflow net can never be cut-vertices.
Let $W = (P,T,F,p_i,p_o)$ be a workflow net and $\overline{W} = (P \cup T, \{\{x,y\} \mid (x,y) \in F \lor (y,x) \in F\})$ be the undirected graph produced by ignoring the arc directions in $W$.
The separability measure is then defined as
\begin{equation}
\label{eq:c-separability}
\C(W) = 1 - \frac{|\mathcal{V}(\overline{W})|}{|P| + |T| - 2}
\end{equation}
In contrast to the original definition, we subtract the fraction from $1$ to ensure that workflow nets with many cut-vertices get a low complexity score.
This way, a low complexity score refers to less complex models.
Figure~\ref{fig:sep-examples} shows three example nets and their respective complexity scores.
\begin{figure}[ht]
\begin{center}
\begin{minipage}{0.2525\textwidth}
\centering
\scalebox{\scalefactor}{
\begin{tikzpicture}[node distance = 1.5cm,>=stealth',bend angle=0,auto]
	\node [place,tokens=1] (start) [label=below:$p_i$] {};
	\node [transition] (t1) [right of=start,label=center:$a$] {}
	edge [pre] (start);
	\node [place] (p1) [right of=t1,label=below:$p_o$] {}
	edge [pre] (t1);
	\node at (0,1.15) {$W_1^{\name}$:};
	\draw[opacity=0] ($(start)-(0.25,1.75)$) rectangle ($(p1) + (0.25,1.75)$);
\end{tikzpicture}}
\end{minipage}
\begin{minipage}{0.25\textwidth}
\centering
\scalebox{\scalefactor}{
\begin{tikzpicture}[node distance = 1.5cm,>=stealth',bend angle=0,auto]
	\node [place,tokens=1] (start) [label=below:$p_i$] {};
	\node [transition] (t2) [right of=start,label=center:$a$] {}
	edge [pre] (start);
	\node [transition] (t1) [above of=t2,label=center:$a$] {}
	edge [pre] (start);
	\node [transition] (t3) [below of=t2,label=center:$a$] {}
	edge [pre] (start);
	\node [place] (p1) [right of=t2,label=below:$p_o$] {}
	edge [pre] (t1)
	edge [pre] (t2)
	edge [pre] (t3);
	\node at (0,1.15) {$W_2^{\name}$:};
	\draw[opacity=0] ($(start)-(0.25,1.75)$) rectangle ($(p1) + (0.25,1.75)$);
\end{tikzpicture}}
\end{minipage}
\begin{minipage}{0.425\textwidth}
\centering
\scalebox{\scalefactor}{
\begin{tikzpicture}[node distance = 1.5cm,>=stealth',bend angle=0,auto]
	\node [place,tokens=1] (start) [label=below:$p_i$] {};
	\node [transition] (t1) [right of=start,label=center:$a$] {}
	edge [pre] (start);
	\node [place] (p1) [right of=t1] {}
	edge [pre] (t1);
	\node [transition] (t2) [right of=p1,label=center:$b$] {}
	edge [pre] (p1);
	\node [place] (p2) [right of=t2,label=below:$p_o$] {}
	edge [pre] (t2);
	\node at (0,1.15) {$W_3^{\name}$:};
	\draw[opacity=0] ($(start)-(0.25,1.75)$) rectangle ($(p2) + (0.25,1.75)$);
\end{tikzpicture}}
\end{minipage}
\end{center}
\caption{Three workflow nets, $W_1^{\name}$, $W_2^{\name}$, $W_3^{\name}$, with $\C(W_2^{\name}) = 1$ and $\C(W_1^{\name}) = \C(W_3^{\name}) = 0$.}
\label{fig:sep-examples}
\end{figure}

\begin{description}
\propitemf{\propone}{\yes} For the two nets $W_1^{\name}$ and $W_2^{\name}$ of Figure~\ref{fig:sep-examples}, we get the complexity scores $\C(W_1^{\name}) = 0 \neq 1 = \C(W_2^{\name})$. 

\propitemf{\proptwo}{\no} 
For any $n \in \mathbb{N}$, the workflow net $W_{fin,n}^{\name}$ of Figure~\ref{fig:sep-fin}, we have $\C(W_{fin,n}^{\name}) = 0$.
Therefore, there are infinitely many workflow nets with this complexity score.
\begin{figure}[ht]
\begin{center}
\scalebox{\scalefactor}{
\begin{tikzpicture}[node distance = 1.5cm,>=stealth',bend angle=0,auto]
	\node [place,tokens=1] (start) [label=below:$p_i$] {};
	\node [transition] (t1) [right of=start,label=below:$t_1$] {}
	edge [pre] (start);
	\node (p1) [right of=t1] {$\dots$}
	edge [pre] (t1);
	\node [transition] (t2) [right of=p1,label=below:$t_n$] {}
	edge [pre] (p1);
	\node [place] (p2) [right of=t2,label=below:$p_o$] {}
	edge [pre] (t2);
	\node at (0,1.15) {$W_{fin,n}^{\name}$:};
\end{tikzpicture}}
\end{center}
\caption{A worklfow net with complexity $0$ and $n$ transitions.}
\label{fig:sep-fin}
\end{figure}

\propitemf{\propthree}{\yes} 
The workflow nets $W_1^{\name}$ and $W_3^{\name}$ of Figure~\ref{fig:sep-examples} are different in structure, but both get the same complexity score $\C(W_1^{\name}) = 0 = \C(W_3^{\name})$.

\propitemf{\propfour}{\yes} 
Take the workflow nets $W_1^{\name}$ and $W_2^{\name}$ of Figure~\ref{fig:sep-examples}.
Their languages are $L(W_1^{\name}) = \{\varepsilon, a\} = L(W_2^{\name})$, but their complexity scores are $\C(W_1^{\name}) = 0 \neq 1 = \C(W_2^{\name})$.

\propitemf{\propfive}{\no} 
The workflow net $W_2^{\name}$ of Figure~\ref{fig:sep-examples} contains no cut-vertices, so $\C(W_2^{\name}) = 1$.
But, since all operations of Definition~\ref{def:operations} except $\choiceop$ introduce at least one cut-vertex, we get:
\begin{itemize}
\item $\C(W_2^{\name} \seqop W_2^{\name}) = 1 - \frac{3}{9} = \frac{2}{3} < 1 = \C(W_2^{\name})$,
\item $\C(W_2^{\name} \parop W_2^{\name}) = 1 - \frac{2}{12} = \frac{5}{6} < 1 = \C(W_2^{\name})$ and
\item $\C(W_2^{\name} \loopop W_2^{\name}) = 1 - \frac{2}{20} = \frac{9}{10} < 1 = \C(W_2^{\name})$.
\end{itemize}
For the operator $\choiceop$, Figure~\ref{fig:sep-mon} shows two workflow nets, $W_5^{\name}$ and $W_6^{\name}$, where $\C(W_6^{\name} \choiceop W_5^{\name}) \approx 0.9286 < 1 = \C(W_6^{\name})$.
\begin{figure}[ht]
\begin{center}
\scalebox{\scalefactor}{
\begin{tikzpicture}[node distance = 1.5cm,>=stealth',bend angle=0,auto]
	\node [place,tokens=1] (start) [] {};
	\node [transition] (t1) [above right of=start,yshift=0.5cm] {}
	edge [pre] (start);
	\node [transition] (t2) [below right of=start,yshift=-0.5cm] {}
	edge [pre] (start);
	\node [place] (p1) [right of=t1] {}
	edge [pre] (t1);
	\node [transition] (t3) [right of=p1] {}
	edge [pre] (p1);
	\node [red place] (p3) [right of=t3] {}
	edge [pre] (t3);
	\node [transition] (t6) [above of=p3] {}
	edge [pre, bend right=30] (p3)
	edge [post, bend left=30] (p3);
	\node [transition] (t6) [right of=p3] {}
	edge [pre] (p3);
	\node [place] (p5) [right of=t6] {}
	edge [pre] (t6);
	\node [transition] (t7) [right of=p5] {}
	edge [pre] (p5);
	\node [place] (p6) [below right of=t7,yshift=-0.5cm] {}
	edge [pre] (t7);

	\node [transition] (t4) [below of=p3,yshift=-0.56cm] {};
	\node [place] (p2) [below left of=t4] {}
	edge [pre] (t2)
	edge [post] (t4);
	\node [transition] (t5) [below right of=p2] {}
	edge [pre] (p2);
	\node [place] (p4) [below right of=t4] {}
	edge [pre] (t4)
	edge [pre] (t5);
	\node [transition] (t8) [below left of=p6,yshift=-0.5cm] {}
	edge [pre] (p4)
	edge [post] (p6);
	
	\node at ($(start) + (0,2)$) {$W_4^{\name}$:};	
	
	\begin{pgfonlayer}{bg}
		\draw[rounded corners,draw=gray,fill=lightgray!50!white] ($(p1) - (0.75, 0.75)$) rectangle ($(p5) + (0.75, 2)$);
		\node[gray] at ($(p1) + (-0.4,-0.5)$) {$W_5^{\name}$};
		\draw[rounded corners,draw=gray,fill=lightgray!50!white] ($(p2) - (0.75, 1.5)$) rectangle ($(p4) + (0.75, 1.5)$);
		\node[gray] at ($(p2) + (-0.4,-1.25)$) {$W_6^{\name}$};
	\end{pgfonlayer}
\end{tikzpicture}}
\end{center}
\caption{A workflow net $W_4^{\name}$ with $W_4^{\name} = \choiceop(W_5^{\name}, W_6^{\name})$. Computing the separability metric yields $\C(W_4^{\name}) = 1 - \frac{1}{14} \approx 0.9286$, $\C(W_5^{\name}) = 1 - \frac{1}{4} = 0.75$ and $\C(W_6^{\name}) = 1 - \frac{0}{2} = 1$.}
\label{fig:sep-mon}
\end{figure}

\propitemf{\propsix}{\yes} 
Figure~\ref{fig:sep-comp} shows workflow nets $W_7^{\name}$ and $W_8^{\name}$ with the same complexity score $\C(W_7^{\name}) = 0.25 = \C(W_8^{\name})$.
But if we combine these nets with $\netminwf$ of Figure~\ref{fig:netminwf}, we get:
\begin{itemize}
\item $\C(W_7^{\name} \seqop \netminwf) = 1 - \frac{7}{8} = \frac{1}{8} \neq \frac{1}{6} = 1 - \frac{10}{12} = \C(W_8^{\name} \seqop \netminwf)$,
\item $\C(W_7^{\name} \parop \netminwf) = 1 - \frac{3}{11} = \frac{8}{11} \neq \frac{13}{15} = 1 - \frac{2}{15} = \C(W_8^{\name} \parop \netminwf)$,
\item $\C(W_7^{\name} \choiceop \netminwf) = 1 - \frac{1}{13} = \frac{12}{13} \neq 1 = 1 - \frac{0}{17} = \C(W_8^{\name} \choiceop \netminwf)$,
\item $\C(W_7^{\name} \loopop \netminwf) = 1 - \frac{5}{17} = \frac{12}{17} \neq \frac{17}{21} = 1 - \frac{4}{21} = \C(W_8^{\name} \loopop \netminwf)$.
\end{itemize}
\begin{figure}[ht]
\begin{center}
\scalebox{\scalefactor}{
\begin{tikzpicture}[node distance = 1.5cm,>=stealth',bend angle=0,auto]
	\node [place,tokens=1] (start) [label=below:$p_i$] {};
	\node [transition,draw=red,fill=red!20] (t1) [right of=start] {}
	edge [pre] (start);
	\node [red place] (p1) [right of=t1] {}
	edge [pre] (t1);
	\node [transition,draw=red,fill=red!20] (t2) [right of=p1] {}
	edge [pre] (p1);
	\node [place] (end) [right of=t2,label=below:$p_o$] {}
	edge [pre] (t2);
	\node [transition] (t3) [above of=p1] {}
	edge [pre, bend left=20] (p1)
	edge [post, bend right=20] (p1);
	\node at (0,1.15) {$W_7^{\name}$:};
\end{tikzpicture}} \\
\ \\
\scalebox{\scalefactor}{
\begin{tikzpicture}[node distance = 1.5cm,>=stealth',bend angle=0,auto]
	\node [place,tokens=1] (start) [label=below:$p_i$] {};
	\node [transition,draw=red,fill=red!20] (t1) [right of=start] {}
	edge [pre] (start);
	\node [red place] (p1) [right of=t1] {}
	edge [pre] (t1);
	\node [transition,draw=red,fill=red!20] (t2) [right of=p1] {}
	edge [pre] (p1);
	\node [red place] (p2) [right of=t2] {}
	edge [pre] (t2);
	\node [transition] (t3) [above right of=p2] {}
	edge [pre] (p2);
	\node [transition] (t4) [below right of=p2] {}
	edge [pre] (p2);
	\node [red place] (p3) [below right of=t3] {}
	edge [pre] (t3)
	edge [pre] (t4);
	\node [transition,draw=red,fill=red!20] (t5) [right of=p3] {}
	edge [pre] (p3);
	\node [place] (end) [right of=t5] {}
	edge [pre] (t5);
	\node at (0,1.15) {$W_8^{\name}$:};
\end{tikzpicture}}
\end{center}
\caption{Two workflow nets, $W_7^{\name}$, $W_8^{\name}$, with $\C(W_7^{\name}) = \C(W_8^{\name}) = 0.25$.}
\label{fig:sep-comp}
\end{figure}

\propitemf{\propseven}{\yes} 
Figure~\ref{fig:sep-perm} shows two workflow nets, $W_9^{\name}$ and $W_{10}^{\name}$ that are permutations of each other, but $\C(W_9^{\name}) = 0.25 \neq 0.5 = \C(W_{10}^{\name})$.
\begin{figure}[ht]
\begin{center}
\begin{minipage}{0.45\textwidth}
\centering
\scalebox{\scalefactor}{
\begin{tikzpicture}[node distance = 1.5cm,>=stealth',bend angle=0,auto]
	\node [place,tokens=1] (start) [label=below:$p_i$] {};
	\node [transition] (t1) [right of=start,label=below:$t_1$] {}
	edge [pre] (start);
	\node [place] (p1) [right of=t1,label=below:$p_1$] {}
	edge [pre] (t1);
	\node [transition] (t2) [right of=p1,label=below:$t_3$] {}
	edge [pre] (p1);
	\node [place] (end) [right of=t2,label=below:$p_o$] {}
	edge [pre] (t2);
	\node [transition] (t3) [above of=p1,label=above:$t_2$] {}
	edge [pre, bend left=20] (p1)
	edge [post, bend right=20] (p1);
	\node at (0,1.15) {$W_9^{\name}$:};
	\draw[opacity=0] ($(start)-(0.25,1.75)$) rectangle ($(end) + (0.25,2.25)$);
\end{tikzpicture}}
\end{minipage}
\begin{minipage}{0.5\textwidth}
\centering
\scalebox{\scalefactor}{
\begin{tikzpicture}[node distance = 1.5cm,>=stealth',bend angle=0,auto]
	\node [place,tokens=1] (start) [label=below:$p_i$] {};
	\node [transition] (t1) [right of=start,label=below:$t_1$] {}
	edge [pre] (start);
	\node [place] (p2) [right of=t1,label=below:$p_2$] {}
	edge [pre] (t1);
	\node [transition] (t3) [above right of=p2,label=below:$t_2$] {}
	edge [pre] (p2);
	\node [transition] (t4) [below right of=p2,label=below:$t_3$] {}
	edge [pre] (p2);
	\node [place] (end) [below right of=t3,label=below:$p_o$] {}
	edge [pre] (t3)
	edge [pre] (t4);
	\node at (0,1.15) {$W_{10}^{\name}$:};
	\draw[opacity=0] ($(start)-(0.25,1.75)$) rectangle ($(end) + (0.25,2.25)$);
\end{tikzpicture}}
\end{minipage}
\end{center}
\caption{Two workflow nets with $W_9^{\name} \in Perm(W_{10}^{\name})$ and complexity scores $\C(W_9^{\name}) = 0.25 \neq 0.5 = \C(W_{10}^{\name})$.}
\label{fig:sep-perm}
\end{figure}

\propitemf{\propeight}{\yes} 
The number of cut-vertices is independent of the labeling.
Uniformly changing the labeling therefore has no impact on the complexity score of $\C$.

\propitemf{\propnine}{\no} 
$\C$ fulfills this property only for the operations $\parop$, $\choiceop$ and $\loopop$:
Take the workflow net $\netminwf$ of Figure~\ref{fig:netminwf}. We get:
\begin{itemize}
\item $\C(\netminwf \parop \netminwf) = 1 - \frac{2}{8} = 0.75 > 0 + 0 = \C(\netminwf) + \C(\netminwf)$,
\item $\C(\netminwf \choiceop \netminwf) = 1 - \frac{0}{10} = 1 > 0 + 0 = \C(\netminwf) + \C(\netminwf)$,
\item $\C(\netminwf \loopop \netminwf) = 1 - \frac{4}{14} \approx 0.7143 > 0 + 0 = \C(\netminwf) + \C(\netminwf)$.
\end{itemize}
For the operation $\seqop$, however, $\C$ is subadditive.
We show this by first proving the following Theorem
\begin{theorem}
\label{thm:cut-vertices-seqop}
Let $M_1, M_2 \in \mathcal{M}$ be two workflow nets. 
Further, let $p_o^{M_1}$ be the output place of $M_1$, $p_i^{M_2}$ be the input place of $M_2$ and $t_{\seqop}$ be the newly introduced transition connecting $p_o^{M_1}$ and $p_i^{M_2}$.
Then, the set of cut-vertices in the net $M_1 \seqop M_2$ is $\mathcal{V}(\overline{M_1 \seqop M_2}) = \mathcal{V}(\overline{M_1}) \cup \mathcal{V}(\overline{M_2}) \cup \{p_o^{M_1}, p_i^{M_2}, t_{\seqop}\}$.
\end{theorem}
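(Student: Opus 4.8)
The plan is to exploit the very simple structure of the undirected graph $\overline{M_1 \seqop M_2}$: by Definition~\ref{def:operations} it is exactly the disjoint union of $\overline{M_1}$ and $\overline{M_2}$ together with the new vertex $t_{\seqop}$ and the two edges $\{p_o^{M_1}, t_{\seqop}\}$ and $\{t_{\seqop}, p_i^{M_2}\}$. Hence the only route between an $M_1$-vertex and an $M_2$-vertex in the undirected graph is the bridge path $p_o^{M_1} - t_{\seqop} - p_i^{M_2}$. Since every node of a workflow net lies on a path from $p_i$ to $p_o$, each of $\overline{M_1}$ and $\overline{M_2}$ is connected, so before any deletion the whole graph is connected. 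I would prove the claimed equality by two inclusions, after first recording the auxiliary fact that the initial and final place of a workflow net are never cut-vertices: for any node $v$ there is a directed $p_i$-to-$v$ path avoiding $p_o$ and a directed $v$-to-$p_o$ path avoiding $p_i$ (because $p_o$ has empty postset and $p_i$ has empty preset), so deleting $p_i$ or $p_o$ leaves everything connected. In particular $p_o^{M_1} \notin \mathcal{V}(\overline{M_1})$ and $p_i^{M_2} \notin \mathcal{V}(\overline{M_2})$, which is why the three bridge nodes genuinely extend the union.

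For the inclusion $\supseteq$ I would first handle the three bridge nodes. Deleting $t_{\seqop}$ removes both bridge edges, so $\overline{M_1}$ and $\overline{M_2}$ fall into separate components; deleting $p_o^{M_1}$ severs the only $M_1$-side endpoint of the bridge, leaving the nonempty set $(P^1 \cup T^1) \setminus \{p_o^{M_1}\}$ (which contains $p_i^{M_1} \neq p_o^{M_1}$) disconnected from $\{t_{\seqop}\} \cup (P^2 \cup T^2)$; deleting $p_i^{M_2}$ is symmetric. Each deletion yields at least two components, so all three are cut-vertices. Next, for a cut-vertex $v \in \mathcal{V}(\overline{M_1})$ I would use that $v \neq p_o^{M_1}$ by the auxiliary fact; hence after deleting $v$ the vertex $p_o^{M_1}$ survives in one component of $\overline{M_1} \setminus \{v\}$, and attaching $t_{\seqop}$ together with the connected $\overline{M_2}$ at that single point only enlarges that one component while the remaining component(s) of $\overline{M_1} \setminus \{v\}$ stay separate. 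So $v$ is still a cut-vertex of the composition, and symmetrically for $\mathcal{V}(\overline{M_2})$.

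For the reverse inclusion $\subseteq$ I would show that any vertex $v$ outside the right-hand side fails to be a cut-vertex. Such a $v$ lies in one of the two nets, is not the adjacent bridge place, and is not a cut-vertex there. Say $v \in P^1 \cup T^1$ with $v \neq p_o^{M_1}$ and $v \notin \mathcal{V}(\overline{M_1})$ (this includes the harmless case $v = p_i^{M_1}$, the input place of the composition). Then $\overline{M_1} \setminus \{v\}$ is connected and still contains $p_o^{M_1}$, so $t_{\seqop}$ and all of the connected $\overline{M_2}$ remain attached to it; the whole composition minus $v$ is therefore connected, and $v$ is not a cut-vertex. The case $v \in P^2 \cup T^2$ is symmetric, and combining both inclusions yields the claimed identity.

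The step requiring the most care is the $\supseteq$ direction for the genuine cut-vertices of $M_1$ and $M_2$: one must rule out that glueing the rest of the composition onto $p_o^{M_1}$ (respectively $p_i^{M_2}$) reconnects the components that deleting $v$ created. This is exactly where the auxiliary fact $v \neq p_o^{M_1}$ (because endpoints are never cut-vertices) is essential, together with the observation that the glueing happens at a single surviving vertex and can therefore merge at most one of the resulting components with the other net, leaving at least one component isolated.
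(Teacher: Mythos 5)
Your proof is correct and follows essentially the same route as the paper's: both argue via the two inclusions, showing that the three bridge vertices $p_o^{M_1}$, $t_{\seqop}$, $p_i^{M_2}$ become cut-vertices, that cut-vertices of $M_1$ and $M_2$ persist because the composition attaches the remainder of the net at a single surviving vertex, and that non-cut-vertices remain non-cut-vertices. Your version is in fact more careful than the paper's: the auxiliary fact that input and output places are never cut-vertices, which you prove and use to legitimately exclude $p_o^{M_1}$ and $p_i^{M_2}$ from the last case, is only implicit in the paper's argument.
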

\begin{proof}
It is obvious by construction that $p_o^{M_1}$, $p_i^{M_2}$ and $t_{\seqop}$ are cut-vertices in $M_1 \seqop M_2$.
Therefore, let $v$ be a cut-vertex in $M_1$ or $M_2$. 
Without loss of generality, let $v$ be a cut-vertex of $M_1$.
Then, removing $v$ leads to more than one connected component in $M_1$. 
Because the $\seqop$-operation only introduces the edges $(p_o^{M_1}, t_{\seqop})$ and $(t_{\seqop} p_i^{M_2})$, removing $v$ in $M_1 \seqop M_2$ also leads to more than one connected component. 
Therefore, $v$ is also a cut-vertex in $M_1 \seqop M_2$.

Wlog let $\overline{v}$ be a vertex in $M_1$ that is not a cut-vertex.
Then, for each pair of vertices, there is a path connecting them that avoids $\overline{v}$. 
Since we don't remove any arcs with the $\seqop$-operation, this path still exists in $M_1 \seqop M_2$, so $\overline{v}$ is also not a cut-vertex in this net. \hfill $\square$
\end{proof}
Because of Theorem~\ref{thm:cut-vertices-seqop}, we know that for any nets $M_1, M_2 \in \mathcal{M}$, the set of non-cut-vertices in $M_1 \seqop M_2$ is the union of the set of non-cut-vertices in $M_1$ and $M_2$.
Let $M_1, M_2 \in \mathcal{M}$, $x_1$ be the number of non-cut-vertices in $M_1$, $x_2$ the number of non-cut-vertices in $M_2$ and $n_1, n_2$ be the number of nodes in $M_1, M_2$ respectively. We then get:
\begin{align*}
\C(M_1) + \C(M_2) &= \frac{x_1}{n_1 - 2} + \frac{x_2}{n_2 - 2} > \frac{x_1}{n_1 + n_2 - 2} + \frac{x_2}{n_1 + n_2 - 2} \\
&= \frac{x_1 + x_2}{n_1 + n_2 - 2} = \C(M_1 \seqop M_2)
\end{align*}
because $n_1, n_2 \geq 3$. Therefore, $\C$ fulfills this property for $\seqop$.

\propitemf{\propdef}{\yes} 
The separability measure is defined for all workflow nets, since the set of cut-vertices is defined for all graphs.
The metric returns only non-negative values, since the input and output place can never be cut-vertices, so we get $|\mathcal{V}(\overline{W})| \leq |P| + |T| - 2$.

\propitemf{\propmin}{\yes} 
As shown for property {\propdef}, $\C$ returns no value below $0$. 
In the workflow net $\netminwf$ of Figure~\ref{fig:netminwf}, every vertex except for the input and output place are cut-vertices, so $\C(\netminwf) = 0$ is the minimum complexity score.

\propitemf{\propinf}{\yes} 
Figure~\ref{fig:sep-inf} shows how to construct workflow nets with $3n + 2m + 2$ nodes for any $n, m \in \mathbb{N}$, such that $\C(W_{n,m}^{\name}) = 1 - \frac{n + 2}{3n + 2m}$. 
Therefore, $|\{c \mid \exists M \in \mathcal{M}: \C(M) = c\}| \geq |\{r \in \mathbb{R} \mid \exists n,m \in \mathbb{N}: r = 1 - \frac{n + 2}{3n + 2m}\}|$ $= \infty$, so $\C$ fulfills this property.
\begin{figure}[ht]
\begin{center}
\scalebox{\scalefactor}{
\begin{tikzpicture}[node distance = 1.5cm,>=stealth',bend angle=0,auto]
	\node [place,tokens=1] (start) [label=below:$p_i$] {};
	\node [transition, draw=red, fill=red!20] (t1) [right of=start] {}
	edge [pre] (start);
	\node [red place] (p1) [above right of=t1, label=below:$p_1$] {}
	edge [pre] (t1);
	\node [transition] (t2) [right of=p1] {}
	edge [pre] (p1);
	\node [red place] (p2) [right of=t2, label=below:$p_2$] {}
	edge [pre] (t2);
	\node (dots) [right of=p2] {$\dots$}
	edge [pre] (p2);
	\node [red place] (pn) [right of=dots, label=below:$p_n$] {}
	edge [pre] (dots);
	\node [place] (q1) [below right of=t1, label=below:$q_1$] {}
	edge [pre] (t1);
	\node [transition] (t3) [right of=q1] {}
	edge [pre] (q1);
	\node [place] (q2) [right of=t3, label=below:$q_2$] {}
	edge [pre] (t3);
	\node (dots) [right of=q2] {$\dots$}
	edge [pre] (q2);
	\node [place] (qm) [right of=dots, label=below:$q_m$] {}
	edge [pre] (dots);
	\node [transition, draw=red, fill=red!20] (t4) [below right of=pn] {}
	edge [pre] (pn)
	edge [pre] (qm);
	\node [place] (end) [right of=t4] {}
	edge [pre] (t4);
	\node [transition] (tp1) [above of=p1] {}
	edge [pre, bend left=20] (p1)
	edge [post, bend right=20] (p1);
	\node [transition] (tp2) [above of=p2] {}
	edge [pre, bend left=20] (p2)
	edge [post, bend right=20] (p2);
	\node [transition] (tpn) [above of=pn] {}
	edge [pre, bend left=20] (pn)
	edge [post, bend right=20] (pn);
	\node at (0,1.15) {$W_{n,m}^{\name}$:};
\end{tikzpicture}}
\end{center}
\caption{A worklfow net with $3n + 2m + 2$ nodes. All nodes highlighted in red are cut vertices, so its complexity is $\C(W_{n,m}^{\name}) = 1 - \frac{n + 2}{3n + 2m}$.}
\label{fig:sep-inf}
\end{figure}

\propitemf{\propnotsup}{\yes} The counter-examples for property {\propfive} show that $\C$ is not superadditive for any of the operations. 

\propitemf{\propadd}{\no} The counter-examples for property {\propfive} show that $\C$ is not additive for any of the operations. 
\end{description}

\newpage
\def\name{\controlflowname}
\def\C{\controlflow}
\subsubsection{Control Flow Complexity}
Similar to token-split complexity, the control flow complexity~\cite{Car05} aims to quantify how many different possible control flows the user needs to keep track of to fully understand the net.
In contrast to the token split metric, however, control flow complexity considers not only parallel splits, but all connector types of the net.
Cardoso~\cite{Car05} argues that parallel splits in BPMN models don't increase the number of possible control flows, while an exclusive choice with $k$ outgoing edges creates $k$ alternative control flows in the net.
Since connector nodes are always harder to understand than pure sequences, parallel splits give a small penalty to the complexity score.
An inclusive choice is even worse, since it allows for any combination of transitions in the postset, as long as at least one of them fires.
For an inclusive choice with $k$ out-going edges, the control flow complexity score increases by $2^k - 1$.

\begin{figure}[ht]
\begin{center}
\scalebox{\scalefactor}{
\begin{tikzpicture}[node distance = 1.5cm,>=stealth',bend angle=0,auto]
	\node[draw,diamond] (or) at (-3,0) {$\bigcirc$};
	\node[transition,fill=white] (bpmn-a) [above right of=or] {$a$}
	edge [pre] (or);
	\node[transition,fill=white] (bpmn-b) [below right of=or] {$b$}
	edge [pre] (or);
	
	\node at (-0.5,0) {\Huge $\rightarrow$};
	
	\node [place] (start) at (1,0) {};
	\node [transition] (t2) [right of=start,label=center:$\tau$] {}
	edge [pre] (start);
	\node [transition] (t1) [above of=t2,label=center:$\tau$] {}
	edge [pre] (start);
	\node [transition] (t3) [below of=t2,label=center:$\tau$] {}
	edge [pre] (start);
	\node [place] (p1) [above right of=t2] {}
	edge [pre] (t1)
	edge [pre] (t2);
	\node [place] (p2) [below right of=t2] {}
	edge [pre] (t2)
	edge [pre] (t3);
	\node [transition] (a) [right of=p1,label=center:$a$] {}
	edge [pre] (p1);
	\node [transition] (b) [right of=p2,label=center:$b$] {}
	edge [pre] (p2);
	
	\node[draw,diamond] (or) at (-1.95,-4.5) {$\bigcirc$};
	\node[transition,fill=white] (bpmn-a) [above left of=or] {$a$}
	edge [post] (or);
	\node[transition,fill=white] (bpmn-b) [below left of=or] {$b$}
	edge [post] (or);
	
	\node at (-0.5,-4.5) {\Huge $\rightarrow$};
	
	\node [place] (start) at (5.075,-4.5) {};
	\node [transition] (t2) [left of=start,label=center:$\tau$] {}
	edge [post] (start);
	\node [transition] (t1) [above of=t2,label=center:$\tau$] {}
	edge [post] (start);
	\node [transition] (t3) [below of=t2,label=center:$\tau$] {}
	edge [post] (start);
	\node [place] (p1) [above left of=t2] {}
	edge [post] (t1)
	edge [post] (t2);
	\node [place] (p2) [below left of=t2] {}
	edge [post] (t2)
	edge [post] (t3);
	\node [transition] (a) [left of=p1,label=center:$a$] {}
	edge [post] (p1);
	\node [transition] (b) [left of=p2,label=center:$b$] {}
	edge [post] (p2);
\end{tikzpicture}}
\end{center}
\caption{Inclusive choice connectors in BPMN notation (left) and their translation to workflow nets (right).}
\label{fig:inclusive-or}
\end{figure}
In workflow nets, we don't have single nodes that initiate an inclusive choice. 
Instead, we model such a choice like shown in Figure~\ref{fig:inclusive-or}.
That way, we can translate any or-split in BPMN notation into such a structure.
We need to make sure that this translation would not get a smaller complexity score than an inclusive choice node for BPMN.
But this is not the case, since we need to enable all combinations of transitions in the inclusive choice. 
Therefore, the place labeled $p$ in Figure~\ref{fig:inclusive-or} alone increases the complexity score by $\sum_{i = 1}^k \binom{k}{i} = 2^k - 1$ if $k$ transitions form an inclusive choice.
Apart from $p$, we have also $2^k - 1 - k$ transitions in the or-split-structure that are parallel splits, as well as $k$ xor-joins in the or-join-structure, further increasing the complexity score. 
For simplicity, we accept that the penalty for inclusive choices in workflow nets is higher than in BPMN and define: 
\begin{equation}
\label{eq:cfc}
\C(W) = |\mathcal{S}_{\text{and}}^W| + \sum_{p \in \mathcal{S}_{\text{xor}}^W} |\post{p}|.
\end{equation}
Cardoso~\cite{Car05} already analyzed which of Weyuker's properties the control flow complexity measure fulfills.
Since we translated the definition to workflow nets, we revisit these analyses and check that they didn't change.
Figure~\ref{fig:cfc-examples} shows three example nets and their respective complexity scores.
\begin{figure}[ht]
\begin{center}
\begin{minipage}{0.23\textwidth}
\centering
\scalebox{\scalefactor}{
\begin{tikzpicture}[node distance = 1.5cm,>=stealth',bend angle=0,auto]
	\node [place,tokens=1] (start) [label=below:$p_i$] {};
	\node [transition] (t1) [right of=start,label=center:$a$] {}
	edge [pre] (start);
	\node [place] (p1) [right of=t1,label=below:$p_o$] {}
	edge [pre] (t1);
	\node at (0,1.15) {$W_1^{\name}$:};
	\draw[opacity=0] ($(start)-(0.25,1.5)$) rectangle ($(p1) + (0.25,1.5)$);
\end{tikzpicture}}
\end{minipage}
\begin{minipage}{0.2\textwidth}
\centering
\scalebox{\scalefactor}{
\begin{tikzpicture}[node distance = 1.5cm,>=stealth',bend angle=0,auto]
	\node [place,tokens=1] (start) [label=below:$p_i$] {};
	\node [transition] (t1) [above right of=start,label=center:$a$] {}
	edge [pre] (start);
	\node [transition] (t2) [below right of=start,label=center:$a$] {}
	edge [pre] (start);
	\node [place] (p1) [below right of=t1,label=below:$p_o$] {}
	edge [pre] (t1)
	edge [pre] (t2);
	\node at (0,1.15) {$W_2^{\name}$:};
	\draw[opacity=0] ($(start)-(0.25,1.5)$) rectangle ($(p1) + (0.25,1.5)$);
\end{tikzpicture}}
\end{minipage}
\begin{minipage}{0.55\textwidth}
\centering
\scalebox{\scalefactor}{
\begin{tikzpicture}[node distance = 1.5cm,>=stealth',bend angle=0,auto]
	\node [place,tokens=1] (start) [label=below:$p_i$] {};
	\node [transition] (t1) [right of=start,label=center:$a$] {}
	edge [pre] (start);
	\node [place] (p1) [above right of=t1] {}
	edge [pre] (t1);
	\node [place] (p2) [below right of=t1] {}
	edge [pre] (t1);
	\node [transition] (t2) [right of=p1,label=center:$b$] {}
	edge [pre] (p1);
	\node [transition] (t3) [right of=p2,label=center:$c$] {}
	edge [pre] (p2);
	\node [place] (p3) [right of=t2] {}
	edge [pre] (t2);
	\node [place] (p5) [right of=t3] {}
	edge [pre] (t3);
	\node [place] (p4) at ($0.5*(p3) + 0.5*(p5) - (0.33,0)$) {}
	edge [pre] (t3);
	\node [transition] (t4) [below right of=p3,label=center:$d$] {}
	edge [pre] (p3)
	edge [pre] (p4)
	edge [pre] (p5);
	\node [place] (end) [right of=t4,label=below:$p_o$] {}
	edge [pre] (t4);
	\node at (0,1.15) {$W_3^{\name}$:};
	\draw[opacity=0] ($(start)-(0.25,1.5)$) rectangle ($(end) + (0.25,1.5)$);
\end{tikzpicture}}
\end{minipage}
\end{center}
\caption{Three workflow nets, $W_1^{\name}, W_2^{\name}, W_3^{\name}$, with $\C(W_1^{\name}) = 0$ and $\C(W_2^{\name}) = \C(W_3^{\name}) = 2$.}
\label{fig:cfc-examples}
\end{figure}

\begin{description}
\propitemc{\propone}{\yes}{Car05} 
For the two nets $W_1^{\name}$ and $W_2^{\name}$ of Figure~\ref{fig:cfc-examples}, we get the complexity scores $\C(W_1^{\name}) = 0 \neq 2 = \C(W_2^{\name})$.

\propitemc{\proptwo}{\no}{Car05} 
Let $c \in \mathbb{N}_0$ be an arbitrary integer. 
Figure~\ref{fig:CFC-fin} shows how to construct infinitely many workflow nets with complexity score $c$ according to the complexity measure $\C$.
\begin{figure}[ht]
\begin{center}
\scalebox{\scalefactor}{
\begin{tikzpicture}[node distance = 1.5cm,>=stealth',bend angle=0,auto]
	\node [place,tokens=1] (start) [label=below:$p_i$] {};
	\node [transition] (tk) [right of=start,label=below:$t_{c-1}$] {}
	edge [pre] (start);
	\node [place] (p1) [above right of=tk,label=below:$p^1$] {}
	edge [pre] (tk);
	\node [place] (pc) [below right of=tk,label=below:$p^{c+1}$] {}
	edge [pre] (tk);
	\node at ($0.5*(p1) + 0.5*(pc)$) {$\vdots$};
	\node [transition] (t1) [right of=p1,label=below:$t^1$] {}
	edge [pre] (p1);
	\node [transition] (tc) [right of=pc,label=below:$t^{k}$] {}
	edge [pre] (pc);
	\node at ($0.5*(t1) + 0.5*(tc)$) {$\vdots$};
	\node [place] (q1) [right of=t1,label=below:$q^1$] {}
	edge [pre] (t1);
	\node [place] (qc) [right of=tc,label=below:$q^{c+1}$] {}
	edge [pre] (tc);
	\node at ($0.5*(q1) + 0.5*(qc)$) {$\vdots$};
	\node [transition] (t) [below right of=q1,label=below:$t_c$] {}
	edge [pre] (q1)
	edge [pre] (qc);
	\node [place] (end) [right of=t,label=below:$p_o$] {}
	edge [pre] (t);
	\node [transition] (dc2) [below of=tc,label=below:$t_{c-2}$] {}
	edge [pre,bend left=20] (start)
	edge [post,bend right=20] (end);
	\node [transition] (d0) [below of=dc2,label=below:$t_0$] {}
	edge [pre,bend left=20] (start)
	edge [post,bend right=20] (end);
	\node at ($0.5*(dc2) + 0.5*(d0)$) {$\vdots$};
	\node at (0,1.15) {$W_{c,k}^{\name}$:};
\end{tikzpicture}}
\end{center}
\caption{A construction plan for workflow nets $W_{c,k}^{\name}$ with $k + c$ transitions and $\C(W_{c,k}^{\name}) = c$ for any $c,k \in \mathbb{N}_0$.}
\label{fig:CFC-fin}
\end{figure}

\propitemc{\propthree}{\yes}{Car05} 
The workflow nets $W_2^{\name}$ and $W_3^{\name}$ of Figure~\ref{fig:cfc-examples} are different in structure, but both have complexity $\C(W_2^{\name}) = 2 = \C(W_3^{\name})$.

\propitemc{\propfour}{\yes}{Car05} 
The two workflow nets $W_1^{\name}$ and $W_2^{\name}$ of Figure~\ref{fig:cfc-examples} have the same language $L(W_1^{\name}) = \{\varepsilon, a\} = L(W_2^{\name})$, but receive complexity scores $\C(W_1^{\name}) = 0 \neq 2 = \C(W_2^{\name})$.

\propitemc{\propfive}{\yes}{Car05} 
To show that $\C$ is monotone, we prove the following:
\begin{theorem}
\label{thm:cfc-mon}
Let $n \geq 2$. For any workflow nets $M_1, \dots, M_n \in \mathcal{M}$ we get:
\begin{itemize}
\item $\C(\seqop(M_1, \dots, M_n)) = \C(M_1) + \dots + \C(M_n)$,
\item $\C(\parop(M_1, \dots, M_n)) = \C(M_1) + \dots + \C(M_n) + 1$,
\item $\C(\choiceop(M_1, \dots, M_n)) = \C(M_1) + \dots + \C(M_n) + n$ and
\item $\C(\loopop(M_1, \dots, M_n)) = \C(M_1) + \dots + \C(M_n) + n$.
\end{itemize}
\end{theorem}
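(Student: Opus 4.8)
The plan is to read off each of the four identities directly from Definition~\ref{def:operations} by tracking how the two ingredients of $\C$ change when the glue elements are inserted: the \emph{count} of and-splits $|\mathcal{S}_{\text{and}}^W|$ and the \emph{arc-weighted} sum $\sum_{p \in \mathcal{S}_{\text{xor}}^W}|\post{p}|$ over xor-splits. The observation that makes this bookkeeping manageable is that each operation only adds new nodes and arcs and never redirects an existing arc. Hence no transition of $M_1, \dots, M_n$ ever gains an outgoing arc, so every old transition keeps its out-degree and the and-split count over old transitions is preserved. Among the old places, only the sink places $p_o^j$ acquire new outgoing arcs, and in each operation every $p_o^j$ passes from out-degree $0$ to out-degree exactly $1$, so it never crosses the threshold $|\post{p}|>1$; every other old place keeps its out-degree, hence its xor-split status and its contribution to the arc-sum. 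Therefore the entire change in $\C$ is carried by the freshly introduced glue nodes, and I only need to tally their \emph{split surplus}.

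With this reduction, I would dispatch the first three cases by inspecting the out-degrees of the glue nodes. For $\seqop$ the new transitions $t_j^*$ each have out-degree $1$, so no and-split and no xor-split is created and the surplus is $0$. For $\parop$ the new source transition $t_i^*$ has out-degree $n\ge 2$ and is thus exactly one new and-split, while $t_o^*$ and the two new places have out-degree at most $1$; since and-splits are merely \emph{counted}, the surplus is $+1$ independently of $n$. For $\choiceop$ the new source place $p_i^*$ has out-degree $n$, making it an xor-split that, because xor-splits are \emph{weighted by out-degree}, contributes $n$ to the arc-sum, whereas the $2n$ new transitions $t_j^*,s_j^*$ all have out-degree $1$; the surplus is $+n$. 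This count-versus-weight asymmetry is precisely what makes the $\parop$ term differ from the $\choiceop$ term. Combining each surplus with the preservation argument yields the first three identities.

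The delicate case, and the one I expect to be the main obstacle, is the iteration $\loopop$: it adds four new places $p_i^*,p_o^*,p^*,q^*$ and $2n+2$ new transitions wired by a comparatively tangled flow relation. Here I would verify two facts by direct inspection of $F_{\loopop}$ (cross-checked against Figure~\ref{fig:iteration}): first, that each new transition ($t^*$, $s^*$, and all $t_j^*,s_j^*$) has out-degree exactly $1$, so the and-split count receives no surplus; and second, that the routing places $p^*$ and $q^*$ are exactly the new xor-splits and that their out-degrees total $n$, giving an arc-sum surplus of $+n$. The care needed lies in correctly determining which loop transitions emanate from $p^*$ versus from $q^*$ (the ``do'' branch versus the ``redo'' branches) so that the two out-degrees add to $n$ rather than more, and in confirming that reattaching each $p_o^j$ produces no spurious xor-split. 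Once these out-degrees are tallied, the identity $\C(\loopop(M_1,\dots,M_n)) = \C(M_1)+\dots+\C(M_n)+n$ follows. Since every surplus computed above is non-negative, monotonicity (property~{\propfive}) is then an immediate corollary of the theorem.
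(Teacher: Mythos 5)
Your bookkeeping framework is exactly the paper's proof --- old nodes never change split status because no operation redirects existing arcs and the sinks $p_o^j$ only move from out-degree $0$ to $1$, so the surplus sits entirely in the glue nodes --- and your tallies for $\seqop$, $\parop$, and $\choiceop$ are correct and match the paper's.

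The loop case is where your proposal breaks, and it breaks precisely at the two facts you say you would verify. Under the wiring the paper actually uses (Figure~\ref{fig:iteration}, the same wiring its depth and diameter arguments rely on), $M_1$ is the forward branch and $M_2,\dots,M_n$ are the redo branches: $p^*$ has postset $\{t_1^*\}$ and preset $\{t^*, t_2^*, \dots, t_n^*\}$, so it is an xor-\emph{join} with out-degree $1$ and is not an xor-split at all, while $q^*$ has postset $\{s^*, s_2^*, \dots, s_n^*\}$ and is the \emph{unique} new xor-split, with out-degree exactly $n$. Hence ``$p^*$ and $q^*$ are exactly the new xor-splits'' is false, their out-degrees total $n+1$ rather than $n$, and the surplus $+n$ comes from $q^*$ alone --- this is what the paper's one-line proof means by ``one new xor-split that has $n$ outgoing edges.'' The point is not cosmetic: if you instead take the flow relation $F_{\loopop}$ of Definition~\ref{def:operations} at face value (there $p^*$ feeds $t_1^*,\dots,t_{n-1}^*$ while $q^*$ feeds only $s^*$ and $s_n^*$), then for $n \geq 3$ both routing places are xor-splits and the surplus is $(n-1)+2 = n+1$, which contradicts the theorem; the paper's formal definition and its figure are inconsistent here, and the stated identity for $\loopop$ holds only under the figure's wiring. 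So the verification you propose, carried out honestly against either source, would not confirm what you claim. The facts you actually need are: every glue transition has out-degree $1$ (no new and-splits), every glue place other than $q^*$ (including $p^*$) has out-degree at most $1$, and $q^*$ is an xor-split of out-degree $n$, so that $\controlflow$ gains exactly $n$.
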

\begin{proof}
None of the operations changes the pre- or postset of existing places or transitions in $M_1, \dots, M_n$, except for the input and output place.
Furthermore, the operation $\seqop$ doesn't introduce new and- or xor-splits, so $\C(\seqop(M_1, \dots, M_n)) = \C(M_1) + \dots + \C(M_n)$.
The operation $\parop$ adds one and-split, which increases the complexity by $1$. 
Operations $\choiceop$ and $\loopop$ each introduce one new xor-split that has $n$ outgoing edges, thus increasing the complexity score by $n$. \hfill $\square$
\end{proof}
Thus, according to Theorem~\ref{thm:cfc-mon}, $\C(\oplus(M_1, \dots, M_n)) \geq \C(M_i)$ for any workflow nets $M_1, \dots, M_n \in \mathcal{M}$, any operation $\oplus \in \{\seqop, \parop, \choiceop, \loopop\}$ and a model $M_i \in \{M_1, \dots, M_n\}$.

\propitemc{\propsix}{\no}{Car05} 
Let $M_1, M_2, M_3 \in \mathcal{M}$ with $\C(M_1) = \C(M_2)$. Theorem~\ref{thm:cfc-mon} gives us the following equalities:
\begin{itemize}
\item $\C(M_1 \seqop M_3) = \C(M_1) + \C(M_3)$ \\
\phantom{$\C(M_1 \seqop M_3)$ }$= \C(M_2) + \C(M_3) = \C(M_1 \seqop M_3)$,
\item $\C(M_1 \parop M_3) = \C(M_1) + \C(M_3) + 1$ \\
\phantom{$\C(M_1 \parop M_3)$ }$= \C(M_2) + \C(M_3) + 1 = \C(M_1 \parop M_3)$,
\item $\C(M_1 \choiceop M_3) = \C(M_1) + \C(M_3) + n$ \\
\phantom{$\C(M_1 \choiceop M_3)$ }$= \C(M_2) + \C(M_3) + n = \C(M_1 \choiceop M_3)$,
\item $\C(M_1 \loopop M_3) = \C(M_1) + \C(M_3) + n$ \\
\phantom{$\C(M_1 \loopop M_3)$ }$= \C(M_2) + \C(M_3) + n = \C(M_1 \loopop M_3)$.
\end{itemize}

\propitemc{\propseven}{\yes}{Car05} 
Figure~\ref{fig:cfc-perm} shows two workflow nets, $W_4^{\name}$ and $W_5^{\name}$, that are permutations of each other, but $\C(W_4^{\name}) = 2 \neq 4 = \C(W_5^{\name})$.
\begin{figure}[ht]
\begin{center}
\begin{minipage}{0.55\textwidth}
\centering
\scalebox{\scalefactor}{
\begin{tikzpicture}[node distance = 1.5cm,>=stealth',bend angle=0,auto]
	\node [place,tokens=1] (start) [label=below:$p_i$] {};
	\node [transition] (t1) [right of=start,label=below:$t_1$] {}
	edge [pre] (start);
	\node [place] (p1) [above right of=t1,label=below:$p_1$] {}
	edge [pre] (t1);
	\node [place] (p2) [below right of=t1,label=below:$p_2$] {}
	edge [pre] (t1);
	\node [transition] (t2) [right of=p1,label=below:$t_2$] {}
	edge [pre] (p1);
	\node [transition] (t3) [right of=p2,label=below:$t_3$] {}
	edge [pre] (p2);
	\node [place] (p3) [right of=t2,label=below:$p_3$] {}
	edge [pre] (t2);
	\node [place] (p4) [right of=t3,label=below:$p_4$] {}
	edge [pre] (t3);
	\node [transition] (t4) [below right of=p3,label=below:$t_4$] {}
	edge [pre] (p3)
	edge [pre] (p4);
	\node [place] (end) [right of=t4,label=below:$p_o$] {}
	edge [pre] (t4);
	\node at (0,1.15) {$W_4^{\name}$:};
	\draw[opacity=0] ($(start)-(0.25,1.75)$) rectangle ($(end) + (0.25,1.75)$);
\end{tikzpicture}}
\end{minipage}
\begin{minipage}{0.4\textwidth}
\centering
\scalebox{\scalefactor}{
\begin{tikzpicture}[node distance = 1.5cm,>=stealth',bend angle=0,auto]
	\node [place,tokens=1] (start) [label=below:$p_i$] {};
	\node [transition] (t1) [above right of=start,label=below:$t_1$] {}
	edge [pre] (start);
	\node [transition] (t2) [below right of=start,label=below:$t_2$] {}
	edge [pre] (start);
	\node [place] (p2) [right of=t1,yshift=-0.5cm,label=below:$p_2$] {}
	edge [pre] (t1);
	\node [place] (p1) at ($(p2) + (0,1)$) [label=below:$p_1$] {}
	edge [pre] (t1);
	\node [place] (p3) [right of=t2,yshift=0.5cm,label=below:$p_3$] {}
	edge [pre] (t2);
	\node [place] (p4) at ($(p3) - (0,1)$) [label=below:$p_4$] {}
	edge [pre] (t2);
	\node [transition] (t3) [right of=p2,yshift=0.5cm,label=below:$t_3$] {}
	edge [pre] (p1)
	edge [pre] (p2);
	\node [transition] (t4) [right of=p3,yshift=-0.5cm,label=below:$t_4$] {}
	edge [pre] (p3)
	edge [pre] (p4);
	\node [place] (end) [below right of=t3,label=below:$p_o$] {}
	edge [pre] (t3)
	edge [pre] (t4);
	\node at (0,1.15) {$W_5^{\name}$:};
	\draw[opacity=0] ($(start)-(0.25,2.25)$) rectangle ($(end) + (0.25,2.25)$);
\end{tikzpicture}}
\end{minipage}
\end{center}
\caption{Two workflow nets, $W_4^{\name}$ and $W_5^{\name}$, with $W_5^{\name} \in Perm(W_4^{\name})$, $\C(W_4^{\name}) = 2$ and $\C(W_5^{\name}) = 4$.}
\label{fig:cfc-perm}
\end{figure}

\propitemc{\propeight}{\yes}{Car05} 
A uniform relabelling of the transitions has no effect on the set of split nodes or their out-degree. 
Therefore, uniformly changing the transition labels has no impact on the complexity score assigned by $\C$.

\propitemc{\propnine}{\no}{Car05} 
Theorem~\ref{thm:cfc-mon} implies $\C(M_1 \oplus M_2) > \C(M_1) + \C(M_2)$ for all $M_1, M_2 \in \mathcal{M}$ and $\oplus \in \{\parop, \choiceop, \loopop\}$. 
For the operation $\seqop$ however, $\C$ fulfills only a weaker property: $\C(M_1 \seqop M_2) \geq \C(M_1) + \C(M_2)$ for all $M_1, M_2 \in \mathcal{M}$.

\propitemf{\propdef}{\yes} 
$S_{\text{and}}^W$ and $S_{\text{xor}}^W$ are defined for each workflow net, so $\C$ is also defined for each workflow net.
Furthermore, $S_{\text{and}}^W$, $S_{\text{xor}}^W$ and $\post{p}$ for any $p \in P$ are sets, so their cardinality can't get below $0$.
Thus, $\C(W) \geq 0$ for any workflow net $W$.

\propitemf{\propmin}{\yes} 
The smallest possible value $\C$ can return is $0$.
The workflow net $\netminwf$ of Figure~\ref{fig:netminwf} is one example of a workflow net that gets this complexity score, since it doesn't have any connector nodes.

\propitemf{\propinf}{\yes} 
Let $c \in \mathbb{N}_0$. Figure~\ref{fig:CFC-fin} shows how to construct workflow nets with complexity $c$.
Therefore, $|\{c \in \mathbb{R} \mid \exists M \in \mathcal{M}: \C(M) = c\}| \geq |\mathbb{N}_0| = \infty$.

\propitemf{\propnotsup}{\no} Theorem~\ref{thm:cfc-mon} implies that $\C$ does not fulfill this property for any of the operations $\oplus \in \{\seqop, \parop, \choiceop, \loopop\}$.

\propitemf{\propadd}{\no} Theorem~\ref{thm:cfc-mon} implies that $\C$ fulfills this property only for the operation $\seqop$, but not for the other operations of Definition~\ref{def:operations}.
\end{description}

\subsection{Node IO Complexity}
The second complexity dimension found by Lieben et al.~\cite{LieDJJ18} is labeled \textsc{Node IO Complexity}.
Measures of this dimension take the number of incoming and outgoing sequence flows into account.
Apart from the complexity measures we discuss in this section, the Coefficient of Network Connectivity~\cite[p.120]{Men08} also belongs to this dimension.
This measure has a special role, as it belongs to two of the four dimensions.
We therefore analyze this measure within the section of the fourth dimension to keep the dimensions equally large.

\def\name{\maxconnname}
\def\C{\maxconn}
\newcommand{\compmaxconn}{C_{\name}}
\newcommand{\wfmax}[1]{W_{#1}^{\name}}
\subsubsection{Maximum Connector Degree}
The maximum connector degree is the maximum number of arcs that enter or leave a single connector node in the net~\cite[p121]{Men08}.
Connectors with a high degree introduce complexity, since they either increase the number of possible control flows or the amount of tokens to keep track of.
\begin{equation}
\label{eq:max-conn-deg}
\compmaxconn(W) = \max\{\ndeg(x) \mid x \in \mathcal{C}^W\}
\end{equation}
Figure~\ref{fig:mcd-examples} shows three example nets and their respective complexity scores.
\begin{figure}
\centering
\begin{minipage}{0.25\textwidth}
\centering
\scalebox{\scalefactor}{
\begin{tikzpicture}[node distance = 1.5cm,>=stealth',bend angle=0,auto]
    \node [place,tokens=1] (start) [label=below:$p_i$] {};
    \node [transition] (t2) [right of=start,label=center:$a$] {}
    edge [pre] (start);
    \node [transition] (t1) [above of=t2,label=center:$a$] {}
    edge [pre] (start);
    \node [place] (p1) [right of=t2,label=below:$p_o$] {}
    edge [pre] (t1)
    edge [pre] (t2);
    \node at (0,1.15) {$\wfmax{1}$:};
    \draw[opacity=0] ($(start)-(0.25,1.75)$) rectangle ($(p1) + (0.25,1.75)$);
\end{tikzpicture}}
\end{minipage}
\begin{minipage}{0.25\textwidth}
\centering
\scalebox{\scalefactor}{
\begin{tikzpicture}[node distance = 1.5cm,>=stealth',bend angle=0,auto]
    \node [place,tokens=1] (start) [label=below:$p_i$] {};
    \node [transition] (t2) [right of=start,label=center:$a$] {}
    edge [pre] (start);
    \node [transition] (t1) [above of=t2,label=center:$a$] {}
    edge [pre] (start);
    \node [transition] (t3) [below of=t2,label=center:$a$] {}
    edge [pre] (start);
    \node [place] (p1) [right of=t2,label=below:$p_o$] {}
    edge [pre] (t1)
    edge [pre] (t2)
    edge [pre] (t3);
    \node at (0,1.15) {$\wfmax{2}$:};
    \draw[opacity=0] ($(start)-(0.25,1.75)$) rectangle ($(p1) + (0.25,1.75)$);
\end{tikzpicture}}
\end{minipage}
\begin{minipage}{0.4\textwidth}
\centering
\scalebox{\scalefactor}{
\begin{tikzpicture}[node distance = 1.5cm,>=stealth',bend angle=0,auto]
    \node [place,tokens=1] (start) [label=below:$p_i$] {};
    \node [transition] (t1) [right of=start,label=center:$a$] {}
    edge [pre] (start);
    \node [place] (p1) [above right of=t1] {}
    edge [pre] (t1);
    \node [place] (p2) [below right of=t1] {}
    edge [pre] (t1);
    \node [transition] (t2) [below right of=p1,label=center:$b$] {}
    edge [pre] (p1)
    edge [pre] (p2);
    \node [place] (po) [right of=t2,label=below:$p_o$] {}
    edge [pre] (t2);
    \node at (0,1.15) {$\wfmax{3}$:};
    \draw[opacity=0] ($(start)-(0.25,1.75)$) rectangle ($(po) + (0.25,1.75)$);
\end{tikzpicture}}
\end{minipage}
\caption{Example nets $\wfmax{1}$, $\wfmax{2}$ and $\wfmax{3}$, with $\C(\wfmax{1}) = 2$ and $\C(\wfmax{2}) = \C(\wfmax{3}) = 3$.}
\label{fig:mcd-examples}
\end{figure}

\begin{description}
\propitemf{\propone}{\yes} 
For the two nets $\wfmax{1}$ and $\wfmax{2}$ of Figure~\ref{fig:mcd-examples}, we get the complexity scores $\C(\wfmax{1}) = 2 \neq 3 = \C(\wfmax{2})$.

\propitemf{\proptwo}{\no} Follows directly by Theorem~\ref{thm:conn-inf}.
Figure~\ref{fig:mcd-fin} shows how to construct such a sequence of places and transitions.
\begin{figure}[ht]
\begin{center}
\scalebox{\scalefactor}{
\begin{tikzpicture}[node distance = 1.5cm,>=stealth',bend angle=0,auto]
    \node [place,tokens=1] (start) [label=below:$p_i$] {};
    \node [transition] (t1) [right of=start,label=below:$t_1$] {}
    edge [pre] (start);
    \node [place] (p1) [right of=t1] {}
    edge [pre] (t1);
    \node (dots) [right of=p1] {$\dots$}
    edge [pre] (p1);
    \node [transition] (tn) [right of=dots,label=below:$t_c$] {}
    edge [pre] (dots);
    \node [place] (pk) [right of=tn,label=below:$p^{c}$] {}
    edge [pre] (tn);
    \node [transition] (t1) [above right of=pk,label=below:$t^1$] {}
    edge [pre] (pk);
    \node [transition] (tk) [below right of=pk,label=below:$t^{n}$] {}
    edge [pre] (pk);
    \node [place] (end) [below right of=t1,label=below:$p_o$] {}
    edge [pre] (t1)
    edge [pre] (tk);
    \node at ($0.5*(t1) + 0.5*(tk)$) {$\vdots$};
    \node at (0,1.15) {$W_{c,n}^{\name}$:};
\end{tikzpicture}}
\end{center}
\caption{A workflow net $W_{c,n}^{\name}$ with $c + n$ transitions, $c + 2$ places and complexity $\C(W_{c,n}^{\name}) = n + 1$.}
\label{fig:mcd-fin}
\end{figure}

\propitemf{\propthree}{\yes} 
The workflow nets $\wfmax{2}$ and $\wfmax{3}$ of Figure~\ref{fig:mcd-examples} are different in structure, but have equal complexity: $\compmaxconn(\wfmax{2}) = 3 = \compmaxconn(\wfmax{3})$.

\propitemf{\propfour}{\yes}
The workflow nets $\wfmax{1}$ and $\wfmax{2}$ of Figure~\ref{fig:mcd-examples} have the same language $L(\wfmax{1}) = \{\varepsilon, a\} = L(\wfmax{2})$, but receive different complexity scores $\C(\wfmax{1}) = 2 \neq 3 = \C(\wfmax{3})$.

\propitemf{\propfive}{\yes} 
Let $M_1, \dots, M_n \in \mathcal{M}$ be workflow nets and let $\oplus \in \{\seqop, \parop, \choiceop, \loopop\}$.
Further, let $M = \oplus(M_1, \dots, M_n)$.
Then, there are only two cases:
\begin{enumerate}
\item $M$ contains the connector with the highest degree. 
Then, obviously $\C(M) = \C(\oplus(M_1, \dots, M_n)) \geq \C(M_i)$ for any of the composed models $M_i \in \{M_1, \dots, M_n\}$.
\item A model $M_j \in \{M_1, \dots, M_n\}$ contains the connector with the highest degree.
Let $\C(M_j) = d$ and $v$ be a connector with this degree in $M_j$.
Then, in $M$, the node $v$ has either degree $d$ or degree $d + 1$, if a new arc to $v$ was added with the operation. 
Since by definition $d \geq \C(M_i)$ for all $M_i \in \{M_1, \dots, M_n\}$, we get 
\[\C(\oplus(M_1, \dots, M_n)) \geq d = \C(M_j) \geq \C(M_i).\]
\end{enumerate}

\propitemf{\propsix}{\yes}
Take the workflow nets $\wfmax{2}$ and $\wfmax{3}$ of Figure~\ref{fig:mcd-examples}, as well as $\netminwf$ of Figure~\ref{fig:netminwf}.
We know that $\C(\wfmax{2}) = 3 = \C(\wfmax{3})$, but regardless of which operation $\oplus \in \{\seqop, \parop, \loopop, \choiceop\}$ we use, we get the result $\C(\wfmax{2} \oplus \netminwf) = 4 \neq 3 = \C(\wfmax{3} \oplus \netminwf)$.

\propitemf{\propseven}{\yes} 
Figure~\ref{fig:mcd-perm} shows two workflow nets, $\wfmax{4}$ and $\wfmax{5}$, that are permutations of each other, but $\C(\wfmax{4}) = 3 \neq 4 = \C(\wfmax{5})$.
\begin{figure}
\centering
\begin{minipage}{0.7\textwidth}
\centering
\scalebox{\scalefactor}{
\begin{tikzpicture}[node distance = 1.5cm,>=stealth',bend angle=0,auto]
    \node [place,tokens=1] (start) [label=below:$p_i$] {};
    \node [transition] (t2) [right of=start,label=center:$b$] {}
    edge [pre] (start);
    \node [transition] (t1) [above of=t2,label=center:$a$] {}
    edge [pre] (start);
	\node [place] (p1) [right of=t2,label=below:$p_1$] {}
    edge [pre] (t1)
    edge [pre] (t2);
	\node [transition] (t5) [right of=p1, label=center:$c$] {}
	edge [pre] (p1);
	\node [place] (p2) [right of=t5,label=below:$p_2$] {}
	edge [pre] (t5);
    \node [transition] (t4) [right of=p2,label=center:$e$] {}
    edge [pre] (p2);
    \node [transition] (t3) [above of=t4,label=center:$d$] {}
    edge [pre] (p2);
    \node [place] (po) [right of=t4,label=below:$p_o$] {}
    edge [pre] (t3)
    edge [pre] (t4);
    \node at (0,1.15) {$\wfmax{4}$:};
    \draw[opacity=0] ($(start)-(0.25,1.75)$) rectangle ($(po) + (0.25,1.75)$);
\end{tikzpicture}}
\end{minipage}
\\
\begin{minipage}{0.7\textwidth}
\centering
\scalebox{\scalefactor}{
\begin{tikzpicture}[node distance = 1.5cm,>=stealth',bend angle=0,auto]
    \node [place,tokens=1] (start) [label=below:$p_i$] {};
    \node [transition] (t1) [right of=start,label=center:$a$] {}
    edge [pre] (start);
	\node [place] (p1) [right of=t1,label=below:$p_1$] {}
    edge [pre] (t1);
	\node [transition] (t5) [right of=p1,label=center:$b$] {}
    edge [pre] (p1);
	\node [place] (p2) [right of=t5,label=below:$p_2$] {}
    edge [pre] (t5);
    \node [transition] (t4) [right of=p2,label=center:$d$] {}
    edge [pre] (p2);
    \node [transition] (t3) [above of=t4,label=center:$c$] {}
    edge [pre] (p2);
    \node [transition] (t2) [below of=t4,label=center:$e$] {}
    edge [pre] (p2);
    \node [place] (po) [right of=t4,label=below:$p_o$] {}
	edge [pre] (t2)
    edge [pre] (t3)
    edge [pre] (t4);
    \node at (0,1.15) {$\wfmax{5}$:};
    \draw[opacity=0] ($(start)-(0.25,1.75)$) rectangle ($(po) + (0.25,1.75)$);
\end{tikzpicture}}
\end{minipage}
\caption{Two workflow nets, $\wfmax{4}$ and $\wfmax{5}$, with $\wfmax{4} \in Perm(\wfmax{5})$, but $\C(\wfmax{4}) = 3$ and $\C(\wfmax{5}) = 4$.}
\label{fig:mcd-perm}
\end{figure}

\propitemf{\propeight}{\yes} 
The $\compmaxconn$ measure is robust against relabellings, as can be seen straighforwardly by its insensitivity to transition labels generally. 

\propitemf{\propnine}{\no} 
If we consider only input workflow nets where  $\compmaxconn$ is defined, and so can assume that all input workflow nets have at least one connector, $\compmaxconn$ is subadditive for the operations $\seqop$, $\parop$, $\choiceop$ and $\loopop$. 

Let $M_1, M_2 \in \mathcal{M}$ be two workflow nets. 
We know that $\C(M_1) \geq 2$ and $\C(M_2) \geq 2$, since a node only qualifies as a connector if it has at least two outgoing or incoming arcs.
Let $c_1 := \C(M_1)$ and $c_2 := \C(M_2)$. 
Then, the maximum connector degree in $M_1 \oplus M_2$ for some $\oplus \in \{\seqop, \parop, \choiceop, \loopop\}$ can be $\max{c_1, c_2} + 1$ if the connector with maximum degree is an initial or final node in its net.
If both $M_1$ and $M_2$ have complexity 2, we get $\C(M_1 \oplus M_2) = 3 < 4 = \C(M_1) + \C(M_2)$.
Otherwise, we get $\C(M_1 \oplus M_2) \leq \max\{c_1,c_2\} + 1 < 2 \cdot \max\{c_1, c_2\} = \C(M_1) + \C(M_2)$.

If the metric is redefined to give nets with no connectors a complexity of zero, subadditivity is contradicted for $\oplus \in \{\parop, \choiceop, \loopop\}$:
For the net $\netminwf$ of Figure~\ref{fig:netminwf}, we get $\C(\netminwf \oplus \netminwf) \geq 2 > 0 = \C(\netminwf) + \C(\netminwf$) for these operations.
For the operation $\seqop$, subadditivity still holds in this case.

\propitemf{\propdef}{\no} 
The maximum connector degree is undefined for workflow nets without any connectors. 
However, we can add a special case where $\compmaxconn$ returns the complexity score $0$ if there are no connectors in the net. 
With this adaptation, $\compmaxconn$ returns only non-negative values, since the degree of a node is always non-negative.

\propitemf{\propmin}{\yes} 
With the adaptation shown for in the analysis of {\propdef}, the smallest workflow net, $\netminwf$ of Figure~\ref{fig:netminwf}, has two places and one transition. 
None of them are connectors, making the minimum complexity zero.
For workflow nets that contain at least one connector node, the maximum connector degree is $\geq 2$, since a connector node needs at least two incoming or outgoing arcs.

\propitemf{\propinf}{\yes} 
Let $n \in \mathbb{N}$ with $n \geq 2$. 
Figure~\ref{fig:mcd-fin} shows how to construct a workflow net with complexity score $n + 1$, so $\C$ can return infinitely many values.
We get $|\{c \in \mathbb{R} \mid \exists M \in \mathcal{M}: \C(M) = c\}| \geq \{n \in \mathbb{N} \mid n \geq 3\} = \infty$.

\propitemf{\propnotsup}{\yes} The discussion for \propnine{} shows that $\C$ is not superadditive.

\propitemf{\propadd}{\no} The discussion for \propnine{} shows $\C$ is not additive.
\end{description}

\def\name{\sequentialityname}
\def\C{\sequentiality}
\newcommand{\compseq}{C_{\sequentialityname}}
\newcommand{\wfseq}[1]{W_{#1}^{\sequentialityname}}
\newcommand{\wfseqinf}{W_{c,k}^{\sequentialityname}}
\subsubsection{Sequentiality}
Naturally, sequences of transitions are easier to understand than concurrent or exclusive execution. 
Therefore, Mendling~\cite[p123]{Men08} defines a complexity measure for sequentiality, which is the number of arcs between non-connector nodes divided by the total number of arcs.
The higher the score of this metric, the lower the amount of direct sequences in the net.
In our version of this complexity metric, we therefore subtract this fraction from $1$ to ensure comparability with the other complexity measures.
\begin{equation}
\label{eq:seq}
\compseq(W) = 1 - \frac{|\{(x,y) \in F \mid x,y \not\in \mathcal{C}^W\}|}{|F|}
\end{equation}
Figure~\ref{fig:seq-examples} shows two example nets and their respective complexity scores.
\begin{figure}
\centering
\begin{minipage}{0.4\textwidth}
\centering
\scalebox{\scalefactor}{
\begin{tikzpicture}[node distance = 1.5cm,>=stealth',bend angle=0,auto]
    \node [place,tokens=1] (start) [label=below:$p_i$] {};
    \node [transition] (t1) [right of=start,label=center:$a$] {}
    edge [pre] (start);
    \node [place] (p1) [right of=t1] {}
    edge [pre] (t1);
    \node [transition] (t2) [above right of=p1,label=center:$b$] {}
    edge [pre] (p1);
    \node [transition] (t3) [below right of=p1,label=center:$c$] {}
    edge [pre] (p1);
    \node [place] (po) [below right of=t2,label=below:$p_o$] {}
	edge [pre] (t2)
    edge [pre] (t3);
    \node at (0,1.15) {$\wfseq{1}$:};
    \draw[opacity=0] ($(start)-(0.25,1.75)$) rectangle ($(po) + (0.25,1.75)$);
\end{tikzpicture}}
\end{minipage}
\begin{minipage}{0.4\textwidth}
\centering
\scalebox{\scalefactor}{
\begin{tikzpicture}[node distance = 1.5cm,>=stealth',bend angle=0,auto]
    \node [place,tokens=1] (start) [label=below:$p_i$] {};
    \node [transition] (t1) [above right of=start,label=center:$a$] {}
    edge [pre] (start);
    \node [place] (p1) [right of=t1] {}
    edge [pre] (t1);
    \node [transition] (t2) [right of=p1,label=center:$b$] {}
    edge [pre] (p1);
    \node [transition] (t3) [below right of=start,label=center:$a$] {}
    edge [pre] (start);
    \node [place] (p2) [right of=t3] {}
    edge [pre] (t3);
    \node [transition] (t4) [right of=p2,label=center:$c$] {}
    edge [pre] (p2);
    \node [place] (po) [below right of=t2,label=below:$p_o$] {}
	edge [pre] (t2)
    edge [pre] (t4);
    \node at (0,1.15) {$\wfseq{2}$:};
    \draw[opacity=0] ($(start)-(0.25,1.75)$) rectangle ($(po) + (0.25,1.75)$);
\end{tikzpicture}}
\end{minipage}
\caption{Example nets with $\C(\wfseq{1}) = \frac{5}{6}$ and $\C(\wfseq{2}) = \frac{4}{8} = \frac{1}{2}$.}
\label{fig:seq-examples}
\end{figure}

\begin{description}
\propitemf{\propone}{\yes} 
For the two nets $\wfseq{1}$ and $\wfseq{2}$ of Figure~\ref{fig:seq-examples}, we get the complexity scores $\C(\wfseq{1}) = \frac{5}{6} \neq \frac{1}{2} = \C(\wfseq{2})$.

\propitemf{\proptwo}{\no} 
The net $\wfseqinf$ in Figure~\ref{fig:seq-fin} is a choice between $c$ sequences containing $2k - 1$ arcs. 
In total, it has $|F| = 2ck + c$ arcs.
Of those, $2c$ have at least one connector as an endpoint, so $\C(\wfseqinf) = \frac{2c}{2ck+c} = \frac{2}{2k + 1}$. 
As this complexity score is independent of $c$, we have found infinitely many workflow nets with this complexity score.
\begin{figure}[ht]
\begin{center}
\scalebox{\scalefactor}{
\begin{tikzpicture}[node distance = 1.5cm,>=stealth',bend angle=0,auto]
    \node [place,tokens=1] (start) [label=below:$p_i$] {};
    \node [transition] (t1) [above right of=start,label=below:$t_1^1$] {}
    edge [pre] (start);
    \node [transition] (tc) [below right of=start,label=below:$t_c^1$] {}
    edge [pre] (start);
    \node [place] (p1) [right of=t1,label=below:$p_1^1$] {}
    edge [pre] (t1);
    \node (dots1) [right of=p1] {$\dots$}
    edge [pre] (p1);
    \node [place] (pc) [right of=tc,label=below:$p_c^1$] {}
    edge [pre] (tc);
    \node (dotsc) [right of=pc] {$\dots$}
    edge [pre] (pc);
    \node [transition] (tk1) [right of=dots1,label=below:$t_1^k$] {}
    edge [pre] (dots1);
    \node [transition] (tkc) [right of=dotsc,label=below:$t_c^k$] {}
    edge [pre] (dotsc);
    \node [place] (end) [below right of=tk1,label=below:$p_o$] {}
    edge [pre] (tk1)
    edge [pre] (tkc);
    \node at ($0.5*(t1) + 0.5*(tc)$) {$\vdots$};
    \node at ($0.5*(tk1) + 0.5*(tkc)$) {$\vdots$};
    \node at (0,1.15) {$\wfseqinf$:};
\end{tikzpicture}}
\end{center}
\caption{A workflow net $\wfseqinf$ with complexity $\C(\wfseqinf) = \frac{2c}{2ck+c} = \frac{2}{2k + 1}$.}
\label{fig:seq-fin}
\end{figure}

\propitemf{\propthree}{\yes} 
The workflow nets $W_{2,3}^{\sequentialityname}$ and $W_{3,3}^{\sequentialityname}$ of Figure~\ref{fig:seq-fin} are different in structure, but have equal complexity: $\C(W_{2,3}^{\sequentialityname}) = \frac{2}{7} = \C(W_{3,3}^{\sequentialityname})$.

\propitemf{\propfour}{\yes} 
The workflow nets $\wfseq{1}$ and $\wfseq{2}$ of Figure~\ref{fig:seq-examples} have the same language $L(\wfseq{1}) = \{\varepsilon, a, ab, ac\} = L(\wfseq{2})$, but receive different complexity scores $\compseq(\wfseq{1}) = \frac{5}{6} \neq \frac{1}{2} = \compseq(\wfseq{2})$.

\propitemf{\propfive}{\no} 
Take the workflow nets $\wfseq{4}$ and $\netminwf$ of Figure~\ref{fig:seq-mon}.
The same Figure already shows that $\C$ is not monotone for $\parop$, but using the same nets we also get:
\begin{itemize}
\item $\C(\wfseq{4} \seqop \netminwf) = \frac{1}{2} < 1 = \C(\wfseq{4})$,
\item $\C(\wfseq{4} \parop \netminwf) = \frac{5}{6} < 1 = \C(\wfseq{4})$,
\item $\C(\wfseq{4} \choiceop \netminwf) = \frac{5}{7} < 1 = \C(\wfseq{4})$,
\item $\C(\wfseq{4} \loopop \netminwf) = \frac{2}{3} < 1 = \C(\wfseq{4})$.
\end{itemize}
\begin{figure}
    \centering
\scalebox{\scalefactor}{
\begin{tikzpicture}[node distance = 1.5cm,>=stealth',bend angle=0,auto]
    \node[place] (start) at (0,0) [] {};
    \node[transition, right of=start] (t1) [] {}
    edge [pre] (start);
    \node[place] (p1) [above right of=t1] {}
    edge [pre] (t1);
    \node[place] (p2) [below right of=t1] {}
    edge [pre] (t1);
    \node[transition, right of=p1] (t2) {}
    edge [pre] (p1);
    \node[transition, above right of=p2,xshift=0.44cm] (t3) {}
    edge [pre] (p2);
    \node[transition, below right of=p2,xshift=0.44cm] (t4) {}
    edge [pre] (p2);
    \node[place] (p3) [right of=t2] {}
    edge [pre] (t2);
    \node[place] (p4) [below right of=t3,xshift=0.44cm] {}
    edge [pre] (t3)
    edge [pre] (t4);
    \node[transition, below right of=p3] (t5) {}
    edge [pre] (p3)
    edge [pre] (p4);
    \node[place] (end) [right of=t5] {}
    edge [pre] (t5);
    
	\node at (0,1.5) {$\wfseq{3}$:};    
    
    \begin{pgfonlayer}{bg}
		\draw[rounded corners,draw=gray,fill=lightgray!50!white] ($(p1) - (0.75, 0.5)$) rectangle ($(p3) + (0.75, 1)$);
		\node[gray] at ($(p1) + (-0.4,0.75)$) {$\netminwf$};
		\draw[rounded corners,draw=gray,fill=lightgray!50!white] ($(p2) - (0.75, 1.5)$) rectangle ($(p4) + (0.75, 1.5)$);
		\node[gray] at ($(p2) + (-0.4,-1.25)$) {$\wfseq{4}$};
	\end{pgfonlayer}
\end{tikzpicture}}
\caption{A workflow net $\wfseq{3} = \wfseq{4} \parop \netminwf$ with $\compseq(\wfseq{3}) = 1 - \frac{2}{12} = \frac{5}{6}$, $\compseq(\netminwf) = 1 - \frac{2}{2} = 0$ and $\compseq(\wfseq{4}) = 1 - \frac{0}{4} = 1$.}
\label{fig:seq-mon}
\end{figure}

\propitemf{\propsix}{\yes} 
Sequentiality is sensitive to compositions.
As an example, Figure~\ref{fig:sequence-comp} shows nets $\wfseq{4}$ and $\wfseq{5}$ with $\C(\wfseq{4}) = 1 = \C(\wfseq{5})$, but for which we get:
\begin{itemize}
\item $\C(\wfseq{4} \seqop \netminwf) = \frac{5}{8} \neq \frac{9}{13} = \C(\wfseq{5} \seqop \netminwf)$,
\item $\C(\wfseq{4} \parop \netminwf) = \frac{5}{6} \neq \frac{15}{17} = \C(\wfseq{5} \parop \netminwf)$,
\item $\C(\wfseq{4} \choiceop \netminwf) = \frac{5}{7} \neq \frac{14}{19} = \C(\wfseq{5} \choiceop \netminwf)$,
\item $\C(\wfseq{4} \loopop \netminwf) = \frac{2}{3} \neq \frac{16}{23} = \C(\wfseq{5} \seqop \netminwf)$.
\end{itemize}
\begin{figure}
\centering
\begin{minipage}{0.8\textwidth}
\centering
\scalebox{\scalefactor}{
\begin{tikzpicture}[node distance = 1.5cm,>=stealth',bend angle=0,auto]
    \node[place] (start) at (3,0) [] {};
    \node[transition, above right of=start] (t1) [] {}
    edge [pre] (start);
    \node[transition, below right of=start] (t2) [] {}
    edge [pre] (start);
    \node[place] (p1end) [below right of=t1] {}
    edge [pre] (t1)
	edge [pre] (t2);
	%
	% seq connector
	\node[transition, right of=p1end] (tc1) {}
	edge [pre] (p1end);
	% 
	% \netminwf
    \node[place] (pmstart) [right of=tc1] {}
	edge [pre] (tc1);
	\node[transition, right of=pmstart] (tm1) {}
	edge [pre] (pmstart);
    \node[place] (pmend) [right of=tm1] {}
	edge [pre] (tm1);
    
	\node at (0,1.8) {$\wfseq{4} \seqop \netminwf$:};    
    
    \begin{pgfonlayer}{bg}
		\draw[rounded corners,draw=gray,fill=lightgray!50!white] ($(start) - (0.75, 1.5)$) rectangle ($(p1end) + (0.75, 1.5)$);
		\node[gray] at ($(start) + (-0.4,-1.25)$) {$\wfseq{4}$};
		\draw[rounded corners,draw=gray,fill=lightgray!50!white] ($(pmstart) - (0.75, 1.5)$) rectangle ($(pmend) + (0.75, 1.5)$);
		\node[gray] at ($(pmstart) + (-0.4,-1.25)$) {$\netminwf$};
	\end{pgfonlayer}
\end{tikzpicture}}
\end{minipage}
\\
\begin{minipage}{0.8\textwidth}
\centering
\scalebox{\scalefactor}{
\begin{tikzpicture}[node distance = 1.5cm,>=stealth',bend angle=0,auto]
    \node[place] (start) at (0,0) [] {};
    \node[transition, above right of=start] (t1) [] {}
    edge [pre] (start);
    \node[transition, below right of=start] (t2) [] {}
    edge [pre] (start);
    \node[place] (p1) [right of=t1] {}
    edge [pre] (t1)
	edge [pre] (t2);
    \node[place] (p2) [right of=t2] {}
    edge [pre] (t1)
	edge [pre] (t2);
    \node[transition, below right of=p1] (t3) {}
	edge [pre] (p1)
    edge [pre] (p2);
    \node[place] (p1end) [right of=t3] {}
    edge [pre] (t3);
	%
	% seq connector
	\node[transition, right of=p1end] (tc1) {}
	edge [pre] (p1end);
	% 
	% \netminwf
    \node[place] (pmstart) [right of=tc1] {}
	edge [pre] (tc1);
	\node[transition, right of=pmstart] (tm1) {}
	edge [pre] (pmstart);
    \node[place] (pmend) [right of=tm1] {}
	edge [pre] (tm1);
    
	\node at (0,1.8) {$\wfseq{5} \seqop \netminwf$:};    
    
    \begin{pgfonlayer}{bg}
		\draw[rounded corners,draw=gray,fill=lightgray!50!white] ($(start) - (0.75, 1.5)$) rectangle ($(p1end) + (0.75, 1.5)$);
		\node[gray] at ($(start) + (-0.4,-1.25)$) {$\wfseq{5}$};
		\draw[rounded corners,draw=gray,fill=lightgray!50!white] ($(pmstart) - (0.75, 1.5)$) rectangle ($(pmend) + (0.75, 1.5)$);
		\node[gray] at ($(pmstart) + (-0.4,-1.25)$) {$\netminwf$};
	\end{pgfonlayer}
\end{tikzpicture}}
\end{minipage}
\caption{Example nets $\wfseq{4}$ and $\wfseq{5}$, each composed sequentially with $\netminwf$.}
\label{fig:sequence-comp}
\end{figure}

\propitemf{\propseven}{\yes} 
Figure~\ref{fig:seq-perm} shows two workflow nets, $\wfseq{6}$ and $\wfseq{7}$, that are permutations of each other, but $\C(\wfseq{6}) = 1 \neq \frac{7}{8} = \C(\wfseq{7})$.
\begin{figure}
\centering
\begin{minipage}{0.4\textwidth}
\centering
\scalebox{\scalefactor}{
\begin{tikzpicture}[node distance = 1.5cm,>=stealth',bend angle=0,auto]
    \node [place,tokens=1] (start) [label=below:$p_i$] {};
    \node [transition] (t2) [right of=start,label=center:$b$] {}
    edge [pre] (start);
    \node [transition] (t1) [above of=t2,label=center:$a$] {}
    edge [pre] (start);
	\node [place] (p1) [right of=t2,label=below:$p_1$] {}
    edge [pre] (t1)
    edge [pre] (t2);
    \node [transition] (t4) [right of=p1,label=center:$d$] {}
    edge [pre] (p1);
    \node [transition] (t3) [above of=t4,label=center:$c$] {}
    edge [pre] (p1);
    \node [place] (po) [right of=t4,label=below:$p_o$] {}
    edge [pre] (t3)
    edge [pre] (t4);
    \node at (0,1.15) {$\wfseq{6}$:};
    \draw[opacity=0] ($(start)-(0.25,1.75)$) rectangle ($(po) + (0.25,1.75)$);
\end{tikzpicture}}
\end{minipage}
\begin{minipage}{0.4\textwidth}
\centering
\scalebox{\scalefactor}{
\begin{tikzpicture}[node distance = 1.5cm,>=stealth',bend angle=0,auto]
    \node [place,tokens=1] (start) [label=below:$p_i$] {};
    \node [transition] (t2) [right of=start,label=center:$b$] {}
    edge [pre] (start);
    \node [transition] (t1) [right of=start,label=center:$a$] {}
    edge [pre] (start);
	\node [place] (p1) [right of=t2,label=below:$p_1$] {}
    edge [pre] (t1)
    edge [pre] (t2);
    \node [transition] (t4) [right of=p1,label=center:$c$] {}
    edge [pre] (p1);
    \node [transition] (t3) [below of=t4,label=center:$d$] {}
    edge [pre] (p1);
    \node [transition] (t2) [above of=t4, label=center:$b$] {}
    edge [pre] (p1);
    \node [place] (po) [right of=t4,label=below:$p_o$] {}
	edge [pre] (t2)
    edge [pre] (t3)
    edge [pre] (t4);
    \node at (0,1.15) {$\wfseq{7}$:};
    \draw[opacity=0] ($(start)-(0.25,1.75)$) rectangle ($(po) + (0.25,1.75)$);
\end{tikzpicture}}
\end{minipage}
\caption{Two workflow nets, $\wfseq{6}$ and $\wfseq{7}$, with $\wfseq{6} \in Perm(\wfseq{7})$, but $\C(\wfseq{6}) = 1$ and $\C(\wfseq{7}) = \frac{7}{8}$.}
\label{fig:seq-perm}
\end{figure}

\propitemf{\propeight}{\yes} 
The $\compseq$ measure is robust against relabellings, as can be seen straighforwardly by its insensitivity to transition labels generally. 

\propitemf{\propnine}{\yes} 
The smallest possible workflow net is $\netminwf$, and $\compseq(\netminwf) = 0$ as it has no connectors. 
We can see that $\compseq(\netminwf, \oplus \netminwf) > 0 = \compseq(\netminwf) + \compseq(\netminwf)$ for $\oplus \in \{\parop, \choiceop, \loopop\}$, since these operations introduce new connector nodes and thereby elevate the complexity of the net above $0$.

However, $\C$ is subadditive for the $\seqop$ operator. 
Let $M_1, M_2 \in \mathcal{M}$ be two nets with $M_1 = (P_1, T_1, F_1, \ell_1, p_i^1, p_o^1)$ and $M_2 = (P_2, T_2, F_2, \ell_2, p_i^2, p_o^2)$.
$\seqop$ introduces two new arcs, which can both have connector-endpoints, if $p_i^1$ and $p_i^2$ are both connectors.
Therefore, if we take $s_1$ as the number of sequential arcs in $M_1$ and $s_2$ the number of sequential arcs in $M_2$, we get:
\begin{align*}
\compseq(M_1 \seqop M_2)) \leq \frac{s_1 + s_2 + 2}{|F_1| + |F_2| + 2} \leq \frac{s_1 + s_2}{|F_1| + |F_2|} \leq \frac{s_1}{|F_1|} + \frac{s_2}{|F_2|}
\end{align*}
since $s_1 \leq |F_1|$ and $s_2 \leq |F_2|$. 
Thus, for the sequence operator, we get $\compseq(M_1 \seqop M_2) \leq \compseq(M_1) + \compseq(M_2)$.
The associativity of the $\seqop$ operation  means the claim also holds if we increase the number of workflow nets composed in sequence.

\propitemf{\propdef}{\yes} 
Workflow nets always have at least two arcs, so the denominator of $\compseq$ can never be $0$. 
Furthermore, $\compseq(W) \geq 0$ for all workflow nets $W$, as obviously $|\{(x,y) \in F \mid x,y \not\in C^W\}| \leq |F|$.

\propitemf{\propmin}{\yes} 
The minimum score of $\compseq$ is $0$, which can be reached for the smallest possible workflow net $\netminwf$.
In this net, there are no connector nodes, so all arcs go from a non-connector node to another non-connector node.

\propitemf{\propinf}{\yes} 
Figure~\ref{fig:seq-fin} shows how to construct infinitely many workflow nets of different complexity.

\propitemf{\propnotsup}{\yes}
The counterexamples for the property {\propfive} also show that $\C$ is not superadditive with respect to any of the operations.

\propitemf{\propadd}{\no}
The counterexamples for the property {\propfive} also show that $\C$ is not additive with respect to any of the operations.
\end{description}

\def\name{\avgconnname}
\def\C{\avgconn}
\newcommand{\compavgconn}{C_{\avgconnname}}
\newcommand{\wfavg}[1]{W_{#1}^{\avgconnname}}
\newcommand{\wfavginf}{\wfavg{4}}
\subsubsection{Average Connector Degree}
The average connector degree is the number of arcs entering or leaving a connector node divided by the total number of connectors~\cite[pp.120-121]{Men08}.
If this amount is high for a workflow net, it is less understandable than a workflow net with low average connector degree.
\begin{equation}
\label{eq:avg-conn-deg}
		\compavgconn(W) = 
			\frac{1}{|\mathcal{C}^W|} \cdot \sum_{x \in \mathcal{C}^W} \ndeg(x) 
\end{equation}
Figure~\ref{fig:acd-examples} shows three example nets and their respective complexity scores.
\begin{figure}
\centering
\begin{minipage}{0.25\textwidth}
\centering
\scalebox{\scalefactor}{
\begin{tikzpicture}[node distance = 1.5cm,>=stealth',bend angle=0,auto]
    \node [place,tokens=1] (start) [label=below:$p_i$] {};
    \node [transition] (t2) [right of=start,label=center:$a$] {}
    edge [pre] (start);
    \node [transition] (t1) [above of=t2,label=center:$a$] {}
    edge [pre] (start);
    \node [place] (p1) [right of=t2,label=below:$p_o$] {}
    edge [pre] (t1)
    edge [pre] (t2);
    \node at (0,1.15) {$\wfavg{1}$:};
    \draw[opacity=0] ($(start)-(0.25,1.75)$) rectangle ($(p1) + (0.25,1.75)$);
\end{tikzpicture}}
\end{minipage}
\begin{minipage}{0.25\textwidth}
\centering
\scalebox{\scalefactor}{
\begin{tikzpicture}[node distance = 1.5cm,>=stealth',bend angle=0,auto]
    \node [place,tokens=1] (start) [label=below:$p_i$] {};
    \node [transition] (t2) [right of=start,label=center:$a$] {}
    edge [pre] (start);
    \node [transition] (t1) [above of=t2,label=center:$a$] {}
    edge [pre] (start);
    \node [transition] (t3) [below of=t2,label=center:$a$] {}
    edge [pre] (start);
    \node [place] (p1) [right of=t2,label=below:$p_o$] {}
    edge [pre] (t1)
    edge [pre] (t2)
    edge [pre] (t3);
    \node at (0,1.15) {$\wfavg{2}$:};
    \draw[opacity=0] ($(start)-(0.25,1.75)$) rectangle ($(p1) + (0.25,1.75)$);
\end{tikzpicture}}
\end{minipage}
\begin{minipage}{0.35\textwidth}
\centering
\scalebox{\scalefactor}{
\begin{tikzpicture}[node distance = 1.5cm,>=stealth',bend angle=0,auto]
    \node [place,tokens=1] (start) [label=below:$p_i$] {};
    \node [transition] (t1) [right of=start,label=center:$a$] {}
    edge [pre] (start);
	\node [place] (p1) [above right of=t1] {}
	edge [pre] (t1);
	\node [place] (p2) [below right of=t1] {}
	edge [pre] (t1);
    \node [transition] (t2) [below right of=p1,label=center:$b$] {}
    edge [pre] (p1)
	edge [pre] (p2);
    \node [place] (po) [right of=t2,label=below:$p_o$] {}
    edge [pre] (t2);
    \node at (0,1.15) {$\wfavg{3}$:};
    \draw[opacity=0] ($(start)-(0.25,1.75)$) rectangle ($(po) + (0.25,1.75)$);
\end{tikzpicture}}
\end{minipage}
\caption{Three workflow nets, $\wfavg{1}, \wfavg{2}, \wfavg{3}$, with $\C(\wfavg{1}) = 2$ and $\C(\wfavg{2}) = \C(\wfavg{3}) = 3$.}
\label{fig:acd-examples}
\end{figure}

\begin{description}
\propitemf{\propone}{\yes} 
For the two nets $\wfavg{1}$ and $\wfavg{2}$ of Figure~\ref{fig:acd-examples}, we get the complexity scores $\C(\wfavg{1}) = 2 \neq 3 = \C(\wfavg{2})$.

\propitemf{\proptwo}{\no} 
Follows directly by Theorem~\ref{thm:conn-inf}. 
Figure~\ref{fig:acd-fin} shows how to construct such a sequence of places and transitions.
\begin{figure}[ht]
\begin{center}
\scalebox{\scalefactor}{
\begin{tikzpicture}[node distance = 1.5cm,>=stealth',bend angle=0,auto]
    \node [place,tokens=1] (start) [label=below:$p_i$] {};
    \node [transition] (t1) [right of=start,label=below:$t_1$] {}
    edge [pre] (start);
    \node [place] (p1) [right of=t1] {}
    edge [pre] (t1);
    \node (dots) [right of=p1] {$\dots$}
    edge [pre] (p1);
    \node [transition] (tk) [right of=dots,label=below:$t_k$] {}
    edge [pre] (dots);
    \node [place] (p1) [above right of=tk,label=below:$p^1$] {}
    edge [pre] (tk);
    \node [place] (pk) [below right of=tk,label=below:$p^{c}$] {}
    edge [pre] (tk);
    \node [transition] (t1) [right of=p1,label=below:$t^1$] {}
    edge [pre] (p1);
    \node [transition] (tk) [right of=pk,label=below:$t^{c}$] {}
    edge [pre] (pk);
    \node [place] (end) [below right of=t1,label=below:$p_o$] {}
    edge [pre] (t1)
    edge [pre] (tk);
    \node at ($0.5*(p1) + 0.5*(pk)$) {$\vdots$};
    \node at ($0.5*(t1) + 0.5*(tk)$) {$\vdots$};
    \node at (0,1.15) {$W_{c,k}^{\name}$:};
\end{tikzpicture}}
\end{center}
\caption{A workflow net $W_{c,k}^{\name}$ with $k + c$ transitions and $k + c + 1$ places for any $c,k \in \mathbb{N}$ with $c \geq 2$. Its complexity score is $\C(W_{c,k}^{\name}) = \frac{2c + 1}{2} = c + \frac{1}{2}$.}
\label{fig:acd-fin}
\end{figure}

\propitemf{\propthree}{\yes} 
The workflow nets $\wfavg{2}$ and $\wfavg{3}$ of Figure~\ref{fig:acd-examples} are different in structure, but have equal complexity: $\compavgconn(\wfavg{2}) = \compavgconn(\wfavg{3}) = 3$.

\propitemf{\propfour}{\yes} 
The two workflow nets $\wfavg{1}$ and $\wfavg{2}$ of Figure~\ref{fig:acd-examples} have the same language $L(\wfavg{1}) = \{\varepsilon, a\} = L(\wfavg{2})$, but receive different complexity scores $\C(\wfavg{1}) = 2 \neq 3 = \C(\wfavg{2})$.

\propitemf{\propfive}{\no} 
Consider the workflow nets $\wfavg{5}$ and $\wfavg{6}$ of Figure~\ref{fig:acd-mon}. 
We get:
\begin{itemize}
\item $\C(\wfavg{6} \seqop \wfavg{5}) = \frac{14}{4} = 3.5 > 2 = \C(\wfavg{6})$,
\item $\C(\wfavg{6} \parop \wfavg{5}) = \frac{22}{6} = 3.\overline{6} > 2 = \C(\wfavg{6})$,
\item $\C(\wfavg{6} \choiceop \wfavg{5}) = \frac{20}{6} = 3.\overline{3} > 2 = \C(\wfavg{6})$,
\item $\C(\wfavg{6} \loopop \wfavg{5}) = \frac{22}{6} = 3.\overline{6} > 2 = \C(\wfavg{6})$.
\end{itemize}
\begin{figure}
	\centering
\scalebox{\scalefactor}{
\begin{tikzpicture}[node distance = 1.5cm,>=stealth',bend angle=0,auto]
	\node[place] (start) at (0,0) [label=below:$p_i$] {};
	\node[transition, right of=start] (t1) [] {}
	edge [pre] (start);
	\node[place] (p1) [above right of=t1,yshift=0.75cm] {}
	edge [pre] (t1);
	\node[transition, above right of=p1,yshift=0.5cm] (t2) {}
	edge [pre] (p1);
	\node[transition, above right of=p1,yshift=-0.35cm] (t3) {}
	edge [pre] (p1);
	\node[transition, below right of=p1,yshift=0.35cm] (t4) {}
	edge [pre] (p1);
	\node[transition, below right of=p1,yshift=-0.5cm] (t5) {}
	edge [pre] (p1);
	\node[place] (p2) [below right of=t3,yshift=0.35cm] {}
	edge [pre] (t2)
	edge [pre] (t3)
	edge [pre] (t4)
	edge [pre] (t5);
	\node[transition, below right of=p2,yshift=-0.75cm] (t6) {}
	edge [pre] (p2);
	
	\node[place, below right of=t1] (p3) [yshift=-0.75cm] {}
	edge [pre] (t1);
	\node[transition, above right of=p3] (t7) {}
	edge [pre] (p3);
	\node[transition, below right of=p3] (t8) {}
	edge [pre] (p3);
	\node[place, below right of=t7] (p4) [] {}
	edge [pre] (t7)
	edge [pre] (t8)
	edge [post] (t6);
	\node[place, right of=t6] (end) [] {}
	edge [pre] (t6);
	
	\node at ($(start) + (0,2)$) {$\wfavg{4}$:};	
	
	\begin{pgfonlayer}{bg}
		\draw[rounded corners,draw=gray,fill=lightgray!50!white] ($(p1) - (0.75, 2)$) rectangle ($(p2) + (0.75, 2)$);
		\node[gray] at ($(p1) + (-0.4,1.75)$) {$\wfavg{5}$};
		\draw[rounded corners,draw=gray,fill=lightgray!50!white] ($(p3) - (0.75, 1.5)$) rectangle ($(p4) + (0.75, 1.5)$);
		\node[gray] at ($(p3) + (-0.4,-1.25)$) {$\wfavg{6}$};
	\end{pgfonlayer}
\end{tikzpicture}}
\caption{A workflow net $\wfavg{4} = \wfavg{6} \parop \wfavg{5}$ with $\compavgconn(\wfavg{4}) = \frac{22}{6} = 3.\overline{6}$, $\compavgconn(\wfavg{5}) = 4$ and $\compavgconn(\wfavg{6}) = 2$.}
\label{fig:acd-mon}
\end{figure}

\propitemf{\propsix}{\yes} 
The workflow nets $\wfavg{2}$ and $\wfavg{3}$ of Figure~\ref{fig:acd-examples} have the same complexity score $\C(\wfavg{2}) = 3 = \C(\wfavg{3})$.
We get:
\begin{itemize}
\item $\C(\wfavg{2} \seqop \wfavg{2}) = \frac{14}{4} = 3.5 \neq 3.25 = \frac{13}{4} = \C(\wfavg{3} \seqop \wfavg{2})$,
\item $\C(\wfavg{2} \parop \wfavg{2}) = \frac{22}{6} = 3.\overline{6} \neq 3.\overline{3} = \frac{20}{6} = \C(\wfavg{3} \parop \wfavg{2})$,
\item $\C(\wfavg{2} \choiceop \wfavg{2}) = \frac{20}{6} = 3.\overline{3} \neq 3 = \frac{18}{6} = \C(\wfavg{3} \choiceop \wfavg{2})$,
\item $\C(\wfavg{2} \loopop \wfavg{2}) = \frac{22}{6} = 3.\overline{6} \neq 3 = \frac{18}{6} = \C(\wfavg{3} \loopop \wfavg{2})$.
\end{itemize}

\propitemf{\propseven}{\yes}
Figure~\ref{fig:acd-perm} shows two workflow nets, $\wfavg{7}$ and $\wfavg{8}$, that are permutations of each other, but $\C(\wfavg{7}) = 2.\overline{6} \neq 3.5 = \C(\wfavg{8})$.
\begin{figure}
\centering
\begin{minipage}{0.4\textwidth}
\centering
\scalebox{\scalefactor}{
\begin{tikzpicture}[node distance = 1.5cm,>=stealth',bend angle=0,auto]
    \node [place,tokens=1] (start) [label=below:$p_i$] {};
    \node [transition] (t2) [right of=start,label=center:$b$] {}
    edge [pre] (start);
    \node [transition] (t1) [above of=t2,label=center:$a$] {}
    edge [pre] (start);
	\node [place] (p1) [right of=t2,label=below:$p_1$] {}
    edge [pre] (t1)
    edge [pre] (t2);
    \node [transition] (t4) [right of=p1,label=center:$d$] {}
    edge [pre] (p1);
    \node [transition] (t3) [above of=t4,label=center:$c$] {}
    edge [pre] (p1);
    \node [place] (po) [right of=t4,label=below:$p_o$] {}
    edge [pre] (t3)
    edge [pre] (t4);
    \node at (0,1.15) {$\wfavg{7}$:};
    \draw[opacity=0] ($(start)-(0.25,1.75)$) rectangle ($(po) + (0.25,1.75)$);
\end{tikzpicture}}
\end{minipage}
\begin{minipage}{0.4\textwidth}
\centering
\scalebox{\scalefactor}{
\begin{tikzpicture}[node distance = 1.5cm,>=stealth',bend angle=0,auto]
    \node [place,tokens=1] (start) [label=below:$p_i$] {};
    \node [transition] (t1) [right of=start,label=center:$a$] {}
    edge [pre] (start);
	\node [place] (p1) [right of=t1,label=below:$p_1$] {}
    edge [pre] (t1);
    \node [transition] (t4) [right of=p1,label=center:$c$] {}
    edge [pre] (p1);
    \node [transition] (t3) [above of=t4,label=center:$b$] {}
    edge [pre] (p1);
    \node [transition] (t2) [below of=t4,label=center:$d$] {}
    edge [pre] (p1);
    \node [place] (po) [right of=t4,label=below:$p_o$] {}
	edge [pre] (t2)
    edge [pre] (t3)
    edge [pre] (t4);
    \node at (0,1.15) {$\wfavg{8}$:};
    \draw[opacity=0] ($(start)-(0.25,1.75)$) rectangle ($(po) + (0.25,1.75)$);
\end{tikzpicture}}
\end{minipage}
\caption{Two workflow nets, $\wfavg{7}$ and $\wfavg{8}$, with $\wfavg{8} \in Perm(\wfavg{7})$, $\C(\wfavg{7}) = 2.\overline{6}$ and $\C(\wfavg{8}) = 3.5$.}
\label{fig:acd-perm}
\end{figure}

\propitemf{\propeight}{\yes} 
The $\compavgconn$ measure is robust against relabellings, as can be seen by its general insensitivity to transition labels.

\propitemf{\propnine}{\no} 
If we consider only inputs workflow nets where  $\compavgconn$ is defined, and thus can assume that all input workflow nets have at least one connector, $\compavgconn$ is subadditive for the operations $\seqop$, $\parop$, $\choiceop$ and $\loopop$.
\begin{proof}
Let $M_1 = (P^1,T^1,F^1,p_i^1,p_o^1)$ and $M_2 = (P^2,T^2,F^2,p_i^2,p_o^2)$ be two workflow nets. 
The sequential composition $M_1 \seqop M_2$ introduces only one new transition with one incoming and one outgoing arc.
By definition, this transition can't be a connector node, since neither its number of incoming nor the number of outgoing arcs exceeds $1$.
The degrees of $p_o^1$ and $p_i^2$ increase by $1$ with this operation.
This might impact the measure when $p_o^1$ or $p_i^2$ are connector nodes.
With $\compavgconn(M_1) = \frac{d_1}{|\mathcal{C}_1^M|}$ and $\compavgconn(M_2) = \frac{d_2}{|\mathcal{C}_2^M|}$, we have:
\[\compavgconn(M_1 \seqop M_2) \leq \frac{d_1 + d_2 + 2}{|\mathcal{C}_1^M| + |\mathcal{C}_2^M|}.\]
Since $\frac{d_1 + d_2 +2}{|\mathcal{C}_1^M| + |\mathcal{C}_2^M|} \leq \frac{|\mathcal{C}_2^M| \cdot d_1 + |\mathcal{C}_1^M| \cdot d_2}{|\mathcal{C}_1^M||\mathcal{C}_2^M|}$, for $d_1, d_2 \geq 2$ and $|\mathcal{C}_1^M|, |\mathcal{C}_2^M| \geq 2$:
\begin{align*}
\compavgconn(M_1 \seqop M_2) & \leq \frac{d_1}{|\mathcal{C}_1^M|} + \frac{d_2}{|\mathcal{C}_2^M|} \\
\compavgconn(M_1 \seqop M_2) & \leq \compavgconn(M_1) + \compavgconn(M_2).
\end{align*}
As sequential composition is associative, this claim also holds if we increase the amount of workflow nets composed by the $\seqop$-operator.

The operators $\parop$, $\choiceop$ and $\loopop$ each create two new connectors of degree at least $n$ for their $n$ parameters $M_1, \dots, M_n$.
They may also increase the node degree of the initial and final places of these nets by one, if they are connectors.
With $\compavgconn(M_i) = \frac{d_i}{|\mathcal{C}_i^{M_i}|}$ for $i \in \{1,\dots, n\}$, we get:
\[\compavgconn(\oplus(M_1, \dots, M_n)) \leq \frac{4(n+1) + \sum_{i = 1}^n d_i}{2 + \sum_{i = 1}^n |\mathcal{C}_i^{M_i}|}.\]
When there are two nets, each with the minimum possible average connector degree $2$, this bound is $\frac{20}{6} = 3.\overline{3}$, which is less than the sum of the average degrees, $4$.
The sum of $\compavgconn(M_i)$ diverges as $n$, $d_i$ and $|C_i^{M_i}|$ increase.
Therefore, we have
\[\compavgconn(\oplus(M_1, \dots, M_n)) \leq \compavgconn(M_1) + \dots + \compavgconn(M_2)\] \qed
\end{proof}

\propitemf{\propdef}{\no} 
When there are no connectors in a workflow net, we would divide by $0$, leading to an undefined score.
Figure~\ref{fig:netminwf} shows such a workflow net without connectors.
If we define $\compavgconn$ to be $0$ in this case, the measure only returns non-negative values, since $\ndeg(x) \geq 0$ for each node $x$ in a workflow net.

\propitemf{\propmin}{\yes} 
The smallest possible value of $\compavgconn$ (without introducing the special case above) is $2$, since a connector needs at least $2$ outgoing or $2$ incoming arcs to classify as a connector.
The workflow net $\wfavg{1}$ of Figure~\ref{fig:acd-examples} is the workflow net with the smallest average connector degree and receives this complexity score.

\propitemf{\propinf}{\yes} 
The net $W_{c,k}^{\name}$ in Figure~\ref{fig:acd-fin} shows how to construct nets with an average connector degree $\frac{2c+1}{2}$ where $c \in \mathbb{N}$ with $c \geq 2$.
Therefore, we get $|\{c \in \mathbb{R} \mid \exists M \in \mathcal{M}: \compavgconn(M) = c\}| \geq \{c \in \naturals \mid c \geq 2\}| = \infty$.

\propitemf{\propnotsup}{\yes} 
Take the workflow nets $\wfavg{1}$ and $\wfavg{2}$ of Figure~\ref{fig:acd-examples}.
We get:
\begin{itemize}
\item $\C(\wfavg{1} \seqop \wfavg{2}) = 3 < 5 = 2 + 3 = \C(\wfavg{1}) + \C(\wfavg{2})$,
\item $\C(\wfavg{1} \parop \wfavg{2}) = 3.\overline{3} < 5 = 2 + 3 = \C(\wfavg{1}) + \C(\wfavg{2})$,
\item $\C(\wfavg{1} \choiceop \wfavg{2}) = 3 < 5 = 2 + 3 = \C(\wfavg{1}) + \C(\wfavg{2})$,
\item $\C(\wfavg{1} \loopop \wfavg{2}) = 3.\overline{3} < 5 = 2 + 3 = \C(\wfavg{1}) + \C(\wfavg{2})$.
\end{itemize}

\propitemf{\propadd}{\no} The counter-examples for property \propnotsup{} also show that $\C$ is not additive.
\end{description}

\subsection{Path Complexity}
Complexity measures of the \textsc{Path Complexity} dimension take the complexity of paths in the model into account.
The model is interpreted as a directed graph and paths are defined as in graph-theory, ignoring all semantics of choice or concurrent connectors.
Thus, these measures ignore the semantics of the model and rate the complexity based on how understandable the underlying graph is.

\def\name{\depthname}
\def\C{\depth}
\subsubsection{Depth}
For a workflow net $W = (P,T,F,\ell,p_i,p_o)$, the depth~\cite[pp.124-125]{Men08} of a node $v \in P \cup T$ is defined by its in-depth and its out-depth.
The in-depth counts how many split nodes precede $v$ that were not closed by a join before $v$, while the out-depth counts how many join nodes succeed $v$ that aren't opened by a split after $v$.
For the in-depth, we inspect all paths from $p_i$ to $v$.
Let $\mathcal{P}_{p_i,v}$ be the set of all such paths that don't revisit any nodes. 
Further, let $\mathcal{S}^W := \mathcal{S}_{\text{and}}^W \cup \mathcal{S}_{\text{xor}}^W$ the set of all split nodes in $W$ and $\mathcal{J}^W := \mathcal{J}_{\text{and}}^W \cup \mathcal{J}_{\text{xor}}^W$ the set of all join nodes in $W$.
For $p = (v_1, \dots, v_n) \in \mathcal{P}_{p_i, v}$, we define inductively:
\begin{align}
\lambda_{W}(v_1) &= \lambda_{W}(p_i) := 0\\
\lambda_{p}(v_n) &:=
\begin{cases}
\lambda_{W}(v_{n-1}) + 1 &\text{if } v_{n-1} \in \mathcal{S}^W \land v_n \not \in \mathcal{J}^W\\ 
\lambda_{W}(v_{n-1}) &\text{if } v_{n-1} \in \mathcal{S}^W \land v_n \in \mathcal{J}^W\\ 
\lambda_{W}(v_{n-1}) &\text{if } v_{n-1} \not \in \mathcal{S}^W \land v_n \not \in \mathcal{J}^W\\ 
\lambda_{W}(v_{n-1}) - 1 &\text{if } v_{n-1} \not \in \mathcal{S}^W \land v_n \in \mathcal{J}^W\\ 
\end{cases}
\ \\
\lambda_{W}(v) &:= \max\left\{0,\max_{p \in \mathcal{P}_{p_i,v}} \lambda_{p}(v)\right\} \quad (\text{for any } v \neq p_i)
\end{align}
For the out-depth, we set $\overleftarrow{W} := (P,T,\overleftarrow{F},\ell,p_o,p_i)$ with $\overleftarrow{F} := \{(u,v) \mid (v,u) \in F\}$ as the workflow net with all arcs reversed.
Now, we can reuse the definition of the in-depth to define the out-depth.
We define the depth of a workflow net $W$ as the maximum depth of any node in $W$.
\begin{equation}
\label{eq:depth}
\depth(W) := \max\{\min\{\lambda_{W}(v), \lambda_{\overleftarrow{W}}(v)\} \mid v \in P \cup T\}
\end{equation}
Figure~\ref{fig:depth-examples} shows three example nets and their respective complexity scores.
\begin{figure}[ht]
\begin{center}
\begin{minipage}{0.3\textwidth}
\centering
\scalebox{\scalefactor}{
\begin{tikzpicture}[node distance = 1.5cm,>=stealth',bend angle=0,auto]
	\node [place,tokens=1] (start) [label=below:$p_i$] {};
	\node [transition] (t1) [above right of=start,label=center:$a$] {}
	edge [pre] (start);
	\node [transition] (t2) [below right of=start,label=center:$b$] {}
	edge [pre] (start);
	\node [place] (p1) [below right of=t1,label=below:$p_o$] {}
	edge [pre] (t1)
	edge [pre] (t2);
	\node at (0,1.15) {$W_1^{\name}$:};
\end{tikzpicture}}
\end{minipage}
\begin{minipage}{0.3\textwidth}
\centering
\scalebox{\scalefactor}{
\begin{tikzpicture}[node distance = 1.5cm,>=stealth',bend angle=0,auto]
	\node [place,tokens=1] (start) [label=below:$p_i$] {};
	\node [transition] (t2) [right of=start,label=center:$b$] {}
	edge [pre] (start);
	\node [transition] (t1) [above of=t2,label=center:$a$] {}
	edge [pre] (start);
	\node [transition] (t3) [below of=t2,label=center:$c$] {}
	edge [pre] (start);
	\node [place] (p1) [right of=t2,label=below:$p_o$] {}
	edge [pre] (t1)
	edge [pre] (t2)
	edge [pre] (t3);
	\node at (0,1.15) {$W_2^{\name}$:};
\end{tikzpicture}}
\end{minipage}
\begin{minipage}{0.6\textwidth}
\centering
\scalebox{\scalefactor}{
\begin{tikzpicture}[node distance = 1.5cm,>=stealth',bend angle=0,auto]
	\node [place,tokens=1] (start) [label=below:$p_i$] {};
	\node [transition] (t1) [right of=start,label=center:$\tau$] {}
	edge [pre] (start);
	\node [place] (p1) [right of=t1] {}
	edge [pre] (t1);
	\node [transition] (t3) [above right of=p1,label=center:$b$] {}
	edge [pre] (p1);
	\node [transition] (t4) [below right of=p1,label=center:$c$] {}
	edge [pre] (p1);
	\node [place] (p2) [below right of=t3] {}
	edge [pre] (t3)
	edge [pre] (t4);
	\node [transition] (t2) [above of=t3,label=center:$a$] {}
	edge [pre] (start);
	\node [transition] (t5) [right of=p2,label=center:$\tau$] {}
	edge [pre] (p2);
	\node [place] (p3) [right of=t5,label=below:$p_o$] {}
	edge [pre] (t2)
	edge [pre] (t5);
	\node at (0,1.15) {$W_3^{\name}$:};
\end{tikzpicture}}
\end{minipage}
\end{center}
\caption{Three workflow nets, $W_1^{\name}$, $W_2^{\name}$, $W_3^{\name}$, with $\C(W_3^{\name}) = 2$ and $\C(W_1^{\name}) = \C(W_2^{\name}) = 1$.}
\label{fig:depth-examples}
\end{figure}

\begin{description}
\propitemf{\propone}{\yes} 
For the two nets $W_2^{\name}$ and $W_3^{\name}$ of Figure~\ref{fig:depth-examples}, we get the complexity scores $\C(W_2^{\name}) = 1 \neq 2 = \C(W_3^{\name})$.

\propitemf{\proptwo}{\no} 
There are infinitely many workflow nets with complexity $1$:
Take the workflow net $W_2^{\name}$ of Figure~\ref{fig:depth-examples}.
If we add more transitions $t$ with $\pre{t} = \{p_i\}$ and $\post{t} = \{p_o\}$ to this net, we don't affect the depth of the net.
Therefore, all such workflow nets get the same complexity score, $1$.

\propitemf{\propthree}{\yes} 
The workflow nets $W_1^{\name}$ and $W_2^{\name}$ are different in structure, but both get the complexity score $\C(W_1^{\name}) = 1 = \C(W_2^{\name})$.

\propitemf{\propfour}{\yes} 
Take the workflow nets $W_2^{\name}$ and $W_3^{\name}$ of Figure~\ref{fig:depth-examples}.
Their languages are $L(W_2^{depth}) = \{\varepsilon,a,b,c\} = L(W_3^{depth})$, but their complexity scores are $\C(W_2^{\name}) = 1 \neq 2 = \C(W_3^{\name})$.

\propitemf{\propfive}{\yes}
Let $M_1, \dots, M_n \in \mathcal{M}$ be workflow nets.
By definition, the depth of composed nets can be computed as follows:
\begin{itemize}
\item $\C(\parop(M_1, \dots, M_n)) = 1 + \max \{\C(M_1), \dots, \C(M_n)\}$,
\item $\C(\choiceop(M_1, \dots, M_n)) = 1 + \max \{\C(M_1), \dots, \C(M_n)\}$,
\item $\C(\loopop(M_1, \dots, M_n))$ \newline
$= \max \{\C(M_1), \C(M_2) + 1, \dots, \C(M_n) + 1\}$.
\end{itemize}
Therefore, $\C(\oplus(M_1, \dots, M_n)) \geq \C(M_i)$ for any $\oplus \in \{\parop, \choiceop, \loopop\}$ and $M_i \in \{M_1, \dots, M_n\}$.
Next, consider the sequential operator $\seqop$. 
Take the net $M = \seqop(M_1, \dots, M_n)$ and nodes $v_1, \dots, v_n$ such that $\min\{\lambda_{M_j}(v_j), \lambda_{\overleftarrow{M_j}}(v_j)\}$ for any $j \in \{1, \dots, n\}$, i.e. $v_j$ is the deepest node of the net $M_j$.
Then, we have $\lambda_M(v_j) \geq \lambda_{M_j}(v_j)$ and $\lambda_{\overleftarrow{M}}(v_j) \geq \lambda_{\overleftarrow{M_j}}(v_j)$.
We deduce:
\begin{eqnarray*}
\C(M)
&=& \max_{v}\{\min\{\lambda_{M}(v), \lambda_{\overleftarrow{M}}(v)\}\}\\
&=& \min\{\max\{\lambda_{M}(v_j) \mid 1 \leq j \leq n\}, \max\{\lambda_{\overleftarrow{M}}(v_j) \mid 1 \leq j \leq n\}\}\\
&\ge& \max\{\min\{\lambda_{M_j}(v_j), \lambda_{\overleftarrow{M_j}}(v_j) \mid 1 \leq j \leq n\}\}\\
&=& \max \{\C(M_1), \dots, \C(M_n)\}\\
&\ge& \C(M_i) \text{ for any } M_i \in \{M_1, \dots, M_n\}
\end{eqnarray*}
Therefore, $\C$ is monotone for all operations of Definition~\ref{def:operations}.

\propitemf{\propsix}{\no}
The property does not hold for operations $\oplus\in\{\parop,\choiceop,\loopop\}$.
Let $M_1, M_2, M_3 \in \mathcal{M}$ be workflow nets with $\C(M_1) = \C(M_2)$. 
We get:
\begin{itemize}
\item $\C(M_1 \parop M_3) = 1 + \max \{\C(M_1),\C(M_3)\}$\\
\phantom{$\C(M_1 \parop M_3)$ }$= 1 + \max \{\C(M_2),\C(M_3)\}$\\
\phantom{$\C(M_1 \parop M_3)$ }$= \C(M_2 \parop M_3)$,
\item $\C(M_1 \choiceop M_3) = 1 + \max \{\C(M_1),\C(M_3)\}$\\
\phantom{$\C(M_1 \choiceop M_3)$ }$= 1 + \max \{\C(M_2),\C(M_3)\}$\\ 
\phantom{$\C(M_1 \choiceop M_3)$ }$= \C(M_2 \choiceop M_3)$,
\item $\C(M_1 \loopop M_3) = \max \{\C(M_1),\C(M_3) + 1\}$\\
\phantom{$\C(M_1 \loopop M_3)$ }$= \max \{\C(M_2),\C(M_3) + 1\}$\\ 
\phantom{$\C(M_1 \loopop M_3)$ }$= \C(M_2 \parop M_3)$.
\end{itemize}
For the operator $\seqop$, consider the net $W_4^{\name}$ of Figure~\ref{fig:depth-comp} and the net $\overleftarrow{W_4^{\name}}$ obtained by reversing the arcs of $W_4^{\name}$.
Both workflow nets have complexity $\C(W_4^{\name}) = 1 = \C(\overleftarrow{W_4^{\name}})$, but composing these nets with $W_4^{\name}$ gives $\C (\overleftarrow{W_4^{\name}} \seqop W_4^{\name}) = 2 \neq 1 = \C(W_4^{\name} \oplus W_4^{\name})$.
\begin{figure}[ht]
\begin{center}
\begin{minipage}{0.55\textwidth}
\centering
\scalebox{\scalefactor}{
\begin{tikzpicture}[node distance = 1.5cm,>=stealth',bend angle=0,auto]
	\node [place,tokens=1] (start) [label=below:$p_i$] {};
	\node [transition] (t2) [above right of=start,label=center:$b$] {}
	edge [pre] (start);
	\node [transition] (t1) [above of=t2,label=center:$a$] {}
	edge [pre] (start);
	\node [transition] (t3) [below right of=start,label=center:$c$] {}
	edge [pre] (start);
	\node [transition] (t4) [below of=t3,label=center:$d$] {}
	edge [pre] (start);

	\node [place] (p1) [below right of=t1] {}
	edge [pre] (t1)
	edge [pre] (t2);
	\node [place] (p2) [above right of=t4] {}
	edge [pre] (t3)
	edge [pre] (t4);

	\node [transition] (t5) [right of=p1,label=center:$\tau$] {}
	edge [pre] (p1);
	\node [transition] (t6) [right of=p2,label=center:$\tau$] {}
	edge [pre] (p2);

	\node (m1) [right of=start] {};
	\node (m2) [right of=m1] {};
	\node [place] (end) [right of=m2, label=below:$p_o$] {}
	edge [pre] (t5)
	edge [pre] (t6);

	\node at (0,2.85) {$W_4^{\name}$:};
\end{tikzpicture}}
\end{minipage}
\end{center}
\caption{A workflow net $W_4^{\name}$ with $\C(W_4^{\name}) = 1 = \C(\protect\overleftarrow{W_4^{\name}})$.}
\label{fig:depth-comp}
\end{figure}

\propitemf{\propseven}{\yes}
Figure~\ref{fig:depth-perm} shows two workflow nets, $W_5^{\name}$ and $W_6^{\name}$, that are permutations of each other, but $\C(W_5^{\name}) = 2 \neq 1 = \C(W_6^{\name})$.
\begin{figure}[ht]
\begin{center}
\begin{minipage}{0.55\textwidth}
\centering
\scalebox{\scalefactor}{
\begin{tikzpicture}[node distance = 1.5cm,>=stealth',bend angle=0,auto]
	\node [place,tokens=1] (start) [label=below:$p_i$] {};
	\node [transition] (t1) [right of=start,label=center:$\tau$] {}
	edge [pre] (start);
	\node [place] (p1) [right of=t1] {}
	edge [pre] (t1);
	\node [transition] (t3) [above right of=p1,label=center:$b$] {}
	edge [pre] (p1);
	\node [transition] (t4) [below right of=p1,label=center:$c$] {}
	edge [pre] (p1);
	\node [place] (p2) [below right of=t3] {}
	edge [pre] (t3)
	edge [pre] (t4);
	\node [transition] (t2) [above of=t3,label=center:$a$] {}
	edge [pre] (start);
	\node [transition] (t5) [right of=p2,label=center:$\tau$] {}
	edge [pre] (p2);
	\node [place] (p3) [right of=t5,label=below:$p_o$] {}
	edge [pre] (t2)
	edge [pre] (t5);
	\node at (0,1.15) {$W_5^{\name}$:};
\end{tikzpicture}}
\end{minipage}
\ \\
\ \\
\ \\
\begin{minipage}{0.55\textwidth}
\centering
\scalebox{\scalefactor}{
\begin{tikzpicture}[node distance = 1.5cm,>=stealth',bend angle=0,auto]
	\node [place,tokens=1] (start) [label=below:$p_i$] {};
	\node [transition] (t1) [right of=start,label=center:$\tau$] {}
	edge [pre] (start);
	\node [place] (p1) [right of=t1] {}
	edge [pre] (t1);
	\node [transition] (t3) [above right of=p1,label=center:$b$] {}
	edge [pre] (p1);
	\node [transition] (t4) [below right of=p1,label=center:$c$] {}
	edge [pre] (p1);
	\node [place] (p2) [below right of=t3] {}
	edge [pre] (t3)
	edge [pre] (t4);
	\node [transition] (t2) [above of=t3,label=center:$a$] {}
	edge [pre] (p1)
	edge [post] (p2);
	\node [transition] (t5) [right of=p2,label=center:$\tau$] {}
	edge [pre] (p2);
	\node [place] (p3) [right of=t5,label=below:$p_o$] {}
	%edge [pre] (t2)
	edge [pre] (t5);
	\node at (0,1.15) {$W_6^{\name}$:};
\end{tikzpicture}}
\end{minipage}
\end{center}
\caption{Two workflow nets with $W_5^{\name} \in Perm(W_6^{\name})$ and complexity scores $\C(W_5^{\name}) = 2 \neq 1 = \C(W_6^{\name})$.}
\label{fig:depth-perm}
\end{figure}

\propitemf{\propeight}{\yes}
A relabelling of the transitions does not affect $\C$.

\propitemf{\propnine}{\yes} 
Consider the workflow net $\netminwf$ of Figure~\ref{fig:netminwf}.
For all operators except $\seqop$, we get:
\begin{itemize}
\item $\C(\netminwf \parop \netminwf) = 1 > 0 = 0 + 0 = \C(\netminwf) + \C(\netminwf)$,
\item $\C(\netminwf \choiceop \netminwf) = 1 > 0 = 0 + 0 = \C(\netminwf) + \C(\netminwf)$,
\item $\C(\netminwf \loopop \netminwf) = 1 > 0 = 0 + 0 = \C(\netminwf) + \C(\netminwf)$.
\end{itemize}
Next, we consider the operator $\seqop$.
For each $n\in\mathbb{N}$ there is a workflow net $N := W_{sub,n}^{\depthname}$ with $\max_v\{\lambda_{N}(v)\} = 1$, $\max_v\{\lambda_{\overleftarrow{N}}(v)\} = n$ and, for all $v \in P \cup T$, $\lambda_{N}(v) = 1 \Leftrightarrow \lambda_{\overleftarrow{N}}(v) =n$.
Figure \ref{fig:depth:NSUB} shows a possible construction for $n=3$.
For any $n>2$, we get that $\C(N) = \C(\overleftarrow{N})=1$ and $\C (\overleftarrow{N} \seqop N) = n > 2 = \C(N) +  \C(\overleftarrow{N})$.
\begin{figure}[ht]
\begin{center}
\centering
\scalebox{\scalefactor}{
\begin{tikzpicture}[node distance = 1.5cm,>=stealth',bend angle=0,auto]
	\node [place,tokens=1] (start) [label=below:$p_i$] {};
	\node [transition] (t3) [right of=start,label=center:$c$] {}
	edge [pre] (start);
	\node [transition] (t4) [below of=t3,label=center:$d$] {}
	edge [pre] (start);
	\node [transition] (t2) [above of=t3,label=center:$b$] {}
	edge [pre] (start);
	\node [transition] (t1) [above of=t2,label=center:$a$] {}
	edge [pre] (start);

	\node [place] (p1) [right of=t3] {}
	edge [pre] (t3)
	edge [pre] (t4);
	\node [transition] (t5) [right of=p1,label=center:$\tau$] {}
	edge [pre] (p1);
	\node [place] (p2) [right of=t5] {}
	edge [pre] (t5)
	edge [pre] (t2);
	\node [transition] (t6) [right of=p2,label=center:$\tau$] {}
	edge [pre] (p2);
	\node [place] (end) [right of=t6, label=below:$p_o$] {}
	edge [pre] (t1)
	edge [pre] (t6);

	\node at (0,2.85) {$W_{sub,n}^{\depthname}$:};
\end{tikzpicture}}
\end{center}
\caption{A workflow net $W_{sub,n}^{\depthname}$ with $\C(W_{sub,n}^{\depthname}) = 1 = \C(\protect\overleftarrow{W_{sub,n}^{\depthname}})$,
but where $\C(\protect\overleftarrow{W_{sub,n}^{\depthname}} \seqop W_{sub,n}^{\depthname}))=3$.}
\label{fig:depth:NSUB}
\end{figure}

\propitemf{\propdef}{\yes}
$\C$ is well-defined and deterministic, since $\mathcal{P}_{p_i,v}$ is finite for each node $v \in P \cup T$ and because each node lies on a path from $p_i$ to $p_o$ in a workflow net.
By definition of $\lambda_{W}(v)$, $\C$ returns only non-negative values.

\propitemf{\propmin}{\yes} 
The smallest possible workflow net $\netminwf$ of Figure~\ref{fig:netminwf} has no splits and joins. 
Thus, $\C(\netminwf) = 0$, which is the minimum of this complexity measure.

\propitemf{\propinf}{\yes} 
Let $n \in \mathbb{N}$ with $n \geq 3$. 
We can easily construct a workflow net with complexity $n$ by creating $n$ subsequent split connectors followed by $n$ subsequent join connectors on a path from $p_i$ to $p_o$.
Therefore, we get $|\{c \in \mathbb{R} \mid \exists M \in \mathcal{M}: \C(M) = c\}| \geq |\{n \in \mathbb{N} \mid n \geq 3\}| = \infty$.

\propitemf{\propnotsup}{\yes} 
For the workflow nets $W_1^{\depthname}$ of Figure~\ref{fig:depth-examples}, we can easily see that $\C(W_1^{\depthname} \seqop W_1^{\depthname}) = 1 < 2 = \C(W_1^{\depthname}) + \C(W_1^{\depthname})$.
For all other operations, we use our analyses for the property {\propfive} and deduce:
\begin{itemize}
\item $\C(M_1 \parop M_2) = 1 + \max\{\C(M_1), \C(M_2)\}$ \\
\phantom{$\C(M_1 \parop M_2)$ }$< \C(M_1) + \C(M_2)$,
\item $\C(M_1 \choiceop M_2) = 1 + \max\{\C(M_1), \C(M_2)\}$ \\
\phantom{$\C(M_1 \choiceop M_2)$ }$< \C(M_1) + \C(M_2)$,
\item $\C(M_1 \loopop M_2) = \max\{\C(M_1), \C(M_2) + 1\}$ \\
\phantom{$\C(M_1 \loopop M_2)$ }$< \C(M_1) + \C(M_2)$
\end{itemize}
for workflow nets $M_1, M_2 \in \mathcal{M}$ with $\C(M_1), \C(M_2) \geq 2$.

\propitemf{\propadd}{\no}
The counter-examples for property {\propnine} also show that $\C$ is not additive.
\end{description}

\def\name{\diametername}
\def\C{\diameter}
\subsubsection{Diameter}
The diameter~\cite[p.119]{Men08} of a workflow net $W = (P,T,F,\ell,p_i,p_o)$ is the maximal length of a way from $p_i$ to $p_o$ where no arc is used more than once.
Note that, in contrast to paths, ways are allowed to revisit nodes.
The length of a way $w = v_1, \dots, v_k$ it the number of visited nodes, where revisited nodes can be counted more than once.
For $w = v_1, \dots, v_k$, we have $|w| = k$.
Let $\mathcal{W}_{p_i,p_o}$ be the set of all ways as defined above. 
We define:
\begin{equation}
\label{eq:diam}
\diameter(W) = \max\{|p| \mid p \in \mathcal{W}_{p_i, p_o}\}.
\end{equation}
Figure~\ref{fig:diam-examples} shows three example nets and their respective complexity scores.
\begin{figure}[ht]
\begin{center}
\begin{minipage}{0.275\textwidth}
\centering
\scalebox{\scalefactor}{
\begin{tikzpicture}[node distance = 1.5cm,>=stealth',bend angle=0,auto]
	\node [place,tokens=1] (start) [label=below:$p_i$] {};
	\node [transition] (t1) [above right of=start,label=center:$a$] {}
	edge [pre] (start);
	\node [transition] (t2) [below right of=start,label=center:$b$] {}
	edge [pre] (start);
	\node [place] (p1) [below right of=t1,label=below:$p_o$] {}
	edge [pre] (t1)
	edge [pre] (t2);
	\node at (0,1.15) {$W_1^{\name}$:};
\end{tikzpicture}}
\end{minipage}
\begin{minipage}{0.275\textwidth}
\centering
\scalebox{\scalefactor}{
\begin{tikzpicture}[node distance = 1.5cm,>=stealth',bend angle=0,auto]
	\node [place,tokens=1] (start) [label=below:$p_i$] {};
	\node [transition] (t1) [right of=start,label=center:$a$] {}
	edge [pre] (start);
	\node [place] (p1) [right of=t1,label=below:$p_o$] {}
	edge [pre] (t1);
	\node at (0,1.15) {$W_2^{\name}$:};
\end{tikzpicture}}
\end{minipage}
\begin{minipage}{0.425\textwidth}
\centering
\scalebox{\scalefactor}{
\begin{tikzpicture}[node distance = 1.5cm,>=stealth',bend angle=0,auto]
	\node [place,tokens=1] (start) [label=below:$p_i$] {};
	\node [transition] (t1) [right of=start,label=center:$a$] {}
	edge [pre] (start);
	\node [place] (p1) [right of=t1] {}
	edge [pre] (t1);
	\node [transition] (t2) [right of=p1,label=center:$\tau$] {}
	edge [pre] (p1);
	\node [place] (p2) [right of=t2,label=below:$p_o$] {}
	edge [pre] (t2);
	\node at (0,1.15) {$W_3^{\name}$:};
\end{tikzpicture}}
\end{minipage}
\end{center}
\caption{Three workflow nets, $W_1^{\name}$, $W_2^{\name}$, $W_3^{\name}$, with $\C(W_3^{\name}) = 5$ and $\C(W_1^{\name}) = \C(W_2^{\name}) = 3$.}
\label{fig:diam-examples}
\end{figure}
% $L(W_2^{diam}) = L(W_3^{diam}) = \{a\}$, $L(W_1^{diam}) = \{a, b\}$

\begin{description}
\propitemf{\propone}{\yes} 
For the two nets $W_2^{\name}$ and $W_3^{\name}$ of Figure~\ref{fig:diam-examples}, we get the complexity scores $\C(W_2^{\name}) = 3 \neq 5 = \C(W_3^{\name})$.

\propitemf{\proptwo}{\no} 
There are infinitely many workflow nets with complexity $3$:
Take the workflow net $W_1^{\name}$ of Figure~\ref{fig:diam-examples}. 
If we add more transitions $t$ with $\pre{t} = \{p_i\}$ and $\post{t} = \{p_o\}$ to this net, we don't affect the diameter of the net.
Therefore, all such workflow nets get the same complexity score $3$.

\propitemf{\propthree}{\yes} 
The workflow nets $W_1^{\name}$ and $W_2^{\name}$ of Figure~\ref{fig:diam-examples} are different in structure, but both have the same diameter and therefore get the same complexity score $\C(W_1^{\name}) = 3 = \C(W_2^{\name})$. 

\propitemf{\propfour}{\yes} 
Take the workflow nets $W_2^{\name}$ and $W_3^{\name}$ of Figure~\ref{fig:diam-examples}. 
Their languages are $L(W_2^{\name}) = \{\varepsilon, a\} = L(W_3^{\name})$, but their complexity scores are $\C(W_2^{\name}) = 3 \neq 5 = \C(W_3^{\name})$.

\propitemf{\propfive}{\yes}
We use the following Theorem:
\begin{theorem}
\label{thm:diam-mon}
Let $M_1, \dots, M_n \in \mathcal{M}$ be workflow nets. Then,
\begin{itemize}
\item $\C(\seqop(M_1, \dots, M_n)) = n - 1 + \C(M_1) + \dots + \C(M_n)$,
\item $\C(\parop(M_1, \dots, M_n)) = 4 + \max\{\C(M_1), \dots \C(M_n)\}$,
\item $\C(\choiceop(M_1, \dots, M_n) = 4 + \max\{\C(M_1), \dots, \C(M_n)\}$,
\item $\C(\loopop(M_1, \dots, M_n) = 8 + \C(M_1)$.
\end{itemize}
\end{theorem}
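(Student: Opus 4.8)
The plan is to prove each of the four identities by analysing the structure of a longest \emph{way} (a walk from the global input place to the global output place that repeats no arc) and showing that every such way decomposes into ways through the component nets $M_1,\dots,M_n$ joined by the freshly introduced nodes. The observation I would isolate first is a gluing lemma: whenever a body $M_j$ is embedded by one of the operations of Definition~\ref{def:operations}, the only new arcs incident to $p_i^j$ point into $p_i^j$ and the only new arcs incident to $p_o^j$ point out of $p_o^j$. Since in a workflow net $p_i^j$ has no internal incoming arcs and $p_o^j$ no internal outgoing arcs, any way that visits $M_j$ must enter at $p_i^j$ and leave at $p_o^j$, and its restriction to $M_j$ is exactly a $p_i^j$-to-$p_o^j$ way inside $M_j$. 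That restriction therefore contributes at most $\diameter(M_j)=\C(M_j)$ nodes, with equality attained by splicing in the diameter-realising way of $M_j$ (which uses only internal arcs, disjoint from the rest). This reduces every case to counting the fixed new nodes on the connecting spine.

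For $\seqop$ I would argue that every way from $p_i^1$ to $p_o^n$ is forced through the $n-1$ bridge transitions $t_1^*,\dots,t_{n-1}^*$ in order: each $t_j^*$ has a single incoming and a single outgoing arc, so it is visited exactly once and cannot be bypassed, and the bridges are directed forward so the bodies are traversed as a line. The longest way thus concatenates the diameter ways of $M_1,\dots,M_n$ with these $n-1$ bridge transitions, giving $\C(M_1)+\dots+\C(M_n)+(n-1)$. For $\parop$ and $\choiceop$ the bottleneck argument is even tighter: the fresh hub nodes ($t_i^*$ and $t_o^*$ for $\parop$; $p_i^*$ and $p_o^*$ for $\choiceop$) each admit only one usable traversal, so any way passes through exactly one body $M_j$, flanked by two new nodes on each side; maximising over $j$ yields $4+\max_j\C(M_j)$.

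The hard part will be the iteration operator, where I would rely on the structure in Figure~\ref{fig:iteration}: $M_1$ is the unique forward ("do") body on the path $p^*\to t_1^*\to\cdots\to s_1^*\to q^*$, the bodies $M_2,\dots,M_n$ are "redo" bodies traversed backwards from $q^*$ to $p^*$, and the only exit to $p_o^*$ is $q^*\to s^*\to p_o^*$. The combinatorial point I must establish is that a way can gain nothing from the redo bodies: the place $q^*$ has the single incoming arc $(s_1^*,q^*)$, so it is visited at most once, and dually $p^*$ has the single outgoing arc $(p^*,t_1^*)$, so once a way leaves $p^*$ it can never leave it again; hence any return through a redo body is a dead end from which $p_o^*$ becomes unreachable. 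Consequently every completed way enters, runs $M_1$ forward once, and exits, and the diameter is $\diameter(M_1)$ plus the eight fixed spine nodes $p_i^*,t^*,p^*,t_1^*$ and $s_1^*,q^*,s^*,p_o^*$, i.e. $8+\C(M_1)$. I would close by emphasising that this trap argument at $p^*$ and $q^*$ is precisely what makes the iteration diameter independent of the redo bodies and is the only genuinely non-routine step; the node counts for all four operators then follow directly from Definition~\ref{def:operations}.
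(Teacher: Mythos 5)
Your proof is correct and follows essentially the same route as the paper's: both identify the longest way of each composed net as the component diameter way(s) spliced into the fixed spine of freshly introduced nodes, and both adopt the same reading of $\loopop$ (that of Figure~\ref{fig:iteration}, with $M_1$ the unique forward body), under which any excursion into a redo body strands the way at $p^*$ with no unused outgoing arc. The only difference is one of rigor: your gluing lemma and the trap argument at $p^*$ and $q^*$ make explicit the upper-bound direction (that no way can exceed the exhibited one), which the paper's proof asserts without detail.
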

\begin{proof}
We show these claims for each operation individually.
Let $w_j$ be the longest way in $M_j$ for any $j \in \{1, \dots, n\}$.
\begin{itemize}
\item Let $t_1^*, \dots, t_{n-1}^*$ be the transitions introduced by the $\seqop$ operation.
The longest way in $M = \hspace*{1mm}\seqop\hspace*{-1mm}(M_1, \dots, M_n)$ then is $w_1, t_1^*, \dots, t_{n-1}^* w_n$. 
Its length is $|w_1| + \dots + |w_n| + n - 1$.
\item Let $p_i^*, t_i^*, t_o^*, p_o^*$ be the places and transitions introduced by the $\parop$ operation.
The longest way in $M = \parop(M_1, \dots M_n)$ then is $p_i^*, t_i^*, w, t_o^*, p_o^*$, where $w$ is the longest of the paths $w_1, \dots, w_n$.
The length of this new path is $\max\{|w_1|, \dots, |w_n|\} + 4$.
\item Let $p_i^*, t_1^*, \dots, t_n^*, s_1^*, \dots, s_n^*, p_o^*$ be the places and transitions introduced by the $\choiceop$ operation.
The longest way in $M = \choiceop(M_1, \dots, M_n)$ then is $p_i^*, t_m^*, w_m, s_m^*, p_o^*$, where $m$ is the index of the longest of the paths $w_1, \dots, w_n$.
Its length is $\max\{|w_1|, \dots, |w_n|\} + 4$.
\item Let $p_i^*, t^*, p^*, t_1^*, \dots, t_n^*, s_1^*, \dots, s_n^*, q^*, s^*, p_o^*$ be the places and transitions introduced by the $\loopop$ operation.
The longest way in $\loopop(M_1, \dots, M_n)$ is the way $p_i^*, t^*, p^*, t_1^*, w_1, s_1^*, q^*, s^*, p_o^*$, since we can't visit any nodes of $M_2, \dots, M_n$ without needing to reuse arcs of $M_1$. 
The length of this path is $|w_1| + 8$.
\end{itemize}
\qed
\end{proof}
With Theorem~\ref{thm:diam-mon}, we get for any $M_1, \dots, M_n \in \mathcal{M}$ and $M_i \in \{M_1, \dots, M_n\}$:
\begin{itemize}
\item $\C(\seqop(M_1, \dots, M_n)) = 1 + \C(M_1) + \dots + \C(M_n) > \C(M_i)$,
\item $\C(\parop(M_1, \dots, M_n)) = 4 + \max\{\C(M_1), \dots \C(M_n)\} > \C(M_i)$,
\item $\C(\choiceop(M_1, \dots, M_n)) = 4 + \max\{C(M_1, \dots, \C(M_n)\} > \C(M_i)$.
\end{itemize}
For the operator $\loopop$, on the other hand, monotonicity not necessarily applies, as we can find models $M_1, M_2$ with $\C(M_1) + 8 < \C(M_2)$, contradicting monotonicity for the $\loopop$ operator.

\propitemf{\propsix}{\no}
Let $M_1, M_2, M_3 \in \mathcal{M}$ be workflow nets with complexity scores $\C(M_1) = \C(M_2)$.
Theorem~\ref{thm:diam-mon} gives:
\begin{itemize}
\item $\C(M_1 \seqop M_3) = 1 + \C(M_1) + \C(M_3)$\\
\phantom{$\C(M_1 \seqop M_3)$ }$= 1 + \C(M_2) + \C(M_3) = \C(M_2 \seqop M_3)$,
\item $\C(M_1 \parop M_3) = 4 + \max\{\C(M_1), \C(M_3)\}$\\
\phantom{$\C(M_1 \parop M_3)$ }$= 4 + \max\{\C(M_2),\C(M_3)\} = \C(M_2 \parop M_3)$,
\item $\C(M_1 \choiceop M_3) = 4 + \max\{C(M_1, \C(M_3)\}$\\
\phantom{$\C(M_1 \choiceop M_3)$ }$= 4 + \max\{\C(M_2),\C(M_3)\} = \C(M_2 \choiceop M_3)$,
\item $\C(M_1 \loopop M_3) = 8 + \C(M_1)$\\
\phantom{$\C(M_1 \loopop M_3)$ }$= 8 + \C(M_2) = \C(M_2 \loopop M_3)$.
\end{itemize}

\propitemf{\propseven}{\yes}
Figure \ref{fig:diam:perm} shows two workflow nets, $W_4^{\name}$ and $W_5^{\name}$, that are permutations of each other, but $\C(W_4^{\name}) = 7 \neq 9 = \C(W_5^{\name})$.
\begin{figure}[ht]
\begin{center}
\begin{minipage}{0.55\textwidth}
\centering
\scalebox{\scalefactor}{
\begin{tikzpicture}[node distance = 1.5cm,>=stealth',bend angle=0,auto]
	\node [place,tokens=1] (start) [label=below:$p_i$] {};
	\node [transition] (t1) [right of=start,label=center:$\tau$] {}
	edge [pre] (start);
	\node [place] (p1) [above right of=t1] {}
	edge [pre] (t1);
	\node [place] (p2) [below right of=t1] {}
	edge [pre] (t1);

	\node [transition] (t3) [right of=p1,label=center:$b$] {}
	edge [pre] (p1);
	\node [transition] (t4) [right of=p2,label=center:$c$] {}
	edge [pre] (p2);
	
	\node [place] (p3) [right of=t3] {}
	edge [pre] (t3);
	\node [place] (p4) [right of=t4] {}
	edge [pre] (t4);
	\node [transition] (t5) [below right of=p3,label=center:$\tau$] {}
	edge [pre] (p3)
	edge [pre] (p4);
	\node [place] (p5) [right of=t5,label=below:$p_o$] {}
	edge [pre] (t5);
	\node at (0,1.15) {$W_4^{\name}$:};
\end{tikzpicture}}
\end{minipage}
\ \\
\ \\
\ \\
\begin{minipage}{0.55\textwidth}
\centering
\scalebox{\scalefactor}{
\begin{tikzpicture}[node distance = 1.5cm,>=stealth',bend angle=0,auto]
	\node [place,tokens=1] (start) [label=below:$p_i$] {};
	\node [transition] (t1) [right of=start,label=center:$\tau$] {}
	edge [pre] (start);
	\node [place] (p1) [above right of=t1] {};
	%edge [pre] (t1);
	\node [place] (p2) [below right of=t1] {}
	edge [pre] (t1);

	\node [transition] (t3) [right of=p1,label=center:$b$] {}
	edge [pre] (p1);
	\node [transition] (t4) [right of=p2,label=center:$c$] {}
	edge [pre] (p2)
	edge [post] (p1);
	
	\node [place] (p3) [right of=t3] {}
	edge [pre] (t3);
	\node [place] (p4) [right of=t4] {}
	edge [pre] (t4);
	\node [transition] (t5) [below right of=p3,label=center:$\tau$] {}
	edge [pre] (p3)
	edge [pre] (p4);
	\node [place] (p5) [right of=t5,label=below:$p_o$] {}
	edge [pre] (t5);
	\node at (0,1.15) {$W_5^{\name}$:};
\end{tikzpicture}}
\end{minipage}
\end{center}
\caption{Two workflow nets with $W_4^{\name} \in Perm(W_5^{\name})$ and complexity scores $\C(W_4^{\name}) = 7 \neq 9 = \C(W_5^{\name})$.}
\label{fig:diam:perm}
\end{figure}

\propitemf{\propeight}{\yes}
A relabelling of the transitions does not affect $\C$.

\propitemf{\propnine}{\yes} 
Take the workflow net $\netminwf$ of Figure~\ref{fig:netminwf}.
We get:
\begin{itemize}
\item $\C(M_1 \seqop M_2) > \C(M_1) + \C(M_2)$\\ 
for all $M_1, M_2 \in \mathcal{M}$ due to Theorem~\ref{thm:diam-mon},
\item $\C(\netminwf \parop \netminwf) = 7 > 6 = 3 + 3 = \C(\netminwf) + \C(\netminwf)$,
\item $\C(\netminwf \choiceop \netminwf) = 7 > 6 = 3 + 3 = \C(\netminwf) + \C(\netminwf)$,
\item $\C(\netminwf \loopop \netminwf) = 11 > 6 = 3 + 3 = \C(\netminwf) + \C(\netminwf)$
\end{itemize}

\propitemf{\propdef}{\yes} 
Since the output place must be reachable from the input place, there is at least one path in each workflow net, so the maximum always exists.
This maximum is non-negative, since the length of a path can't be negative.

\propitemf{\propmin}{\yes} 
The minimum possible diameter is that of the smallest possible workflow net $\netminwf$. 
This workflow net has diameter $3$, using two arcs.
It is not possible to use less than two arcs for a path from $p_i$ to $p_o$ in a workflow net, as this would mean that $(p_i, p_o) \in F$, contradicting $F \subseteq (P \times T) \cup (T \times P)$.

\propitemf{\propinf}{\yes} 
Let $n \in \mathbb{N}$ with $n \geq 3$ and $n$ being an odd number.
Figure~\ref{fig:diam-inf} shows how to construct a workflow net of complexity $n$, so we immediately get $|\{c \in \mathbb{R} \mid \exists M \in \mathcal{M}: \C(M) = c\}| \geq |\{n \in \mathbb{N} \mid n \geq 3 \land \frac{n}{2} \not\in \mathbb{N}\}| = \infty$.
\begin{figure}[ht]
\begin{center}
\scalebox{\scalefactor}{
\begin{tikzpicture}[node distance = 1.5cm,>=stealth',bend angle=0,auto]
	\node [place,tokens=1] (start) [label=below:$p_i$] {};
	\node [transition] (t1) [right of=start,label=below:$t_1$] {}
	edge [pre] (start);
	\node [place] (p1) [right of=t1,label=below:$p_1$] {}
	edge [pre] (t1);
	\node [transition] (t2) [right of=p1,label=below:$t_2$] {}
	edge [pre] (p1);
	\node (dots) [right of=t2] {$\dots$}
	edge [pre] (t2);
	\node [place] (p2) [right of=dots,label=below:$p_{k-1}$] {}
	edge [pre] (dots);
	\node [transition] (t4) [right of=p2,label=below:$t_k$] {}
	edge [pre] (p2);
	\node [place] (end) [right of=t4,label=below:$p_o$] {}
	edge [pre] (t4);
	\node at (0,1) {$W_{inf,k}^{\name}$:};
\end{tikzpicture}}
\end{center}
\caption{A workflow net $W_{inf,k}^{\name}$ with $k$ transitions and $k+1$ places. Its complexity is $\C(W_{inf,k}^{\name}) = 2k + 1$.}
\label{fig:diam-inf}
\end{figure}

\propitemf{\propnotsup}{\no} 
Theorem~\ref{thm:diam-mon} directly implies that $\C$ is superadditive with respect to the operation $\seqop$, since for any workflow nets $M_1, M_2 \in \mathcal{M}$, we get $\C(M_1 \seqop M_2) = 1 + \C(M_1) + \C(M_2) > \C(M_1) + \C(M_2)$.
For the other operations, we find counter-examples for superadditivity:
Consider the net $W_{inf,4}^{\name}$ of Figure~\ref{fig:diam-inf}.
We get:
\begin{itemize}
\item $\C(W_{inf,4}^{\name} \parop W_{inf,4}^{\name}) = 13 < 18 = \C(W_{inf,4}^{\name}) + \C(W_{inf,4}^{\name})$,
\item $\C(W_{inf,4}^{\name} \choiceop W_{inf,4}^{\name}) = 13 < 18 = \C(W_{inf,4}^{\name}) + \C(W_{inf,4}^{\name})$,
\item $\C(W_{inf,4}^{\name} \loopop W_{inf,4}^{\name}) = 17 < 18 = \C(W_{inf,4}^{\name}) + \C(W_{inf,4}^{\name})$.
\end{itemize}

\propitemf{\propadd}{\no} 
The analysis of the property {\propnine} also show that $\C$ is not additive with respect to any of the operations.
\end{description}

\def\name{\cyclicityname}
\def\C{\cyclicity}
\subsubsection{Cyclicity}
Cycles in a workflow net are often perceived as complex structures.
Therefore, a simple process model aims for as many nodes outside of a cycle as possible~\cite[pp.127-128]{Men08}.
For a workflow net $W$, cyclicity is the number of nodes that lie on a cycle in $W$ divided by the total number of nodes in $W$.
\begin{equation}
\label{eq:cycle}
\C(W) := \frac{|\{x \in P \cup T \mid x \text{ lies on a cycle in } W\}|}{|P| + |T|}
\end{equation}
Figure~\ref{fig:cyc-examples} shows three example nets and their respective complexity scores.
\begin{figure}[ht]
\begin{center}
\begin{minipage}{0.4\textwidth}
\centering
\scalebox{\scalefactor}{
\begin{tikzpicture}[node distance = 1.5cm,>=stealth',bend angle=0,auto]
	\node [place,tokens=1] (start) [label=below:$p_i$] {};
	\node [transition] (t1) [right of=start,label=center:$a$] {}
	edge [pre] (start);
	\node [place] (p1) [right of=t1,label=below:$p_o$] {}
	edge [pre] (t1);
	\node at (0,1.15) {$W_1^{\name}$:};
\end{tikzpicture}}
\end{minipage}
\begin{minipage}{0.4\textwidth}
\centering
\scalebox{\scalefactor}{
\begin{tikzpicture}[node distance = 1.5cm,>=stealth',bend angle=0,auto]
	\node [place,tokens=1] (start) [label=below:$p_i$] {};
	\node [transition] (t1) [right of=start,label=center:$a$] {}
	edge [pre] (start);
	\node [place] (p1) [right of=t1] {}
	edge [pre] (t1);
	\node [transition] (t2) [right of=p1,label=center:$b$] {}
	edge [pre] (p1);
	\node [place] (p2) [right of=t2,label=below:$p_o$] {}
	edge [pre] (t2);
	\node at (0,1.15) {$W_2^{\name}$:};
\end{tikzpicture}}
\end{minipage}
\begin{minipage}{0.8\textwidth}
\centering
\scalebox{\scalefactor}{
\begin{tikzpicture}[node distance = 1.5cm,>=stealth',bend angle=0,auto]
	\node [place,tokens=1] (start) [label=below:$p_i$] {};
	\node [transition] (t1) [right of=start,label=center:$a$] {}
	edge [pre] (start);
	\node [place] (p1) [right of=t1] {}
	edge [pre] (t1);
	\node [transition] (t2) [right of=p1,label=center:$b$] {}
	edge [pre] (p1);
	\node [place] (p2) [right of=t2,label=below:$p_o$] {}
	edge [pre] (t2);
	\node [transition] (t3) [above of=p1,label=center:$\tau$] {}
	edge [pre,bend left=30] (p1)
	edge [post,bend right=30] (p1);
	\node at (0,1.15) {$W_3^{\name}$:};
\end{tikzpicture}}
\end{minipage}
\end{center}
\caption{Three workflow nets, $W_1^{\name}$, $W_2^{\name}$, $W_3^{\name}$, with $\C(W_3^{\name}) = \frac{1}{3}$ and $\C(W_1^{\name}) = \C(W_2^{\name}) = 0$.}
\label{fig:cyc-examples}
\end{figure}

\begin{description}
\propitemf{\propone}{\yes} 
For the two nets $W_2^{\name}$ and $W_3^{\name}$ of Figure~\ref{fig:cyc-examples}, we get the complexity scores $\C(W_2^{\name}) = 0 \neq \frac{1}{3} = \C(W_3^{\name})$.

\propitemf{\proptwo}{\no} 
Take the workflow net $W_1^{\name}$ of Figure~\ref{fig:cyc-examples}. 
If we add more transitions $t$ with $\pre{t} = \{p_i\}$ and $\post{t} = \{p_o\}$ to this net, we don't change its complexity.
Therefore, we can create infinitely many workflow nets with complexity $0$.

\propitemf{\propthree}{\yes} 
The workflow nets $W_1^{\name}$ and $W_2^{\name}$ of Figure~\ref{fig:cyc-examples} are different in structure, but both get the same complexity score $\C(W_1^{\name}) = 3 = \C(W_2^{\name})$.

\propitemf{\propfour}{\yes} 
Take the workflow nets $W_2^{\name}$ and $W_3^{\name}$ of Figure~\ref{fig:cfc-examples}.
Their languages are $L(W_2^{cyc}) = \{\varepsilon, a, ab\} = L(W_3^{cyc})$, but their complexity scores are $\C(W_2^{\name}) = 0 \neq \frac{1}{3} = \C(W_3^{\name})$.

\propitemf{\propfive}{\no}
Consider the workflow nets $W_1^{\name}$ and $W_3^{\name}$ of Figure~\ref{fig:cyc-examples}. 
For the operations $\seqop, \choiceop$ and $\loopop$, we get:
\begin{itemize}
\item $\C(W_3^{\name} \seqop W_1^{\name}) = \frac{2}{10} = 0.2 < 0.\overline{3} = \frac{1}{3} = \C(W_3^{\name})$,
\item $\C(W_3^{\name} \parop W_1^{\name}) = \frac{2}{13} \approx 0.1538 < 0.\overline{3} = \frac{1}{3} = \C(W_3^{\name})$,
\item $\C(W_3^{\name} \choiceop W_1^{\name}) = \frac{2}{15} = 0.1\overline{3} < 0.\overline{3} = \frac{1}{3} = \C(W_3^{\name})$.
\end{itemize}
For operation $\loopop$, the situation is a bit more involved.
In general, the property does not hold for iteration.
Consider a workflow net $M_n \in \mathcal{M}$ with $n$ nodes and a maximal portion
of nodes lying on a cycle.
This portion is $\frac{n-2}{n}$, because the start and end place cannot lie on a cycle by definition.
Now fix another workflow net $M \in \mathcal{M}$ with $k$ nodes.
We show $\C(M_n) > \C(M_n \loopop M)$ if $n$ is large enough.
In $M_n \loopop M$ exactly $4$ nodes that do not lie on a cycle.
That means:
\begin{eqnarray*}
&& \C(M_n) > \C(M_n \loopop M)\\
&\Leftrightarrow& \frac{n-2}{n} > \frac{n + k + 6}{n + k + 10}\\
&\Leftrightarrow& (n-2)(n + k +10) > n(n + k +6)\\
&\Leftrightarrow& n^2 + kn + 10n -2n - 2k - 20 > n^2 + kn + 6n\\
&\Leftrightarrow& 8n - 2k - 20 > 6n\\
&\Leftrightarrow& 2n > 20 + 2k\\
&\Leftrightarrow& n > 10 + k.
\end{eqnarray*}
Note that such a net $M_n$ is not a meaningful workflow net:
Let $t$ be a transition with $p_i \in \pre{t}$.
Since $t$ lies on a cycle, there is another place $p\in\pre{t}$ which is unmarked.
Therefore, $t$ is dead and there is no firing sequence from the start to the end place.

If we only consider sound workflow nets, then the property holds for iteration.
It is easy to see, that $\C(M) \le \frac{n-4}{n}$ if $M$ is a sound workflow net with $n$ nodes.
We deduce for arbitrary additional workflow nets $M_1', \dots, M_n' \in \mathcal{M}$ whose number of nodes sum to $k$:
\begin{eqnarray*}
&& \C(M) \le \C(\loopop(M, M_1', \dots, M_n'))\\
&\Leftarrow& \frac{n-4}{n} \le \frac{n + k + 6}{n + k + 10}\\
&\Leftrightarrow& (n-4)(n + k +10) > n(n + k +6)\\
&\Leftrightarrow& n^2 + kn + 10n -4n - 4k - 40 \le n^2 + kn + 6n\\
&\Leftrightarrow& 6n - 4k - 40 \le 6n\\
&\Leftrightarrow& 0 \le 40 + 4k
\end{eqnarray*}
The last inequation is obviously true.

\propitemf{\propsix}{\yes}
Take the workflow nets $W_1^{\name}, W_2^{\name}$ and $W_3^{\name}$ of Figure~\ref{fig:cyc-examples}.
We have $\C(W_1^{\name}) = 0 = \C(W_2^{\name})$, but:
\begin{itemize}
\item $\C(W_1^{\name} \seqop W_3^{\name}) = \frac{2}{10} = 0.2 \neq 0.1\overline{6} = \frac{2}{12} = \C(W_2^{\name} \seqop W_3^{\name})$,
\item $\C(W_1^{\name} \parop W_3^{\name}) = \frac{2}{13} \approx 0.1538 \neq 0.1\overline{3} = \frac{2}{15} = \C(W_2^{\name} \parop W_3^{\name})$,
\item $\C(W_1^{\name} \choiceop W_3^{\name}) = \frac{2}{15} = 0.1\overline{3} \neq 0.1176 \approx \frac{2}{17} = \C(W_2^{\name} \choiceop W_3^{\name})$,
\item $\C(W_1^{\name} \loopop W_3^{\name}) = \frac{15}{19} \approx 0.789 \neq 0.810 \approx \frac{17}{21} = \C(W_2^{\name} \loopop W_3^{\name})$.
\end{itemize}

\propitemf{\propseven}{\yes}
Figure~\ref{fig:cyc-perm} shows two workflow nets, $W_4^{\name}$ and $W_5^{\name}$, that are permutations of each other, but $\C(W_4^{\name}) = \frac{1}{3} \neq 0 = \C(W_5^{\name})$.
\begin{figure}[ht]
\begin{center}
\begin{minipage}{0.4\textwidth}
\centering
\scalebox{\scalefactor}{
\begin{tikzpicture}[node distance = 1.5cm,>=stealth',bend angle=0,auto]
	\node [place,tokens=1] (start) [label=below:$p_i$] {};
	\node [transition] (t1) [right of=start,label=center:$a$] {}
	edge [pre] (start);
	\node [place] (p1) [right of=t1] {}
	edge [pre] (t1);
	\node [transition] (t2) [right of=p1,label=center:$b$] {}
	edge [pre] (p1);
	\node [place] (p2) [right of=t2,label=below:$p_o$] {}
	edge [pre] (t2);
	\node [transition] (t3) [above of=p1,label=center:$\tau$] {}
	edge [pre,bend left=30] (p1)
	edge [post,bend right=30] (p1);
	\node at (0,1.15) {$W_4^{\name}$:};
\end{tikzpicture}}
\end{minipage}

\begin{minipage}{0.4\textwidth}
\centering
\scalebox{\scalefactor}{
\begin{tikzpicture}[node distance = 1.5cm,>=stealth',bend angle=0,auto]
	\node [place,tokens=1] (start) [label=below:$p_i$] {};
	\node [transition] (t1) [right of=start,label=center:$a$] {}
	edge [pre] (start);
	\node [place] (p1) [right of=t1] {}
	edge [pre] (t1);
	\node [transition] (t2) [right of=p1,label=center:$b$] {}
	edge [pre] (p1);
	\node [place] (p2) [right of=t2,label=below:$p_o$] {}
	edge [pre] (t2);
	\node [transition] (t3) [above of=p1,label=center:$\tau$] {}
	edge [pre] (p1)
	edge [post] (p2);
	\node at (0,1.15) {$W_5^{\name}$:};
\end{tikzpicture}}
\end{minipage}
\end{center}
\caption{Two workflow nets with $W_4^{\name} \in Perm(W_5^{\name})$ and complexity scores $\C(W_4^{\name}) = \frac{1}{3} \neq 0 = \C(W_5^{\name})$.}
\label{fig:cyc-perm}
\end{figure}

\propitemf{\propeight}{\yes}
A relabelling of the transitions does not affect $\C$.

\propitemf{\propnine}{\yes} 
This property does not hold for operations $\oplus \in \{\seqop, \parop, \choiceop\}$, as shown by the following Theorem:
\begin{theorem}
\label{thm:cyc-sub}
Let $M_1, M_2 \in \mathcal{M}$ be workflow nets and $\oplus \in \{\seqop, \parop, \choiceop\}$ one of the operations of Defintiion~\ref{def:operations}. 
Then, $\C(M_1 \oplus M_2) < \C(M_1) + \C(M_2)$.
\end{theorem}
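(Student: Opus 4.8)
The plan is to reduce the inequality to a purely combinatorial comparison of two fractions by first pinning down exactly which nodes lie on a cycle in the composite net. The central observation I would establish is a structural lemma: for $\oplus \in \{\seqop, \parop, \choiceop\}$, a node lies on a cycle in $M_1 \oplus M_2$ if and only if it already lies on a cycle in $M_1$ or in $M_2$. In words, these three operations neither destroy existing cycles nor create new ones, and none of the freshly introduced nodes ever lies on a cycle. Granting this lemma, if $c_1, c_2$ denote the numbers of cyclic nodes and $n_1, n_2$ the total node counts of $M_1, M_2$, then the composite has exactly $c_1 + c_2$ cyclic nodes and $n_1 + n_2 + k$ nodes in total, where $k \geq 1$ is the operation-dependent number of new nodes added in Definition~\ref{def:operations}.

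To prove the structural lemma I would argue per operation, inspecting the arcs added in each case. Existing cycles survive trivially, since every operation embeds $M_1$ and $M_2$ as subnets without deleting any arc. For the converse I would observe that at the block level each operation wires the two subnets and the new nodes together acyclically: $\seqop$ chains them as $M_1 \to t_1^* \to M_2$, while $\parop$ and $\choiceop$ produce a fork/join (respectively split/merge) skeleton $p_i^* \to \dots \to M_j \to \dots \to p_o^*$. Because each new arc points strictly forward in this skeleton and the only connections between the two subnets pass through the new nodes, no directed closed walk can use a new node or cross from one subnet into the other; hence every cycle stays confined to a single $M_j$.

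With the counts in hand, the inequality follows from an elementary fraction estimate. Writing $\C(M_1) + \C(M_2) = \frac{c_1}{n_1} + \frac{c_2}{n_2}$ and $\C(M_1 \oplus M_2) = \frac{c_1 + c_2}{n_1 + n_2 + k}$, I would use $n_i \leq n_1 + n_2$ (valid because every workflow net has $n_i \geq 3$ nodes) to obtain
\[\frac{c_1}{n_1} + \frac{c_2}{n_2} \geq \frac{c_1}{n_1 + n_2} + \frac{c_2}{n_1 + n_2} = \frac{c_1 + c_2}{n_1 + n_2} \geq \frac{c_1 + c_2}{n_1 + n_2 + k} = \C(M_1 \oplus M_2),\]
where the first inequality is strict as soon as $c_1 > 0$ or $c_2 > 0$, and the last is strict as soon as $c_1 + c_2 > 0$.

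The main obstacle is the strictness demanded by the ``$<$''. The chain above yields only ``$\leq$'' in full generality: when both $M_1$ and $M_2$ are acyclic we have $c_1 = c_2 = 0$, so both sides equal $0$ and equality holds. Thus the statement as written needs either the tacit hypothesis that at least one of the two nets contains a cycle, or a weakening of ``$<$'' to ``$\leq$''. I would therefore flag this edge case explicitly; in every case where some cycle is present, the estimate gives the strict inequality, which is exactly what is needed to conclude that $\C$ fails the {\propnine} property for $\seqop$, $\parop$ and $\choiceop$.
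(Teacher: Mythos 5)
Your proposal follows essentially the same route as the paper's own proof: both identify the cyclic nodes of $M_1 \oplus M_2$ as exactly the cyclic nodes of $M_1$ and $M_2$ (the three operations add nodes and arcs but never close a new cycle), and both finish with the same fraction comparison $\frac{c_1+c_2}{n_1+n_2+k} \le \frac{c_1+c_2}{n_1+n_2} \le \frac{c_1}{n_1}+\frac{c_2}{n_2}$. You merely spell out as an explicit structural lemma what the paper asserts in one sentence.

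The one substantive difference is your edge-case flag, and you are right to raise it: the paper writes both inequalities in the chain as strict, but when $c_1 = c_2 = 0$ every term equals $0$, so strictness fails. Concretely, $\cyclicity(\netminwf \seqop \netminwf) = 0 = \cyclicity(\netminwf) + \cyclicity(\netminwf)$, so the theorem as stated is false for pairs of acyclic nets; it holds with $<$ exactly when $c_1 + c_2 > 0$, and with $\le$ in general. This does not damage the paper's use of the theorem: to show that {\propnine} fails for $\seqop$, $\parop$ and $\choiceop$ one only needs that no pair $M_1, M_2$ satisfies $\cyclicity(M_1 \oplus M_2) > \cyclicity(M_1) + \cyclicity(M_2)$, and the weakened $\le$ version gives exactly that.
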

\begin{proof}
Let $x_{\oplus}$ be the number of new nodes the operation $\oplus$ introduces in the net $M := M_1 \oplus M_2$.
Further, let $c_1$ be the number of nodes on a cycle in $M_1$ and $c_2$ be the number of nodes on a cycle in $M_2$.
Let $n_1$ be the total number of nodes in $M_1$ and $n_2$ be the total number of nodes in $M_2$.
Since $\oplus$ doesn't introduce a new cycle, all newly introduced nodes lie outside of a cycle and we have
\[\C(M) = \frac{c_1 + c_2}{n_1 + n_2 + x_{\oplus}} < \frac{c_1 + c_2}{n_1 + n_2} < \frac{c_1}{n_1} + \frac{c_2}{n_2} = \C(M_1) + \C(M_2)\]
\qed
\end{proof}
For the operation $\loopop$, however, we can find a counter-example for subadditivity:
Consider the net $\netminwf$ of Figure~\ref{fig:netminwf}.
If we use this net as input for $\loopop$, we get $\C(\netminwf \loopop \netminwf) = \frac{12}{16} = 0.75 > 0 + 0 = \C(\netminwf) + \C(\netminwf)$.

\propitemf{\propdef}{\yes} 
The smallest possible worfklow net, $\netminwf$, has $3$ nodes, so the denominator of $\C$ is always positive.
Furthermore, the number of nodes on a cycle cannot be negative so $\C(M) \geq 0$ for all workflow nets $M \in \mathcal{M}$.

\propitemf{\propmin}{\yes} 
The minimum possible number of nodes lying on a cycle is $0$, so $0$ is the minimum possible value for $\C$. 
The workflow net $\netminwf$ of Figure~\ref{fig:netminwf} receives this complexity score.

\propitemf{\propinf}{\yes} 
Let $n \in \mathbb{N}$ with $n \geq 2$.
Figure~\ref{fig:cyc-inf} shows how to construct a workflow net with complexity score $\frac{1}{n + 1}$ according to $\C$.
Therefore, we get that $|\{c \in \mathbb{R} \mid \exists M \in \mathcal{M}: \C(M) = c\}| \geq |\{\frac{1}{n+1} \mid n \in \mathbb{N} \land n \geq 2\}| = \infty$.
\begin{figure}[ht]
\begin{center}
\scalebox{\scalefactor}{
\begin{tikzpicture}[node distance = 1.5cm,>=stealth',bend angle=0,auto]
	\node [place,tokens=1] (start) [label=below:$p_i$] {};
	\node [transition] (t1) [right of=start,label=below:$t_1$] {}
	edge [pre] (start);
	\node [place] (p1) [right of=t1,label=below:$p_1$] {}
	edge [pre] (t1);
	\node [transition] (t2) [right of=p1,label=below:$t_2$] {}
	edge [pre] (p1);
	\node (dots) [right of=t2] {$\dots$}
	edge [pre] (t2);
	\node [place] (p2) [right of=dots,label=below:$p_{k-1}$] {}
	edge [pre] (dots);
	\node [transition] (t5) [above of=p2,label=above:$t_{k+1}$] {}
	edge [pre,bend left=20] (p2)
	edge [post,bend right=20] (p2);
	\node [transition] (t4) [right of=p2,label=below:$t_k$] {}
	edge [pre] (p2);
	\node [place] (end) [right of=t4,label=below:$p_o$] {}
	edge [pre] (t4);
	\node at (0,1) {$W_{inf,k}^{\name}$:};
\end{tikzpicture}}
\end{center}
\caption{A workflow net $W_{inf,k}^{\name}$ with $k+1$ transitions and $k+1$ places. One transition and one place lie on a cycle in thsi net. Its complexity is therefore $\C(W_{inf,k}^{\name}) = \frac{2}{2(k+1)} = \frac{1}{k+1}$.}
\label{fig:cyc-inf}
\end{figure}

\propitemf{\propnotsup}{\yes}
Theorem~\ref{thm:cyc-sub} directly implies that $\C$ is not superadditive for the operations $\seqop, \parop, \choiceop$.
For the operation $\loopop$, consider the net $W_{\approx 1, k}^{\cyclicityname}$ of Figure~\ref{fig:cycle-unb}.
For $k \geq 2$, $\C(W_{\approx 1, k}^{\cyclicityname} \loopop W_{\approx 1, k}^{\cyclicityname}) < 1 < \frac{2k}{k+1} = \C(W_{\approx 1, k}^{\cyclicityname}) + \C(W_{\approx 1, k}^{\cyclicityname})$, so $\C$ is also not superadditive for the operator $\loopop$.
\begin{figure}[ht]
\begin{center}
\scalebox{\scalefactor}{
\begin{tikzpicture}[node distance = 1.5cm,>=stealth',bend angle=0,auto]
	\node [place,tokens=1] (start) [label=below:$p_i$] {};
	\node [transition] (t0) [right of=start,label=below:$t_0$] {}
	edge [pre] (start);
	\node [place] (p0) [right of=t0,label=below:$q_k$] {}
	edge [pre] (t0);
	\node [transition] (t1) [above right of=p0,label=below:$t_1$] {}
	edge [pre] (p0);
	\node [place] (p1) [right of=t1,label=below:$p_1$] {}
	edge [pre] (t1);
	\node (dots) [right of=p1] {$\dots$}
	edge [pre] (p1);
	\node [transition] (tk) [right of=dots,label=below:$t_k$] {}
	edge [pre] (dots);
	\node [place] (pk) [below right of=tk,label=below:$p_k$] {}
	edge [pre] (tk);
	\node [transition] (s1) [below left of=pk,label=below:$s_1$] {}
	edge [pre] (pk);
	\node [place] (q1) [left of=s1,label=below:$q_1$] {}
	edge [pre] (s1);
	\node (dots2) [left of=q1] {$\dots$}
	edge [pre] (q1);
	\node [transition] (sk) [left of=dots2,label=below:$s_k$] {}
	edge [pre] (dots2)
	edge [post] (p0);
	\node [transition] (tk) [right of=pk,label=below:$s_0$] {}
	edge [pre] (pk);
	\node [place] (end) [right of=tk,label=below:$p_o$] {}
	edge [pre] (tk);
	
	\node at (0,1) {$W_{\approx 1, k}^{\cyclicityname}$:};
\end{tikzpicture}}
\end{center}
\caption{A workflow net $W_{\approx 1, k}^{\cyclicityname}$ where $4k$ nodes lie on a cycle. In total, there are $4k + 4$ nodes, so $\C(W_{\approx 1, k}^{\cyclicityname}) = \frac{4k}{4k+4} = \frac{k}{k+1}$.}
\label{fig:cycle-unb}
\end{figure}

\propitemf{\propadd}{\no}
Our analyses of the property {\propnine} show that $\C$ is not additive with respect to any of the operations.
\end{description}

\newpage
\subsection{Degree of Connectedness.}
The fourth complexity dimension found by Lieben et al.~\cite{LieDJJ18} is labeled \textsc{Degree of Connectedness}. 
Complexity measures of this dimension capture how many sequence flows are possible in the model, since each sequence flow introduces complexity.
This dimension contains only two complexity measures, which are quite similar in their behavior.

\def\name{\netconnname}
\def\C{\netconn}
\subsubsection{Coefficient of Network Connectivity}
The coefficient of network connectivity~\cite[p120]{Men08} relates the number of edges in the model to its total number of vertices.
The idea is that workflow nets with a low ratio of edges to vertices are more sparse and therefore easier to understand.
The advantage of this metric in contrast to density is that it can be adapted to any type of graph, since we don't need to know the maximum possible number of edges in the graph.
\begin{equation}
\label{eq:CNC}
\C(N) = \frac{|F|}{|P| + |T|}
\end{equation}
Figure~\ref{fig:cnc-examples} shows example nets and their respective complexity scores.
\begin{figure}[ht]
\begin{center}
\begin{minipage}{0.2525\textwidth}
\centering
\scalebox{\scalefactor}{
\begin{tikzpicture}[node distance = 1.5cm,>=stealth',bend angle=0,auto]
	\node [place,tokens=1] (start) [label=below:$p_i$] {};
	\node [transition] (t1) [right of=start,label=center:$a$] {}
	edge [pre] (start);
	\node [place] (p1) [right of=t1,label=below:$p_o$] {}
	edge [pre] (t1);
	\node at (0,1.15) {$W_1^{\name}$:};
	\draw[opacity=0] ($(start)-(0.25,1.75)$) rectangle ($(p1) + (0.25,1.75)$);
\end{tikzpicture}}
\end{minipage}
\begin{minipage}{0.25\textwidth}
\centering
\scalebox{\scalefactor}{
\begin{tikzpicture}[node distance = 1.5cm,>=stealth',bend angle=0,auto]
	\node [place,tokens=1] (start) [label=below:$p_i$] {};
	\node [transition] (t1) [above right of=start,label=center:$a$] {}
	edge [pre] (start);
	\node [transition] (t2) [below right of=start,label=center:$a$] {}
	edge [pre] (start);
	\node [place] (p1) [below right of=t1,label=below:$p_o$] {}
	edge [pre] (t1)
	edge [pre] (t2);
	\node at (0,1.15) {$W_2^{\name}$:};
	\draw[opacity=0] ($(start)-(0.25,1.75)$) rectangle ($(p1) + (0.25,1.75)$);
\end{tikzpicture}}
\end{minipage}
\begin{minipage}{0.425\textwidth}
\centering
\scalebox{\scalefactor}{
\begin{tikzpicture}[node distance = 1.5cm,>=stealth',bend angle=0,auto]
	\node [place,tokens=1] (start) [label=below:$p_i$] {};
	\node [transition] (t1) [right of=start,label=center:$a$] {}
	edge [pre] (start);
	\node [place] (p1) [right of=t1] {}
	edge [pre,bend left=20] (t1)
	edge [post,bend right=20] (t1);
	\node [transition] (t2) [right of=p1,label=center:$b$] {}
	edge [pre,bend left=20] (p1)
	edge [post,bend right=20] (p1)
	edge [pre,bend right=30] (start);
	\node [place] (p2) [right of=t2,label=below:$p_o$] {}
	edge [pre] (t2)
	edge [pre,bend left=30] (t1);
	\node at (0,1.15) {$W_3^{\name}$:};
	\draw[opacity=0] ($(start)-(0.25,1.75)$) rectangle ($(p2) + (0.25,1.75)$);
\end{tikzpicture}}
\end{minipage}
\end{center}
\caption{Three workflow nets, $W_1^{\name}$, $W_2^{\name}$, $W_3^{\name}$, with $\C(W_1^{\name}) = \frac{2}{3}$, $\C(W_2^{\name}) = 1$ and $\C(W_3^{\name}) = \frac{8}{5}$.}
\label{fig:cnc-examples}
\end{figure}

\begin{description}
\propitemf{\propone}{\yes}
For the two nets $W_1^{\name}$ and $W_2^{\name}$ of Figure~\ref{fig:cnc-examples}, we get the complexity scores $\C(W_1^{\name}) = \frac{2}{3} \neq 1 = \C(W_2^{\name})$.

\propitemf{\proptwo}{\no} 
Figure~\ref{fig:cnc-fin} shows how to construct infinitely many workflow nets with complexity $1$ according to $\C$, so $\C$ does not fulfill this property.
\begin{figure}[ht]
\begin{center}
\scalebox{\scalefactor}{
\begin{tikzpicture}[node distance = 1.5cm,>=stealth',bend angle=0,auto]
	\node [place,tokens=1] (start) [label=below:$p_i$] {};
	\node [transition] (t1l) [right of=start,label=below:$t_1^l$] {}
	edge [pre] (start);
	\node (dots1) [right of=t1l] {$\dots$}
	edge [pre] (t1l);
	\node [transition] (tkl) [right of=dots1,label=below:$t_k^l$] {}
	edge [pre] (dots1);
	\node [place] (p1) [right of=tkl] {}
	edge [pre] (tkl);
	\node [transition] (t1) [above right of=p1,label=below:$t_1$] {}
	edge [pre] (p1);
	\node [transition] (t2) [below right of=p1,label=below:$t_2$] {}
	edge [pre] (p1);
	\node [place] (p2) [below right of=t1] {}
	edge [pre] (t1)
	edge [pre] (t2);
	\node [transition] (t1r) [right of=p2,label=below:$t_1^r$] {}
	edge [pre] (p2);
	\node (dots2) [right of=t1r] {$\dots$}
	edge [pre] (t1r);
	\node [transition] (tkr) [right of=dots2,label=below:$t_k^r$] {}
	edge [pre] (dots2);
	\node [place] (end) [right of=tkr,label=below:$p_o$] {}
	edge [pre] (tkr);
	\node at (0,1.15) {$W_{1,k}^{\name}$:};
\end{tikzpicture}}
\end{center}
\caption{A construction plan for workflow nets with $4k + 4$ arcs, $2k + 2$ places and $2k + 2$ transitions. The complexity score of this net is $\C(W_{1,k}^{\name}) = \frac{4k + 4}{4k + 4} = 1$.}
\label{fig:cnc-fin}
\end{figure}

\propitemf{\propthree}{\yes} 
Figure~\ref{fig:cnc-fin} shows that there are at least two workflow nets with complexity $1$ according to $\C$. 

\propitemf{\propfour}{\yes} 
Take the the workflow nets $W_1^{\name}$ and $W_2^{\name}$ of Figure~\ref{fig:cnc-examples}.
Their languages are $L(W_1^{\name}) = \{\varepsilon, a\} = L(W_2^{\name})$, but their complexity scores are $\C(W_1^{\name}) = \frac{2}{3} \neq 1 = \C(W_2^{\name})$.

\propitemf{\propfive}{\no} 
Take the workflow nets $W_1^{\name}$ and $W_3^{\name}$ of Figure~\ref{fig:cnc-examples}. We get:
\begin{itemize}
\item $\C(W_3^{\name} \seqop W_1^{\name}) = \frac{12}{5 + 4} = \frac{12}{9} = 1.\overline{3} < 1.6 = \frac{8}{5} = \C(W_3^{\name})$,
\item $\C(W_3^{\name} \parop W_1^{\name}) = \frac{16}{7 + 5} = \frac{16}{12} = 1.\overline{3} < 1.6 = \frac{8}{5} = \C(W_3^{\name})$,
\item $\C(W_3^{\name} \choiceop W_1^{\name}) = \frac{18}{7 + 7} = \frac{9}{7} \approx 1.29 < 1.6 = \frac{8}{5} = \C(W_3^{\name})$,
\item $\C(W_3^{\name} \loopop W_1^{\name}) = \frac{22}{9 + 9} = \frac{11}{9} = 1.\overline{2} < 1.6 = \frac{8}{5} = \C(W_3^{\name})$.
\end{itemize}

\propitemf{\propsix}{\yes} 
Take the workflow nets $W_2^{\name}$ and $W_3^{\name}$ of Figure~\ref{fig:cnc-examples}, as well as the workflow net $W_{1,1}^{\name}$ of Figure~\ref{fig:cnc-fin}. 
The workflow nets $W_2^{\name}$ and $W_{1,1}^{\name}$ have the same complexity score $\C(W_2^{\name}) = \C(W_{1,1}^{\name}) = 1$. 
Combining them with the net $W_3^{\name}$, however, we get:
\begin{itemize}
\item $\C(W_2^{\name} \seqop W_3^{\name}) = \frac{14}{5 + 5} = 1.4 $ \\
\phantom{$\C(W_2^{\name} \seqop W_3^{\name})$}
$\neq 1.286 \approx \frac{18}{7 + 7} = \C(W_{1,1}^{\name} \seqop W_3^{\name})$,
\item $\C(W_2^{\name} \parop W_3^{\name}) = \frac{18}{7 + 6} \approx 1.385 $ \\
\phantom{$\C(W_2^{\name} \parop W_3^{\name})$}
$\neq 1.294 \approx \frac{22}{9 + 8} = \C(W_{1,1}^{\name} \parop W_3^{\name})$,
\item $\C(W_2^{\name} \choiceop W_3^{\name}) = \frac{20}{7 + 8} = 1.\overline{3} $ \\
\phantom{$\C(W_2^{\name} \choiceop W_3^{\name})$ }
$\neq 1.263 \approx \frac{24}{9 + 10} = \C(W_{1,1}^{\name} \choiceop W_3^{\name})$,
\item $\C(W_2^{\name} \loopop W_3^{\name}) = \frac{24}{9 + 10} \approx 1.263 $ \\
\phantom{$\C(W_2^{\name} \loopop W_3^{\name})$}
$\neq 1.217 \approx \frac{28}{11 + 12} = \C(W_{1,1}^{\name} \loopop W_3^{\name})$.
\end{itemize}

\propitemf{\propseven}{\yes} 
The workflow net $W_3^{\name}$ of Figure~\ref{fig:cnc-examples} is a permutation of the net $W_{<1,2}^{\name}$ of Figure~\ref{fig:cnc-inf}, but $\C(W_3^{\name}) = \frac{8}{5} \neq \frac{4}{5} = \C(W_{<1,2}^{\name})$.

\propitemf{\propeight}{\yes} 
$\C$ is only dependent on the number of arcs, places and transitions, so performing a relabeling on the transitions cannot have any impact on the complexity score of $\C$.

\propitemf{\propnine}{\no} 
Since we know how many nodes and edges we add with each operation, we can argue over general graphs instead of Petri nets to show that $\C$ is subadditive.
\begin{theorem}
\label{thm:cnc-sub-seq}
Let $G_1$ be a graph with $n_1$ nodes and $m_1 \geq n_1 - 1$ edges, $G_2$ be a graph with $n_2$ nodes and $m_2 \geq n_2 - 1$ edges, and $G$ be a graph with $n = n_1 + n_2 + 1$ nodes and $m = m_1 + m_2 + 2$ edges.
Then, $\frac{m_1}{n_1} + \frac{m_2}{n_2} > \frac{m}{n}$.
\end{theorem}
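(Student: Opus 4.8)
The plan is to prove this purely algebraically, since every quantity involved is positive and the claim is a rational inequality. First I would clear denominators: because $n_1, n_2$ and $n_1+n_2+1$ are all strictly positive, the inequality $\frac{m_1}{n_1} + \frac{m_2}{n_2} > \frac{m_1+m_2+2}{n_1+n_2+1}$ is equivalent to
\[
(m_1 n_2 + m_2 n_1)(n_1 + n_2 + 1) > (m_1 + m_2 + 2)\, n_1 n_2 .
\]
Thus the whole proof reduces to verifying this single polynomial inequality under the hypotheses $m_1 \geq n_1 - 1$ and $m_2 \geq n_2 - 1$.

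Next I would expand both sides and cancel the terms $m_1 n_1 n_2$ and $m_2 n_1 n_2$, which appear on both sides. After this routine simplification, the difference between the left- and right-hand sides collapses to
\[
m_1 n_2 (n_2 + 1) + m_2 n_1 (n_1 + 1) - 2 n_1 n_2 ,
\]
so it suffices to show $m_1 n_2(n_2+1) + m_2 n_1(n_1+1) > 2 n_1 n_2$.

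Finally I would bound the left-hand side from below. The hypotheses $m_i \geq n_i - 1$, together with the fact that the graphs considered here arise from workflow nets and hence satisfy $n_1, n_2 \geq 3$, give $m_1 \geq 1$ and $m_2 \geq 1$. Substituting these lower bounds yields
\[
m_1 n_2(n_2+1) + m_2 n_1(n_1+1) - 2 n_1 n_2 \geq n_2^2 + n_2 + n_1^2 + n_1 - 2 n_1 n_2 = (n_1 - n_2)^2 + (n_1 + n_2),
\]
which is strictly positive since $n_1 + n_2 > 0$. This closes the argument.

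The calculation is entirely mechanical, so the only point that requires care is the justification of $m_1, m_2 \geq 1$: this is exactly where the hypothesis $m_i \geq n_i - 1$ is used (it encodes that a workflow net is weakly connected and therefore has at least as many arcs as nodes minus one). I expect this to be the main, and very mild, obstacle, because without a positive lower bound on the edge counts the inequality can genuinely fail (for instance for isolated single-node graphs), and one must also keep the final step strict, which the additive surplus $n_1 + n_2$ guarantees.
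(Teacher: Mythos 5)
Your proof is correct and takes essentially the same route as the paper's: clear denominators (all positive), cancel the common terms $m_1 n_1 n_2$ and $m_2 n_1 n_2$, and bound the remaining polynomial expression from below. The only difference is cosmetic---you finish with $m_i \geq 1$ and the perfect square $(n_1-n_2)^2 + n_1 + n_2$, while the paper bounds the four surviving terms individually via $m_i \geq n_i - 1$---and your explicit remark that some positivity hypothesis (e.g.\ $n_i \geq 3$ from the workflow-net context, which the theorem statement omits) is genuinely needed is in fact more careful than the paper's own terse argument.
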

\begin{proof}
\begin{align*}
&\frac{m_1}{n_1} + \frac{m_2}{n_2} > \frac{m_1 + m_2 + 2}{n_1 + n_2 + 1} \\
\Leftrightarrow &(m_1 n_1 + m_2 n_1) \cdot (n_1 + n_2 + 1) > (m_1 + m_2 + 2) \cdot (n_1 n_2) \\
\Leftrightarrow &\underbrace{m_1 n_2^2}_{> n_2} + \underbrace{m_2 n_1^2}_{> n_1} + \underbrace{m_1}_{\geq n_1 - 1} n_2 + \underbrace{m_2}_{\geq n_2 - 1} n_1 > 2 n_1 n_2
\end{align*}
\qed
\end{proof}
Theorem~\ref{thm:cnc-sub-seq} directly implies that $\C$ is subadditive for $\seqop$.
\begin{theorem}
\label{thm:cnc-sub}
Let $G_1$ be a graph with $n_1 \geq 3$ nodes and $m_1 \geq n_1 - 1$ edges, $G_2$ be a graph with $n_2 \geq 3$ nodes and $m_2 \geq n_2 - 1$ edges, and $G$ be a graph with $n = n_1 + n_2 + k$ nodes and $m = m_1 + m_2 + k + 2$ edges for $k \in \{4, 6, 10\}$.
Then, $\frac{m_1}{n_1} + \frac{m_2}{n_2} > \frac{m}{n}$.
\end{theorem}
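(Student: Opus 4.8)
The plan is to mirror the proof of Theorem~\ref{thm:cnc-sub-seq}, clearing denominators to turn the rational inequality into a polynomial one. Since $n_1$, $n_2$, $k$ and the edge counts are all positive, multiplying through by the positive quantity $n_1 n_2 (n_1 + n_2 + k)$ shows the claim $\frac{m_1}{n_1} + \frac{m_2}{n_2} > \frac{m}{n}$ is equivalent to $(m_1 n_2 + m_2 n_1)(n_1 + n_2 + k) > (m_1 + m_2 + k + 2)\, n_1 n_2$. Expanding both sides and cancelling the common terms $m_1 n_1 n_2$ and $m_2 n_1 n_2$ reduces everything to the single inequality
\[ m_1 n_2^2 + m_2 n_1^2 + k\,(m_1 n_2 + m_2 n_1) > (k+2)\, n_1 n_2. \]

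First I would observe that the left-hand side is strictly increasing in both $m_1$ and $m_2$, since all coefficients are positive. Hence, by the hypotheses $m_1 \geq n_1 - 1$ and $m_2 \geq n_2 - 1$, it suffices to verify the inequality at the minimal values $m_1 = n_1 - 1$ and $m_2 = n_2 - 1$. Substituting these and abbreviating $s := n_1 + n_2$ and $p := n_1 n_2$ (so that $n_1^2 + n_2^2 = s^2 - 2p$), the difference between the two sides collects into $p s - s^2 + k p - k s$, which factors cleanly as $(p - s)(s + k)$. Thus the whole claim reduces to $(n_1 n_2 - n_1 - n_2)(n_1 + n_2 + k) > 0$.

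The conclusion is then immediate: $n_1 + n_2 + k > 0$, and from $n_1, n_2 \geq 3$ we get $n_1 n_2 - n_1 - n_2 = (n_1 - 1)(n_2 - 1) - 1 \geq 3 > 0$. If I prefer to avoid invoking monotonicity as a separate step, I can instead exhibit the full difference (left minus right) directly as a sum of manifestly nonnegative terms, namely $(n_1 n_2 - n_1 - n_2)(n_1 + n_2 + k)$ plus $(m_1 - n_1 + 1)(n_2^2 + k n_2)$ plus $(m_2 - n_2 + 1)(n_1^2 + k n_1)$, where every summand is nonnegative and the first is strictly positive. The only real obstacle here is the algebraic bookkeeping leading to the factorization $(p-s)(s+k)$; everything afterwards is a positivity check.

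Finally I would connect the theorem to the measure $\C$. The operations $\parop$, $\choiceop$ and $\loopop$ introduce $4$, $6$ and $10$ new nodes and $6$, $8$ and $12$ new arcs when combining two nets, i.e. exactly $k$ nodes and $k+2$ arcs for $k \in \{4, 6, 10\}$. Applying the theorem with $n_i := |P^i| + |T^i|$ and $m_i := |F^i|$ (which satisfy $n_i \geq 3$ and $m_i \geq n_i - 1$, the latter because every workflow net is connected) yields $\C(M_1 \oplus M_2) < \C(M_1) + \C(M_2)$ for these three operators. Together with Theorem~\ref{thm:cnc-sub-seq} covering $\seqop$, this shows $\C$ is subadditive for all four operations and therefore does not satisfy \propnine.
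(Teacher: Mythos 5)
Your proof is correct, and although it opens with the same cross-multiplication step as the paper---both arguments reduce the claim to $m_1(n_2^2+kn_2)+m_2(n_1^2+kn_1)>(k+2)\,n_1n_2$---the way you close is genuinely different. The paper substitutes the bounds $m_1n_2^2\geq 6n_2$ and $m_2n_1^2\geq 6n_1$ (from $m_i\geq 2$, $n_j\geq 3$), keeps $m_i\geq n_i-1$ only for the $k$-terms, and finishes with an arithmetic chain whose final step explicitly invokes $k\leq 10$ and whose middle strict inequality is sign-sensitive in $k-6$, hence delicate for $k\in\{4,6\}$. You instead exploit monotonicity of the left-hand side in $m_1,m_2$, substitute the extremal values $m_i=n_i-1$ throughout, and land on the exact factorization $(n_1n_2-n_1-n_2)(n_1+n_2+k)>0$, with $n_1n_2-n_1-n_2=(n_1-1)(n_2-1)-1\geq 3$. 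This buys two things: the argument is uniform in $k$ (it works for every $k\geq 0$, so the restriction to $k\in\{4,6,10\}$ is no longer load-bearing), and the positivity check is transparent rather than a case-dependent estimate. Your final paragraph---matching $k$ and $k+2$ to the node/arc counts added by $\parop$, $\choiceop$, $\loopop$, and deriving $m_i\geq n_i-1$ from connectedness of workflow nets---is exactly how the paper applies the theorem to conclude that $\netconn$ is subadditive and so fails \propnine{}, though the paper keeps that application outside the theorem's proof.
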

\begin{proof}
\begin{align*}
&\frac{m_1}{n_1} + \frac{m_2}{n_2} > \frac{m_1 + m_2 + k + 2}{n_1 + n_2 + k} \\
\Leftrightarrow &\underbrace{m_1 n_2^2}_{\geq 6 n_2 (*)} + k \underbrace{m_1}_{\geq n_1 - 1} n_2 + \underbrace{m_2 n_1^2}_{\geq 6 n_1 (*)} + k \underbrace{m_1}_{\geq n_2 - 1} n_1 > (k + 2) n_1 n_2 \\
\end{align*}
Where $(*)$ is true because $n_1, n_2 \geq 3$ and $m_1, m_2 \geq 2$.
The lower inequality is true since
\begin{align*}
&k(n_1 - 1)n_2 + 6n_1 + k(n_2 - 1)n_1 + 6n_2 \\
=\hspace*{1mm} &2k n_1 n_2 - (k - 6)n_1 - (k - 6)n_2 \\
>\hspace*{1mm} &(2k - 2(k - 6))n_1 n_2 \\
=\hspace*{1mm} &12 n_1 n_2 \\
\overset{k \leq 10}{\geq} &(k + 2)n_1 n_2 
\end{align*}
\qed
\end{proof}

\propitemf{\propdef}{\yes} 
The coefficient of network connectivity is defined for all workflow nets, since by definition $|P| \geq 2$ and $|T| \geq 1$ in every workflow net. 
Furthermore, $|F| \geq 0$, so the measure returns only non-negative values.

\propitemf{\propmin}{\yes} 
In a workflow net, every node needs to lie on a path from the input place to the output place, so $|F| \geq |P| + |T| - 1$ .
Since $\frac{n-1}{n}$ is strictly increasing with increasing $n$, the smallest possible value of $\C$ is $\frac{2}{3}$. 
This smallest possible complexity score is rewarded to $\netminwf$ of Figure~\ref{fig:netminwf}.

\propitemf{\propinf}{\yes} 
Figure~\ref{fig:cnc-inf} shows how to construct infinitely many workflow nets with different complexity scores.
Thus, $\C$ doesn't fulfill this property, since $|\{c \in \mathbb{R} \mid \exists M \in \mathcal{M}: \C(M) = c\}| \geq |\{\frac{2k}{2k + 1} \mid k \in \mathbb{N}\}| = |\mathbb{N}| = \infty$.
\begin{figure}[ht]
\begin{center}
\scalebox{\scalefactor}{
\begin{tikzpicture}[node distance = 1.5cm,>=stealth',bend angle=0,auto]
	\node [place,tokens=1] (start) [label=below:$p_i$] {};
	\node [transition] (t1) [right of=start,label=below:$t_1$] {}
	edge [pre] (start);
	\node [place] (p1) [right of=t1,label=below:$p_1$] {}
	edge [pre] (t1); 
	\node (dots1) [right of=p1] {$\dots$}
	edge [pre] (p1);
	\node [transition] (tk) [right of=dots1,label=below:$t_k$] {}
	edge [pre] (dots1);
	\node [place] (end) [right of=tk,label=below:$p_o$] {}
	edge [pre] (tk);
	\node at (0,1.15) {$W_{<1,k}^{\name}$:};
\end{tikzpicture}}
\end{center}
\caption{A construction plan for workflow nets with $2k$ arcs, $k + 1$ places and $k$ transitions. The complexity score of this net is $\C(W_{<1,k}^{\name}) = \frac{2k}{2k + 1}$.}
\label{fig:cnc-inf}
\end{figure}

\propitemf{\propnotsup}{\yes} The analysis of the property {\propfive} shows that $\C$ is not superadditive.

\propitemf{\propadd}{\no} The analysis of the property {\propfive} shows that $\C$ is not additive.
\end{description}

\def\name{\densityname}
\def\C{\density}
\subsubsection{Density}
The density~\cite[p.120]{Men08} of a process model is the number of its arcs divided by the total number of possible arcs.
Since a workflow net is a bipartite graph, the maximum possible number of arcs is $|P| \cdot |T| - 2 \cdot |T|$. 
We therefore define this complexity measure as follows:
\begin{equation}
\label{eq:density}
\C(N) = \frac{|F|}{2 \cdot |P| \cdot |T| - 2 \cdot |T|} = \frac{|F|}{2 \cdot |T| \cdot (|P| - 1)}.
\end{equation}
Figure~\ref{fig:density-examples} shows two simple example nets and their respective complexity scores.
\begin{figure}[ht]
\begin{center}
\begin{minipage}{0.4\textwidth}
\centering
\scalebox{\scalefactor}{
\begin{tikzpicture}[node distance = 1.5cm,>=stealth',bend angle=0,auto]
	\node [place,tokens=1] (start) [label=below:$p_i$] {};
	\node [transition] (t1) [right of=start,label=center:$a$] {}
	edge [pre] (start);
	\node [place] (p1) [right of=t1,label=below:$p_o$] {}
	edge [pre] (t1);
	\node at (0,1.15) {$W_1^{\name}$:};
\end{tikzpicture}}
\end{minipage}
\begin{minipage}{0.55\textwidth}
\centering
\scalebox{\scalefactor}{
\begin{tikzpicture}[node distance = 1.5cm,>=stealth',bend angle=0,auto]
	\node [place,tokens=1] (start) [label=below:$p_i$] {};
	\node [transition] (t1) [right of=start,label=center:$\tau$] {}
	edge [pre] (start);
	\node [place] (p1) [right of=t1] {}
	edge [pre] (t1);
	\node [transition] (t2) [right of=p1,label=center:$a$] {}
	edge [pre] (p1);
	\node [place] (end) [right of=t2] {}
	edge [pre] (t2);
	\node at (0,1.15) {$W_2^{\name}$:};
\end{tikzpicture}}
\end{minipage}
\end{center}
\caption{Two workflow nets, $W_1^{\name}$ and $W_2^{\name}$, with $\C(W_1^{\name}) = 1$ and $\C(W_2^{\name}) = 0.5$.}
\label{fig:density-examples}
\end{figure}

\begin{description}
\propitemf{\propone}{\yes} 
For the two nets $W_1^{\name}$ and $W_2^{\name}$ of Figure~\ref{fig:density-examples}, we get the complexity scores $\C(W_1^{\name}) = 1 \neq 0.5 = \C(W_2^{\name})$.

\propitemf{\proptwo}{\no} 
Figure~\ref{fig:density-fin} shows how to construct infinitely many workflow nets with complexity $1$, so $\C$ does not fulfill this property.
\begin{figure}[ht]
\begin{center}
\scalebox{\scalefactor}{
\begin{tikzpicture}[node distance = 1.5cm,>=stealth',bend angle=0,auto]
	\node [place,tokens=1] (start) [label=below:$p_i$] {};
	\node (dots) [right of=start] {$\vdots$};
	\node [transition] (t1) [above of=dots,label=below:$t_1$] {}
	edge [pre] (start);
	\node [transition] (tk) [below of=dots,label=below:$t_k$] {}
	edge [pre] (start);
	\node [place] (end) [right of=dots,label=below:$p_o$] {}
	edge [pre] (t1)
	edge [pre] (tk);
	\node at (0,1.15) {$W_{1,k}^{\name}$:};
\end{tikzpicture}}
\end{center}
\caption{A workflow net $W_{1,k}^{\name}$ with $2$ places, $k$ transitions and $2k$ edges. For this net and some $k \geq 1$, we get $\C(W_{1,k}^{\name}) = \frac{2k}{2 \cdot k \cdot (2-1)} = 1$.}
\label{fig:density-fin}
\end{figure}

\propitemf{\propthree}{\yes} 
Figure~\ref{fig:density-fin} shows that two structurally different workflow nets can get the same complexity score from $\C$.

\propitemf{\propfour}{\yes} 
Take the workflow nets $W_1^{\name}$ and $W_2^{\name}$ of Figure~\ref{fig:density-examples}. 
Their languages are $L(W_1^{\name}) = \{\varepsilon, a\} = L(W_2^{\name})$, but their complexity scores are $\C(W_1^{\name}) = 1 \neq 0.5 = \C(W_2^{\name})$.

\propitemf{\propfive}{\no} Consider the net $W_1^{\name}$ of Figure~\ref{fig:density-examples}. We get:
\begin{itemize}
\item $\C(W_1^{\name} \seqop W_1^{\name}) = \frac{6}{2 \cdot 3 \cdot (4-1)} = \frac{1}{3} < 1 = \C(W_1^{\name})$,
\item $\C(W_1^{\name} \parop W_1^{\name}) = \frac{10}{2 \cdot 4 \cdot (6-1)} = \frac{1}{4} < 1 = \C(W_1^{\name})$,
\item $\C(W_1^{\name} \choiceop W_1^{\name}) = \frac{12}{2 \cdot 6 \cdot (6-1)} = \frac{1}{5} < 1 = \C(W_1^{\name})$,
\item $\C(W_1^{\name} \loopop W_1^{\name}) = \frac{16}{2 \cdot 8 \cdot (8-1)} = \frac{1}{7} < 1 = \C(W_1^{\name})$.
\end{itemize}
Therefore, $\C$ does not fulfill this property for any of these operations.

\propitemf{\propsix}{\yes} 
Consider the nets $W_1^{\name}$ and $W_2^{\name}$ of Figure~\ref{fig:density-examples} as well as the net $W_3^{\name}$ of Figure~\ref{fig:density-perm}.
We get:
\begin{itemize}
\item $\C(W_2^{\name} \seqop W_1^{\name}) = \frac{8}{2 \cdot 4 \cdot (5 - 1)} = \frac{1}{4}$ \\
\phantom{$\C(W_2^{\name} \seqop W_1^{\name})$ }
$\neq \frac{13}{50} = \frac{13}{2 \cdot 5 \cdot (6-1)} = \C(W_3^{\name} \seqop W_1^{\name})$,
\item $\C(W_2^{\name} \parop W_1^{\name}) = \frac{12}{2 \cdot 5 \cdot (7 - 1)} = \frac{1}{5}$ \\
\phantom{$\C(W_2^{\name} \parop W_1^{\name})$ }
$\neq \frac{17}{84} = \frac{17}{2 \cdot 6 \cdot (8-1)} = \C(W_3^{\name} \parop W_1^{\name})$,
\item $\C(W_2^{\name} \choiceop W_1^{\name}) = \frac{14}{2 \cdot 7 \cdot (7 - 1)} = \frac{1}{6}$ \\
\phantom{$\C(W_2^{\name} \choiceop W_1^{\name})$ }
$\neq \frac{19}{112} = \frac{19}{2 \cdot 8 \cdot (8-1)} = \C(W_3^{\name} \choiceop W_1^{\name})$,
\item $\C(W_2^{\name} \loopop W_1^{\name}) = \frac{18}{2 \cdot 9 \cdot (9 - 1)} = \frac{1}{8}$ \\
\phantom{$\C(W_2^{\name} \loopop W_1^{\name})$ }
$\neq \frac{23}{180} = \frac{23}{2 \cdot 10 \cdot (10-1)} = \C(W_3^{\name} \loopop W_1^{\name})$.
\end{itemize}
Therefore, $\C$ fulfills this property for all of these operations.
\begin{figure}[ht]
\begin{center}
\scalebox{\scalefactor}{
\begin{tikzpicture}[node distance = 1.5cm,>=stealth',bend angle=0,auto]
	\node [place,tokens=1] (start) [label=below:$p_i$] {};
	\node [transition] (t1) [right of=start] {}
	edge [pre] (start);
	\node [place] (p1) [right of=t1] {}
	edge [pre] (t1);
	\node [transition] (t2) [right of=p1] {}
	edge [pre,bend left=20] (p1)
	edge [post,bend right=20] (p1);
	\node [place] (p2) [right of=t2] {}
	edge [pre,bend left=20] (t2)
	edge [post,bend right=20] (t2)
	edge [pre,bend right=30] (t1);
	\node [transition] (t3) [right of=p2] {}
	edge [pre] (p2);
	\node [place] (end) [right of=t3] {}
	edge [pre] (t3);
	\node at (0,0.75) {$W_3^{\name}$:};
\end{tikzpicture}}
\scalebox{\scalefactor}{
\begin{tikzpicture}[node distance = 1.5cm,>=stealth',bend angle=0,auto]
	\node [place,tokens=1] (start) [label=below:$p_i$] {};
	\node [transition] (t1) [right of=start] {}
	edge [pre] (start);
	\node [place] (p1) [right of=t1] {}
	edge [pre] (t1);
	\node [transition] (t2) [right of=p1] {}
	edge [pre] (p1);
	\node [place] (p2) [right of=t2] {}
	edge [pre] (t2);
	\node [transition] (t3) [right of=p2] {}
	edge [pre] (p2);
	\node [place] (end) [right of=t3] {}
	edge [pre] (t3);
	\node at (0,0.75) {$W_4^{\name}$:};
\end{tikzpicture}}
\end{center}
\caption{Two workflow nets, $W_3^{\name}$ and $W_4^{\name}$, where $W_3^{\name} \in Perm(W_4^{\name})$ and $\C(W_3^{\name}) = \frac{9}{2 \cdot 3 \cdot (4-1)} = 0.5$ and $\C(W_4^{\name}) = \frac{6}{2 \cdot 3 \cdot (4-1)} = \frac{1}{3}$.}
\label{fig:density-perm}
\end{figure}

\propitemf{\propseven}{\yes} 
Figure~\ref{fig:density-perm} shows two workflow nets, $W_3^{\name}$ and $W_4^{\name}$, that are permutations of each other, but $\C(W_3^{\name}) = 0.5 \neq 0.\overline{3} = \C(W_4^{\name})$.

\propitemf{\propeight}{\yes} 
$\C$ depends only on the number of arcs, places and transitions. 
Therefore, the labeling has no impact on the values returned by $\C$.

\propitemf{\propnine}{\no} 
Let $M_1, M_2 \in \mathcal{M}$. 
For $i \in \{1,2\}$, let $f_i$ be the number of arcs in $M_i$, $t_i$ be the number of transitions in $M_i$ and $p_i$ be the number of places in $M_i$.
By arithmetic rules, we have: 
\begin{align*}
\C(M_1) + \C(M_2) &= \frac{f_1}{2t_1p_1 - 2t_1} + \frac{f_2}{2t_2p_2 - 2t_2} \\
&> \frac{f_1 + f_2}{2t_1p_1 - 2t_1 + 2t_2p_2 - 2t_2} =: \frac{n}{d}
\end{align*}
\begin{itemize}
\item The operator $\seqop$ introduces $1$ transition and $2$ arcs.
Therefore,
\[\C(M_1 \seqop M_2) = \frac{n + 2}{d + 2t_1p_2 + 2t_2p_1 + 2p_1 + 2p_2 - 2} < \frac{n}{d}\]
since $2t_1p_2 + 2t_2p_1 + 2p_1 + 2p_2 - 2 > 2$ ($p_i \geq 2$ and $t_i \geq 1$ for all $i \in \{1,2\}$ by definition of workflow nets).
\item The operator $\parop$ introduces $2$ places, $2$ transitions and $6$ arcs.
Therefore,
\[\C(M_1 \parop M_2) = \frac{n + 6}{d + 2t_1p_2 + 4t_1 + 2t_2p_1 + 4t_2 + 4p_1 + 4p_2 + 4} < \frac{n}{d}\]
since $2t_1p_2 + 4t_1 + 2t_2p_1 + 4t_2 + 4p_1 + 4p_2 + 4 > 6$.
\item The operator $\choiceop$ introduces $2$ places, $4$ transitions and $8$ arcs.
Therefore,
\[\C(M_1 \choiceop M_2) = \frac{n + 8}{d + 2t_1p_2 + 6t_1 + 2t_2p_1 + 6t_2 + 8p_1 + 8p_2 + 16} < \frac{n}{d}\]
since $2t_1p_2 + 6t_1 + 2t_2p_1 + 6t_2 + 8p_1 + 8p_2 + 16 > 8$.
\item The operator $\loopop$ introduces $4$ places, $6$ transitions and $12$ arcs.
Therefore,
\begin{align*}
\C(M_1 \loopop M_2) &= \frac{n + 12}{d + 2t_1p_2 + 8t_1 + 2t_2p_1 + 8t_2 + 12p_1 + 12p_2 + 48}
\end{align*}
which is $< \frac{n}{d}$, since $2t_1p_2 + 8t_1 + 2t_2p_1 + 8t_2 + 12p_1 + 12p_2 + 48 > 12$.
\end{itemize}
Thus, $\C(M_1 \oplus M_2) < \C(M_1) + \C(M_2)$ for all $\oplus \in \{\seqop, \parop, \choiceop, \loopop\}$ and $\C$ is subadditive for any of these operations.

\propitemf{\propdef}{\yes} 
The density of a workflow net is always defined, since thre are at least two places and one transition in a workflow net.
We therefore cannot divide by zero while evaluating the complexity score.
Furthermore, since $|F| \geq 0$ for every workflow net, the values of the complexity metric are non-negative.

\propitemf{\propmin}{\no} 
$\C$ has no minimum value that can be reached by a workflow net: 
Consider the net $W_{\min,k}^{\name}$ of Figure~\ref{fig:density-min}.
This workflow net has a complexity of $\C(W_{\min,k}^{\name}) = \frac{1}{k}$.
Therefore, for any workflow net $M$ with complexity $c = \C(M) > 0$ we can find a workflow net with smaller complexity score by setting $k > \frac{1}{c}$.
A workflow net with complexity $0$ cannot exist, since all workflow nets contain at least one arc.
\begin{figure}[ht]
\begin{center}
\scalebox{\scalefactor}{
\begin{tikzpicture}[node distance = 1.5cm,>=stealth',bend angle=0,auto]
	\node [place,tokens=1] (start) [label=below:$p_i$] {};
	\node [transition] (t1) [right of=start,label=below:$t_1$] {}
	edge [pre] (start);
	\node [place] (p1) [right of=t1,label=below:$p_1$] {}
	edge [pre] (t1);
	\node (dots) [right of=p1] {$\dots$}
	edge [pre] (p1);
	\node [place] (pk1) [right of=dots,label=below:$p_{k-1}$] {}
	edge [pre] (dots);
	\node [transition] (tk) [right of=pk1,label=below:$t_k$] {}
	edge [pre] (pk1);
	\node [place] (end) [right of=tk,label=below:$p_o$] {}
	edge [pre] (tk);
	\node at (0,1.15) {$W_{\min,k}^{\name}$:};
\end{tikzpicture}}
\end{center}
\caption{A workflow net $W_{\min,k}^{\name}$ with $k + 1$ places, $k$ transitions and $2k$ edges. For this net and some $k \geq 1$, we get $\C(W_{\min,k}^{\name}) = \frac{2k}{2 \cdot k \cdot (k+1-1)} = \frac{1}{k}$.}
\label{fig:density-min}
\end{figure}

\propitemf{\propinf}{\yes} 
Figure~\ref{fig:density-min} shows the construction of infinitely many workflow nets with pairwise distinct complexity according to $\C$.
With this construction, we get $|\{c \in \mathbb{R} \mid \exists M \in \mathcal{M}: \C(M) = c\}| \geq |\mathbb{N}| = \infty$.

\propitemf{\propnotsup}{\yes} The analysis of property {\propfive} shows that $\C$ is not superadditive.

\propitemf{\propadd}{\no} The analysis of property {\propfive} shows that $\C$ is not additive.
\end{description}

\subsection{Other Complexity Measures}
Lieben et al.~\cite{LieDJJ18} highlight that there are two complexity measures that don't fit into any of their discovered dimensions.
These complexity measures are the number of empty sequence flows and the number of duplicate tasks.
The first evaluates how often a concurrent connector was used in the model that doesn't model concurrent behavior.
The latter counts how often a transition label is reused in the model.

\def\name{\duplicatename}
\def\C{\duplicate}
\subsubsection{Number of Duplicate Tasks}
The possibility to repeat transition labels often facilitates finding a compact model for a language.
However, duplicate labels can also deteriorate the understandability of a process model.
If activity labels are repeated throughout the net, it is difficult to see in which cases the activity can occur.
Therefore, the amount of label repetitions in a process model can be used as a complexity measure~\cite{LaWMHRA11}.
Let $A$ be an alphabet and $W = (P,T,F,\ell,p_i,p_o)$ be a labeled workflow net with labels from $A$.
For a label $a \in A$, the set of transitions in $W$ with that label is $\ell^{-1}(a) := \{t \in T \mid \ell(t) = a\}$.
We define:
\begin{equation}
\label{eq:dup}
\C(W) = \sum_{a \in A} \left(\max\{|\ell^{-1}(a)|, 1\} - 1\right)
\end{equation}
Figure~\ref{fig:dup-examples} shows example nets and their respective complexity scores.
\begin{figure}[ht]
\begin{center}
\begin{minipage}{0.2525\textwidth}
\centering
\scalebox{\scalefactor}{
\begin{tikzpicture}[node distance = 1.5cm,>=stealth',bend angle=0,auto]
	\node [place,tokens=1] (start) [label=below:$p_i$] {};
	\node [transition] (t1) [right of=start,label=center:$a$] {}
	edge [pre] (start);
	\node [place] (p1) [right of=t1,label=below:$p_o$] {}
	edge [pre] (t1);
	\node at (0,1.15) {$W_1^{\name}$:};
	\draw[opacity=0] ($(start)-(0.25,1.75)$) rectangle ($(p1) + (0.25,1.75)$);
\end{tikzpicture}}
\end{minipage}
\begin{minipage}{0.25\textwidth}
\centering
\scalebox{\scalefactor}{
\begin{tikzpicture}[node distance = 1.5cm,>=stealth',bend angle=0,auto]
	\node [place,tokens=1] (start) [label=below:$p_i$] {};
	\node [transition] (t2) [right of=start,label=center:$a$] {}
	edge [pre] (start);
	\node [transition] (t1) [above of=t2,label=center:$a$] {}
	edge [pre] (start);
	\node [transition] (t3) [below of=t2,label=center:$a$] {}
	edge [pre] (start);
	\node [place] (p1) [right of=t2,label=below:$p_o$] {}
	edge [pre] (t1)
	edge [pre] (t2)
	edge [pre] (t3);
	\node at (0,1.15) {$W_2^{\name}$:};
	\draw[opacity=0] ($(start)-(0.25,1.75)$) rectangle ($(p1) + (0.25,1.75)$);
\end{tikzpicture}}
\end{minipage}
\begin{minipage}{0.425\textwidth}
\centering
\scalebox{\scalefactor}{
\begin{tikzpicture}[node distance = 1.5cm,>=stealth',bend angle=0,auto]
	\node [place,tokens=1] (start) [label=below:$p_i$] {};
	\node [transition] (t1) [right of=start,label=center:$\tau$] {}
	edge [pre] (start);
	\node [place] (p1) [right of=t1] {}
	edge [pre] (t1);
	\node [transition] (t2) [right of=p1,label=center:$a$] {}
	edge [pre] (p1);
	\node [place] (p2) [right of=t2,label=below:$p_o$] {}
	edge [pre] (t2);
	\node at (0,1.15) {$W_3^{\name}$:};
	\draw[opacity=0] ($(start)-(0.25,1.75)$) rectangle ($(p2) + (0.25,1.75)$);
\end{tikzpicture}}
\end{minipage}
\end{center}
\caption{Three workflow nets, $W_1^{\name}$, $W_2^{\name}$ and $W_3^{\name}$, with $\C(W_2^{\name}) = 2$ and $\C(W_1^{\name}) = \C(W_3^{\name}) = 0$.}
\label{fig:dup-examples}
\end{figure}
% $L(W_1^{\name}) = L(W_2^{\name}) = L(W_3^{\name}) = \{\varepsilon, a\}$, 

\begin{description}
\propitemf{\propone}{\yes} 
For the two workflow nets $W_1^{\name}$ and $W_2^{\name}$ of Figure~\ref{fig:dup-examples}, we get the complexity scores $\C(W_1^{\name}) = 0 \neq 2 = \C(W_2^{\name})$.

\propitemf{\proptwo}{\yes} 
$\C$ gives only non-negative integer values as complexity scores. 
Let $A$ be a finite set of activities and $c \in \mathbb{N}_0$ be a complexity score. 
A workflow net with complexity score $c$ has $c$ labels that already appeared in the net. 
There are $c \cdot |A|$ possibilities, which label these duplicate transitions have.
But in total, the net cannot contain more than $c \cdot |A| + |A| + 1$ transitions, since otherwise there would be more than $c$ duplicates in the net.
Therefore, there can only be finitely many workflow nets with complexity $c$. 

\propitemf{\propthree}{\yes} 
The workflow nets $W_1^{\name}$ and $W_3^{\name}$ of Figure~\ref{fig:dup-examples} are different in structure, but both get complexity score $\C(W_1^{\name}) = 0 = \C(W_3^{\name})$.

\propitemf{\propfour}{\yes} 
Take the workflow nets $W_1^{\name}$ and $W_2^{\name}$ of Figure~\ref{fig:dup-examples}.
Their languages are $L(W_1^{\name}) = L(W_2^{\name}) = \{\varepsilon, a\}$, but their complexity scores are $\C(W_1^{\name}) = 0 \neq 2 = \C(W_2^{\name})$.

\propitemf{\propfive}{\yes} 
$\C$ is monotone for any operation $\oplus \in \{\seqop, \parop, \choiceop, \loopop\}$, since for any $M_1, \dots, M_n \in \mathcal{M}$, a model $\oplus(M_1, \dots, M_n)$ cannot have fewer duplicate labels than any of the models $M_1, \dots, M_n$ has.

\propitemf{\propsix}{\yes} 
Take the workflow net $W_1^{\name}$ of Figure~\ref{fig:dup-examples}.
We construct a workflow net $W_{1,1}^{\name}$ by taking $W_1^{\name}$ and changing the label of its only transition to $b$.
By setting the labels of all transitions that are newly introduced by one of the operations to $\tau$, we get:
\begin{itemize}
\item $\C(W_1^{\name} \seqop W_1^{\name}) = 1 \neq 0 = \C(W_{1,1}^{\name} \seqop W_1^{\name})$,
\item $\C(W_1^{\name} \parop W_1^{\name}) = 2 \neq 1 = \C(W_{1,1}^{\name} \parop W_1^{\name})$,
\item $\C(W_1^{\name} \choiceop W_1^{\name}) = 4 \neq 3 = \C(W_{1,1}^{\name} \choiceop W_1^{\name})$,
\item $\C(W_1^{\name} \loopop W_1^{\name}) = 6 \neq 5 = \C(W_{1,1}^{\name} \loopop W_1^{\name})$.
\end{itemize}

\propitemf{\propseven}{\no} 
By our definition of permutations, we must keep the transitions and their labels in a permutation of a net.
This means that the complexity score remains the same for all permutations of a workflow net.

\propitemf{\propeight}{\yes} 
Uniformly changing the labels of a workflow net can only change the set of duplicate tasks, but not the number of them, since a relabeling is a bijection by our definition. 
Therefore, $\C$ fulfills this property.

\propitemf{\propnine}{\yes} 
Take the workflow net $W_1^{\name}$ of Figure~\ref{fig:dup-examples}. 
If we give all transitions that are newly introduced by one of the operations the label $\tau$, we get:
\begin{itemize}
\item $\C(W_1^{\name} \seqop W_1^{\name}) = 1 > 0 + 0 = \C(W_1^{\name}) + \C(W_1^{\name})$,
\item $\C(W_1^{\name} \parop W_1^{\name}) = 2 > 0 + 0 = \C(W_1^{\name}) + \C(W_1^{\name})$,
\item $\C(W_1^{\name} \choiceop W_1^{\name}) = 4 > 0 + 0 = \C(W_1^{\name}) + \C(W_1^{\name})$,
\item $\C(W_1^{\name} \loopop W_1^{\name}) = 6 > 0 + 0 = \C(W_1^{\name}) + \C(W_1^{\name})$.
\end{itemize}

\propitemf{\propdef}{\yes} 
This metric is obviously defined for every labeled workflow net. 
Because we take the maximum of $|\ell^{-1}(a)|$ and $1$ for any $a \in A$, every summand of the complexity score is non-negative. 
Therefore, the total score of this complexity measure is non-negative.

\propitemf{\propmin}{\yes} 
The workflow net $W_1^{\name}$ of Figure~\ref{fig:dup-examples} gets the smallest possible complexity score $\C(W_1^{\name}) = 0$.
Since we sum over non-negative values to calculate the complexity score, it is not possible to get a score less than $0$.

\propitemf{\propinf}{\yes} 
Let $c \in \mathbb{N}_0$. 
Figure~\ref{fig:dup-inf} shows how to construct a workflow net with complexity score $c$, so $|c \in \mathbb{R} \mid \exists M \in \mathcal{M}: \C(M) = c\} \geq \mathbb{N}_0 = \infty$.
\begin{figure}[ht]
\begin{center}
\scalebox{\scalefactor}{
\begin{tikzpicture}[node distance = 1.5cm,>=stealth',bend angle=0,auto]
	\node [place,tokens=1] (start) [label=below:$p_i$] {};
	\node [transition] (t1) [right of=start,label=center:$a$,label=below:$t_1$] {}
	edge [pre] (start);
	\node [place] (p1) [right of=t1] {}
	edge [pre] (t1);
	\node [transition] (t2) [right of=p1,label=center:$a$,label=below:$t_2$] {}
	edge [pre] (p1);
	\node [place] (p2) [right of=t2] {}
	edge [pre] (t2);
	\node (dots) [right of=p2] {$\dots$}
	edge [pre] (p2);
	\node [place] (p3) [right of=dots] {}
	edge [pre] (dots);
	\node [transition] (t3) [right of=p3,label=center:$a$,label=below:$t_{c + 1}$] {}
	edge [pre] (p3);
	\node [place] (end) [right of=t3,label=below:$p_o$] {}
	edge [pre] (t3);
	\node at (0,1.15) {$W_{= c}^{\name}$:};
\end{tikzpicture}}
\end{center}
\caption{A construction plan for workflow nets $W_{=c}^{\name}$ with $c + 1$ transitions and complexity $\C(W_{=c}^{\name}) = c$.}
\label{fig:dup-inf}
\end{figure}

\propitemf{\propnotsup}{\no} 
Whether $\C$ fulfills this property depends on the set of activity labels $A$.
Let $M_1, M_2 \in \mathcal{M}$ be two models with distinct transition labels.
If $A$ is sufficiently large to give every transition that is newly introduced by $\oplus \in \{\seqop, \parop, \choiceop, \loopop\}$ a different label that doesn't occur in $M_1$ or $M_2$, we get $\C(M_1 \oplus M_2) = \C(M_1) + \C(M_2)$.
On the other hand, if $A$ is not sufficiently large to do so, we add at least one new duplicate label to $M_1 \oplus M_2$.
In this case, we get $\C(M_1 \oplus M_2) > \C(M_1) + \C(M_2)$.

\propitemf{\propadd}{\no} The analysis of property {\propnine} shows that $\C$ is not additive for any of the operations.
\end{description}

\def\name{\emptyseqname}
\def\C{\emptyseq}
\subsubsection{Number of Empty Sequence Flows}
The measure for empty sequence flows is based on work of Gruhn et al.~\cite{GruL09} on reducing the complexity of BPMN.
They found that, in BPMN models, there are often edes that directly connect and-splits and -joins. 
These edges can be removed completely, as they don't change the behavior of the net.
Regarding workflow nets, there can't be any edges between two and-connectors, since and-connectors are always transitions and Petri nets are bipartite.
Instead, empty sequence flows in workflow nets are places that have only and-splits in their preset and and-joins in their postset. 
We therefore define:
\begin{equation}
\label{eq:empty}
\C(W) = |\{p \in P \mid \pre{p} \subseteq \mathcal{S}_{\text{and}}^W \land \post{p} \subseteq \mathcal{J}_{\text{and}}^W\}|
\end{equation}
Figure~\ref{fig:empty-examples} shows two example nets and their respective complexity scores.
\begin{figure}[ht]
\begin{center}
\begin{minipage}{0.45\textwidth}
\centering
\scalebox{\scalefactor}{
\begin{tikzpicture}[node distance = 1.5cm,>=stealth',bend angle=0,auto]
	\node [place,tokens=1] (start) [label=below:$p_i$] {};
	\node [transition] (t1) [right of=start,label=center:$a$] {}
	edge [pre] (start);
	\node [place] (p1) [right of=t1] {}
	edge [pre] (t1);
	\node [transition] (t2) [right of=p1,label=center:$b$] {}
	edge [pre] (p1);
	\node [place] (end) [right of=t2,label=below:$p_o$] {}
	edge [pre] (t2);
	\node at (0,1.15) {$W_1^{\name}$:};
	\draw[opacity=0] ($(start)-(0.25,1.75)$) rectangle ($(end) + (0.25,1.75)$);
\end{tikzpicture}}
\end{minipage}
\begin{minipage}{0.45\textwidth}
\centering
\scalebox{\scalefactor}{
\begin{tikzpicture}[node distance = 1.5cm,>=stealth',bend angle=0,auto]
	\node [place,tokens=1] (start) [label=below:$p_i$] {};
	\node [transition] (t1) [right of=start,label=center:$a$] {}
	edge [pre] (start);
	\node [place] (p1) [above right of=t1] {}
	edge [pre] (t1);
	\node [place] (p2) [below right of=t1] {}
	edge [pre] (t1);
	\node [transition] (t2) [below right of=p1,label=center:$b$] {}
	edge [pre] (p1)
	edge [pre] (p2);
	\node [place] (end) [right of=t2,label=below:$p_o$] {}
	edge [pre] (t2);
	\node at (0,1.15) {$W_2^{\name}$:};
	\draw[opacity=0] ($(start)-(0.25,1.75)$) rectangle ($(end) + (0.25,1.75)$);
\end{tikzpicture}}
\end{minipage}
\end{center}
\caption{Two workflow nets, $W_1^{\name}$ and $W_2^{\name}$, with $\C(W_1^{\name}) = 0$ and $\C(W_2^{\name}) = 2$.}
\label{fig:empty-examples}
\end{figure}

\begin{description}
\propitemf{\propone}{\yes} 
For the two workflow nets $W_1^{\name}$ and $W_2^{\name}$ of Figure~\ref{fig:empty-examples}, we get the complexity scores $\C(W_1^{\name}) = 0 \neq 2 = \C(W_2^{\name})$.

\propitemf{\proptwo}{\no} 
Figure~\ref{fig:empty-fin-inf} shows how to construct infinitely many workflow nets with the same complexity score $k \in \mathbb{N}$, where $k \geq 2$.
\begin{figure}[ht]
\begin{center}
\scalebox{\scalefactor}{
\begin{tikzpicture}[node distance = 1.5cm,>=stealth',bend angle=0,auto]
	\node [place,tokens=1] (start) [label=below:$p_i$] {};
	\node [transition] (t1) [right of=start,label=below:$t_1$] {}
	edge [pre] (start);
	\node [place] (p1) [right of=t1,label=below:$p_1$] {}
	edge [pre] (t1);
	\node (dots) [right of=p1] {$\dots$}
	edge [pre] (p1);
	\node [transition] (t2) [right of=dots,label=below:$t_n$] {}
	edge [pre] (dots);
	\node [place] (p2) [right of=t2,label=below:$p_n$] {}
	edge [pre] (t2);
	\node [transition] (t3) [right of=p2,label=below:$t_{n+1}$] {}
	edge [pre] (p2);
	\node (dots2) [right of=t3] {$\vdots$};
	\node [place] (p3) [above of=dots2,label=below:$p_{n+1}$] {}
	edge [pre] (t3);
	\node [place] (p4) [below of=dots2,label=below:$p_{n+k}$] {}
	edge [pre] (t3);
	\node [transition] (t4) [right of=dots2,label=below:$t_{n+2}$] {}
	edge [pre] (p3)
	edge [pre] (p4);
	\node [place] (end) [right of=t4,label=below:$p_o$] {}
	edge [pre] (t4);
	\node at (0,1.15) {$W_{n,k}^{\name}$:};
	\draw[opacity=0] ($(start)-(0.25,1.75)$) rectangle ($(end) + (0.25,1.75)$);
\end{tikzpicture}}
\end{center}
\caption{A workflow net with $n + k + 2$ places, $n + 2$ transitions and complexity score $\C(W_{n,k}^{\name}) = k$.}
\label{fig:empty-fin-inf}
\end{figure}

\propitemf{\propthree}{\yes} 
Let $k \in \mathbb{N}$ with $k \geq 2$. 
For any $n \in \mathbb{N}$, $W_{n,k}$ gets the complexity score $k$, so we have found two workflow nets that receive the same complexity score from $\C$.

\propitemf{\propfour}{\yes} 
Take the workflow nets $W_1^{\name}$ and $W_2^{\name}$ of Figure~\ref{fig:empty-examples}.
Their languages are $L(W_1^{\name}) = \{\varepsilon, a, ab\} = L(W_2^{\name})$, but their respective complexity scores are $\C(W_1^{\name}) = 0 \neq 2 = \C(W_2^{\name})$.

\propitemf{\propfive}{\yes} 
None of the operations of Definition~\ref{def:operations} can produce empty sequence flows, since a workflow net needs to consist of at least one transition. 
We therefore get the following Theorem:
\begin{theorem}
\label{thm:empty-add}
Let $M_1, \dots, M_n \in \mathcal{M}$ be workflow nets. Then, 
\begin{itemize}
\item $\C(\seqop(M_1, \dots, M_n) = \C(M_1) + \dots + \C(M_n)$, 
\item $\C(\parop(M_1, \dots, M_n) = \C(M_1) + \dots + \C(M_n)$, 
\item $\C(\choiceop(M_1, \dots, M_n) = \C(M_1) + \dots + \C(M_n)$, 
\item $\C(\loopop(M_1, \dots, M_n) = \C(M_1) + \dots + \C(M_n)$. 
\end{itemize}
\end{theorem}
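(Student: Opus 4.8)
The plan is to prove all four identities simultaneously by controlling the set
\[
E(W) := \{p \in P \mid \pre{p} \subseteq \mathcal{S}_{\text{and}}^W \land \post{p} \subseteq \mathcal{J}_{\text{and}}^W\}
\]
of empty-sequence-flow places. Since the operations of Definition~\ref{def:operations} keep the place sets of the components pairwise disjoint and add only fresh places, it suffices to show that $E(\oplus(M_1,\dots,M_n)) = \bigcup_{j=1}^{n} E(M_j)$ as a disjoint union; cardinalities then add and yield $\C(\oplus(M_1,\dots,M_n)) = \sum_{j=1}^n \C(M_j)$ for every $\oplus \in \{\seqop,\parop,\choiceop,\loopop\}$ at once.

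First I would record an invariance lemma: every composition attaches its new arcs only to the boundary places $p_i^j,p_o^j$ and to its newly introduced nodes, and never alters the incidence of any transition of a component. Consequently each original transition keeps its entire pre- and postset, hence its membership in $\mathcal{S}_{\text{and}}$ and in $\mathcal{J}_{\text{and}}$, and each interior place $p \in P^j \setminus \{p_i^j,p_o^j\}$ keeps both $\pre{p}$ and $\post{p}$. Combining these, an interior place lies in $E(\oplus(M_1,\dots,M_n))$ exactly when it lies in $E(M_j)$, so interior places already match on both sides.

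Next I would dispatch the fresh places, which is precisely what the remark that a workflow net needs at least one transition buys us. Checking each operation, every newly created place ($p_i^*,p_o^*$ for $\parop$ and $\choiceop$, and additionally $p^*,q^*$ for $\loopop$) is incident to a newly created transition whose degree on the relevant side equals one: $p_i^*$ feeds a transition with singleton preset and $p_o^*$ is fed by one with singleton postset, and similarly for $p^*,q^*$. Hence the inclusion into $\mathcal{S}_{\text{and}}$ or $\mathcal{J}_{\text{and}}$ fails and no fresh place ever enters $E$.

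The step I expect to be the crux is the two boundary places $p_i^j,p_o^j$, the only original places whose incidence changes. For $\parop$ the matter settles itself, because the attached transition $t_i^*$ is an and-split and $t_o^*$ an and-join, so $p_i^j$ and $p_o^j$ retain their component status verbatim. For $\seqop$, $\choiceop$ and $\loopop$, however, the attached transitions have degree one on the critical side, so after composition a boundary place can no longer satisfy the defining inclusion; to keep the two sides equal I must therefore prove the auxiliary fact that a boundary place is never an empty-sequence-flow place in a component either, i.e.\ $p_i^j \notin E(M_j)$ and $p_o^j \notin E(M_j)$. Establishing this uniformly---$p_i^j$ has an empty preset, so one must exhibit a transition in $\post{p_i^j}$ that is not an and-join, and dually a transition in $\pre{p_o^j}$ that is not an and-split---is the delicate point on which the whole additivity argument rests; once it is in hand, the disjoint-union identity follows and the four equalities drop out together.
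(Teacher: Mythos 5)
Your reduction is set up correctly and is, in fact, more careful than the paper's own proof, which dismisses all four identities as ``obvious by definition'': you handle interior places, component transitions, and freshly introduced places exactly right. The problem is the step you yourself flag as the crux and then leave unproven --- the auxiliary fact that $p_i^j \notin E(M_j)$ and $p_o^j \notin E(M_j)$ for every workflow net $M_j$. This fact is \emph{false} under the paper's definition. Since $\pre{p_i} = \emptyset$, the inclusion $\pre{p_i} \subseteq \mathcal{S}_{\text{and}}$ holds vacuously, so $p_i$ is an empty sequence flow as soon as every transition in $\post{p_i}$ is an and-join, and nothing in the definition of workflow nets prevents this. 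Concretely, take $P = \{p_i, r, q, p_o\}$, $T = \{t,u,v\}$ (with any labeling) and $F = \{(p_i,t), (q,t), (t,r), (r,u), (u,q), (q,v), (v,p_o)\}$. This is a workflow net --- the simple path $p_i, t, r, u, q, v, p_o$ visits every node --- and $t$ is an and-join because $\pre{t} = \{p_i, q\}$, so $p_i$ is an empty sequence flow; checking the remaining places gives $\emptyseq(W) = 1$.

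This gap cannot be repaired, because the same net refutes the theorem itself for $\seqop$, $\choiceop$ and $\loopop$: in $\seqop(\netminwf, W)$ the place $p_i$ of $W$ acquires the preset $\{t_1^*\}$, and $t_1^*$ is not an and-split, so the unique empty sequence flow of $W$ is destroyed while no new one is created, giving $\emptyseq(\seqop(\netminwf, W)) = 0 \neq 1 = \emptyseq(\netminwf) + \emptyseq(W)$; the operators $\choiceop$ and $\loopop$ destroy it in the same way. Only the $\parop$ identity goes through without your auxiliary fact, for precisely the reason you give: $t_i^*$ is an and-split and $t_o^*$ an and-join, so boundary places keep their component status verbatim. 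So your strategy, followed honestly, does not prove the statement; instead it exposes that the paper's one-line proof silently assumes that input and output places are never empty sequence flows, which is wrong for the definition as written. (Strengthening the definition to require $\pre{p} \neq \emptyset$ and $\post{p} \neq \emptyset$ would rescue $\seqop$, $\choiceop$ and $\loopop$, but would then break additivity for $\parop$, since $\parop$ can turn the place $p_i$ of the counterexample net into an empty sequence flow.)
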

\begin{proof}
Obvious by definition of $\C$ and the operations $\seqop, \parop, \choiceop, \loopop$. 
\qed
\end{proof}
Therefore, for any $M_1, \dots, M_n \in \mathcal{M}$ and any operation $\oplus \in \{\seqop, \parop, \choiceop, \loopop\}$, we get $\C(\oplus(M_1, \dots, M_n)) = \C(M_1) + \dots + \C(M_n) \geq \C(M_i)$ for any model $M_i \in \{M_1, \dots, M_n\}$.

\propitemf{\propsix}{\no} 
Let $M_1, M_2, M_3 \in \mathcal{M}$ be three workflow nets with complexity scores $\C(M_1) = \C(M_2)$.
Theorem~\ref{thm:empty-add} gives:
\begin{itemize}
\item $\C(M_1 \seqop M_3) = \C(M_1) + \C(M_3)$ \\
\phantom{$\C(M_1 \seqop M_3)$ }$= \C(M_2) + \C(M_3) = \C(M_2 \seqop M_3)$,
\item $\C(M_1 \parop M_3) = \C(M_1) + \C(M_3)$ \\
\phantom{$\C(M_1 \parop M_3)$ }$= \C(M_2) + \C(M_3) = \C(M_2 \parop M_3)$,
\item $\C(M_1 \choiceop M_3) = \C(M_1) + \C(M_3)$ \\
\phantom{$\C(M_1 \choiceop M_3)$ }$= \C(M_2) + \C(M_3) = \C(M_2 \choiceop M_3)$,
\item $\C(M_1 \loopop M_3) = \C(M_1) + \C(M_3)$ \\
\phantom{$\C(M_1 \loopop M_3)$ }$= \C(M_2) + \C(M_3) = \C(M_2 \loopop M_3)$.
\end{itemize}

\propitemf{\propseven}{\yes} 
The workflow nets $W_3^{\name}$ and $W_4^{\name}$ of Figure~\ref{fig:empty-perm} are permutations of each other, but $\C(W_3^{\name}) = 0 \neq 2 = \C(W_4^{\name})$.
\begin{figure}[ht]
\begin{center}
\centering
\scalebox{\scalefactor}{
\begin{tikzpicture}[node distance = 1.5cm,>=stealth',bend angle=0,auto]
	\node [place,tokens=1] (start) [label=below:$p_i$] {};
	\node [transition] (t1) [right of=start,label=below:$t_1$] {}
	edge [pre] (start);
	\node [place] (p1) [above right of=t1] {}
	edge [pre] (t1);
	\node [place] (p2) [below right of=t1] {}
	edge [pre] (t1);
	\node [transition] (t2) [right of=p1,label=below:$t_2$] {}
	edge [pre] (p1);
	\node [transition] (t3) [right of=p2,label=below:$t_3$] {}
	edge [pre] (p2);
	\node [place] (p3) [right of=t2] {}
	edge [pre] (t2);
	\node [place] (p4) [right of=t3] {}
	edge [pre] (t3);
	\node [transition] (t4) [below right of=p3,label=below:$t_4$] {}
	edge [pre] (p3)
	edge [pre] (p4);
	\node [place] (end) [right of=t4,label=below:$p_o$] {}
	edge [pre] (t4);
	\node at (0,1.15) {$W_3^{\name}$:};
	\draw[opacity=0] ($(start)-(0.25,1.75)$) rectangle ($(end) + (0.25,1.75)$);
\end{tikzpicture}}
\ \\
\centering
\scalebox{\scalefactor}{
\begin{tikzpicture}[node distance = 1.5cm,>=stealth',bend angle=0,auto]
	\node [place,tokens=1] (start) [label=below:$p_i$] {};
	\node [transition] (t1) [right of=start,label=below:$t_1$] {}
	edge [pre] (start);
	\node [place] (p1) [above right of=t1] {}
	edge [pre] (t1);
	\node [place] (p2) [below right of=t1] {}
	edge [pre] (t1);
	\node [transition] (t2) [below right of=p1,label=below:$t_2$] {}
	edge [pre] (p1)
	edge [pre] (p2);
	\node [place] (p3) [right of=t2] {}
	edge [pre] (t2);
	\node [transition] (t3) [right of=p3,label=below:$t_3$] {}
	edge [pre] (p3);
	\node [place] (p4) [right of=t3] {}
	edge [pre] (t3);
	\node [transition] (t4) [right of=p4,label=below:$t_4$] {}
	edge [pre] (p4);
	\node [place] (end) [right of=t4,label=below:$p_o$] {}
	edge [pre] (t4);
	\node at (0,1.15) {$W_4^{\name}$:};
	\draw[opacity=0] ($(start)-(0.25,1.75)$) rectangle ($(end) + (0.25,1.75)$);
\end{tikzpicture}}
\end{center}
\caption{Two workflow nets, $W_3^{\name}$ and $W_4^{\name}$, which are permutations of each other. $\C(W_3^{\name}) = 0$, but $\C(W_4^{\name}) = 2$.}
\label{fig:empty-perm}
\end{figure}

\propitemf{\propeight}{\yes} The number of empty sequence flows is independent of the labeling, so $\C$ fulfills this property.

\propitemf{\propnine}{\no} Theorem~\ref{thm:empty-add} implies that $\C$ is subadditive.

\propitemf{\propdef}{\yes} 
$\C$ is defined for any workflow net $W$, since $S_{\text{and}}^{W}$ and $J_{\text{and}}^{W}$ are.
It only returns non-negative values, since the smallest possible set has cardinality $0$.

\propitemf{\propmin}{\yes} 
The workflow net $W_1^{\name}$ of Figure~\ref{fig:empty-examples} has no empty sequence flows, so it gets a complexity score of $0$. 
Because $\C$ can only return non-negative values, this is the smallest possible complexity score.

\propitemf{\propinf}{\yes} 
Let $k \in \mathbb{N}$ with $k \geq 2$. 
Figure~\ref{fig:empty-fin-inf} shows how to construct a workflow net with complexity score $k$, so $\C$ can return infinitely many values and we get $|\{c \in \mathbb{R} \mid \exists M \in \mathcal{M}: \C(M) = c\}| \geq |\{k \in \mathbb{N} \mid k \geq 2\}| = \infty$.

\propitemf{\propnotsup}{\no} Theorem~\ref{thm:empty-add} implies that $\C$ is superadditive.

\propitemf{\propadd}{\yes} Theorem~\ref{thm:empty-add} implies that $\C$ is additive.
\end{description}

\newpage
\section{Discussion of the Results}
\label{sec:discussion}
Table~\ref{table:results} shows the results of our analyses.
For properties that depend on which operation of Figure~\ref{fig:operations} we choose, we added a more detailed analysis in Table~\ref{table:operation-details}.

\renewcommand{\arraystretch}{1.1}
\begin{table}
\caption{The results of our analyses. Entries with an asterisk imply that the answer depends on the used operation.}
\label{table:results}
\centering
\begin{tabular}{|r|l|l|l|l|l|l|l||p{1cm}|p{1cm}|p{1cm}|} \hline
& \multicolumn{7}{|c||}{\textsc{Token Behavior Complexity}} & \multicolumn{3}{c|}{\textsc{Node IO Complexity}} \\ \cline{2-11}

& $\size$ & $\mismatch$ & $\connhet$ & $\crossconn$ & $\tokensplit$ & $\separability$ & $\controlflow$ & $\maxconn$ & $\sequentiality$ & $\avgconn$ \\ \hline \hline

\propone
% Dimension I
& $\yes$ & $\yes$ & $\yes$ & $\yes$ & $\yes$ & $\yes$ & $\yes$
% Dimension II
& $\yes$ & $\yes$ & $\yes$ \\ \hline
\proptwo
% Dimension I
& $\yes$ & $\no$ & $\no$ & $\no$ & $\no$ & $\no$ & $\no$ 
% Dimension II
& $\no$ & $\no$ & $\no$ \\ \hline
\propthree
% Dimension I
& $\yes$ & $\yes$ & $\yes$ & $\yes$ & $\yes$ & $\yes$ & $\yes$ 
% Dimension II
& $\yes$ & $\yes$ & $\yes$ \\ \hline
\propfour
% Dimension I
& $\yes$ & $\yes$ & $\yes$ & $\yes$ & $\yes$ & $\yes$ & $\yes$ 
% Dimension II
& $\yes$ & $\yes$ & $\yes$ \\ \hline
\propfive
% Dimension I
& $\yes$ & $\no$ & $\no$ & $\no$ & $\yes$ & $\no$ & $\yes$ 
% Dimension II
& $\yes$ & $\no$ & $\no$ \\ \hline
\propsix
% Dimension I
& $\no$ & $\yes$ & $\yes$ & $\yes$ & $\no$ & $\yes$ & $\no$ 
% Dimension II
& $\yes$ & $\yes$ & $\yes$ \\ \hline
\propseven
% Dimension I
& $\yes$ & $\yes$ & $\yes$ & $\yes$ & $\yes$ & $\yes$ & $\yes$ 
% Dimension II
& $\yes$ & $\yes$ & $\yes$ \\ \hline
\propeight
% Dimension I
& $\yes$ & $\yes$ & $\yes$ & $\yes$ & $\yes$ & $\yes$ & $\yes$ 
% Dimension II
& $\yes$ & $\yes$ & $\yes$ \\ \hline
\propnine
% Dimension I
& $\yes$ & $\no$ & $\yes$ & $\yes$ & $\no^*$ & $\no^*$ & $\no^*$
% Dimension II
& $\no$ & $\no^*$ & $\no$ \\ \hline \hline

\propdef
% Dimension I
& $\yes$ & $\yes$ & $\no$ & $\yes$ & $\yes$ & $\yes$ & $\yes$ 
% Dimension II
& $\no$ & $\yes$ & $\no$ \\ \hline
\propmin
% Dimension I
& $\yes$ & $\yes$ & $\yes$ & $\no$ & $\yes$ & $\yes$ & $\yes$ 
% Dimension II
& $\yes$ & $\yes$ & $\yes$ \\ \hline
\propinf
% Dimension I
& $\yes$ & $\yes$ & $\yes$ & $\yes$ & $\yes$ & $\yes$ & $\yes$ 
% Dimension II
& $\yes$ & $\yes$ & $\yes$ \\ \hline
\propnotsup 
% Dimension I
& $\no$ & $\yes$ & $\yes$ & $\yes$ & $\no$ & $\yes$ & $\no$ 
% Dimension II
& $\yes$ & $\yes$ & $\yes$ \\ \hline
\propadd
% Dimension I
& $\no$ & $\no$ & $\no$ & $\no$ & $\no^*$ & $\no$ & $\no^*$
% Dimension II
& $\no$ & $\no$ & $\no$ \\ \hline
\end{tabular}
\ \\
\ \\
\ \\
\begin{tabular}{|r|p{1cm}|p{1cm}|p{1cm}||p{2cm}|p{2cm}||l|l|} \hline
& \multicolumn{3}{|c||}{\textsc{Path Complexity}} & \multicolumn{2}{c||}{\textsc{Degree of Connectedness}} & \multicolumn{2}{c|}{\textsc{Other}} \\ \cline{2-8}

& $\depth$ & $\diameter$ & $\cyclicity$ & $\netconn$ & $\density$ & $\duplicate$ & $\emptyseq$ \\ \hline \hline

\propone
% Dimension III
& $\yes$ & $\yes$ & $\yes$ 
% Dimension IV
& $\yes$ & $\yes$ 
% Others
& $\yes$ & $\yes$ \\ \hline
\proptwo
% Dimension III
& $\no$ & $\no$ & $\no$ 
% Dimension IV
& $\no$ & $\no$ 
% Others
& $\yes$ & $\no$ \\ \hline
\propthree
% Dimension III
& $\yes$ & $\yes$ & $\yes$ 
% Dimension IV
& $\yes$ & $\yes$ 
% Others
& $\yes$ & $\yes$ \\ \hline
\propfour
% Dimension III
& $\yes$ & $\yes$ & $\yes$ 
% Dimension IV
& $\yes$ & $\yes$ 
% Others
& $\yes$ & $\yes$ \\ \hline
\propfive
% Dimension III
& $\yes$ & $\no^*$ & $\no$ 
% Dimension IV
& $\no$ & $\no$ 
% Others
& $\yes$ & $\yes$ \\ \hline
\propsix
% Dimension III
& $\no^*$  & $\no$ & $\yes$ 
% Dimension IV
& $\yes$ & $\yes$ 
% Others
& $\yes$ & $\no$ \\ \hline
\propseven
% Dimension III
& $\yes$ & $\yes$ & $\yes$ 
% Dimension IV
& $\yes$ & $\yes$ 
% Others
& $\no$ & $\yes$ \\ \hline
\propeight
% Dimension III
& $\yes$ & $\yes$ & $\yes$ 
% Dimension IV
& $\yes$ & $\yes$ 
% Others
& $\yes$ & $\yes$ \\ \hline
\propnine
% Dimension III
& $\yes$ & $\yes$ & $\no^*$
% Dimension IV
& $\no$ & $\no$
% Others
& $\yes$ & $\no$ \\ \hline \hline

\propdef
% Dimension III
& $\yes$ & $\yes$ & $\yes$ 
% Dimension IV
& $\yes$ & $\yes$ 
% Others
& $\yes$ & $\yes$ \\ \hline
\propmin
% Dimension III
& $\yes$ & $\yes$ & $\yes$ 
% Dimension IV
& $\yes$ & $\no$ 
% Others
& $\yes$ & $\yes$ \\ \hline
\propinf
% Dimension III
& $\yes$ & $\yes$ & $\yes$ 
% Dimension IV
& $\yes$ & $\yes$ 
% Others
& $\yes$ & $\yes$ \\ \hline
\propnotsup 
% Dimension III
& $\yes$ & $\no^*$ & $\yes$
% Dimension IV
& $\yes$ & $\yes$ 
% Others
& $\no$ & $\no$ \\ \hline
\propadd
% Dimension III
& $\no$ & $\no$ & $\no$
% Dimension IV
& $\no$ & $\no$ 
% Others
& $\no$ & $\yes$ \\ \hline
\end{tabular}
\end{table}
\begin{table}
\caption{A more detailed overview for measures where it depends on the operation whether a property is fulfilled.}
\label{table:operation-details}
\centering
\begin{tabular}{|r|c|c|c|c|c|c|c|} \hline
 & $\controlflow$ & $\separability$ & $\tokensplit$ & $\sequentiality$ & $\depth$ & $\diameter$ & $\cyclicity$ \\ \hline \hline
\propfive & $\yes$ & $\no$ & $\yes$ & $\no$ & $\yes$ & $\seqop: \yes$ & $\no$ \\
& & & & & & $\parop: \yes$ & \\
& & & & & & $\choiceop: \yes$ & \\
& & & & & & $\loopop: \no$ & \\ \hline
\propsix & $\no$ & $\yes$ & $\no$ & $\yes$ & $\seqop$: $\yes$ & $\no$ & $\yes$ \\ 
& & & & & $\parop$: $\no$ & & \\
& & & & & $\choiceop$: $\no$ & & \\
& & & & & $\loopop$: $\no$ & & \\ \hline
\propnine & $\seqop$: $\no$ & $\seqop$: $\no$ & $\seqop$: $\no$ & $\seqop$: $\no$ & $\yes$ & $\yes$ & $\seqop$: $\no$ \\
 & $\parop$: $\yes$ & $\parop$: $\yes$ & $\parop$: $\yes$ & $\parop$: $\yes$ & & & $\parop$: $\no$ \\ 
 & $\choiceop$: $\yes$ & $\choiceop$: $\yes$ & $\choiceop$: $\no$ & $\choiceop$: $\yes$ & & & $\choiceop$: $\no$ \\ 
 & $\loopop$: $\yes$ & $\loopop$: $\yes$ & $\loopop$: $\no$ & $\loopop$: $\yes$ & & & $\loopop$: $\yes$ \\ \hline
\propnotsup & $\no$ & $\yes$ & $\no$ & $\yes$ & $\yes$ & $\seqop$: $\no$ & $\yes$ \\
 & & & & & & $\parop$: $\yes$ & \\
 & & & & & &  $\choiceop$: $\yes$ & \\
 & & & & & & $\loopop$: $\yes$ & \\ \hline
\propadd & $\seqop$: $\yes$ & $\no$ & $\seqop$: $\yes$ & $\no$ & $\no$ & $\no$ & $\no$ \\
 & $\parop$: $\no$ &  & $\parop$: $\no$ & & & & \\ 
 & $\choiceop$: $\no$ &  & $\choiceop$: $\yes$ & & & & \\ 
 & $\loopop$: $\no$ &  & $\loopop$: $\yes$ & & & & \\ \hline
\end{tabular}
\end{table}

We split the discussion into two parts: 
We begin by analyzing the properties we used with regard to our findings.
This aids people that need to choose complexity measures to interpret our results appropriately.
Afterward, we compare the examined complexity measures based on which properties they fulfill.

\subsection{Interpreting the Properties}
Weyuker proposes that all complexity measures for software programs should fulfill her properties~\cite{Wey88}.
For complexity measures of process models, we disagree.

Looking at Table~\ref{table:results}, our first observation is that some of Weyuker's properties are fulfilled by all measures we analyzed.
In particular, this is true for the properties {\propone}, {\propthree} and {\propfour}.
This comes to no surprise, since useful complexity measures should fulfill these properties:
A measure is not reflecting the complexity of a net if it gives every net the same score or finds unique numerical representations of the net or its language.
Therefore, we argue that complexity measures for process model complexity should fulfill these three properties.

Another immediate observation is that almost all complexity measures fulfill {\propseven} and {\propeight}.
The complexity measure $\duplicate$ is the only exception, and we will see in the second part of the discussion why this is the case.
In our opinion, it makes sense for complexity measures to react to permuting the net.
Permutations can rearrange the entire structure of a process model, which should be captured by a complexity measure.
We therefore argue that {\propseven} is a property all complexity measures should fulfill.
However, we don't get to the same conclusion for {\propeight}.
Even though most measures are robust to relabeling the transitions, the labeling has an undeniable impact on the understandability of a net.
Therefore, {\propeight} has a descriptive nature and separates measures that consider the labeling from measures that don't.

Property {\proptwo} is only fulfilled by two complexity measures.
As already argued in Section~\ref{sec:properties}, we think that {\proptwo} does not fully convey the intended meaning.
Limiting the number of workflow nets that receive the same complexity score is too restrictive.
Instead, one should make sure that there aren't only few scores awarded to workflow nets. 
Our newly introduced property {\propinf} examines this and should be fulfilled by every useful complexity measure.

Weyuker's property {\propsix} is designed to test whether complexity measures are sensitive to dependencies between different parts of the net, but needs to be taken carefully in the context of process models.
For example, the measure $\mismatch$ can ``repair'' existing connector mismatches by introducing even more of them to the net.
This is no expected behavior for a complexity measure, but leads to {\propsix} being fulfilled for $\mismatch$.
Since it is possible to fulfill {\propsix} in this way, we propose to take this property as optional for a complexity measure.
We advise taking care if a measure fulfills this property and check whether it does so for the intended reasons.

The properties {\propfive} and {\propnine} are optional for complexity measures.
Property {\propfive} is very intuitive, but a composed model can always lower the complexity by introducing more structure to a net.
Not fulfilling property {\propnine}, on the other hand, might even be advantageous for certain use-cases:
If a measure doesn't fulfill {\propnine}, we cannot increase the complexity of two nets beyond the sum of their complexity scores when combining them with one of the operations shown in Figure~\ref{fig:operations}. 
Our new properties {\propnotsup} and {\propadd} deepen these insights, but are also intended as optional properties.

Finally, it is dependent on the use-case whether our new properties {\propdef} and {\propmin} are necessary for a complexity measure.
Especially since we can define special values for undefined cases, complexity measures that don't fulfill {\propdef} can be as useful as those that do.
However, in these cases, we use an altered definition for measuring complexity, and one could argue that this is a different measure than the original one.
Regarding this argument, we propose that {\propdef} should be fulfilled by complexity measures.
The property {\propmin} is important for applications that iteratively make existing process models less complex, to avoid nondeterminism.
For other use-cases, it is sufficient to be aware of whether the complexity measure has a minimum value or not.
Therefore, it is optional for a complexity measure to fulfill {\propmin}.
An overview over the results of this part of the discussion can be found in Table~\ref{table:prop-types}.

\def\normative{\textcolor{violet}{N}}
\def\descriptive{\textcolor{cyan!50!blue}{D}}
\begin{table*}
\caption{A table classifying which properties of Section~\ref{sec:properties} should be fulfilled by a complexity measure (\normative: normative) or are optional for a complexity measure (\descriptive: descriptive).}
\label{table:prop-types}
\centering
\begin{tabular}{|r|c|c|c|c|c|c|c|c|c|} \hline
& \multicolumn{9}{|c|}{\textsc{Weyuker's Properties}} \\ \cline{2-10}

\textbf{Property} & \propone & \proptwo & \propthree & \propfour & \propfive & \propsix & \propseven & \propeight & \propnine \\ \hline

\textbf{Classification} & \normative & \descriptive & \normative & \normative & \descriptive & \descriptive & \normative & \descriptive & \descriptive \\ \hline
\end{tabular}
\ \\
\ \\
\ \\
\begin{tabular}{|r|c|c|c|c|c|} \hline
& \multicolumn{5}{|c|}{\textsc{Extensions}} \\ \cline{2-6}

\textbf{Property} & \propdef & \propmin & \propinf & \propnotsup & \propadd \\ \hline

\textbf{Classification} & \normative & \descriptive & \normative & \descriptive & \descriptive \\ \hline
\end{tabular}
\end{table*}

\subsection{Comparison of Complexity Measures}
It comes to no surprise that the complexity measure based on the size of a workflow net fulfills almost all of Weyuker's properties. 
Its simplicity and strong connection to understandability~\cite{ReiM11} further emphasize that the size of a workflow net is an important factor for complexity.
However, $\size$ is superadditive, so combining two nets with one of the operations of Figure~\ref{fig:operations} is guaranteed to return a net with higher complexity than the sum of the input nets.
The complexity measures based on connector mismatches and connector heterogeneity are not monotone, since adding certain structures to a workflow net can lower the scores of these complexity measures.
If users favor monotonicity, they should avoid these measures and use the control flow complexity measure instead, which also takes the different connectors in a net into account.
Furthermore, users of the connector heterogeneity measure should be aware that $\connhet$ is not defined for workflow nets that don't contain any connectors.
The token split measure is additive for all operations except $\parop$.
Using the operator $\parop$ to combine $n$ nets, the complexity score of the resulting net is exactly by $n - 1$ higher than the sum of the complexities, which can often be tolerated in practice.
This measure has the most of what we would call desirable features in the dimension \textsc{Token Behavior Complexity}.
Separability is not monotone, since all operations of Figure~\ref{fig:operations}, except $\choiceop$, introduce new cut-vertices.
The measure is also not monotone with respect to $\choiceop$ if we take {\propfive} strictly:
If we use a workflow net with complexity $1$ as the first argument of $\choiceop$, the resulting net will always fail by definition of {\propfive} if it contains at least one cut-vertex.
If we weaken {\propfive} to avoid this special case, $\separability$ would be monotone for $\choiceop$.
Like the control flow complexity measure, $\separability$ is subadditive only for $\seqop$.
$\controlflow$, however, is also additive for this operation.
The measure $\crossconn$ fulfills {\propsix} for the operator $\seqop$ only because the weights of xor-connectors depend on whether they are the input or output place of the net.
Furthermore, $\crossconn$ has no minimum value.

Regarding \textsc{Node IO Complexity}, only the complexity measure based on the maximum connector degree is monotone. 
This measure is also subadditive for workflow nets that have at least one connector.
For workflow nets without connectors, however, this measure is undefined and setting the score to $0$ in these cases would destroy subadditivity.
The same is true for $\avgconn$.
Sequentiality $\sequentiality$ is a good choice in this dimension, if {\propnine} needs to be fulfilled.

With two exceptions, $\diameter$ and $\depth$ of the \textsc{Path Complexity} dimension have similar properties:
Regarding the $\seqop$ operator, $\diameter$ is superadditive and $\depth$ is sensitive to compositions.
More importantly, $\diameter$ is not monotone for compositions with the $\loopop$ operator, so if this operator is to be used, one should favor $\depth$ over $\diameter$.

The dimension labeled \textsc{Degree of Connectedness} contains no monotone measures.
This could hint to a gap for monotone complexity measures that take the connectedness of the net into account.
Furthermore, all measures are subadditive, so if this property is undesired, this dimension offers no alternatives.
The measures $\netconn$ and $\density$ are quite similar in their definition and therefore in the properties they fulfill, but $\density$ offers no minimum complexity score, while $\netconn$ does.

The complexity measure based on duplicate tasks is the only one that doesn't fulfill {\propseven} and {\propeight}, since it solely depends on the labeling of the net.
Its superadditivity, however, depends on the number of task labels in the alphabet that aren't already used in the models.
The number of empty sequence flows is additive for all operations of Figure~\ref{fig:operations} and therefore not sensitive to compositions.
It is a simple complexity measure, but shows immediate ways to make a model simpler.

\section{Conclusion}
\label{sec:conclusion}
In this paper, we used and extended the properties defined by Weyuker~\cite{Wey88} to analyze and compare popular complexity measures for process models~\cite{Men08}.
We discussed which of the inspected properties are normative and which are descriptive for process models to give a sense of their importance.
Furthermore, we compared complexity measures that are in the same complexity dimension discovered by Lieben et al.~\cite{LieDJJ18} and highlighted when to prefer which measure.
Defining and analyzing the structuredness measure was out of scope of this paper, so possible future work involves the analysis of this measure.
Moreover, one can think of more properties that highlight interesting characteristics of complexity measures.
More composition rules than those used by the ETM would be interesting to analyze. Furthermore, investigating how to define permutations of Petri nets is interesting for a more expressive {\propseven} property.
The analysis could also be extended to other workflow net complexity measures, such as in~\cite{LasA09}.

We are confident that our analyses and discussions shed a new light on popular complexity measures that will help analysts and algorithm designers to choose measures fitting to their  needs. 
We also hope to start a discussion on complexity and simplicity measures, which are often overlooked during the evaluation of process models.

\FloatBarrier
\

\end{document}